\documentclass[reqno,12pt]{article}
\usepackage{varioref} 
\usepackage{xr} 
\usepackage[colorlinks]{hyperref}
\usepackage{booktabs}
\usepackage{subcaption}
\usepackage{multirow}
\RequirePackage[OT1]{fontenc}
\RequirePackage{amsthm,amsmath}

\usepackage{epsfig,amssymb,euscript,amsfonts,hyperref}
\usepackage{natbib}
\usepackage{array}
\usepackage{rotating}
\usepackage{nccbbb}
\usepackage{verbatim}
\usepackage[lined,boxed,commentsnumbered, ruled,linesnumbered]{algorithm2e} 
\usepackage{graphicx}
\usepackage{float}
\usepackage{bm}
\usepackage{bbm}
\usepackage{dsfont}
\usepackage{MnSymbol}
\usepackage{enumitem}
\usepackage{color}
\usepackage{caption,setspace}
\usepackage{lipsum}
\usepackage{xcolor}
\usepackage{mathbbol}
\usepackage{marginnote}
\usepackage{forest}
\usepackage{authblk}

\numberwithin{equation}{section}

\newcolumntype{L}[1]{>{\raggedright\let\newline\\\arraybackslash\hspace{0pt}}m{#1}}
\newcolumntype{C}[1]{>{\centering\let\newline\\\arraybackslash\hspace{0pt}}m{#1}}
\newcolumntype{R}[1]{>{\raggedleft\let\newline\\\arraybackslash\hspace{0pt}}m{#1}}
\newcolumntype{H}{>{\setbox0=\hbox\bgroup}c<{\egroup}@{}}

\def\spacingset#1{\renewcommand{\baselinestretch}
{#1}\small\normalsize} \spacingset{1}

\usepackage{marginnote}

\newtheorem{theorem}{Theorem}[section]
\newtheorem{lemma}[theorem]{Lemma}
\newtheorem{corollary}[theorem]{Corollary}
\newtheorem{proposition}[theorem]{Proposition}

\theoremstyle{definition}
\newtheorem{example}{Example}[section]
\newtheorem{assumption}{Assumption}[section]
\newtheorem{remark}{Remark}[section]
\newtheorem{definition}{Definition}[section]

\hypersetup{citecolor=blue}
\hypersetup{linkcolor=red}

\definecolor{BUred}{rgb}{0.8, 0.0, 0.0}
\definecolor{BLUE}{rgb}{0.0, 0.28, 0.67}
\definecolor{magenta}{rgb}{1,0,.6}
\definecolor{blue0}{rgb}{0.95, 0.97, 1.0}
\definecolor{brown}{rgb}{0.75, 0.25, 0.0}

\usepackage{color}
	\definecolor{brown}{rgb}{0.75, 0.25, 0.0}
	\definecolor{mygreen}{rgb}{0.1, 0.5, 0.1}	
	\definecolor{myred}{rgb}{0.7, 0.1, 0.1}
	\definecolor{myblue}{rgb}{0.1, 0.1, 0.7}
	\definecolor{mygold}{rgb}{0.95, 0.85, 0.37}
	\definecolor{mygray}{rgb}{0.5,0.5,0.5}
	\definecolor{myorange}{rgb}{1,0.5,0}

\usepackage{mathtools}
\makeatletter
\DeclareRobustCommand\widecheck[1]{{\mathpalette\@widecheck{#1}}}
\def\@widecheck#1#2{
    \setbox\z@\hbox{\m@th$#1#2$}
    \setbox\tw@\hbox{\m@th$#1
       \widehat{
          \vrule\@width\z@\@height\ht\z@
          \vrule\@height\z@\@width\wd\z@}$}
    \dp\tw@-\ht\z@
    \@tempdima\ht\z@ \advance\@tempdima2\ht\tw@ \divide\@tempdima\thr@@
    \setbox\tw@\hbox{
       \raise\@tempdima\hbox{\scalebox{1}[-1]{\lower\@tempdima\box
\tw@}}}
    {\ooalign{\box\tw@ \cr \box\z@}}}
\makeatother

\makeatletter
\def\widebreve{\mathpalette\wide@breve}
\def\wide@breve#1#2{\sbox\z@{$#1#2$}
     \mathop{\vbox{\m@th\ialign{##\crcr
\kern0.08em\brevefill#1{0.8\wd\z@}\crcr\noalign{\nointerlineskip}
                    $\hss#1#2\hss$\crcr}}}\limits}
\def\brevefill#1#2{$\m@th\sbox\tw@{$#1($}
  \hss\resizebox{#2}{\wd\tw@}{\rotatebox[origin=c]{90}{\upshape(}}\hss$}
\makeatletter

\newcommand{\diag}{\mathop{\mathrm{diag}}}
\newcommand{\vect}{\mathop{\mathrm{vec}}}

\DeclareMathOperator{\sgn}{sgn}

\newcommand{\T}{\textsc{T}}

\newcommand{\dd}{\textnormal{d}}
\newcommand{\mathdag}{\normalfont{\dag}}

\newcommand{\E}{\mathsf{E}} 

\newcommand{\Cov}{{\mathsf{Cov}}} 
\newcommand{\Var}{{\mathsf{Var}}}

\newcommand{\Bias}{{\mathsf{Bias}}}
\newcommand{\bias}{{\mathsf{Bias}}}

\newcommand{\MSE}{{\mathsf{MSE}}}

\newcommand{\WMSE}{{\mathsf{WMSE}}}

\newcommand{\iid}{\textsc{iid}} 
\newcommand{\simIID}{   \overset{ \iid   }{\sim} }

\newcommand{\AR}{\textsc{ar}}
\newcommand{\MA}{\textsc{ma}}
\newcommand{\ARMA}{\textsc{arma}}

\newcommand{\Unif}{\textnormal{Unif}} 
 
\newcommand{\Normal}{\textnormal{N}}

\newcommand{\Exp}{\textnormal{Exp}}

\newcommand{\Bart}{\mathop{\mathrm{Bart}}}

\newcommand{\showName}{1}
\newcommand{\isMain}{1}
\newcommand{\paperTitle}{
Tail postcoloring in long-run variance estimation of time series
}
\def\isXr{1}
\begin{document}

\if1\isMain{
\title{\bf \paperTitle}
}\else{
\title{\bf Supplementary note to ``\paperTitle''}
}\fi

\if1\showName{
	\author[1]{Xu Liu 
				\footnote{Liu is currently affiliated with the Department of Statistics at the University of Washington. The contact email at the new affiliated institution is \url{xuliu8@uw.edu}}
				\thanks{Email: 
				\href{mailto:lexilxu@link.cuhk.edu.hk}
				{\nolinkurl{lexilxu@link.cuhk.edu.hk}}}}
	\author[1]{Kin Wai Chan \thanks{Email: 
				\href{mailto:kinwaichan@cuhk.edu.hk}
				{\nolinkurl{kinwaichan@cuhk.edu.hk}}}}
	\affil[1]{Department of Statistics and Data Science, The Chinese University of Hong Kong.}
}\else{
	\author[]{}
}\fi	
\maketitle

\bigskip
\begin{abstract}
Prewhitening is a common approach to deal with strong autocorrelation. 
In this article, we propose a new approach called tail postcoloring,
motivated by it. 
It uses parametric models to project, or color back, the neglected tail autocovariances 
in nonparametric estimators onto the final estimator.
This approach bridges the nonparametric variance estimator and the parametric coloring model through a scaling factor. 
It automatically switches between these two arms using a bandwidth parameter, 
without the need to transform the entire dataset into residuals, as in the standard prewhitening approach. 
When the coloring model is well-specified, a parametric rate can be achieved.
In finite samples, 
it is also more robust to misspecification of the coloring model compared to the whitening model in the standard approach.
Besides, it avoids severe potential variance inflation or power reduction caused by the recoloring factor 
in the standard approach.
We show that multiple parametric models can be used to construct a multiply robust tail postcolored estimator. 
It also naturally works for multivariate time series. 
A real-data example in Markov chain Monte Carlo output analysis is provided.    
\end{abstract}

\noindent
{\it Keywords:}  
prewhitening;
non-linear time series;
variance estimation;
output analysis;
model misspecification.
\spacingset{1.7}
\newpage

\section{Introduction}\label{sec:intro}
Estimating variance in the presence of serial dependence plays an important role in many inference procedures.
In time series studies, 
research has been dedicated to the estimation of long-run variance 
\citep{carlstein86,kunsch89,chanyau2015_hoc,chan2022mean,chan2022optimal,ChanYau2022}
or spectral density \citep{parzen1957,priestley1982,politis1995}.
In econometrics, 
variance estimators are often discussed in the context of 
heteroskedasticity and autocorrelation consistent (HAC) estimation
{\citep{newey_west_1987,andrews1991}},
heteroskedasticity-autocorrelation robust (HAR) testing
{\citep{Kiefer2000,KieferVogelsang2002Econometrica,Kiefer2005Vogelsang}}
and generalized method of moments {\citep{hansen1982}}.
In Markov chain Monte Carlo (MCMC),
estimating the standard error of MCMC samples 
is important for computing the effective sample size and
conducting fixed-width convergence tests {\citep{Galin2006,james2010,wu2009,chanyau2013,chanyau2014_rTACM,LiuFlegal2018,LeungChan2022}}.

Let $\{ X_i \}_{i\in\mathbb{Z}}$ be a stationary time series
with mean $\mu = \E(X_0)$ and autocovariance  
$\gamma_k = \text{cov}(X_k, X_0)\ (k\in\mathbb{Z})$.
Under some regularity conditions \citep{brockwellDavis1991}, 
the central limit theorem for $\bar{X}_n = \sum_{i=1}^n X_i/n$ holds, i.e., 
  $
    n^{1/2}(\bar{X}_n - \mu)\Rightarrow\Normal(0,v),
  $
where $v = \sum_{k=-\infty}^{\infty}\gamma_k$
and ``$\Rightarrow$'' denotes weak convergence.
The target is the long-run variance (LRV) $v$.
Many commonly used estimators of $v$ \citep{bartlett1950periodogram,jowett1955comparison,hannan1957variance,parzen1957,newey_west_1987} 
can be written or asymptotically represented as
  \begin{equation}\label{eqt:kernelEst}
    \tilde{v} = \tilde{v}(\ell; K) = \sum_{k=-\infty}^{\infty} K\left(\frac{k}{\ell}\right) \tilde{\gamma}_k,
    \quad \text{where} \quad 
    \tilde{\gamma}_k = \frac{1}{n}\sum_{i=|k| +1}^{n} (X_i - \bar{X}_n)  (X_{i-|k| } - \bar{X}_n) .
  \end{equation}
  for $|k| < n$ and $\tilde{\gamma}_{k} = 0$ for $|k|\geq n$.
We may denote $\tilde{v}$ as $\tilde{v}(\ell; K; X_{1:n})$ 
to emphasize the input data $X_{1:n} = \{X_1, \ldots, X_n\}$. 
In this article, $\tilde{v}$ is referred to as an unadjusted estimator,
and is highlighted as $\tilde{v}_{\textsc{un}}$. 
There are two user-specified inputs in (\ref{eqt:kernelEst}).
First, the kernel function $K$ is typically symmetric and satisfies $K(0)=1$.
For example, the Bartlett kernel has a triangular shape defined as
$K_{\Bart}(t) = \max(1-|t|,0)$.
Second, the bandwidth $\ell$ controls the weight $K(k/\ell)$ applied on $\tilde{\gamma}_k$.
Major efforts for improving $\tilde{v}(\ell; K)$
focus on the selection of $\ell$ {\citep[e.g.,][]{OBM1984,andrews1991,song1995,politis2011}}
and $K$ {\citep[e.g.,][]{newey_west_1987,gallant1987,andrews1991,asyTheoryEcon2000,LazarusLewisStockWatson2018,vats2022lugsail,LiuChan2022}}.

Besides parameter selection,
\cite{andrews1992} noticed that regardless of the bandwidth and kernel choices,
strong temporal structure harms the performance of $\tilde{v}$,
especially for small samples.
Specifically, the effects include
size distortions in hypothesis testing {\citep{kurozumi2010reducing}},
undercoverage of confidence intervals and
early termination of MCMC sampling \citep{VatsDootika2019Moaf, vats2022lugsail}.
{ 
To deal with a strong temporal structure,
prewhitening has been proposed long ago in the literature 
\citep{blackman1958measurement,press1959power}.
The idea is 
to
transform the time series
into a less correlated structure closer to the white noise
via
a parametric model.
In the context of HAC estimation,
\cite{andrews1992}
uses prewhitening to flatten the spectral density at point zero
so that the estimate is less biased.
Recently, \cite{casini2024prewhitened} proposed a locally prewhitened LRV estimator
that is robust to nonstationarity.
They also provided new insights and corrections to the results in this context.}
However, there are two main issues with the prewhitening.
First, as a parametric approach,
when the model is misspecified, 
the performance deteriorates sharply in terms of bias distortion and variance inflation;
see Example~\ref{exp:simu_ar1}.
Second, because a recoloring coefficient is needed in the prewhitened estimator,
when the fitted process is close to having a unit root,
the estimator becomes very unstable with inflated variance.
\cite{andrews1992} uses a $0.97$ bound on the magnitude of the $\AR(1)$ parameter as a fix of the problem.
It is not a perfect fix though, as the bound is rather arbitrary and 
the problem still occurs when the parameter is close to but slightly below $1$.
In addition, prewhitening leads to power loss in multiple types of testing problems {\citep{Kiefer2005Vogelsang,muller2014hac}}.

{
Prewhitening is often used to de-bias the estimator on the time series.
We also remark that there are also other bias correction techniques in the literature.
For kernel estimators,
bias correction can be achieved through selecting different kernel functions. 
For example,
\cite{politis1995} discussed bias correction for Bartlett's spectral density estimator using the trapezoidal kernel function; 
\cite{vats2022lugsail} proposed the lugsail lag window to adjust for the finite-sample downward bias;
\cite{LiuChan2022} developed a class of converging kernel functions that correct for the bias.
There are also other methods
that utilize parametric models.
For example,
\cite{astfalck2024debiasing} introduced de-biasing techniques for the Welch's method for spectral density estimation.
\cite{astfalck2025bias} discussed bias correction methods for quadratic estimators of spectral density.

}

We first present the results under stationarity and then extend them to the nonstationary case in Section~\ref{sec:supp_het}.
This is reasonable as we focus more on HAC estimation than on
heteroskedasticity robust testing.
Relevant discussions on HAR testing
include the fixed-$b$ approach \citep{Kiefer2000,KieferVogelsang2002,Kiefer2005Vogelsang}, 
where the normalizer replacing the LRV estimator converges weakly to a non-degenerate distribution,  
and the test statistic has been shown to have non-pivotal distribution under nonstationarity \citep{casini2024fixed}.
\cite{casini2024prewhitened} proposed a nonlinear vector autoregressive prewhitening estimator that deals with nonstationarity 
using a double kernel approach; see also \cite{casini2023theory}.
Such methods typically involve 
a component of detrending or differencing to deal with nonstationarity.
Our method, on the other hand, implements solely the coloring step,
and can be combined with additional differencing approaches.
\cite{preinerstorferPW} compared the finite-sample size and power properties of prewhitened estimators.

In this article, 
instead of a complete transformation of the dependence structure,
we introduce the tail postcoloring method that approaches the problem of strong autocorrelation 
from a different perspective.
This construction bridges the parametric and nonparametric approaches,
while the bandwidth $\ell$ tunes the relative effect of each component.

The rest of the paper is organized as follows. 
Section \ref{sec:notation} presents the notation
and the setting of the model. 
The standard prewhitening technique will also be reviewed. 
Section \ref{sec:Unadj} motivates the key proposed principle and methodology. 
The asymptotic properties of the proposed estimator are derived. 
In particular, a closed-form formula for the optimal bandwidth and 
the new method for bridging parametric and nonparametric approaches are presented. 
Theoretical and empirical comparisons are demonstrated. 
Section \ref{sec:general} generalizes the proposed principle to different situations, 
including 
the use of multiple models for achieving multi-robustness, 
the use of a general kernel for performing tail postcoloring, 
extension to multivariate time series, and 
robustness to heteroskedasticity. 
Section \ref{sec:discussion} discusses the idea of tail postcoloring as a hybrid method that combines parametric and nonparametric tools, 
as well as the concept of using multiple models to assist the main estimator. 
We compare these ideas with similar approaches in the literature.
Section \ref{sec:simulation} demonstrates the practical usefulness via two applications: 
HAC estimation and convergence diagnosis in
MCMC simulation problems. 
All proofs and additional simulation results are deferred to the supplement. 
The R package \texttt{"postcoloring"} is available online for implementation. 

\section{Notation, setting and the prewhitening technique}\label{sec:notation}

\begin{assumption}\label{ass:K} 
Let $p\in\bar{\mathbb{N}}=\mathbb{N}\cup\{\infty\}$ and $\bar{\mathbb{R}}=[-\infty, \infty]$.
The class of $p$th order kernels is
  \begin{align}\label{eqt:ordinaryKernel}
    \mathcal{K}_{p} = \left\{ K \in \mathcal{C}_p : K(0)=1,
    K(t)=K(-t)\text{ for all }t\in\bar{\mathbb{R}}
    \text{ and }\int_{0}^{\infty}K^2(u) \, \dd u<\infty
    \right\},
  \end{align}
where
$\mathcal{C}_p$ is the set of functions $K:\bar{\mathbb{R}}\rightarrow [-1,1]$ that are
continuous at $0$ and at all but a finite number of other points,
and $p$ is the largest value such that 
$\lim_{t\downarrow0} \{ K(0)-K(t) \}/t^p \in\mathbb{R}\setminus\{0\}$.
\end{assumption}
We use the framework of dependence measure in \citet{wu2005};
see also \citet{wu2010}, \citet{wu2011} and \citet{wu2012covariance}. 
Let $\{\varepsilon_i,\varepsilon_i'\}_{i\in\mathbb Z}$
be independent and identically distributed innovations. 
Suppose that $X_i=g(\mathcal{F}_i)$ for some measurable function $g$,
where $\mathcal F_i=(\dots,\varepsilon_{i-2},\varepsilon_{i-1},\varepsilon_{i})$. 
Define the coupled version of $X_i$ as $X'_{i}=g(\mathcal F'_i)$,
where $\mathcal F'_i=(\dots,\varepsilon_{-1},\varepsilon'_{0},\varepsilon_{1},\dots,\varepsilon_{i-1},\varepsilon_{i})$.
For $p\geq1$, the physical dependence measure and its aggregated value across time are defined as
$\delta_{p,i}={\|X_i-X'_{i}\|}_p$ and $\Delta_p=\sum_{i=0}^{\infty}\delta_{p,i}$, respectively,
where $\|\cdot\|_{p}=\{\E({|\cdot|}^p)\}^{1/p}$ denotes the $\mathcal{L}^{p}$ norm. 
Write $\|\cdot\| =\|\cdot\|_{2}$. 
Define that for $p\in\mathbb{N}_0$, 
\begin{align}\label{eqt:v_u_kappa}
    v_p=\sum_{k\in\mathbb Z}|k|^p\gamma_k,\qquad
    u_p=\sum_{k\in\mathbb Z}|k|^p|\gamma_k| \qquad \text{and} \qquad
    \kappa_p = v_p/v,
\end{align}
where $\mathbb{N}_0 = \{0,1,2,\ldots\}$ and $0^0 = 1$.
In addition, define two constants
\[
  A = \int_{0}^{\infty} K^2(t) \,\dd t \qquad \text{and} \qquad  B=\lim_{t\downarrow 0} \{ K(t) - K(0) \}/ t^{p}.
\]

  \begin{assumption}\label{assp:AspSeries}
    Let $\{X_i\}$ be strictly stationary and ergodic 
    with mean $\E(X_1)=\mu$.
    Assume 
    (i) $\E\left({|X_1|}^{\nu}\right)<\infty$ for some $\nu>4$;
    (ii) $\{X_i\}$ is $q$-stable with $q=4$, i.e., $\Delta_4<\infty$.
  \end{assumption}
Assumption~\ref{assp:AspSeries}
ensures short-range dependence. 
It is
satisfied by many commonly used models;
see \citet{wu2005}.
Write $x_n \asymp y_n$ if there exist constants $C_1,C_2>0$ such that $C_1 y_n < x_n < C_2 y_n$.
Denote
the mean-squared error (MSE) of $\tilde{v}$ by $\MSE(\tilde{v}) = \E\{(\tilde{v}-v)^2\}$
and the standardized MSE by $\MSE(\tilde{v})/v^2$.
The bias of $\tilde{v}$ is $\bias(\tilde{v}) = \E(\tilde{v}) - v$.

Now we review \cite{andrews1992}'s prewhitening technique for estimating $v$.
For illustration, a first-order autoregressive (\textsc{ar}$(1)$) whitening model is considered: 
  \begin{definition}[\textsc{ar}$(1)$ prewhitened estimator]\label{def:SPW} 
    Define the prewhitened time series $\{Z_i\}_{i=2}^{n}$ as
     $Z_i = X_i - \bar{\phi} X_{i-1}\ (i = 2,\ldots,n)$,
     where $\bar{\phi} = \tilde{\gamma}_1/\tilde{\gamma}_0$ is the sample autocorrelation at lag $1$.
    The $\textsc{ar}(1)$ prewhitened estimator 
    proposed in \cite{andrews1992} is
    \begin{equation}\label{eqt:AndrewsAR1}
      \hat{v}_{\textsc{pw}} = {\tilde{v}(\ell; K; Z_{2:n})}/{(1- \bar{\phi})^2}.
    \end{equation}
  \end{definition}
The recoloring factor $1/(1-\bar{\phi})^2$ in (\ref{eqt:AndrewsAR1}) 
leads to a notable inflation of variance {\citep{Sul2005}} when $\bar{\phi}\approx 1$.
This motivates us to develop a more robust technique.

\section{The tail postcoloring technique}\label{sec:Unadj}
\subsection{Motivations}

A direct impact of strong autocorrelation on $\tilde{v}$ is 
that the bandwidth $\ell$ is often too short 
to assign enough weights $K(k/\ell)$ 
to significant $\tilde{\gamma}_k$ at large lags $k$.
We propose parametrically projecting the neglected tail autocovariances 
onto the nonparametric estimator $\tilde{v}$.

For example, in Figure~\ref{fig:coef} (a), 
if $\ell=15$ and $K=K_{\Bart}$, 
then only $\tilde{\gamma}_k$ where $|k|\leq \ell$
are used in the computation of $\tilde{v}$.
However, the tail autocovariances $\{\gamma_k\}_{|k|>\ell}$ are not used but should 
also determine the value of $v$.
Due to this fact, 
rather than estimating $v = \sum_{k=-\infty}^{\infty}\gamma_k$,
the estimator $\tilde{v} = \sum_{k=-\infty}^{\infty} K(k/\ell) \tilde{\gamma}_k$ should be better at estimating
another quantity, namely 
\[
  M_{\ell, K} = \sum_{k=-\infty}^{\infty} K(k/\ell) \gamma_k,
\]
which essentially only includes the autocovariances in the main part of the correlogram.
So, we propose rescaling $\tilde{v}$ by a parametric estimator of the ratio $\eta_{\ell,K} = v/M_{\ell, K}$
to adjust for the neglected tail autocovariances. 
The proposed estimator is formulated as follows: 

  \begin{figure}[t]
      \centering
      \includegraphics[width=.95\linewidth]{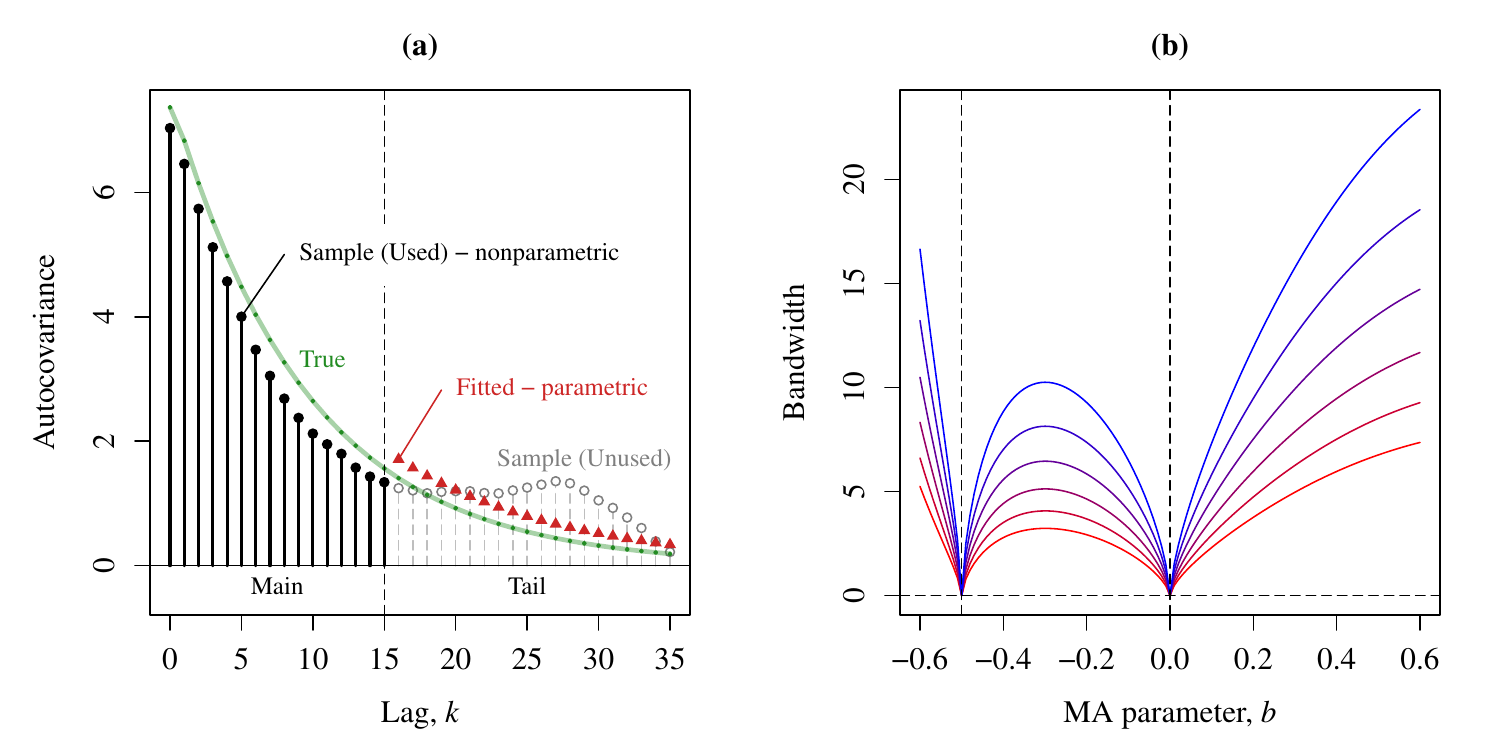}
    \vspace{-0.4cm}
    \caption{ 
      (a) A schematic illustration of tail autocovariances.
      If $\ell=15$, only $\{\tilde{\gamma}_k\}_{|k|\leq 15}$ are used, while $\{\tilde{\gamma}_k\}_{|k|>15}$ are unused
      in the nonparametric estimator $\tilde{v}$. 
      Tail postcoloring projects the data to the nearest fitted model, 
      and estimates the tail autocovariances $\{\gamma_k\}_{|k|>15}$ parametrically.
      (b) The optimal bandwidth $\ell$ defined in (\ref{eqt:OptBW}).
      Bartlett kernel and $\textsc{ar}(1)$ tail postcoloring model are used.
      The data are generated from the $\textsc{arma}(1,1)$ model with 
      $\textsc{ar}$ parameter $a=0.5$ and 
      $\textsc{ma}$ parameter $b$.
      The lines (from bottom to top) represent sample size $n=100,200,400,800,1600,3200$, respectively.  
      When $b=-0.5,0$, the data generating mechanism reduces to an $\textsc{ar}(1)$ model. 
      }
      \label{fig:coef}
  \end{figure}

\begin{definition}[Tail postcoloring]\label{def:SimpleTPW}
  Let $X_i({\theta}) = h_{\theta}(\mathcal{F}_i)$ be a parametric tail postcoloring model, or a coloring model for short,
  where $h_{\theta}$ is a measurable function, ${\theta} \in \mathcal{H} \subseteq \mathbb{R}^q$
  for $q\in\mathbb{N}$.
  The autocovariance under this model is $\gamma_k({\theta}) = \Cov\{X_i({\theta}), X_{i-k}({\theta})\}$ ($k\in\mathbb{Z}$).
  {Assume $v(\theta) = \sum_{k\in\mathbb{Z}} \gamma_k(\theta) \in\mathbb{R}^+$ for $\theta\in\mathcal{H}$.} 
  Let $\bar{\theta}\in\mathcal{H}$ be a $n^{1/2}$-consistent estimator for some $\theta_{\star}\in\mathcal{H}$.
  Let $\ell \in [1,n) \cap \mathbb{Z}$. 
  Denote the parametric estimators of $v$ and $M_{\ell, K}$ by 
  \[
    v(\bar{\theta}) = \sum_{k\in\mathbb{Z}} \gamma_k(\bar{\theta})
  \quad \text{and} \quad
  M_{\ell, K} (\bar{\theta}) =  \sum_{k\in\mathbb{Z}} K(k/\ell) \gamma_k(\bar{\theta}),
  \]
  respectively. 
  Then the tail postcolored estimator of $v$ is
    \begin{equation}\label{eqt:proposalsimple}
      \hat{v}_{\textsc{tail}} \equiv \hat{v}(\ell ; K; \bar{\theta}) 
            = {\eta}_{\ell, K}(\bar{\theta}) {\tilde{v}(\ell; K)},
      \quad \text{where} \quad
      \eta_{\ell, K}(\bar{\theta}) = v(\bar{\theta}) / M_{\ell, K} (\bar{\theta})  
    \end{equation}
    is a parametric postcoloring coefficient and   
  $\tilde{v}$ is a nonparametric estimator of $v$ defined in (\ref{eqt:kernelEst}).
  {In (\ref{eqt:proposalsimple}), we define $\cdot/0 \equiv 1$. 
	We may abbreviate $\hat{v}_{\textsc{tail}}$ as $\hat{v}$ if no confusion is possible. 
	}
\end{definition}

Tail postcoloring does not change the structure of the data or the user's choice of $\tilde{v}$, 
so it is less severely impacted by coloring model misspecification in finite samples,
as we will demonstrate in Example \ref{exp:simu_ar1}.
We conclude this subsection with an explicate example. 

\begin{example}\label{eg:TPW_AR1}
Consider an \textsc{ar}$(1)$ tail postcoloring model:
$X_i(\phi) = \phi X_{i-1}(\phi) + \varepsilon_i$, where $\phi\in(-1,1)$ and 
$\varepsilon_i\sim\Normal(0,\sigma^2)$ indepedently for some $0<\sigma<\infty$.
Let $\bar{\phi}\in(-1,1)$ be $n^{1/2}$-consistent for any $\phi\in(-1,1)$.
If $K=K_{\Bart}$, then (\ref{eqt:proposalsimple}) is computed as
  \begin{equation}\label{eqt:propDebias}
    \hat{v}_{\textsc{tail}} \equiv \hat{v}(\ell ; K; \bar{\phi}) 
     = \eta_{\ell, K}(\bar{\phi}) \tilde{v}(\ell; K),
     \qquad \text{where} \qquad
     \eta_{\ell, K}(\bar{\phi}) = \frac{\ell-\ell\bar{\phi}^2}{2\bar{\phi}^{\ell+1}-2\bar{\phi}-\ell\bar{\phi}^2+\ell} 
  \end{equation}
parametrically rescales $\tilde{v} = \sum_{k=-\ell}^{\ell}(1-|k|/\ell)\tilde{\gamma}_k$ 
from a biased target $M_{\ell,K} = \sum_{k=-\ell}^{\ell}(1-|k|/\ell)\gamma_k$ to the correct estimand 
$v=\sum_{k=-\infty}^{\infty}\gamma_k$. 
In practice, we set $\bar{\phi} = \tilde{\gamma}_1/ \tilde{\gamma}_0$ as an estimator of $\phi_{\star}=\gamma_1/\gamma_0$
to project the tail autocovariances to the closest \textsc{ar}$(1)$ model. 
If $\ell=1$, then $\hat{v}_{\textsc{tail}}=\tilde{\gamma}_0(1+\bar{\phi})/(1-\bar{\phi})$
is an $\textsc{ar}(1)$-based parametric estimator.
It is $n^{1/2}$-consistent for $v$ if the model is well-specified. 
If $\ell\rightarrow\infty$, then $\hat{v}_{\textsc{tail}}=\tilde{v} +o_p(1/\ell)$ becomes  a nonparametric estimator.
In Section~\ref{sec:OptBW}, we show that the optimal $\ell$ tunes $\hat{v}$ so that 
it switches between two arms accordingly; see Figure~\ref{fig:coef} (b) for a graphical illustration.
\end{example}

\begin{remark}[Positivity of $M_{\ell, K}( {\theta})$]\label{rem:positivity}
{If $K$ is a positive semidefinite (PSD) kernel, such as Bartlett kernel and quadratic spectral kernel, 
then $M_{\ell, K}( {\theta})\geq 0$ for all $\ell$ and $ {\theta}$.
The term $M_{\ell, K}( {\theta})$ can be represented as the LRV of 
$\bar{X}_{n,W} = \sum_{i=1}^n X_i W_i/n$,
where $X_{i} = g_{ {\theta}}(\mathcal{F}_i)$
and  $W_1, \ldots W_n$ are independent on $X_1, \ldots, X_n$ 
and are jointly normal with mean $0$ and autocovariance 
$\E(W_iW_{i-k}) = K(k/\ell)$, i.e.,
$M_{\ell,K}( {\theta}) = \lim_{n\to\infty}\Var(\bar{X}_{n,W})$.
Hence $M_{\ell,K}( {\theta}) = 0$ only in the degenerate case of zero
LRV of $\bar{X}_{n,W}$.
The degenerate case occurs, for example,
if $X_{1} = \cdots = X_{n}=0$.
Hence, PSD kernels are recommended.}
If $K$ is not PSD,  
it is still guaranteed to have $\lim_{n\rightarrow\infty}M_{\ell, K}({\theta}) > 0$. 
So, the quantities in (\ref{eqt:proposalsimple}) are always well-defined and positive asymptotically. 
If $K$ is not PSD and $n$ is small,
one may ensure $M_{\ell, K}( {\theta})>0$ by the correction methods suggested in, e.g., 
\cite{politis2011} and \cite{LiuChan2022}; see Remark \ref{rem:positiveDef} for the details.
{ Such correction do not affect the asymptotic properties of the estimator 
and the finite-sample effect on the performance is minor.
}

\end{remark}

{

\begin{remark}[Targeting the finite-sample variance]\label{rem:FSvar}
As pointed out by one of the referees,
estimating the finite-sample variance 
$v^{\textsc{fs}} = \sum_{k=1-n}^{n-1}(1-|k|/n)\gamma_{k}$ can be a better target than the long-run variance $v$; 
see, e.g., \cite{perron2011irrelevance}.
Our tail postcolored estimator for $v^{\textsc{fs}}$ is 
$\hat{v}_{\textsc{tail}}^{\textsc{fs}} = \left\{ v^{\textsc{fs}}(\bar{\theta})/ M_{\ell, K}(\bar{\theta}) \right\} \tilde{v}(\ell; K)$,
where $v^{\textsc{fs}}(\bar{\theta}) = \sum_{k=1-n}^{n-1}(1-|k|/n)\gamma_{k}(\bar{\theta})$.
Tail postcoloring remains useful for estimating $v^{\textsc{fs}}$
because $\tilde{v} = \sum_{|k|\leq \ell} K(k/\ell) \tilde{\gamma}_k$ also neglects 
part of the tail autocovariances $\{\gamma_k\}_{\ell<|k|<n}$ that appear in $v^{\textsc{fs}} = \sum_{|k|<n} (1-|k|/n)\gamma_k$.
Thus, postcoloring the tail autocovariances may improve $\tilde{v}$.
Besides, it is remarked that the estimators of spectral density 
would have infinite minimax risk and thus its estimation is an ill-posed problem \citep{Potscher2002}. 
Alternatively, instead of estimating $v$ consistently, 
fixed-$b$ asymptotics \citep{Kiefer2000,KieferVogelsang2002Econometrica,KieferVogelsang2002,Kiefer2005Vogelsang} can be used. 
Setting $\ell=n$ in $\tilde{v}(\ell; K)$ is inconsistent for $v$, but it will converge weakly a non-degenerate distribution proportional to $v$. 
This approach leads to less size distortion in finite samples and is useful for constructing the HAR test statistics.
Since the current article studies estimation of $v$, we leave this fixed-$b$ asymptotics for further study. 
\end{remark}}

\subsection{Optimal bandwidth selection and measure of distance}\label{sec:OptBW}

Under the tail postcoloring model, 
we define the dependence ratio as 
$\kappa_p( \theta ) = v_p( \theta )/v_0(\theta)$,
where
$v_p(\theta) = \sum_{k\in\mathbb{Z}} {|k|}^p \gamma_k(\theta)$.
We show that the performance of $\hat{v}_{\textsc{tail}}$ depends on 
the difference between the parametrically modeled ratio $\kappa_p( \theta )$ and the true ratio $\kappa_p$:

  \begin{theorem}[Bias and variance]\label{thm:BiasVar}
    Let $K\in\mathcal{K}_{p}$ for some $p \in \mathbb{N}$. 
    Suppose that the estimator $\bar{\theta}$ satisfies $ n^{1/2}(\bar{\theta} - \theta_{\star}) = O_p(1)$ for some $\theta_{\star} \in \mathcal{H}$
    as $n\to\infty$.
    Assume
    $u_{p}(\theta_{\star}) \equiv \sum_{k\in\mathbb{Z}} {|k| }^{p} \left\mid  \gamma_k(\theta_{\star}) \right\mid  < \infty$, $\sup_{\theta\in\mathcal{H}}\left|{\partial v(\theta)}/{\partial \theta}\right|<\infty$ and 
    $\sup_{\theta\in\mathcal{H}} \left| {\partial M_{\ell,K}(\theta)}/{\partial \theta} \right|<\infty$ for all $\ell>0$.
    If Assumption~\ref{assp:AspSeries} holds, $u_{p}< \infty$ and $\ell \asymp n^{1/(2p+1)}$,  
    then we have (i) 
    \[
      \lim_{h\to\infty} \lim_{n\to\infty}  \MSE_h \{\hat{v}(\ell ; K; \bar{\theta})\}
        = \lim_{n\to\infty} n^{{2p}/({2p+1})} \MSE\{\hat{v}(\ell ; K; \theta_{\star})\},
    \]
    where $\MSE_{h}(\hat{v}) = \E\left[\min\left\{n^{2p/(2p+1)} (\hat v-v)^2,h\right\}\right]$ is the truncated mean-squared error; 
    (ii) $\Var\left\{ \hat{v}(\ell ; K; \theta_{\star}) \right\} \sim \Var\left\{ \tilde{v}(\ell;K) \right\} \sim 4A v^2 \ell/n$; 
    and (iii)
    \begin{equation}\label{eqt:NonnegligibleBias}
      \bias\left\{ \hat{v}(\ell ; K; \theta_{\star}) \right\}
      =  \frac{B \xi_p  }{\ell^p}v + o\left(\frac{1}{\ell^p}\right) + O\left(\frac{\ell}{n}\right), 
      \quad\text{where}\quad \xi_p = \kappa_p - \kappa_{p}(\theta_{\star}). 
    \end{equation}
  \end{theorem}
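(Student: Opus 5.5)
The plan is to treat $\hat v(\ell;K;\theta_\star)=\eta_{\ell,K}(\theta_\star)\tilde v(\ell;K)$ as a \emph{deterministic} rescaling of the unadjusted kernel estimator. We first obtain (ii) and (iii) from classical kernel-estimator asymptotics. Then we show that replacing $\theta_\star$ by $\bar\theta$ costs only $O_p(n^{-1/2})$, which is negligible relative to the rate $n^{-p/(2p+1)}$ and gives (i).

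\textbf{Step 1: expansion of the scaling factor.} Since $K\in\mathcal K_p$ and $u_p(\theta_\star)<\infty$, write
\[
M_{\ell,K}(\theta_\star)-v(\theta_\star)=\sum_k\{K(k/\ell)-1\}\gamma_k(\theta_\star).
\]
Split the sum at $|k|\le \epsilon\ell$ and $|k|>\epsilon\ell$. On the first part, $K(k/\ell)-1=B|k|^p\ell^{-p}\{1+o(1)\}$. On the second part, $|K-1|\le 2$ and $\sum_{|k|>\epsilon\ell}|\gamma_k(\theta_\star)|\le (\epsilon\ell)^{-p}\sum_{|k|>\epsilon\ell}|k|^p|\gamma_k(\theta_\star)|=o(\ell^{-p})$. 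Together these give
\[
M_{\ell,K}(\theta_\star)=v(\theta_\star)\{1+B\kappa_p(\theta_\star)\ell^{-p}+o(\ell^{-p})\}.
\]
Because $v(\theta_\star)>0$, it follows that $\eta_{\ell,K}(\theta_\star)=1-B\kappa_p(\theta_\star)\ell^{-p}+o(\ell^{-p})$. The same argument with $u_p<\infty$ gives $M_{\ell,K}=v\{1+B\kappa_p\ell^{-p}+o(\ell^{-p})\}$.

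\textbf{Step 2: moments of $\tilde v$ and parts (ii)–(iii).} From the definition of $\tilde\gamma_k$,
\[
\E\tilde v=\sum_{|k|<n}K(k/\ell)(1-|k|/n)\gamma_k+O(\ell/n)=M_{\ell,K}+O(\ell/n),
\]
where the mean-correction term $-K(k/\ell)\Var(\bar X_n)$ summed over $|k|\lesssim\ell$ is $O(\ell/n)$. Any truncation effect from a non-compact $K$ is absorbed using $u_p<\infty$. For the variance, $\Var(\tilde v)\sim 4Av^2\ell/n$ is the standard quadratic-form result under $\Delta_4<\infty$ and finite $\nu>4$ moments; I would invoke it from \citet{wu2005}/\citet{wu2010}-type results, as the paper treats that framework as given. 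Since $\eta_{\ell,K}(\theta_\star)\to1$ is deterministic, (ii) follows immediately. For (iii), combine the two expansions:
\[
\E\hat v-v=\eta_{\ell,K}(\theta_\star)\{M_{\ell,K}+O(\ell/n)\}-v=v\{1-B\kappa_p(\theta_\star)\ell^{-p}\}\{1+B\kappa_p\ell^{-p}\}-v+o(\ell^{-p})+O(\ell/n),
\]
which equals $B\xi_p v\ell^{-p}+o(\ell^{-p})+O(\ell/n)$.

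\textbf{Step 3: plugging in $\bar\theta$ and part (i).} By the mean value theorem,
\[
|\eta_{\ell,K}(\bar\theta)-\eta_{\ell,K}(\theta_\star)|\le C|\bar\theta-\theta_\star|.
\]
This uses the uniform derivative bounds on $v(\theta)$ and $M_{\ell,K}(\theta)$, together with $M_{\ell,K}(\theta)$ being bounded away from $0$ near $\theta_\star$ for large $\ell$ (by Step 1 and continuity). Hence $\hat v(\ell;K;\bar\theta)-\hat v(\ell;K;\theta_\star)=O_p(n^{-1/2})\tilde v=O_p(n^{-1/2})$, so $n^{p/(2p+1)}$ times this difference is $o_p(1)$ because $p/(2p+1)<1/2$.

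Next, set $Z_n=n^{p/(2p+1)}\{\hat v(\ell;K;\theta_\star)-v\}$. By Steps 1–2 with $\ell\asymp n^{1/(2p+1)}$, $\E Z_n^2$ converges to a finite limit. The argument then runs as follows. Truncation at $h$ is bounded and continuous, so $\E\min\{Z_n'^2,h\}-\E\min\{Z_n^2,h\}\to0$, where $Z_n'$ is the $\bar\theta$ version. Then let $h\to\infty$ and show $\lim_h\lim_n\E\min(Z_n^2,h)=\lim_n\E Z_n^2$.

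The main obstacle is this last interchange. It requires uniform integrability of $Z_n^2$. I would obtain it by showing $\sup_n\E|Z_n|^{2+\delta}<\infty$, with $\delta>0$ permitted by the $\nu>4$ moment assumption, via a Burkholder-type bound on $\tilde v-\E\tilde v$ using $\Delta_4<\infty$. If higher moments are unavailable, the fallback is to establish asymptotic normality of $Z_n$ and use the fact that its second moment converges to the variance of the limit.
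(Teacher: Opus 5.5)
Your plan follows the paper's proof in outline:
\begin{itemize}
\item the same expansion $\eta_{\ell,K}(\theta_\star)=1-B\kappa_p(\theta_\star)\ell^{-p}+o(\ell^{-p})$ (you split the lag sum at $\epsilon\ell$ and let $\epsilon\downarrow 0$; the paper splits at $\sqrt{\ell}$);
\item the same $\E\tilde v=M_{\ell,K}+O(\ell/n)$;
\item the variance of $\hat v(\ell;K;\theta_\star)$ by deterministic rescaling;
\item a mean-value bound followed by bounded convergence of the truncated loss for part (i).
\end{itemize}
There is one genuine gap. You cite $\Var(\tilde v)\sim 4Av^2\ell/n$ rather than prove it, and this is where the paper's proof does most of its work. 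The block/martingale-approximation lemmas the paper builds on (Chan--Yau) assume $K(t)=0$ for $|t|>1$. Lag-window results under physical dependence are typically stated for sums over $|k|\le\ell$. But $K\in\mathcal{K}_p$ need not be compactly supported; the quadratic spectral kernel is in $\mathcal{K}_2$. The paper therefore truncates the quadratic form $\ddot v$ at lags $|i-j|\le s\ell$. It obtains $\Var(\ddot v_s)=4v^2(\ell/n)\int_0^s K^2(t)\,\dd t+o(\ell/n)$ for fixed $s$ from modified versions of those lemmas, and then lets $s\to\infty$. For your citation to cover the theorem as stated, you need this truncation step, with the discarded lags $|i-j|>s\ell$ controlled uniformly in $n$. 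A bound of order $(\ell/n)\int_s^\infty K^2(t)\,\dd t$ on their variance would suffice.

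At the end you are more careful than the paper. The paper simply asserts $\lim_{h\to\infty}\lim_{n\to\infty}\MSE_h\{\hat v(\ell;K;\theta_\star)\}=\lim_{n\to\infty}n^{2p/(2p+1)}\MSE\{\hat v(\ell;K;\theta_\star)\}$. You rightly note that this needs uniform integrability of $Z_n^2$. You also make explicit the lower bound on $M_{\ell,K}(\theta)$ along the mean-value segment, which the paper leaves implicit. Two points on this part:
\begin{itemize}
\item $\E Z_n^2$ need not converge under $\ell\asymp n^{1/(2p+1)}$ alone; something like $\ell n^{-1/(2p+1)}\to c$ is needed.
\item Your primary route to uniform integrability does not go through as stated. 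The standard Burkholder-type bound on $\|\tilde v-\E\tilde v\|_{2+\delta}$ runs through $\Delta_q<\infty$ for some $q>4$. Assumption~\ref{assp:AspSeries} gives only $\Delta_4<\infty$, and the extra moments $\E|X_1|^\nu<\infty$ do not upgrade this.
\end{itemize}
Your CLT fallback is the route that works. Suppose $Z_n\Rightarrow Z$ with $\E Z_n^2\to\E Z^2$; this is automatic when $Z$ is the normal limit with the limiting scaled bias and variance. Then $\E\min(Z_n^2,h)\to\E\min(Z^2,h)$ for each $h$, and monotone convergence in $h$ gives the identity. Such a CLT can be extracted from the same block construction used for the variance.
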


{ It can be observed that tail postcoloring leads to a change in the leading bias term, 
while the asymptotic variance remains the same.
A comparison with the prewhitened estimator of \cite{andrews1992} is provided in 
{\ifnum\isXr=1{Remark \ref{rem:uni}}\else{Remark A.1}\fi} of the supplement. 
By balancing the squared bias and variance, we can derive the MSE-optimal $\ell$; see Corollary \ref{cor:optimalBW} below.   
We will determine when tail postcoloring leads to an improvement in MSE in Section \ref{sec:comparison_main}.  
We also refer readers to Theorem 3.1 and the discussions in \cite{casini2024prewhitened} for 
a locally prewhitened estimator for nonstationary time series.}

  \begin{corollary}[Optimality]\label{cor:optimalBW}
    Under the conditions of Theorem~\ref{thm:BiasVar}
    and suppose that $\xi_p\neq 0$,
    the asymptotic truncated mean-squared error of $\hat{v}$ is uniquely minimized at 
    \begin{equation}\label{eqt:OptBW}
      \ell_{\textsc{tail}} \sim \left\{ {pB^2 \xi_p^2 n }/{ (2A)} \right\}^{1/(2p+1)},
    \end{equation}
    and satisfies that 
	$
      \lim_{h\to\infty} \lim_{n\to\infty} \MSE_h\{\hat{v}(\ell_{\textsc{tail}} ; K; \bar{{\theta}})\}
       = (2p+1)\left\{  \left( {2A}/{p}\right)^p |B  \xi_p|  \right\}^{2/(2p+1)} v^2.
      $
  \end{corollary}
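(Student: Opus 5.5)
The plan is to reduce the problem to minimizing an explicit scalar function of the bandwidth constant, using only Theorem~\ref{thm:BiasVar}. Write $\ell = c\, n^{1/(2p+1)}$ for a constant $c>0$, which is admissible because the theorem requires $\ell\asymp n^{1/(2p+1)}$.

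By part (i) of Theorem~\ref{thm:BiasVar}, the asymptotic truncated MSE of $\hat v(\ell;K;\bar\theta)$ equals $\lim_n n^{2p/(2p+1)}\MSE\{\hat v(\ell;K;\theta_\star)\}$. So it suffices to work with the oracle estimator evaluated at $\theta_\star$.

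Decompose this MSE as variance plus squared bias.

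\textbf{Variance.} By part (ii), $n^{2p/(2p+1)}\Var\{\hat v(\ell;K;\theta_\star)\} \to 4Ac\,v^2$, since $\ell/n = c\,n^{-2p/(2p+1)}$.

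\textbf{Bias.} By part (iii), the bias is $B\xi_p v\,c^{-p} n^{-p/(2p+1)} + o(n^{-p/(2p+1)}) + O(\ell/n)$. Because $p\ge1$, the remainder satisfies $\ell/n = O(n^{-2p/(2p+1)}) = o(n^{-p/(2p+1)})$. Hence $n^{2p/(2p+1)}\bias^2 \to B^2\xi_p^2 v^2 c^{-2p}$.

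Combining the two pieces, the limiting risk is
\[
R(c)=v^2\left(4Ac + B^2\xi_p^2 c^{-2p}\right).
\]

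\textbf{Minimizing $R$.} Since $\xi_p\ne0$ and $B\ne0$, because $K\in\mathcal K_p$ forces $B\in\mathbb R\setminus\{0\}$, the function $R$ is strictly convex on $(0,\infty)$. It tends to $\infty$ both as $c\downarrow0$ and as $c\to\infty$. Therefore it has a unique minimizer, given by the first-order condition $4A = 2pB^2\xi_p^2c^{-2p-1}$, that is,
\[
c^{2p+1}=pB^2\xi_p^2/(2A).
\]
This gives $\ell_{\textsc{tail}}\sim\{pB^2\xi_p^2 n/(2A)\}^{1/(2p+1)}$, matching (\ref{eqt:OptBW}).

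\textbf{Optimal risk.} Substitute the minimizer back. At the optimum, $B^2\xi_p^2c^{-2p} = (2A/p)\,c$, so
\[
R(c)=v^2\,c\,(4A+2A/p)=2A\,v^2 c\,(2p+1)/p.
\]
Inserting $c=\{pB^2\xi_p^2/(2A)\}^{1/(2p+1)}$ and collecting powers, with $(2A/p)^{1-1/(2p+1)} = (2A/p)^{2p/(2p+1)}$, yields
\[
(2p+1)\{(2A/p)^p|B\xi_p|\}^{2/(2p+1)}v^2 .
\]

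\textbf{Main obstacle: the meaning of ``uniquely minimized''.} The minimization above ranges only over rates $\ell\asymp n^{1/(2p+1)}$, whereas the claim is about all bandwidth sequences. I will handle this with a short separate argument. If $\ell n^{-1/(2p+1)}\to0$ or $\to\infty$ along a subsequence, then either the squared bias term ($\propto\ell^{-2p}$) or the variance term ($\propto\ell/n$) makes the scaled MSE diverge. This uses the same bias and variance expansions, which hold for $\ell\to\infty$ with $\ell/n\to0$. So such sequences cannot be optimal.

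A further point to check is that the truncation limit $h\to\infty$ is interchanged correctly with the minimization over $c$. This is immediate, because part (i) already identifies the limit of the truncated risk for each fixed $c$.
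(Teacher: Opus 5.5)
Your proposal is correct and follows the paper's own route. The paper likewise uses Theorem~\ref{thm:BiasVar} to write $\MSE\{\hat v(\theta_\star)\}\sim B^2\xi_p^2v^2/\ell^{2p}+4A\ell v^2/n$ and minimizes over $\ell$; your parametrization $\ell=c\,n^{1/(2p+1)}$, the strict-convexity argument for uniqueness and the explicit back-substitution giving the constant fill in details the paper leaves implicit. Your extra argument for rates outside $n^{1/(2p+1)}$ goes beyond what the paper claims, and for $\ell\gg n^{1/(2p+1)}$ a diverging variance alone would not force the \emph{truncated} risk to diverge without some distributional or higher-moment control, but it is not needed for the corollary in the paper's sense of minimizing over $\ell\asymp n^{1/(2p+1)}$.
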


When we want to emphasize that the optimal $\ell_{\textsc{tail}}$ is used, 
we may write $\hat{v}_{\textsc{tail}}$ as $\hat{v}_{\textsc{tail}\star} = \hat{v}(\ell_{\textsc{tail}} ; K; \bar{{\theta}})$. 
Corollary~\ref{cor:optimalBW} is a straightforward application of Theorem~\ref{thm:BiasVar}.
The optimal bandwidth in (\ref{eqt:OptBW}) 
implies that the proposed estimator $\hat{v}_{\textsc{tail}\star}$ automatically switches between parametric
and non-parametric forms through the tuning of the bandwidth.
If $| \xi_p |$ is small,
i.e., the discrepancy of the data generating process and the parametric model measured by the difference of the dependence ratios
is small, $\ell_{\textsc{tail}}$ would be small according to the definition in (\ref{eqt:OptBW}). 
Using a short bandwidth enhances the parametric component while the nonparametric component is weakened.

 \begin{remark}
    The truncated MSE criterion in Theorem \ref{thm:BiasVar}
    is used in \cite{andrews1991} and \cite{andrews1992} 
    to circumvent the case that the MSE blows up due to the estimator $\bar{\theta}$.
    For example, when the tail postcoloring model is the \textsc{ar}$(1)$ model, 
    this problem can occur when the data is close to having a unit root, i.e., when 
    $\bar{\theta} = \bar{\phi} \approx 1$. 
  \end{remark}

  \begin{remark}[Estimation of $\ell_{\textsc{tail}}$]\label{rem:est_opt_bw}
    In practice, $\xi_p = \kappa_p - \kappa_p(\theta_{\star})$ is unknown. 
    We suggest two methods.
    First, motivated by \cite{andrews1991},
    we estimate $\kappa_p$ by a parametric model that includes the tail postcoloring model. 
    For example, if an $\AR(1)$ tail postcoloring model is used, 
    we may fit an $\ARMA(1,1)$ model,
    i.e., $X_i = a X_{i-1} + \varepsilon_i + b \varepsilon_{i-1}$, to estimate $\kappa_p$. 
    In particular, when $p=1$, a parametric plug-in estimator of $\xi_1$ is 
    \[
      \bar{\xi}_1 = 2(\bar{a}+\bar{b})(1+\bar{a}\bar{b})(1+\bar{b})^{-2}(1-\bar{a}^{-2}) - 2\bar{\phi}(1-\bar{\phi}^2)^{-1},
    \]
    where $\bar{\phi}$ and $(\bar{a},\bar{b})$ are $n^{1/2}$-consistent estimators of 
    $\phi$ in the $\AR(1)$ model and 
    $(a,b)$ in the $\ARMA(1,1)$ model, respectively.
  Alternatively, we can estimate $\kappa_p = v_p/v_0$ with kernel estimators.
  In particular, when $p=1$, a nonparametric plug-in estimator of $\xi_1$ is 
  \[
    \hat{\xi}_1 
      = \tilde{v}_{1,\#} /\tilde{v}_{0,\#}  - 2\bar{\phi}(1-\bar{\phi}^2)^{-1}, 
  \]
  where $\tilde{v}_{r,\#} \equiv \sum_{|k|< n} K_r(k/\ell_r^{\#}) {|k|}^r \tilde{\gamma}_k$
  and $\ell_r^{\#} = O\{n^{1/(2r+3)}\}$ for $r\in\{0,p\}$.  
    We are aware that estimating $v_p$ is difficult especially when $v_p\approx 0$,
    so it may slightly inflate the risk in such cases; 
    see Section~\ref{sec:simulation}.
    Unless otherwise stated, we estimate $\xi_1$ by $\hat{\xi}_1$ with $K_1=K_0=K_{\Bart}$. 
  \end{remark}

\subsection{Theoretical and empirical comparison}\label{sec:comparison_main}
{
We compare the tail postcolored estimator $\hat{v}_{\textsc{tail}}$, 
prewhitened estimator $\hat{v}_{\textsc{pw}}$, and 
the unadjusted estimator $\tilde{v}_{\textsc{un}}$ defined in (\ref{eqt:kernelEst}).
All estimators use the same kernel $K$ and their own MSE-optimal bandwidths.
We prove that 
(i) $\lim_{n\rightarrow\infty} \MSE(\hat{v}_{\textsc{tail}}) / \MSE(\tilde{v}_{\textsc{un}}) <1$ if and only if $|\xi_p| < |\kappa_p|$; and 
(ii) $\lim_{n\rightarrow\infty} \MSE(\hat{v}_{\textsc{pw}}) / \MSE(\tilde{v}_{\textsc{un}}) <1$ if and only if $|\kappa^{Z}_p| < |\kappa_p|$,
  where $\kappa^{Z}_p$
  is the dependence ratio of the prewhitened series $\{Z_{i}\}$; 
  see {\ifnum\isXr=1{Section \ref{sec:comparison_pw}}\else{Section A.1}\fi}  for details.}

We first theoretically compare the asymptotic behavior of 
$\hat{v}_{\textsc{tail}}$, $\hat{v}_{\textsc{pw}}$, and $\tilde{v}_{\textsc{un}}$
in Example~\ref{rem:CompTPW}.
Then a simulation experiment is performed for studying the finite-sample performance 
in Example~\ref{exp:simu_ar1}.
We also compare them with other commonly used estimators. 

{
\begin{example}[Theoretical comparison]\label{rem:CompTPW} 
  Suppose the \textsc{ar}$(1)$ tail postcoloring model in Example \ref{eg:TPW_AR1} is used. 
  Then the modeled dependence ratio is $\kappa_1(\phi)= 2\phi/(1-\phi^2)$. 
  In this case, tail postcoloring reduces MSE asymptotically if and only if
  $\phi\in(\phi_-, \phi_+)$, where 
  \begin{align}\label{eqt:sol1}
     \phi_- = \varphi\mathbb{1}(\kappa_1<0), \quad
     \phi_+ = \varphi\mathbb{1}(\kappa_1>0), \quad
     \varphi = \frac{-1+(4\kappa_1^2 + 1)^{1/2}}{2\kappa_1}.
  \end{align}
  Suppose that
  $X_{1:n}$ are indeed generated from the $\ARMA(1,1)$ model: 
  $X_i = a X_{i-1} + \varepsilon_i + b \varepsilon_{i-1}$,  
  where $a,b\in(-1,1)$ and $\varepsilon_i\sim\Normal(0,1)$ independently. 
  Then
  $\kappa_1 =  2\{ a + b + a(a+b)^2 /(1-a^2)\}/(1+b)^2$.
  In particular, if $b=0$, then $\kappa_1 = 2a/(1-a^2)$ and $\varphi = (a^2-1+|a^4 + 14a^2+1|^{1/2})/(4a)$.
  Hence, tail postcoloring may improve MSE not only when the coloring model is well-specified (i.e., $\phi=a$)
  but also when the coloring model is misspecified (i.e., $\phi\in(\phi_-, a) \cup( a,\phi_+)$). 
  For example, when $a=0.9$, the improvement condition is $\phi\in(0, 0.9486)$ approximately. 

    We compare the range with the one provided in \citet{andrews1992}
    when $\phi \neq a$.  
    If $\phi>0$, then 
    $\MSE(\hat{v}_{\textsc{tail}}) < \MSE(\hat{v}_{\textsc{pw}})$ asymptotically.  
    When $\phi<0$, we have the opposite conclusion.
    The case $\phi>0$ is used and studied more frequently than $\phi<0$ in prewhitening; see, e.g., 
      \cite{muller2014hac}. 
  In particular, if $\phi = \phi_{\star} \equiv \gamma_1/\gamma_0$, 
  then Figure~\ref{fig:ar1pw1}(a)--(b) visualizes the ranges of $(a,b)$ such that $\hat{v}_{\textsc{pw}}$
  and $\hat{v}_{\textsc{tail}}$ improve the unadjusted counterpart $\tilde{v}_{\textsc{un}}$, respectively. 
  It shows that tail postcoloring has a larger improvement region of $(a,b)$ than standard prewhitening.
  Moreover, Figure~\ref{fig:ar1pw1}(c) shows that tail postcoloring improves prewhitening 
  when $a+b>0$, which arguably covers most commonly seen positively correlated time series. 
  Similar discussions on the bias and variance are provided in {\ifnum\isXr=1{Example \ref{exp:ar11}}\else{Example A.1}\fi}  of the supplement.
\end{example}
}

\begin{figure}[t]
    \begin{center}
          \includegraphics[width=\linewidth]{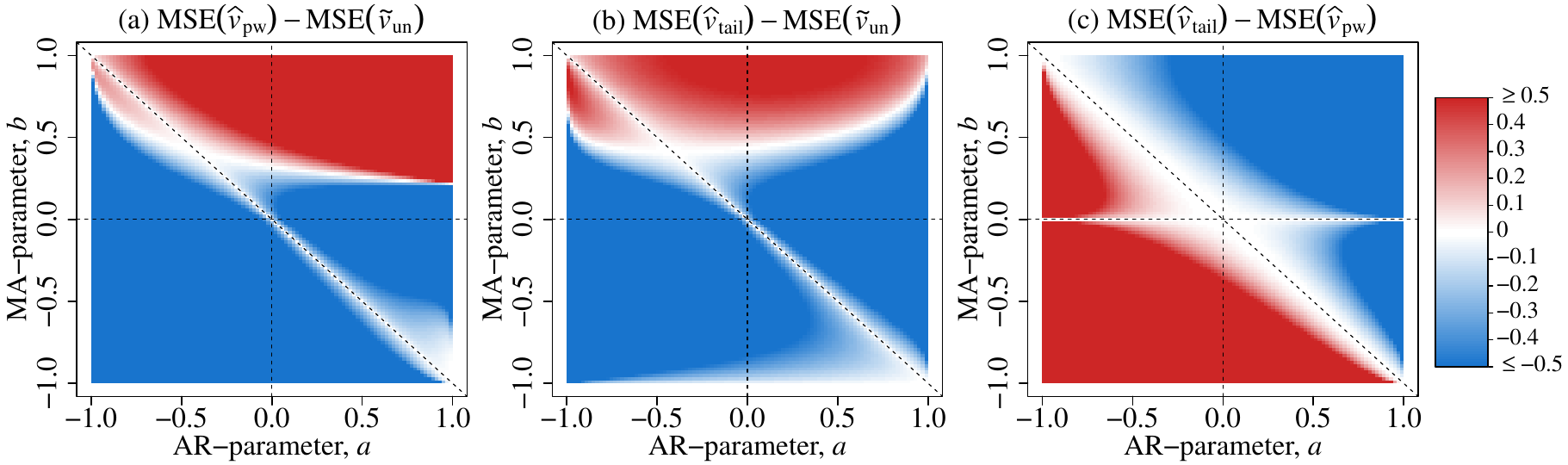}
    \end{center}
    \vspace{-0.5cm}
        \caption{Heat maps display the asymptotic differences of the MSEs after standardizing by the true value of $v^2$. 
        Plot (a), (b), and (c) show the limiting values of 
        $\{ \MSE(\hat{v}_{\textsc{pw}})-\MSE(\tilde{v}_{\textsc{un}}) \}n^{2/3}/v^2$, 
        $\{ \MSE(\hat{v}_{\textsc{tail}})-\MSE(\tilde{v}_{\textsc{un}}) \}n^{2/3}/v^2$, and 
        $\{ \MSE(\hat{v}_{\textsc{tail}})-\MSE(\hat{v}_{\textsc{pw}}) \}n^{2/3}/v^2$, respectively.  
        See Example \ref{rem:CompTPW} for the details. 
        A difference less than zero (i.e., blueish regions) indicates an improvement
        of the first estimator over the second estimator in the difference. 
        Note also that the titles of plots do not show the standardization $n^{2/3}/v^2$ due to space constraints. 
        }\label{fig:ar1pw1}
\end{figure}

\begin{example}[Simulation]\label{exp:simu_ar1}
  Generate the data from the $\ARMA(1,1)$ model: $X_i = a X_{i-1} + \varepsilon_i + b \varepsilon_{i-1}$.
    We consider the Bartlett kernel $K=K_{\Bart}$ and 
    the \textsc{ar}$(1)$ model for postcoloring:
    $X_i(\phi) = \phi X_{i-1}(\phi) + \varepsilon_i$
    with the estimator $\bar{\phi}=\tilde{\gamma}_1/\tilde{\gamma}_0$, which is consistent for $\phi_{\star} = \gamma_1/\gamma_0$;
    see Example~\ref{eg:TPW_AR1}. 
  In this setting, the exact values of $\phi_{\star}$ and $\xi_1$ are 
  \[
    \phi_{\star} = \frac{(a+b)(1+ab)}{1+b^2+2ab}
      \quad\text{and}\quad
    \xi_1  = \frac{2(a+b)(1+ab)}{(1+b)^2(1-a^2)} - \frac{2\phi_{\star}}{1-\phi_{\star}^2}.
  \]
  Let  
  $\phi_{\star z} = {\phi_{\star}^2 (\phi_{\star}-a)}/(1-\phi_{\star}^2)$, 
  $\bar{\sigma}^2_{z} = \sum_{i=2}^{n} (Z_i-\bar{Z})^2 /(n-2)$, and $\bar{Z}=\sum_{i=2}^nZ_i/(n-1)$, 
  where $\{Z_i\}_{i=2}^{n}$ is defined in Definition~\ref{def:SPW}.
  We compare four estimators of $v$:  
  \begin{itemize}[topsep=3pt,itemsep=3pt,partopsep=3pt, parsep=3pt]
	\item[(a)] an $\AR(1)$-based parametric estimator, 
	\item[(b)] the unadjusted estimator with \cite{andrews1991}'s $\AR(1)$ plug-in bandwidth selector,
	\item[(c)] \cite{andrews1992}'s $\AR(1)$-prewhitened version of (b), and
	\item[(d)] the proposed $\AR(1)$-tail postcolored estimator.
	\end{itemize}
	They are, respectively,  computed as  
  \[
    \bar{v}_{\textsc{para}} = \frac{\bar{\sigma}^2_{z}}{(1-\bar{\phi})^2}, \quad
    \tilde{v}_{\textsc{un}} = \tilde{v}(\ell_{\textsc{un}}; K; X_{1:n}), \quad 
    \hat{v}_{\textsc{pw}} = \frac{\tilde{v}(\ell_{\textsc{pw}};K; Z_{2:n})}{(1-\bar{\phi})^2}, \quad
    \hat{v}_{\textsc{tail}} = \hat{v}(\ell_{\textsc{tail}}; K; \bar{\phi}),
  \]
  where the function $\tilde{v}(\cdot)$ is defined in (\ref{eqt:kernelEst}), 
  the function $\hat{v}(\cdot)$ is defined in (\ref{eqt:propDebias}), and 
  \begin{align*}
    \ell_{\textsc{un}} = \left\lceil{\left\{ {6\phi_{\star}^2 n}/{(1-\phi_{\star}^2)^2}  \right\}^{1/3}}\right\rceil, \quad
    \ell_{\textsc{pw}} = \left\lceil{\left\{ {6\phi_{\star z}^2n}/{(1-\phi_{\star z}^2)^2}\right\}^{1/3}}\right\rceil, \quad
    \ell_{\textsc{tail}} = \left\lceil \left( {3\xi_1^2 n }/{ 2} \right)^{1/3} \right\rceil.
  \end{align*}
    In the experiments, we set $a\in\{\pm 0.2, \pm 0.4, \pm 0.8\}$, $b\in\{0,-0.6\}$, and $n=400$. 
    Throughout this article, all experiments are replicated $5000$ times unless otherwise stated.

	First, the true optimal bandwidths are used  
    so as to study the effect solely driven by prewhitening and tail postcoloring.
    The results are shown in Table~\ref{tab:exp1}.
    When the model is well-specified, i.e., when $b=0$,
    both $\hat{v}_{\textsc{pw}}$ and $\hat{v}_{\textsc{tail}}$ achieve good bias-correction effects 
    relative to the unadjusted estimator $\tilde{v}_{\textsc{un}}$.
    These two estimators perform as well as the parametric estimator $\bar{v}_{\textsc{para}}$.
    When the model is misspecified, i.e., when $b=-0.6$,
    the inflation of risk caused by our proposed $\hat{v}_{\textsc{tail}}$ 
    is much weaker than for $\hat{v}_{\textsc{pw}}$.
    In addition, $\hat{v}_{\textsc{tail}}$ achieves a good bias correction effect for the downward bias of $\tilde{v}_{\textsc{un}}$
    while $\hat{v}_{\textsc{pw}}$
    leads to severe over-correction.
  The estimator $\bar{v}_{\textsc{para}}$ is inconsistent under model misspecation.

\begin{table}[t]
\centering
\small
\def~{\hphantom{0}}
\begin{tabular}{cccccccccc}
  &  & \multicolumn{4}{c}{$b=0$ (Well-specified case)} & \multicolumn{4}{c}{$b=-0.6$ (Misspecified case)} \\
  Criteria & $a\backslash$Estimator &  $\bar{v}_{\textsc{para}}$ & $\tilde{v}_{\textsc{un}}$ & $\hat{v}_{\textsc{pw}}$ & $\hat{v}_{\textsc{tail}}$ &  $\bar{v}_{\textsc{para}}$ & $\tilde{v}_{\textsc{un}}$ & $\hat{v}_{\textsc{pw}}$ & $\hat{v}_{\textsc{tail}}$ \\[0.5ex]
   $100\MSE(\cdot)/v^2$ 
    & $  -0.8$ & $0.64$ & $~9.28$ & $0.64$ & $0.64$ & $4146.19$ & $440.86$ & $~79.99$ & $40.03$ \\ 
    & $  -0.4$ & $0.93$ & $~3.69$ & $0.93$ & $0.93$ & $1836.48$ & $141.81$ & $~89.68$ & $18.16$ \\ 
    & $  -0.2$ & $1.15$ & $~2.15$ & $1.14$ & $1.14$ & $1073.58$ & $~99.79$ & $~93.91$ & $14.30$ \\ 
    & $  ~0.2$ & $1.94$ & $~2.47$ & $1.93$ & $1.94$ & $~215.87$ & $~45.33$ & $175.22$ & $~7.88$ \\ 
    & $  ~0.4$ & $2.78$ & $~4.36$ & $2.77$ & $2.79$ & $~~47.36$ & $~21.73$ & $~46.79$ & $~5.62$ \\ 
    & $  ~0.8$ & $9.13$ & $12.42$ & $9.10$ & $9.43$ & $~~29.22$ & $~22.24$ & $~31.08$ & $10.02$ \\[0.5ex]
    $10\{ \E(\cdot)/v - 1 \}$ 
    & $  -0.8$ & $~0.05$ & $~1.46$ & $~0.02$ & $-0.00$ & $64.08$ & $20.30$ & $~8.71$ & $~3.21$ \\ 
    & $  -0.4$ & $~0.03$ & $~0.87$ & $~0.00$ & $-0.01$ & $42.59$ & $11.60$ & $~9.25$ & $~2.01$ \\ 
    & $  -0.2$ & $-0.03$ & $~0.66$ & $-0.05$ & $-0.06$ & $32.53$ & $~9.71$ & $~9.48$ & $~1.76$ \\ 
    & $  ~0.2$ & $-0.03$ & $-0.98$ & $-0.05$ & $-0.04$ & $14.49$ & $~6.48$ & $13.03$ & $~1.32$ \\ 
    & $  ~0.4$ & $-0.06$ & $-1.31$ & $-0.08$ & $-0.05$ & $~6.65$ & $~4.36$ & $~6.60$ & $~1.13$ \\ 
    & $  ~0.8$ & $-0.42$ & $-2.56$ & $-0.44$ & $-0.25$ & $-5.34$ & $-4.61$ & $-5.52$ & $-2.22$ \\[0.5ex]
\end{tabular}
\caption{The standardized mean-squared error $100\MSE(\cdot)/v^2$
and standardized bias $10\{ \E(\cdot)/v - 1 \}$
of 
the parametric estimator $\bar{v}_{\textsc{para}}$,
the unadjusted estimator $\tilde{v}_{\textsc{un}}$,
the prewhitened estimator $\hat{v}_{\textsc{pw}}$, and 
the proposed tail postcolored estimator $\hat{v}_{\textsc{tail}}$
with their own theoretical optimal bandwidths; see Example \ref{exp:simu_ar1}.
The sample size is $n=400$.
}\label{tab:exp1}
\end{table}

\begin{table}[t]
\centering
\small
\def~{\hphantom{0}}
\begin{tabular}{cccccccccc}
  &  & \multicolumn{4}{c}{$b=0$ (Well-specified case)} & \multicolumn{4}{c}{$b=-0.6$ (Misspecified case)} \\
  Criteria & $a\backslash$Estimator &  $\bar{v}_{\textsc{para}}$ & $\tilde{v}_{\textsc{un}}$ & $\hat{v}_{\textsc{pw}}$ & $\hat{v}_{\textsc{tail}}$ &  $\bar{v}_{\textsc{para}}$ & $\tilde{v}_{\textsc{un}}$ & $\hat{v}_{\textsc{pw}}$ & $\hat{v}_{\textsc{tail}}$ \\[0.5ex]
   $100\MSE(\cdot)/v^2$ 
    & $  -0.8$ & $0.64$ & $~9.25$ & $0.92$ & $1.25$ & $4146.19$ & $444.95$ & $~78.66$ & $42.30$ \\  
    & $  -0.4$ & $0.93$ & $~3.82$ & $0.98$ & $1.40$ & $1836.48$ & $143.94$ & $103.13$ & $20.46$ \\  
    & $  -0.2$ & $1.15$ & $~2.36$ & $1.15$ & $2.00$ & $1073.58$ & $~98.08$ & $117.90$ & $16.30$ \\  
    & $  ~0.2$ & $1.94$ & $~2.67$ & $1.95$ & $2.50$ & $~215.87$ & $~46.30$ & $132.67$ & $10.38$ \\  
    & $  ~0.4$ & $2.78$ & $~4.58$ & $2.83$ & $2.96$ & $~~47.36$ & $~27.57$ & $~44.68$ & $~9.54$ \\  
    & $  ~0.8$ & $9.13$ & $13.17$ & $9.42$ & $9.48$ & $~~29.22$ & $~22.30$ & $~30.47$ & $11.78$ \\ [0.5ex]
    $10\{ \E(\cdot)/v - 1 \}$ 
    & $  -0.8$ & $~0.05$ & $~1.41$ & $-0.05$ & $-0.06$ & $64.08$ & $20.55$ & $~8.57$ & $~3.04$ \\  
    & $  -0.4$ & $~0.03$ & $~0.92$ & $-0.03$ & $-0.16$ & $42.59$ & $11.68$ & $~9.85$ & $~1.90$ \\  
    & $  -0.2$ & $-0.03$ & $~0.62$ & $-0.06$ & $-0.38$ & $32.53$ & $~9.59$ & $10.50$ & $~1.68$ \\  
    & $  ~0.2$ & $-0.03$ & $-0.94$ & $-0.05$ & $~0.11$ & $14.49$ & $~6.45$ & $11.05$ & $~1.18$ \\  
    & $  ~0.4$ & $-0.06$ & $-1.33$ & $-0.06$ & $-0.03$ & $~6.65$ & $~4.77$ & $~6.43$ & $~0.87$ \\  
    & $  ~0.8$ & $-0.42$ & $-2.51$ & $-0.37$ & $-0.29$ & $-5.34$ & $-4.53$ & $-5.45$ & $-2.09$ \\[0.5ex]
\end{tabular}
\caption{The standardized mean-squared error $100\MSE(\cdot)/v^2$
and standardized bias $10\{ \E(\cdot)/v - 1 \}$
of $\bar{v}_{\textsc{para}}$, $\tilde{v}_{\textsc{un}}$, $\hat{v}_{\textsc{pw}}$, and $\hat{v}_{\textsc{tail}}$
with their optimal bandwidths (if any) estimated by the parametric plug-in method; 
see Example \ref{exp:simu_ar1} and Remark \ref{rem:est_opt_bw}.
The sample size is $n=400$.
}
\label{tab:exp1_estimatedBW}
\end{table}

Next, 
the above simulation experiment is repeated; 
however, each estimator is computed with its estimated optimal bandwidth.
The bandwidth used for the proposal is  
based on the parametric plug-in estimator of $\xi_1$ as stated in Remark~\ref{rem:est_opt_bw}.
The results are shown in Table \ref{tab:exp1_estimatedBW}. 
The MSE and bias of $\tilde{v}_{\textsc{un}}$, $\hat{v}_{\textsc{pw}}$, and $\hat{v}_{\textsc{tail}}$ are only slightly inflated
compared to those in Table~\ref{tab:exp1},  
but the conclusions remain the same. 
We also perform additional simulation experiments when 
$b=0.6$, $n=200$ and under some nonlinear time series models; 
see {\ifnum\isXr=1{Sections \ref{sec:sim_MSE_estBW}, \ref{sec:sim_MSE_n200}, and \ref{sec:sim_nonL}}\else{Sections B.1, B.2 and B.4}\fi}  
of the supplement, respectively.

  \end{example}

\section{Generalization and extension}\label{sec:general}
\subsection{Multi-model tail postcolored estimator}\label{sec:dRobust}
As a parametric approach,
a reasonably good specification of the model is crucial for tail postcoloring to be beneficial.
In this section, we propose a general estimator that utilizes multiple tail postcoloring models.
We show that, under some conditions, as long as one of the models is well-specified,
the estimator achieves the highest efficiency among those coloring models.
The multiple robustness is ensured in this manner. 
  \begin{definition}\label{def:mulModel}
    Let $X^{[j]}_i(\theta_j) = h_{\theta_j}^{[j]}(\mathcal{F}_i)$ be the $j$th parametric tail postcoloring model
  where $J\in\mathbb{N}$,  
  $h_{\theta_1}^{[1]}, \ldots, h_{\theta_J}^{[J]}$ are measurable functions, and 
  $\theta_j \in \mathcal{H}_j \subseteq \mathbb{R}^{q_j}$ $(j=1,\ldots, J)$ 
  are parameter vectors for $q_j \in \mathbb{N}$. 
    The values of $v$, $v_p$ and $M_{\ell,K}$ under the $j$th model are  
    \[
      v^{[j]}({\theta}_j) = \sum_{k\in\mathbb{Z}} \gamma^{[j]}_k(\theta_j), \quad
    v^{[j]}_p({\theta}_j) = \sum_{k\in\mathbb{Z}} {|k|}^p\gamma^{[j]}_k(\theta_j), \quad
    M^{[j]}_{\ell,K}({\theta}_j) = \sum_{k\in\mathbb{Z}} K (k/\ell) \gamma^{[j]}_k(\theta_j), 
  \]
  respectively, 
    where $\gamma^{[j]}_k(\theta_j) = \Cov\{X^{[j]}_i(\theta_j), X^{[j]}_{i-k}(\theta_j)\}$.
    Assume $v^{[j]}({\theta}_j)\in\mathbb{R}^+$ for $\theta_j\in\mathcal{H}_j$ and $j=1,\ldots, J$.
    Suppose that 
    $\bar{\theta}_j$
    satisfies $n^{1/2}(\bar{\theta}_j - \theta_{\star j})=O_p(1)$ for some $\theta_{\star j}\in \mathcal{H}_j$ $(j=1,\ldots, J)$.
    Denote the tail postcolored estimator in (\ref{eqt:proposalsimple}) based on the $j$th model as
    \begin{equation}\label{eqt:submodelEst}
      \hat{v}^{[j]}_{\textsc{tail}} 
      	\equiv \hat{v}^{[j]}\left(\ell_{ j}; K; \bar{\theta}_j\right) = \frac{v^{[j]}(\bar{\theta}_j)}{M^{[j]}_{\ell_{ j},K}(\bar{\theta}_j)}
      \tilde{v}\left(\ell_{ j}; K\right).
    \end{equation}
    The optimal $\ell_{j}$ is $\ell_{\textsc{tail},j} \sim \{ pB^2 (\xi^{[j]}_{p})^2n/(2A)\}^{1/(2p+1)}$ with $\xi^{[j]}_{p} = \kappa_p - \kappa^{[j]}_p(\theta_{\star j})$ and $\kappa^{[j]}_p(\theta_{\star j}) = v^{[j]}_p(\theta_{\star j})/ v^{[j]}(\theta_{\star j})$; see (\ref{eqt:OptBW}).  
    Let $\hat{w}_1, \ldots, \hat{w}_J \geq 0$ be weights satisfying $\sum_{j=1}^{J}\hat{w}_j=1$.
    Denote 
    $\ell_{1:J} = \{\ell_{ 1}, \ldots, \ell_{ J}\}$ and 
    $\bar{\theta}_{1:J}=\{\bar{\theta}_1,\ldots, \bar{\theta}_J\}$.
    The multi-model tail postcolored estimator is 
    \begin{align}\label{eqt:generalWeightedEst}
      \hat{v}^{[1:J]}_{\textsc{tail}} 
      \equiv \hat{v}^{[1:J]}(\ell_{ 1:J}; K;\bar{\theta}_{1:J}) 
		=  \sum_{j=1}^{J} \hat{w}_j \hat{v}^{[j]}_{\textsc{tail}} . 
    \end{align}
  \end{definition}
The properties of the proposal are presented in the following proposition.
In particular, we show that if the weight $\hat{w}_j$ satisfies certain conditions,
when one of the $J$ models is well-specified,
the new estimator automatically selects the best model.
  \begin{proposition}\label{prop:dRobust}
    Suppose the conditions in Theorem~\ref{thm:BiasVar} and Definition~\ref{def:mulModel} hold and in addition we have
    $\sum_{j=1}^{J} u^{[j]}_p(\theta_{\star j}) = \sum_{j=1}^{J} \sum_{k\in\mathbb{Z}} {|k|}^p \gamma^{[j]}_k(\theta_{\star j}) < \infty$.
    Also assume that $\sup_{\theta\in\mathcal{H}_j} \left|\partial v^{[j]}(\theta)/\partial \theta \right| < \infty$
      and $\sup_{\theta\in\mathcal{H}_j} \left|\partial M^{[j]}_{\ell_{j},K}(\theta)/\partial \theta \right|<\infty$ for all $j=1,\ldots,J$.
    If there exist $w_1,\ldots, w_J\in[0,1]$ such that $\| \hat{w}_j - {w}_j\|  = o\left\{n^{-p/(2p+1)}\right\}$ for all $j=1,\ldots,J$
    and $\sum_{j=1}^J w_j = 1$, 
    then we have (i) 
    $$
              \lim_{h\to\infty} \lim_{n\to\infty} \MSE_h\left\{ \hat{v}^{[1:J]}(\ell_{\textsc{tail}, 1:J}; K;\bar{\theta}_{1:J})\right\}
              = \lim_{n\to\infty} n^{{2p}/{(2p+1)}} \MSE\left\{ \hat{v}^{[1:J]}(\ell_{\textsc{tail}, 1:J}; K;\theta_{\star 1:J}) \right\},
    $$  
    where $\ell_{\textsc{tail}, 1:J} = \{\ell_{\textsc{tail}, 1}, \ldots, \ell_{\textsc{tail},J}\}$ and 
    $\theta_{\star 1:J}=\{{\theta}_{\star 1},\ldots, {\theta}_{\star J}\}$; 
    and (ii)  
    \begin{align*}
              \limsup_{n\to\infty} n^{2p/(2p+1)}\Var\left\{\hat{v}^{[1:J]}(\ell_{\textsc{tail}, 1:J}; K;\theta_{\star 1:J})\right\} 
             &\leq 2p v^2  C^2 
             \left\{ \sum_{j=1}^{J} w_j {\left\vert {\xi^{[j]}_{p}}\right\vert}^{-1/(2p+1)}\right\}^2, \\
      n^{p/(2p+1)} \bias\left\{ \hat{v}^{[1:J]}(\ell_{\textsc{tail}, 1:J}; K;\theta_{\star 1:J})\right\} 
      &= v C
      \sum_{j=1}^{J} w_j \left\vert{\xi^{[j]}_{p}}\right\vert^{-1/(2p+1)} \sgn(B\xi^{[j]}_{p})  
      + o(1) , 
            \end{align*}
      where $C = \{ (2A/p)^p |B| \}^{1/(2p+1)}$ and
      $\sgn(\cdot)$ denotes the signum function.
  \end{proposition}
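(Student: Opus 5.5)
We reduce everything to Theorem~\ref{thm:BiasVar} and Corollary~\ref{cor:optimalBW} applied model by model, and then aggregate using the linearity of $\hat{v}^{[1:J]}$ in the component estimators.

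\textbf{Part (i).} Write
\[
\hat{v}^{[1:J]}(\ell_{\textsc{tail},1:J};K;\bar\theta_{1:J}) - v
= \sum_j w_j\{\hat v^{[j]}(\bar\theta_j)-v\} + \sum_j (\hat w_j-w_j)\,\hat v^{[j]}(\bar\theta_j).
\]
The second sum is negligible at rate $n^{-p/(2p+1)}$. This uses $\|\hat w_j-w_j\|=o\{n^{-p/(2p+1)}\}$ and $\hat v^{[j]}=O_p(1)$ (it is consistent for $v$). Because we work with the truncated criterion $\MSE_h$, it suffices to control this term in probability. Truncation at level $h$ makes $\min\{n^{2p/(2p+1)}(\cdot)^2,h\}$ bounded, so convergence in probability of the difference suffices via dominated convergence.

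For each $j$, the proof of Theorem~\ref{thm:BiasVar}(i) gives a delta-method expansion. The factor $\eta^{[j]}_{\ell_j,K}(\bar\theta_j)-\eta^{[j]}_{\ell_j,K}(\theta_{\star j})$ is $O_p(n^{-1/2})$ uniformly in $\ell$. This follows from the bounded-derivative assumptions on $v^{[j]}$ and $M^{[j]}_{\ell_j,K}$, together with $M^{[j]}_{\ell_j,K}(\theta_{\star j})\to v^{[j]}(\theta_{\star j})>0$. Multiplied by $\tilde v(\ell_j;K)=O_p(1)$, it gives
\[
\hat v^{[j]}(\bar\theta_j)-\hat v^{[j]}(\theta_{\star j}) = O_p(n^{-1/2}) = o_p\{n^{-p/(2p+1)}\}.
\]
Replacing $\bar\theta_j$ by $\theta_{\star j}$ and $\hat w_j$ by $w_j$ therefore does not change the limiting truncated MSE. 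Letting $h\to\infty$ identifies the limit as the untruncated MSE of the oracle combination, provided that quantity is uniformly integrable at scale $n^{2p/(2p+1)}$. This holds because $\eta^{[j]}(\theta_{\star j})$ is deterministic and $\tilde v$ has finite fourth moments under Assumption~\ref{assp:AspSeries}.

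\textbf{Part (ii): bias.} By linearity, the bias of the oracle estimator is $\sum_j w_j \bias\{\hat v^{[j]}(\theta_{\star j})\}$. Theorem~\ref{thm:BiasVar}(iii) gives each term as $B\xi^{[j]}_p v\,\ell_{\textsc{tail},j}^{-p}$ plus lower order. Note that $O(\ell/n)=o(n^{-p/(2p+1)})$. Substituting $\ell_{\textsc{tail},j}\sim\{pB^2(\xi_p^{[j]})^2n/(2A)\}^{1/(2p+1)}$ gives
\[
B\xi^{[j]}_p\,\ell_{\textsc{tail},j}^{-p}
= n^{-p/(2p+1)}\,\sgn(B\xi^{[j]}_p)\,|B\xi^{[j]}_p|^{1/(2p+1)}\{2A/p\}^{p/(2p+1)}|B|^{-2p/(2p+1)}\,|\xi^{[j]}_p|^{-2p/(2p+1)}\cdots
\]
Collecting powers of $|B|$ and $|\xi^{[j]}_p|$ should produce $C|\xi_p^{[j]}|^{\alpha}$. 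The exponent $\alpha$ is whatever falls out of the algebra; I would recheck it against the displayed $-1/(2p+1)$, which is the expected main bookkeeping hazard.

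\textbf{Part (ii): variance.} Use Minkowski's inequality:
\[
\mathrm{sd}\Big(\sum_j w_j\hat v^{[j]}\Big)\le \sum_j w_j\,\mathrm{sd}(\hat v^{[j]}).
\]
By Theorem~\ref{thm:BiasVar}(ii), $\Var(\hat v^{[j]})\sim 4Av^2\ell_{\textsc{tail},j}/n$. Plugging in $\ell_{\textsc{tail},j}$ gives $n^{-p/(2p+1)}\,\mathrm{sd}$ bounds of the form $(4A)^{1/2}v\{pB^2(\xi^{[j]}_p)^2/(2A)\}^{1/(2(2p+1))}$. Squaring the sum then yields a bound of the displayed shape $2pv^2C^2\{\sum_j w_j|\xi^{[j]}_p|^{\cdot}\}^2$, after simplifying constants using $C^{2p+1}=(2A/p)^p|B|$.

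The genuine obstacle is the rigorous justification in (i) that the truncated-MSE limits commute with the random weights and estimated parameters. Everything else is direct substitution into Theorem~\ref{thm:BiasVar}.
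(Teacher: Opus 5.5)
Your route is the paper's own: split off $\sum_j(\hat w_j-w_j)\hat v^{[j]}$, reuse the Taylor step of Theorem~\ref{thm:BiasVar} to pass from $\bar\theta_j$ to $\theta_{\star j}$, substitute $\ell_{\textsc{tail},j}$ into the bias and variance of Theorem~\ref{thm:BiasVar}, and bound the standard deviation by Minkowski. The gap is the computation you deferred: it is exactly where the statement fails, so you need to finish it rather than leave the exponent as ``whatever falls out.''

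Your intermediate display double counts. It already contains $|B\xi^{[j]}_p|^{1/(2p+1)}$ and then multiplies again by $|B|^{-2p/(2p+1)}|\xi^{[j]}_p|^{-2p/(2p+1)}$. The correct evaluation is
\[
B\xi^{[j]}_p\,\ell_{\textsc{tail},j}^{-p}
=B\xi^{[j]}_p\Big\{\frac{2A}{pB^2(\xi^{[j]}_p)^2n}\Big\}^{p/(2p+1)}
=n^{-p/(2p+1)}\,C\,|\xi^{[j]}_p|^{1/(2p+1)}\sgn(B\xi^{[j]}_p).
\]
Likewise, your own variance expression gives $n^{p/(2p+1)}\{\Var(\hat v^{[j]})\}^{1/2}\to(4A)^{1/2}v\{pB^2(\xi^{[j]}_p)^2/(2A)\}^{1/(2(2p+1))}=\sqrt{2p}\,vC|\xi^{[j]}_p|^{1/(2p+1)}$.

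So both displays in (ii) hold with $|\xi^{[j]}_p|^{+1/(2p+1)}$, and the stated exponent $-1/(2p+1)$ is false in general. Take $J=1$, $w_1=1$: the stated version would give asymptotic MSE $(2p+1)C^2|\xi_p|^{-2/(2p+1)}v^2$. That contradicts Corollary~\ref{cor:optimalBW}, whose value is $(2p+1)C^2|\xi_p|^{2/(2p+1)}v^2$. It would also make the error blow up as the coloring model improves ($\xi^{[j]}_p\to0$). The paper's proof reaches $-1/(2p+1)$ only through a slip: it writes $|B/\xi^{[j]}_p|^{1/(2p+1)}$ where $|B\xi^{[j]}_p|^{1/(2p+1)}$ is correct. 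Your argument, completed, proves the corrected proposition, not the displayed one.

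There is a second, smaller gap. The estimator $\hat v^{[1:J]}(\ell_{\textsc{tail},1:J};K;\theta_{\star 1:J})$ in (i) and (ii) still carries the random weights $\hat w_j$, and the paper's proof keeps them. Your oracle, your ``by linearity'' bias step and your Minkowski bound all use $\sum_j w_j\hat v^{[j]}(\theta_{\star j})$ instead.

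For the bias, the fix is the paper's one line: $|\E\{(\hat w_j-w_j)\hat v^{[j]}(\theta_{\star j})\}|\le\|\hat w_j-w_j\|\,\|\hat v^{[j]}(\theta_{\star j})\|=o\{n^{-p/(2p+1)}\}$. This holds because $c_j=v^{[j]}(\theta_{\star j})/M^{[j]}_{\ell_j,K}(\theta_{\star j})$ is deterministic and tends to $1$, and $\|\tilde v\|=O(1)$.

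For the variance, and to identify your limit in (i) with the stated right-hand side, you need $\|\sum_j(\hat w_j-w_j)\hat v^{[j]}(\theta_{\star j})\|=o\{n^{-p/(2p+1)}\}$ in $L^2$. Convergence in probability alone does not give this. Write $\hat v^{[j]}(\theta_{\star j})=c_j[v+\{\tilde v(\ell_j;K)-v\}]$ and use $|\hat w_j-w_j|\le1$; the bound then follows from uniform integrability of $n^{2p/(2p+1)}\{\tilde v(\ell_j;K)-v\}^2$. That is the same kind of condition your $h\to\infty$ step in (i) needs. It is not implied by $\tilde v$ merely having finite moments for each $n$.
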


When $J=1$ and $w_1=1$, the estimator reduces to the single-model tail postcolored estimator in Corollary~\ref{cor:optimalBW}.
Some examples of weights $\{\hat{w}_j\}$ in (\ref{eqt:generalWeightedEst}) are given below:

  \begin{example}[Choice of weights]\label{rem:weightChoice}
    (i) (Simple average) A trivial option is to use a simple average, i.e., $\hat{w}_j = 1/J \ (j = 1,\ldots,J)$.
    Although using a data-independent weight cannot select the best model, 
    it is easy to implement and achieves an averaged effect of all models.

    (ii) (Adaptive choice) 
    Let $\hat{w}_j = {|\hat{\xi}^{[j]}_{p}|^{-2}}/{\sum_{j=1}^J |\hat{\xi}^{[j]}_{p}|^{-2}}$ and $w_j = |{\xi}^{[j]}_{p}|^{-2} / {\sum_{j=1}^J |{\xi}^{[j]}_{p}|^{-2}}$,
    where $\hat{\xi}^{[j]}_{p} = \tilde{v}_p/\tilde{v} - v^{[j]}_p(\bar{\theta}_j)/v^{[j]}(\bar{\theta}_j)$.
    By Proposition~\ref{prop:dRobust} with $\theta_{j} = \theta_{\star j}$ ($j=1, \ldots, J$),
    if there exists $j\in\{1,\ldots,J\}$ such that $\xi^{[j]}_{p} = 0$, 
    then $w_j = 1$ and $\limsup_{n\rightarrow\infty}\MSE( \hat{v}^{[1:J]}_{\textsc{tail} } )  / \MSE(\hat{v}^{[j]}_{\textsc{tail} }) \leq 1$.
  \end{example}

{
 \begin{example}[Two-model example]\label{eg:2model_AR1_MA5}
    Consider the following two coloring models:
    \begin{itemize}[topsep=3pt,itemsep=3pt,partopsep=3pt, parsep=3pt]
    \item[(i)] $\AR(1)$ model: $X_i = a_1 X_{i-1} + \varepsilon_i$ with independent $\varepsilon_i\sim \Normal(0,\sigma^2)$; and
      \item[(ii)] $\MA(5)$ model: $X_i = \sum_{j=1}^5 b_j \varepsilon_{i-j} + \varepsilon_i$ with independent $\varepsilon_i\sim \Normal(0,\varsigma^2)$.
  \end{itemize}
    Through maximum likelihood estimation,  
    $(a_1, \sigma^2)$ and $(b_1, \ldots, b_5, \varsigma^2)$ can be consistently estimated.  
    Let the data be generated as follows:
    $X_i = \sqrt{1-c^2}X_i^{(1)} + cX_i^{(2)}$, 
    where $c\in[0,1]$, and $X_i^{(1)}$ and $X_i^{(2)}$ 
    are independently generated from models (i) and (ii), respectively, 
    with $a_1=0.6$ and $(b_1, \ldots, b_5)=(0.6, 0, 0, 0.3, -0.3)$. 
    Their innovation variances $\sigma^2$ and $\varsigma^2$ are chosen such that their LRVs are equal to one. 
    Thus, $v=1$ for all $c\in[0,1]$.    
    The coloring models (i) and (ii) are well-specified only when $c=0$ and $c=1$, respectively.    
    The estimators defined in (\ref{eqt:submodelEst}) for models (i) and (ii)
    are denoted as $\hat{v}^{[1]}_{\textsc{tail}}$ and $\hat{v}^{[2]}_{\textsc{tail}}$, respectively.
    Additionally, we compute the combined estimator $\hat{v}^{[1:2]}_{\textsc{tail}}$ with weight function defined in Example~\ref{rem:weightChoice} (ii).
    All estimators are equipped with non-parametric plug-in bandwidths; see Remark \ref{rem:est_opt_bw}. 
    We set $n=200$ in the simulation. 
    The results are plotted in Figure~\ref{fig:2model}.

  \begin{figure}[t]
    \centering
    \includegraphics[width=0.8\textwidth]{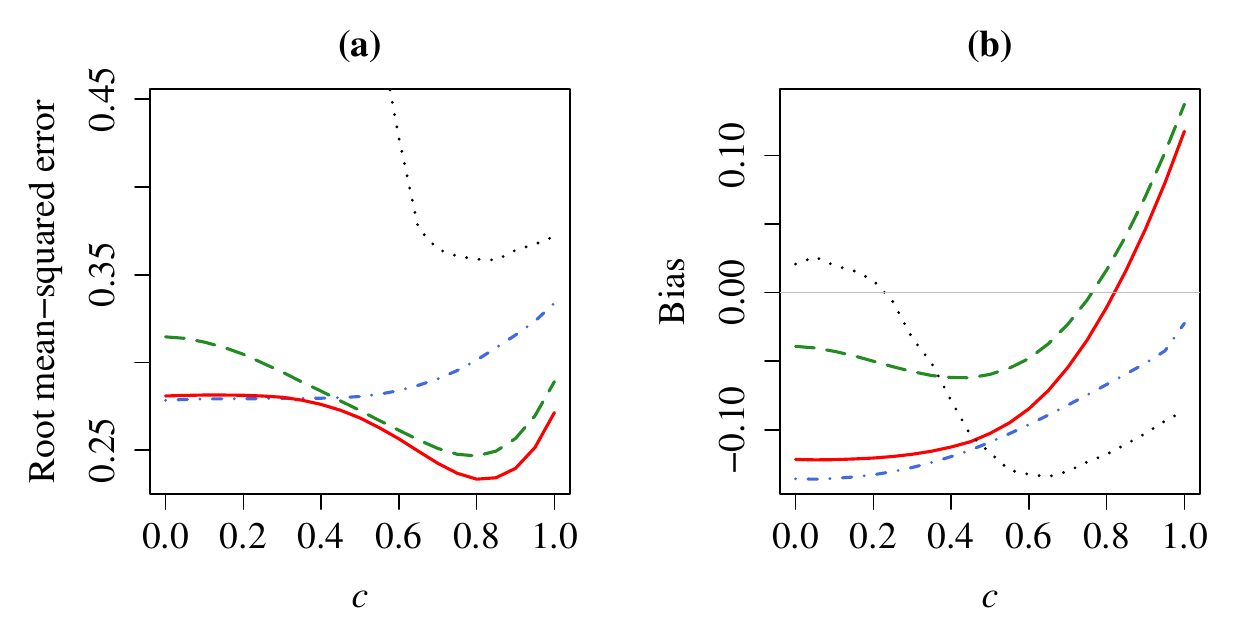}
    \vspace{-0.3cm}
    \caption{
    Plots (a) and (b) show the root mean-squared error and bias, i.e., $\MSE(\cdot)^{1/2}$ and 
     $\Bias(\cdot)$, of 
     $\hat{v}^{[1]}_{\textsc{tail}}$ (green dashed),
      $\hat{v}^{[2]}_{\textsc{tail}}$ (blue dot-dashed),  
      $\hat{v}^{[3]}_{\textsc{tail}}$ (black dotted),  
      and the proposed multi-model tail postcolored estimator $\hat{v}^{[1:2]}_{\textsc{tail}}$
     (red solid) against different $c$, respectively.
     The value of $\MSE(\hat{v}^{[3]}_{\textsc{tail}})^{1/2}$, which ranges from 0.36 to 0.72, 
     is obviously larger the root mean squared errors of the other estimators.
     Thus, it is not plotted in panel (a) when $c\lesssim 0.5$. 
     }\label{fig:2model}
  \end{figure}

  We observed that $\hat{v}^{[1:2]}_{\textsc{tail}}$ 
  automatically switches between the better coloring models and   
  perform better or nearly as well as both single-model estimators $\hat{v}^{[1]}_{\textsc{tail}}$ and $\hat{v}^{[2]}_{\textsc{tail}}$
  in terms of MSE.
  The bias of $\hat{v}^{[1:2]}_{\textsc{tail}}$ lies between that of $\hat{v}^{[1]}_{\textsc{tail}}$ and $\hat{v}^{[2]}_{\textsc{tail}}$, 
  which aligns with intuition.
  For reference, we also consider the following coloring model: 
  \begin{itemize}[topsep=3pt,itemsep=3pt,partopsep=3pt, parsep=3pt]
    \item[(iii)] $\ARMA(1,5)$: 
      $X_i = a'_1 X_{i-1} + \sum_{j=1}^5 b'_j \varepsilon_{i-j} + \varepsilon_i$ with independent 
      $\varepsilon_i\sim \Normal(0,\varpi^2)$. 
  \end{itemize}
  Denote the tail postcolored estimator based on model (iii) as $\hat{v}^{[3]}_{\textsc{tail}}$.
  This model covers models (i) and (ii) as special cases, 
  but $\hat{v}^{[3]}_{\textsc{tail}}$ may not outperform $\hat{v}^{[1]}_{\textsc{tail}}$ and $\hat{v}^{[2]}_{\textsc{tail}}$, 
  as observed in Figure~\ref{fig:2model}.
  When $c\in(0,1)$, the sum of independent $\AR(1)$ and $\MA(5)$ is not an $\ARMA(1,5)$ process. 
  Moreover, the more complicated model (iii) may not fit well when $n$ is small. 
  In some cases, fitting multiple simpler models is easier and more stable than fitting one complicated model. 
  This illustrates the usefulness of multi-model tail postcoloring. 
  \end{example}
}

\subsection{Generalized tail postcolored estimator}\label{sec:generalEst}
In (\ref{eqt:proposalsimple}), we use the same kernel $K$ in the quantity $M_{\ell, K}$
as in computing $\tilde{v}(\ell; K )$.
In fact, they do not have to be the same.
A general form of tail postcolored estimator is
  \begin{equation}\label{eqt:proposalgeneral}
    \hat{v}(\ell ; K; H; \bar{\theta}) = \frac{v(\bar{\theta})}{M_{\ell, H}(\bar{\theta})} \tilde{v}(\ell;K),
  \end{equation}
where $H\in\mathcal{K}_{p'}$.
So, (\ref{eqt:proposalgeneral}) reduces to (\ref{eqt:proposalsimple}) if $H=K$, i.e., 
$\hat{v}(\ell ; K; K; \bar{\theta}) = \hat{v}(\ell ; K; \bar{\theta})$. 
The asymptotic properties of $\hat{v}(\ell ; K; H; \bar{\theta})$ are shown in the following theorem:

  \begin{theorem}[Generalized estimator]\label{thm:BiasVarGeneral}
    Let $K\in\mathcal{K}_{p}$ and $H\in\mathcal{K}_{p'}$ for some $p \in \mathbb{N}$ and $p'\in\bar{\mathbb{N}}$.
    Let $B'=\lim_{t\downarrow 0} \{ H(t) - H(0) \}/ t^{p'}$.
    Assume that $\bar{\theta}$ satisfies $n^{1/2} (\bar{\theta} - \theta_{\star}) = O_p(1)$ for some $\theta_{\star} \in \mathcal{H}$ 
    as $n\to\infty$.
    Assume
    $u_{p'}(\theta_{\star}) < \infty$,
    $\sup_{\theta\in\mathcal{H}}\left|{\partial v(\theta)}/{\partial \theta}\right|<\infty$ and $\sup_{\theta\in\mathcal{H}} \left| {\partial M_{\ell,H}(\theta)}/{\partial \theta} \right|<\infty$.
    If Assumption~\ref{assp:AspSeries} holds, $u_{p}< \infty$, $u_{p'}(\theta_{\star})<\infty$ and $\ell \asymp n^{1/(2p+1)}$,  
    we have 
    (i)
    $$
      \lim_{h\to\infty} \lim_{n\to\infty} \MSE_h\{\hat{v}(\ell ; K; H; \bar{\theta})\}
        = \lim_{n\to\infty} n^{{2p}/({2p+1})} \MSE\{\hat{v}(\ell ; K; H; \theta_{\star})\};
    $$
    (ii) 
    $\Var\left\{ \hat{v}(\ell ; K; H; {\theta}) \right\} \sim \Var\left\{ \tilde{v}(\ell ; K ) \right\}$ for all $p'$; 
    and (iii)
    \begin{equation}\label{eqt:generalbias}
      \bias\left\{ \hat{v}(\ell ; K; H; \theta_{\star}) \right\}
      = \left\{ \begin{array}{ll}
      	\displaystyle \frac{Bv}{\ell^p}\left\{ \kappa_p - \frac{B'\ell^{p-p'}}{B} \kappa_{p'}(\theta_{\star}) \right\} + o\left(\frac{1}{\ell^p}\right) + O\left(\frac{\ell}{n}\right), & \text{if $p'\in \mathbb{N}$}; \\
		\bias\left\{ \tilde{v}(\ell ; K;X_{1:n}) \right\} \{1+o(1)\}, 
		& \text{if $p'=\infty$}. 
		\end{array} \right.
    \end{equation}

  \end{theorem}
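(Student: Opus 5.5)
The plan mirrors the proof of Theorem~\ref{thm:BiasVar}, replacing $M_{\ell,K}(\theta)$ by $M_{\ell,H}(\theta)$ in the postcoloring coefficient. Write $\eta_{\ell,H}(\theta)=v(\theta)/M_{\ell,H}(\theta)$, so that $\hat{v}(\ell;K;H;\theta)=\eta_{\ell,H}(\theta)\tilde{v}(\ell;K)$.

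\textbf{Step 1: expansion of the coefficient.} I first expand $\eta_{\ell,H}(\theta_\star)$ deterministically. Since $H(0)=1$, we can write
\[
v(\theta_\star)-M_{\ell,H}(\theta_\star)=\sum_{k\in\mathbb{Z}}\{1-H(k/\ell)\}\gamma_k(\theta_\star).
\]
\emph{Case $p'\in\mathbb{N}$.} Use $\{1-H(t)\}/|t|^{p'}\to -B'$ as $t\to0$, $|1-H|\le 2$, and $u_{p'}(\theta_\star)<\infty$. By dominated convergence (split at $|k|\le\epsilon\ell$ and $|k|>\epsilon\ell$), this sum equals $-B'\ell^{-p'}v_{p'}(\theta_\star)+o(\ell^{-p'})$. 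Hence
\[
\eta_{\ell,H}(\theta_\star)=1+B'\kappa_{p'}(\theta_\star)\ell^{-p'}+o(\ell^{-p'}).
\]
\emph{Case $p'=\infty$.} Here $\{1-H(t)\}/t^{r}\to0$ for every $r$. The same argument applied with $r=p$ gives $\eta_{\ell,H}(\theta_\star)=1+o(\ell^{-p})$.

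\textbf{Step 2: bias and variance at $\theta_\star$.} Combine Step 1 with the standard expansions of $\tilde v$ already used for Theorem~\ref{thm:BiasVar}(ii)--(iii). These are $\E\tilde v=v+Bv_p\ell^{-p}+o(\ell^{-p})+O(\ell/n)$ and $\Var(\tilde v)\sim 4Av^2\ell/n$.

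For the bias, I multiply out:
\[
\E\hat v-v=\eta_{\ell,H}(\theta_\star)\E\tilde v-v.
\]
When $p'\in\mathbb{N}$, this gives
\[
Bv_p\ell^{-p}+B'\kappa_{p'}(\theta_\star)v\ell^{-p'}
=\frac{Bv}{\ell^p}\Bigl\{\kappa_p-\frac{B'\ell^{p-p'}}{B}\kappa_{p'}(\theta_\star)\Bigr\}.
\]
The sign convention must be checked against the $H=K$ case, where the bracket should reduce to $\kappa_p-\kappa_p(\theta_\star)$. With $B'=B$, Step 1 gives $\eta=1-B\kappa_p(\theta_\star)\ell^{-p}$, since $1-H(t)\approx -B'|t|^{p'}$ and $B'<0$ typically. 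So the signs should be tracked via $1-H(t)\sim -B'|t|^{p'}$ and $1/M=1/\{v-(v-M)\}$. The remainder is $o(\ell^{-p})+O(\ell/n)$, plus cross terms of order $\ell^{-p-p'}$, which are negligible.

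When $p'=\infty$, the bias is $\bias(\tilde v)+o(\ell^{-p})$. This equals $\bias(\tilde v)\{1+o(1)\}$ provided $\bias(\tilde v)\asymp \ell^{-p}$, i.e.\ $v_p\neq0$; otherwise it should be interpreted additively. For the variance, $\Var(\hat v)=\eta_{\ell,H}(\theta_\star)^2\Var(\tilde v)$ with $\eta\to1$, which gives (ii) directly.

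\textbf{Step 3: replacing $\bar\theta$ by $\theta_\star$, and the main obstacle.} Part (i) is the main obstacle, and I would handle it exactly as in Theorem~\ref{thm:BiasVar}(i). By the mean value theorem and the bounded derivative assumptions on $v(\theta)$ and $M_{\ell,H}(\theta)$, together with $M_{\ell,H}(\theta_\star)\to v(\theta_\star)>0$,
\[
\eta_{\ell,H}(\bar\theta)-\eta_{\ell,H}(\theta_\star)=O_p(n^{-1/2}).
\]
Consequently $n^{p/(2p+1)}\{\hat v(\bar\theta)-\hat v(\theta_\star)\}=O_p(n^{p/(2p+1)-1/2})=o_p(1)$, because $\tilde v=O_p(1)$.

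Next, $n^{2p/(2p+1)}(\hat v(\theta_\star)-v)^2$ is uniformly integrable, since its second moment is bounded via fourth-moment bounds on $\tilde v$ under Assumption~\ref{assp:AspSeries}. Therefore the truncated MSE converges to the same limit as the untruncated MSE at $\theta_\star$, first letting $n\to\infty$ and then $h\to\infty$. The delicate point is uniformity of the derivative bound in $\ell$ for $M_{\ell,H}$. This is exactly what the stated assumption provides, so I would invoke it directly rather than re-derive it.
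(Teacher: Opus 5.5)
Your argument follows the paper's route. You expand $\eta_{\ell,H}(\theta_\star)=v(\theta_\star)/M_{\ell,H}(\theta_\star)$ from the local behaviour of $H$ at $0$ and $u_{p'}(\theta_\star)<\infty$, then multiply by the expansion of $\E\tilde v$ from Theorem~\ref{thm:BiasVar}. Part (ii) follows because $\eta_{\ell,H}(\theta_\star)$ is deterministic and tends to $1$, and part (i) from a mean-value expansion in $\theta$ plus the bounded truncated loss. Your dominated-convergence treatment of $\sum_k\{1-H(k/\ell)\}\gamma_k(\theta_\star)$ is a tidier version of the paper's remainder $r'(t)=H(t)-(1+B'|t|^{p'})\mathbb{1}(|t|\le 1)$ with the split at $|k|=\sqrt{\ell}$. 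You are also slightly more careful than the paper on two points: keeping $M_{\ell,H}(\ddot\theta)$ away from zero, and the case $p'=\infty$ with $v_p=0$.

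\textbf{Sign error in Step~1.} From $v(\theta_\star)-M_{\ell,H}(\theta_\star)=-B'v_{p'}(\theta_\star)\ell^{-p'}+o(\ell^{-p'})$ you get $M_{\ell,H}(\theta_\star)=v(\theta_\star)+B'v_{p'}(\theta_\star)\ell^{-p'}+o(\ell^{-p'})$. Hence
\[
\eta_{\ell,H}(\theta_\star)=1-B'\kappa_{p'}(\theta_\star)\ell^{-p'}+o(\ell^{-p'}),
\]
regardless of the sign of $B'$, so drop the remark that $B'<0$ typically; it plays no role. Then
\[
\eta_{\ell,H}(\theta_\star)\E\tilde v-v=Bv_p\ell^{-p}-B'\kappa_{p'}(\theta_\star)v\ell^{-p'}+\cdots,
\]
which is exactly the bracket in (iii). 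As written, the left side of your displayed identity in Step~2 carries a $+$ and does not equal its right side.

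\textbf{Two smaller points,} which the paper glosses over as well.
\begin{enumerate}
\item For $p'<p$, the term $o(\ell^{-p'})\E\tilde v$ inherited from $\eta$ is not $o(\ell^{-p})$. So (iii) then holds only with error $o(\ell^{-p'})$, which is all Corollary~\ref{cor:TailBVgeneral}(iii) needs.
\item Your uniform-integrability justification needs $\E\{(\tilde v-\E\tilde v)^4\}=O\{(\ell/n)^2\}$. That involves eighth moments of $X_i$ and does not follow from Assumption~\ref{assp:AspSeries}. The paper simply asserts $\lim_{h\to\infty}\lim_{n\to\infty}\MSE_h\{\hat v(\theta_\star)\}=\lim_{n\to\infty}n^{2p/(2p+1)}\MSE\{\hat v(\theta_\star)\}$. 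A central limit theorem for $n^{p/(2p+1)}(\tilde v-\E\tilde v)$, combined with the bias and variance limits you already have, would justify this without extra moments.
\end{enumerate}
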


Theorem~\ref{thm:BiasVarGeneral} suggests that generalized tail postcoloring 
does not affect the variance of the estimator asymptotically, 
while the bias may exhibit different properties  
depending on the kernel order $p'$. 
The following corollary presents the bias behaviors in different cases: 

  \begin{corollary}\label{cor:TailBVgeneral}
    Under the conditions of Theorem~\ref{thm:BiasVarGeneral}, consider three cases:

      (i) When $p' > p$, $\bias\left\{ \hat{v}(\ell ; K; H; \theta_{\star}) \right\} \sim \bias\left\{ \tilde{v}(\ell;K) \right\}$;

      (ii) When $p' = p$, 
          $\bias\left\{ \hat{v}(\ell ; K; H; \theta_{\star}) \right\} = Bv\left\{ \kappa_p - B'\kappa_{p'}(\theta_{\star}) / B\right\}/\ell^p + o\left({1}/{\ell^p}\right) + O({\ell}/{n})$;

      (iii) When $p' < p$, $\bias\left\{ \hat{v}(\ell ; K; H; \theta_{\star}) \right\} = O(1/\ell^{p'})$.
  \end{corollary}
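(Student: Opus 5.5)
The plan is to read the three cases directly off the bias expansion (\ref{eqt:generalbias}) in Theorem~\ref{thm:BiasVarGeneral}(iii). For case (iii) we will also go back to the expansion of the coloring factor behind that theorem, since $\ell^{p-p'}$ blows up there. Throughout, $\ell\asymp n^{1/(2p+1)}$, so $\ell/n = o(\ell^{-p})$ whenever $p\ge 1$, and the $O(\ell/n)$ remainder is always negligible relative to $\ell^{-p}$.

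\emph{Cases (i) and (ii).} If $p'=\infty$, case (i) is exactly the second line of (\ref{eqt:generalbias}). If $p<p'<\infty$, the first line gives
\[
\bias\{\hat v(\ell;K;H;\theta_\star)\} = \frac{Bv}{\ell^p}\kappa_p - \frac{B'v}{\ell^{p'}}\kappa_{p'}(\theta_\star) + o(\ell^{-p}) + O(\ell/n).
\]
The second term is $o(\ell^{-p})$ because $u_{p'}(\theta_\star)<\infty$ makes $\kappa_{p'}(\theta_\star)$ finite. Hence the bias equals $Bv\kappa_p\ell^{-p}+o(\ell^{-p})$. The standard Parzen expansion gives $\bias\{\tilde v(\ell;K)\}= Bv_p\ell^{-p}+o(\ell^{-p})+O(\ell/n)$; this is the $\theta$-free special case of Theorem~\ref{thm:BiasVar}(iii), namely $\xi_p=\kappa_p$ with the factor $\eta\equiv1$. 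Taking the ratio yields the asymptotic equivalence, provided $\kappa_p\neq0$. If $v_p=0$, we will interpret ``$\sim$'' as both biases being $o(\ell^{-p})$, and we will note this caveat. Case (ii) is immediate: substitute $p'=p$ into (\ref{eqt:generalbias}), so that $\ell^{p-p'}=1$.

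\emph{Case (iii).} Here (\ref{eqt:generalbias}) gives the leading term $-B'v\kappa_{p'}(\theta_\star)\ell^{-p'}$, and $Bv\kappa_p\ell^{-p}=o(\ell^{-p'})$. The remainder $o(\ell^{-p})$ is also $o(\ell^{-p'})$. We therefore obtain $O(\ell^{-p'})$, with a nonzero leading constant when $\kappa_{p'}(\theta_\star)\neq0$.

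The main obstacle is checking that the derivation of Theorem~\ref{thm:BiasVarGeneral}(iii) really holds uniformly in this regime. We will rederive it briefly. Write
\[
M_{\ell,H}(\theta_\star)=v(\theta_\star)\bigl\{1+B'\kappa_{p'}(\theta_\star)\ell^{-p'}+o(\ell^{-p'})\bigr\},
\]
which follows by dominated convergence using $|H(t)-1-B't^{p'}|\le o(1)|t|^{p'}$ near zero, boundedness of $H$, and $u_{p'}(\theta_\star)<\infty$. Then
\[
\eta_{\ell,H}(\theta_\star)=1-B'\kappa_{p'}(\theta_\star)\ell^{-p'}+o(\ell^{-p'}).
\]
Multiplying by $\E\tilde v = v+O(\ell^{-p})+O(\ell/n)$ gives bias $=-B'v\kappa_{p'}(\theta_\star)\ell^{-p'}+o(\ell^{-p'})=O(\ell^{-p'})$. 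This completes case (iii) and confirms consistency with the other cases.
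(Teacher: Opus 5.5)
Your proposal is correct and follows the paper's route: the paper's proof simply reads the three cases off the bias expansion (\ref{eqt:generalbias}) in Theorem~\ref{thm:BiasVarGeneral}(iii). Your extra rederivation in case (iii) is worthwhile. When $p'<p$, the remainder coming from the coloring factor is only $o(\ell^{-p'})$, not the $o(\ell^{-p})$ written in (\ref{eqt:generalbias}); your expansion $\eta_{\ell,H}(\theta_\star)=1-B'\kappa_{p'}(\theta_\star)\ell^{-p'}+o(\ell^{-p'})$ handles this correctly, so the conclusion $O(\ell^{-p'})$ still holds.
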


Case (i) suggests that tail postcoloring has a negligible effect asymptotically
when the kernel $H$ has a higher order than $p$; see Example~\ref{eg:negligibleTailPW}
for a special case.
It protects users from using bandwidth that is too short without any risks of model misspecification.
Case (iii) is not recommended as it 
leads to a sub-optimal rate of convergence.
Case (ii) is our recommended proposal and would be studied in the simulation experiments.

  \begin{example}\label{eg:negligibleTailPW}
    Let $H(t)=\mathbb{1}(|t| \leq 1)$.
    The \textsc{ar}$(1)$-tail postcolored estimator is
        \[
        	\hat{v}^{\circ}_{\textsc{tail}} = \frac{ 1 + {2\bar{\phi}^{\ell+1}}}{ {1+\bar{\phi}-2\bar{\phi}^{\ell+1}} } \tilde{v},
		\]
		where $\bar{\phi}=\tilde{\gamma}_1/\tilde{\gamma}_0$.
    The postcoloring coefficient still satisfies $\eta_{\ell, H}(\bar{\theta}) = \hat{v}^{\circ}_{\textsc{tail}}/\tilde{v}  \to 1$ but 
    with a higher convergence rate compared with the case where $p'=p$.
  \end{example}

\subsection{Multivariate time series}\label{sec:MVextension}
Tail postcoloring applies readily to multivariate data.
We consider a $d$-dimensional stationary and ergodic time series
$\{ X_i \in\mathbb{R}^d\}_{i\in\mathbb{Z}}$ with ${\mu} = \E({X}_1)$ and $d\in\mathbb{N}$.
The autocovariances are ${\Gamma}_{k} = {\Gamma}^{\T}_{-k} = \E\{( X_k-{\mu})( X_0-{\mu})^{\T}\}$ for $k\in\mathbb{Z}$,
and the sample autocovariances are 
\[
	\tilde{\Gamma}_k = \tilde{\Gamma}^{\T}_{-k} = \frac{1}{n}\sum_{i=k +1}^{n} ({X}_i - \bar{{X}}_n)({X}_{i-k } - \bar{{X}}_n)^{\T}
\]
for $k = 0,\ldots,n-1$.
The long-run covariance matrix that we want to estimate is
$$V = \lim_{n\to\infty}n\Var(\bar{{X}}_n).$$ 

Let $C^{(u,v)}$ denote the $(u,v)$th component of a matrix $C$
and $c^{(u)}$ denote the $u$th entry of a vector $c$.
The physical dependence measure for the $u$th component is 
$\delta^{(u)}_{p,i}= { \| X^{(u)}_i-  {X}'^{(u)}_i \| }_p$,
where $X'_i=g(\mathcal{F}'_i)$ is the coupled series of $X_i = g(\mathcal{F}_i)$ 
and $\mathcal{F}_i$ and $\mathcal{F}'_i$ are defined as in Section~\ref{sec:notation}.
Let $\Delta^{(u)}_p=\sum_{i=0}^{\infty}\delta^{(u)}_{p,i} \ (u=1, \ldots, d)$.
We extend $v_p$ and $u_p$ to 
${V}_p=\sum_{k\in\mathbb Z}{|k|}^{p}{\Gamma}_{k}$
and
${U}_p=\sum_{k\in\mathbb Z}{|k|}^{p}|{\Gamma}_{k}|$, 
where $|{\Gamma}_{k}|$ denotes the entry-wise absolute value of ${\Gamma}_k$. 

\begin{definition}[Covariance matrix estimator]\label{def:CovVar}
Let $X_i(\theta) = h_\theta(\mathcal{F}_i)$ be a parametric tail postcoloring model, 
  where $h_\theta$ is a measurable function and the autocovariances are $\Gamma_k(\theta) = \Cov\{X_i(\theta), X_{i-k}(\theta)\}$ for $k\in\mathbb{Z}$
  such that $\sum_{k\in\mathbb{Z}}\Gamma_k(\theta)$ is positive definite for $\theta \in \mathcal{H}$.
  Let $\bar{\theta} \in \mathcal{H}$ be an estimator of $\theta$, 
  where $\mathcal{H} \subseteq \mathbb{R}^{r}$ is the parameter space of $\theta$ and $r\in\mathbb{N}$.
  Let 
  \[
  	\tilde{V}(\ell;K) = \sum_{|k| <n} K(k/\ell) \tilde{ \Gamma}_{k}, \quad
	V(\bar{\theta}) = \sum_{k \in\mathbb{Z}}{\Gamma}_k(\bar{\theta}), \quad
  	M_{\ell,K}(\bar{\theta}) = \sum_{k\in\mathbb{Z}} K(k/\ell) {\Gamma}_k(\bar{\theta}).
  \]
  Assume that $M_{\ell,K}(\theta)$ is invertible for all $\theta\in\mathcal{H}$.
  The tail postcolored estimator for $V$ is 
  \begin{align}\label{eqt:MVproposal}
    \hat{V}_{\textsc{tail}} 
    = \hat{V}(\ell;K;\bar{\theta}) 
    = \bigg\{ V(\bar{\theta}) M_{\ell,K}^{-1}(\bar{\theta}) \tilde{V}(\ell;K) 
    +
    \tilde{V}(\ell;K)  
    M_{\ell,K}^{-1}
    (\bar{\theta}) V(\bar{\theta})    
    \bigg\} / 2.
  \end{align}
\end{definition}

Note that symmetry is ensured with the formula in (\ref{eqt:MVproposal}).
The conclusions in Theorem~\ref{thm:BiasVar} can be extended to the covariance estimator
in Proposition~\ref{prop:multivariateBiasVar}.
To do this, we
define the weighted MSE and its truncated version for the matrix estimator $\hat{V}$ for $V$ as
\begin{align*}
	\WMSE(\hat{V}; W) &= \E\left\{ \vect(\hat{V} - V)^{\T} W \vect(\hat{V} - V) \right\}, \\
	\WMSE_h(\hat{V}; W) &= \E\left[ \min\left\{ n^{ {2p}/(2p+1)} \vect(\hat{V} - V)^{\T} W \vect(\hat{V} - V) , h\right\} \right], 
\end{align*}
where $W$ is a $d^2\times d^2$ weight matrix, $h\geq 0$, and 
$\vect(M) = (M_1^{\T}, \ldots, M_d^{\T})^{\T}$ 
is the column-by-column vectorization of the matrix $M = (M_1 \cdots M_d)$
with columns $M_1, \ldots, M_d$.
We denote $\otimes$ as the Kronecker product,  
$I_{dd}$ as the $d^2\times d^2$ identity matrix, and 
$C_{dd} = \sum_{u=1}^d \sum_{v=1}^d (e_ue_v^{\T}) \otimes (e_ve_u^{\T})$ as the $d^2\times d^2$ commutation matrix, 
where $e_u\in\mathbb{R}^d$ is the $u$th elementary vector.
The properties of $\hat{V}_{\textsc{tail}}$ defined in (\ref{eqt:MVproposal}) are shown below:

  \begin{proposition}\label{prop:multivariateBiasVar}
  Let $K\in\mathcal{K}_{p}$ for some $p \in \mathbb{N}$. 
    Assume  
    $\E \{  \vert X_i^{(u)} \vert^{\omega}  \}<\infty$ for some $\omega>4$, 
    $U_{p}^{(u,u)}<\infty$, 
    and $\Delta^{(u)}_4 < \infty$ for $u = 1,\ldots,d$.
    Suppose $n^{1/2} (\bar{\theta}- {\theta}_{\star}) = O_p(1)$ 
    for some ${\theta}_{\star} \in \mathcal{H}$, 
    $\sup_{\theta\in\mathcal{H}} \left| {\partial V^{(u,v)}(\theta)}/{\partial  \theta^{(h)}} \right| <\infty$ and 
    $\sup_{\theta\in\mathcal{H}} \left| {\partial M^{(u,v)}(\theta)}/{\partial  \theta^{(h)}} \right| < \infty$ for all $u,v,h\in\{1, \ldots, d\}$.
    Let $\ell \asymp n^{1/(2p+1)}$ as $n\to\infty$,
    then (i) 
    \[
      \lim_{h\to\infty} \lim_{n\to\infty} \WMSE_h\left\{ \hat{V}(\ell;K;\bar{\theta}); W\right\}
      = \lim_{n\to\infty} n^{{2p}/{(2p+1)}} \WMSE\left\{ \hat{V}(\ell;K;{\theta}_{\star}); W \right\};
    \]
    (ii) $\Var\left[ \vect\{\hat{V}(\ell;K;\theta_{\star}) \} \right] \sim 
          \Var\left[ \vect\{\tilde{V}(\ell;K) \}\right]$ for all $u,v=1,\ldots, d$, where
    \begin{equation}\label{eqt:mvvariance}
      \Var\left[ \vect \{\tilde{V}(\ell;K) \}\right]
      = 2A (I_{dd} + C_{dd}) (V\otimes V) \frac{\ell}{n} + o\left(\frac{\ell}{n}\right) ; 
    \end{equation}  
    and (iii)
    \begin{equation}\label{eqt:mvbias}
      \bias\left\{ \hat{V}(\ell;K;\theta_{\star}) \right\}
      =
      \frac{B}{\ell^p} \frac{1}{2}\left( V\Xi_p^- + \Xi_p^+V\right) + o\left(\frac{1}{\ell^p}\right) + O\left(\frac{\ell}{n}\right),
    \end{equation}
    where $\Xi_p^- = V^{-1}V_p - V^{-1}(\theta_{\star}) V_p(\theta_{\star})$ and $\Xi_p^+ = V_pV^{-1} - V_p(\theta_{\star}) V^{-1}(\theta_{\star})$.
  \end{proposition}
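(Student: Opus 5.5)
The plan is to reduce everything to the univariate argument of Theorem~\ref{thm:BiasVar}, working entrywise and then assembling the matrix expressions.

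\emph{Step 1: expand the deterministic scaling matrix.} Write $G_\ell(\theta)=V(\theta)M_{\ell,K}^{-1}(\theta)$. Since $K\in\mathcal{K}_p$ and $U_p(\theta_\star)$ is entrywise finite (the analogue of $u_p(\theta_\star)<\infty$), dominated convergence gives
\[
M_{\ell,K}(\theta_\star)=V(\theta_\star)+\frac{B}{\ell^p}V_p(\theta_\star)+o(\ell^{-p}).
\]
Inverting by a Neumann series,
\[
M_{\ell,K}^{-1}(\theta_\star)=V^{-1}(\theta_\star)-\frac{B}{\ell^p}V^{-1}(\theta_\star)V_p(\theta_\star)V^{-1}(\theta_\star)+o(\ell^{-p}).
\]
Hence
\[
G_\ell(\theta_\star)=I-\frac{B}{\ell^p}V_p(\theta_\star)V^{-1}(\theta_\star)+o(\ell^{-p}),\qquad
G_\ell^{\T}(\theta_\star)=I-\frac{B}{\ell^p}V^{-1}(\theta_\star)V_p(\theta_\star)+o(\ell^{-p}).
\]
Here I use the symmetry of $V(\theta)$, $V_p(\theta)$ and $M_{\ell,K}(\theta)$, which holds because $K$ is even and $\Gamma_{-k}=\Gamma_k^{\T}$.

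\emph{Step 2: bias (iii).} The entrywise univariate arguments (Parzen/Andrews type, as used for Theorem~\ref{thm:BiasVar}, applied to cross-covariances under $\Delta_4^{(u)}<\infty$ and $U_p^{(u,u)}<\infty$) give
\[
\E\tilde V(\ell;K)=V+\frac{B}{\ell^p}V_p+o(\ell^{-p})+O(\ell/n).
\]
The cross terms $U_p^{(u,v)}$ should be controlled via Cauchy--Schwarz-type bounds from the diagonal ones, or simply assumed.

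Since $\theta_\star$ is fixed, $\E\hat V=\{G_\ell\E\tilde V+(\E\tilde V)G_\ell^{\T}\}/2$. Multiplying out and keeping the $\ell^{-p}$ terms gives
\[
\E\hat V-V=\frac{B}{2\ell^p}\Bigl\{V_p-V_p(\theta_\star)V^{-1}(\theta_\star)V+V_p-VV^{-1}(\theta_\star)V_p(\theta_\star)\Bigr\}+o(\ell^{-p})+O(\ell/n).
\]
Writing $V_p=VV^{-1}V_p=V_pV^{-1}V$, this equals $\tfrac{B}{2\ell^p}(V\Xi_p^-+\Xi_p^+V)$, which is (\ref{eqt:mvbias}).

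\emph{Step 3: variance (ii).} Here $\vect(\hat V)=L_\ell\vect(\tilde V)$ with the deterministic matrix
\[
L_\ell=\tfrac12\{I\otimes G_\ell+G_\ell\otimes I\}\to I_{dd},
\]
using $\vect(AXB)=(B^{\T}\otimes A)\vect X$. Therefore
\[
\Var\{\vect\hat V\}=L_\ell\Var\{\vect\tilde V\}L_\ell^{\T}\sim\Var\{\vect\tilde V\}.
\]
The formula (\ref{eqt:mvvariance}) is the standard multivariate kernel-variance result (Andrews 1991, Proposition 1(b)), obtained from the fourth-order cumulant summability implied by $\Delta_4^{(u)}<\infty$ and the moment condition $\omega>4$: $\Cov(\tilde V^{(a,b)},\tilde V^{(c,d)})\sim 2A(V^{(a,c)}V^{(b,d)}+V^{(a,d)}V^{(b,c)})\ell/n$. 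This entrywise formula is exactly $2A(I_{dd}+C_{dd})(V\otimes V)\ell/n$ in vec form.

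\emph{Step 4: truncated WMSE (i).} This is the main obstacle. By the mean value theorem and the bounded derivatives of $V(\theta)$ and $M(\theta)$, together with the continuity of inversion near $M_{\ell,K}(\theta_\star)$ (invertible, and converging to $V(\theta_\star)$, which is positive definite), we get $G_\ell(\bar\theta)-G_\ell(\theta_\star)=O_p(n^{-1/2})$ uniformly in $\ell$. Also $\tilde V=O_p(1)$. Hence
\[
n^{p/(2p+1)}\{\hat V(\bar\theta)-\hat V(\theta_\star)\}=O_p(n^{p/(2p+1)-1/2})=o_p(1).
\]
The quadratic form with fixed $W$ is continuous, so $n^{2p/(2p+1)}\vect(\hat V(\bar\theta)-V)^{\T}W\vect(\hat V(\bar\theta)-V)$ has the same limiting distribution as its $\theta_\star$ counterpart. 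Steps 2--3 show that $n^{p/(2p+1)}\vect(\hat V(\theta_\star)-V)$ is $L^2$-bounded, and its squared norm is uniformly integrable (via the $\omega>4$ moments, as in Andrews 1991). With the truncation at $h$ and bounded convergence, the limits over $n$ and then $h$ exchange, giving (i) as in the univariate Theorem~\ref{thm:BiasVar}.

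The delicate point is the uniform-in-$\ell$ control of $\partial M^{-1}_{\ell,K}/\partial\theta$. I would handle it by bounding $\|M_{\ell,K}^{-1}(\theta)\|$ on a shrinking neighbourhood of $\theta_\star$ via Step 1's expansion and Weyl's inequality.
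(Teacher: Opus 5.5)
Your Steps 1, 2 and 4 follow the paper's proof closely. The Neumann-series inversion of $M_{\ell,K}(\theta_\star)$ plays the role of the paper's Woodbury step, and Step 4 is the same mean-value plus bounded-convergence argument. Your form $\vect(\hat V)=L_\ell\vect(\tilde V)$ with $L_\ell\to I_{dd}$ is a cleaner way than the paper's to get $\Var\{\vect\hat V(\theta_\star)\}\sim\Var\{\vect\tilde V\}$.

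\textbf{The gap is the other half of Step 3.} The formula (\ref{eqt:mvvariance}) itself is the real content of part (ii), and your justification for it does not work under the stated hypotheses. Proposition 1(b) of \cite{andrews1991} is proved under his Assumption A, which requires absolutely summable fourth-order cumulants. It also concerns an estimator built from mean-zero data rather than data centred at $\bar X_n$. Here you only have $\Delta_4^{(u)}<\infty$ and $\omega>4$ moments, and these do \emph{not} imply cumulant summability for nonlinear Bernoulli shifts; the implication holds for linear processes, but not in general.

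Here is a counterexample. Take $\varepsilon_i$ i.i.d.\ $\Normal(0,1)$ and $X_i=\varepsilon_i\{1+\sum_{1\le j<k}d_{jk}\varepsilon_{i-j}\varepsilon_{i-k}\}$ with $d_{jk}\ge0$.
\begin{itemize}
\item The physical dependence satisfies $\delta_{4,i}\lesssim(\sum_{k>i}d_{ik}^2)^{1/2}+(\sum_{j<i}d_{ji}^2)^{1/2}$.
\item Since $\{X_i\}$ is a martingale difference sequence, $\mathrm{cum}(X_0,X_a,X_b,X_b)=\E(X_0X_aX_b^2)$ for $0<a<b$. Expanding in Gaussian moments, all terms are nonnegative, so this is at least $2d_{b-a,b}$.
\item Now choose $d_{jk}=r^{-8}$ on $J_r\times K_r$, where the $J_r,K_r$ are disjoint integer intervals of length $r^4$ with $\max J_r<\min K_r$.
\end{itemize}
This choice gives $\Delta_4<\infty$, yet $\sum_{0<a<b}|\mathrm{cum}(X_0,X_a,X_b,X_b)|\ge2\sum_{j<k}d_{jk}=\infty$. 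So Step 3 as written proves (ii) only under an extra cumulant assumption. Avoiding such assumptions is the whole point of working with $\Delta_4$.

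\textbf{How the paper closes this.} It avoids cumulants entirely, in five steps:
\begin{enumerate}
\item It replaces $\bar X_n$ by $\mu$ at $L^2$ cost $O(\ell^{1/2}/n)$, using Lemma A.6 of \cite{LiuChan2022}.
\item It truncates the kernel at lag $s\ell$.
\item It splits the truncated quadratic form into blocks $G^{(u,v)}_{k,t}$ and approximates their sums by martingale differences (Lemma~\ref{lemma:multblock}(iii)), which needs only $\Delta_4^{(u)}<\infty$.
\item It obtains the block covariances $2\{V^{(u,u')}V^{(v,v')}+V^{(u,v')}V^{(v,u')}\}\int_{t-1}^tK^2$ from the invariance principle for $\ell^{-1/2}\sum_{i\le\ell t}X_i$, with uniform integrability supplied by the $\omega>4$ moments (Lemmas~\ref{lemma:multblock}--\ref{lemma:multblock2}).
\item It lets $s\to\infty$ by dominated convergence.
\end{enumerate}
You need that argument, or an equivalent physical-dependence bound for quadratic forms, in place of the citation.

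\textbf{Two smaller points.}
\begin{itemize}
\item Your Cauchy--Schwarz idea for the cross terms cannot work, because $|\Gamma_k^{(u,v)}|\le(\Gamma_0^{(u,u)}\Gamma_0^{(v,v)})^{1/2}$ carries no decay in $k$. For example, take $X^{(1)}_i=\varepsilon_i$ and $X^{(2)}_i=\sum_{j\ge0}c_j\varepsilon_{i-j}+Y_i$, with an independent linear process $Y$ chosen to make $X^{(2)}$ white. Then both diagonal $U_p$'s are finite while $U_p^{(1,2)}=\sum_j j^p|c_j|$ can be infinite. So $U_p^{(u,v)}<\infty$ for all $u,v$, and $U_p(\theta_\star)<\infty$, must be assumed; the paper's proof also uses them tacitly.
\item In Step 4, $L^2$-boundedness of $n^{p/(2p+1)}\vect\{\hat V(\theta_\star)-V\}$ does not make its squared norm uniformly integrable. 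To identify $\lim_h\lim_n\WMSE_h$ with $\lim_n n^{2p/(2p+1)}\WMSE$ you need a $(2+\delta)$th-moment bound, or a CLT together with convergence of second moments. The paper passes over this point equally briefly.
\end{itemize}
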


\begin{remark}\label{rem:positiveDef}
The estimator in (\ref{eqt:MVproposal}) may not be positive-definite as in many bias-corrected estimators. 
To ensure positive-definiteness of $\hat{V}_{\textsc{tail}} = \hat{V}(\ell;K;\bar{\theta})$, 
we lower bound the eigenvalues of $\hat{V}_{\textsc{tail}}$ by $1/n$ as in \cite{politis2011}.
Denote $\{\lambda_i \}_{i=1}^{d}$ and $\{s_j \}_{j=1}^{d}$ as the eigenvalues and eigenvectors 
of $\hat{V}_{\textsc{tail}}$
such that $\hat{V}_{\textsc{tail}} = Q\Lambda Q^{\T}$,
where $Q$ is a $d\times d$ matrix with columns $s_1, \ldots, s_d$ and 
$\Lambda = \diag(\lambda_1, \ldots,\lambda_d)$ is a diagonal matrix with diagonal entries 
$\lambda_1, \ldots,\lambda_d$.
The adjusted estimator is $\hat{V}_{\textsc{tail}}^{+} = Q\Lambda^{+} Q^{-1}$ 
where $\Lambda^+ = \diag\{ \max(\lambda_1, 1/n), \ldots, \max(\lambda_d, 1/n)\}$. 
{Note that this correction step may incur additional finite-sample bias; 
see, e.g., \cite{politis2011} and \cite{LiuChan2022} for discussions and some empirical results.} 
\end{remark}

\subsection{Robustness to heteroskedasticity}\label{sec:supp_het}
  Now we do not assume stationarity while other conditions in Proposition~\ref{prop:multivariateBiasVar} still hold.
  The physical dependence measures are changed to $\delta_{p}^{\mathdag} = \sup_i\| X_i-X_i'\| _p$. 
  Let $\Delta_p^{\mathdag}=\sum_{i=0}^{\infty}\delta_{p}^{\mathdag}$.
  Instead of estimating $V =\lim_{n\to\infty}n\Var(\bar{{X}}_n)$, 
  we consider estimation of its finite-$n$ version:
  \[
    V_n = n\Var(\bar{{X}}_n)
    = \frac{1}{2n}\sum_{k=-n+1}^{n-1}\sum_{i=|k| +1}^{n} \E\{({X}_i-{\mu})({X}_{i-|k| }-{\mu})^{\T} + ({X}_{i-|k| }-{\mu})({X}_i-{\mu})^{\T}\}.
  \]
  Similar to  $V_{p}$ and $U_{p}$, 
  define the counterparts 
  $V_{p,\mathdag}$ and $U_{p,\mathdag}$ such that, for all $u,v=1,\ldots,d$,
    \begin{align*}
      V^{(u,v)}_{p,\mathdag} = \sum_{k\in\mathbb{Z}}|k|^p
          \sup_{i\geq 1}\Cov\left\{ X^{(u)}_{i+k}, X^{(v)}_{i} \right\} 
         \qquad \text{and} \qquad
      U^{(u,v)}_{p,\mathdag} = \sum_{k\in\mathbb{Z}}|k|^p
          \sup_{i\geq 1} \left\vert \Cov\left\{ X^{(u)}_{i+k}, X^{(v)}_{i} \right\} \right\vert. 
    \end{align*}
	The results for our proposed $\hat{V}_{\textsc{tail}}$ in Proposition~\ref{prop:multivariateBiasVar} can be modified as follows:

\begin{proposition}\label{prop:hacRobust}
  Assume all conditions in Proposition~\ref{prop:multivariateBiasVar} with the following modifications:
  Let $\sum_{u=1}^{d}\sum_{v=1}^{d}U^{(u,v)}_{p,\mathdag}<\infty$
  but $\{X_i\}$ is no longer stationary.
  Suppose that $\sqrt{n}(\bar{\theta} - \theta_{\star}) = O_p(1)$ still holds 
  for some $\theta_{\star}\in\mathcal{H}$
  and 
  the series $X_i(\theta) = h_{\theta}(\mathcal{F}_i)$
  satisfies Assumption~\ref{assp:AspSeries} for any $\theta\in\mathcal{H}$.
  (i) If $1/\ell + \ell/n^{1/(p+1)} \to 0$, then
  \begin{gather*}\label{eqt:HACBV}
    \limsup_{n\to\infty}\frac{n}{\ell}\Var\{\hat{V}^{(u,v)}(\ell;K; \theta_{\star})\} \leq 2A \left[ V_{0,\mathdag}^{(u,u)} V_{0,\mathdag}^{(v,v)} + \left\{V_{0,\mathdag}^{(u,v)}\right\}^2\right], \\
    \limsup_{n\to\infty} \ell^p\left\mid \E\{\hat{V}^{(u,v)}(\ell;K; \theta_{\star})\}- V^{(u,v)}_n\right\mid 
     \leq |B| \left\{U^{(u,v)}_{p,\mathdag} + \Psi^{(u,v)}_{p,\mathdag}\right\} .
  \end{gather*}
  for all $u,v\in\{1,\ldots,d\}$
  where $
  \Psi_{p,\mathdag} = \left\vert V_p(\theta_{\star})V^{-1}(\theta_{\star}) V_{0,\mathdag} 
  			+ V_{0,\mathdag} V^{-1}(\theta_{\star})V_p(\theta_{\star}) \right\vert /2$.
  (ii) If 
  $\ell\asymp n^{1/(2p+1)}$, then $n^{p/(2p+1)} \left\{ \hat{V}(\ell;K; \bar{\theta}) - \hat{V}(\ell;K; \theta_{\star}) \right\} = o_p(1)$.
\end{proposition}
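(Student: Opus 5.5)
The plan is to decompose $\hat{V}(\ell;K;\theta_\star)$ into a deterministic rescaling of the nonparametric estimator. Write $R = V(\theta_\star)M_{\ell,K}^{-1}(\theta_\star)$ and $L = M_{\ell,K}^{-1}(\theta_\star)V(\theta_\star)$, so that $\hat V(\ell;K;\theta_\star) = (R\tilde V + \tilde V L)/2$ with $R,L$ nonrandom.

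First we expand $R$ and $L$. Since $u_p(\theta_\star)$-type summability holds because $X_i(\theta)$ satisfies Assumption~\ref{assp:AspSeries}, and $K\in\mathcal{K}_p$, the same Taylor argument used in Theorem~\ref{thm:BiasVar} gives
\[
M_{\ell,K}(\theta_\star) = V(\theta_\star) + B V_p(\theta_\star)/\ell^p + o(\ell^{-p}).
\]
Hence
\[
R = I - B V_p(\theta_\star)V^{-1}(\theta_\star)/\ell^p + o(\ell^{-p}), \qquad L = I - B V^{-1}(\theta_\star)V_p(\theta_\star)/\ell^p + o(\ell^{-p}).
\]
In particular $R,L\to I$.

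For the variance in part (i), each entry $\hat V^{(u,v)}$ is a fixed finite linear combination of entries of $\tilde V$ with coefficients $\delta + O(\ell^{-p})$. So $\Var\{\hat V^{(u,v)}\} = \Var\{\tilde V^{(u,v)}\}\{1+o(1)\}$ up to cross terms of smaller order, by Cauchy--Schwarz. It then suffices to bound $\limsup (n/\ell)\Var\{\tilde V^{(u,v)}\}$ under nonstationarity. For this I would repeat the quadratic-form variance computation behind (\ref{eqt:mvvariance}) using the uniform dependence measure $\Delta_4^{\mathdag}$. The fourth-cumulant term is $O(1/n)$ by the $\mathcal{L}^4$ martingale/projection bounds of Wu with sup over $i$. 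The Gaussian part is a sum over $k,k'$ of $K(k/\ell)K(k'/\ell)$ times products of covariances, each bounded by their $\sup_i$ versions. Summing via Riemann approximation yields $2A\ell/n$ times $V_{0,\mathdag}^{(u,u)}V_{0,\mathdag}^{(v,v)} + \{V_{0,\mathdag}^{(u,v)}\}^2$. The condition $\ell/n^{1/(p+1)}\to0$ (weaker than needed) certainly makes $\ell/n\to0$.

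For the bias in part (i), write
\[
\E\hat V - V_n = \bigl(R\,\E\tilde V + \E\tilde V\,L\bigr)/2 - V_n .
\]
First, $\E\tilde V - V_n$ has entrywise bound
\[
\sum_k |K(k/\ell)-1|\,|k|^{0}\sup_i|\Cov| + O(\ell/n)\text{-type centering terms}.
\]
Using $|1-K(t)|\le |B||t|^p(1+o(1))$ near $0$ together with tail summability of $U_{p,\mathdag}$, this gives $\ell^{-p}|B|U^{(u,v)}_{p,\mathdag}(1+o(1))$. The terms from mean centering and the $(1-|k|/n)$ weights are $O(\ell/n)+O(1/n)$. These are $o(\ell^{-p})$ precisely when $\ell^{p+1}/n\to0$, which explains the assumption. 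Second, the correction $(R-I)\E\tilde V + \E\tilde V(L-I)$ equals
\[
-\tfrac{B}{2\ell^p}\bigl\{V_p(\theta_\star)V^{-1}(\theta_\star)V_n + V_nV^{-1}(\theta_\star)V_p(\theta_\star)\bigr\} + o(\ell^{-p}),
\]
and $|V_n^{(u,v)}|$ is bounded entrywise by $V_{0,\mathdag}$-type quantities. Taking absolute values gives the $\Psi_{p,\mathdag}$ term. The delicate point is that absolute values must be taken on the matrix product. I would bound entrywise with $|V_n|\le$ sup covariances, matching the paper's definition of $\Psi$ loosely.

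Part (ii) is a delta-method step. We have $\hat V(\bar\theta)-\hat V(\theta_\star)$ involving $V(\bar\theta)M^{-1}(\bar\theta)-V(\theta_\star)M^{-1}(\theta_\star)$. By the bounded derivatives of $V$ and $M$ and the invertibility of $M$ near $\theta_\star$ (from $M\to V(\theta_\star)$, which is positive definite), this difference is $O_p(n^{-1/2})$. Meanwhile $\tilde V = O_p(1)$ because its bias and variance are bounded by part (i). Hence the difference is $O_p(n^{-1/2}) = o_p(n^{-p/(2p+1)})$.

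I expect the main obstacle to be the nonstationary variance bound for $\tilde V$, in particular controlling the fourth-order cumulant sums uniformly in $i$ through $\Delta_4^{\mathdag}$.
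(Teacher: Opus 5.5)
Your architecture is the paper's. It uses the decomposition (\ref{eqt:estDecomp}) with deterministic factors and the expansion (\ref{eqt:VMinverse}). It controls $\E(\tilde V)-V_n$ through the finite-$n$ autocovariances dominated by $U_{p,\mathdag}$; done with absolute values this is fine, and it is exactly where $\ell^{p+1}/n\to0$ enters. Part (ii) is a delta-method step, and your justification of $M_{\ell,K}^{-1}(\bar\theta)=O_p(1)$ there is more careful than the paper's.

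The gap is the last move in both halves of (i): the passage to the \emph{signed} suprema $V_{0,\mathdag}$. Replacing a covariance by $\sup_i\Cov(\cdot)$ gives an upper bound only when it multiplies a nonnegative weight. In the Gaussian part of $\Var\{\tilde V^{(u,v)}\}$ it multiplies another covariance of arbitrary sign. In the correction term, entries of $V_n$ are multiplied by entries of $V_p(\theta_\star)V^{-1}(\theta_\star)$ of arbitrary sign. So neither $V_{0,\mathdag}^{(u,u)}V_{0,\mathdag}^{(v,v)}+\{V_{0,\mathdag}^{(u,v)}\}^2$ nor $\Psi_{p,\mathdag}$ follows. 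You sensed this for $\Psi$ ("loosely"), but the issue is not cosmetic. The paper's proof has the same hole: it invokes $0\le V_n^{(u,v)}\le V_{0,\mathdag}^{(u,v)}$, which is false off the diagonal.

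The bounds as stated can in fact fail. Take $d=2$ and $X_i^{(1)}=-X_i^{(2)}=\sigma_i\varepsilon_i$, with $\varepsilon_i$ i.i.d.\ $\Normal(0,1)$, $\sigma_1=1$ and $\sigma_i=2$ for $i\ge2$. Let $\theta_\star$ be a $\textsc{var}(1)$ model with $\Phi=\phi I_2$, $\phi\in(0,1)$, and identity innovation covariance, so that $V_p(\theta_\star)V^{-1}(\theta_\star)=\kappa_p(\phi)I_2$ with $\kappa_p(\phi)>0$. Then:
\begin{itemize}
\item $V_{0,\mathdag}^{(1,1)}=V_{0,\mathdag}^{(2,2)}=4$, $V_{0,\mathdag}^{(1,2)}=-1$, $U_{p,\mathdag}=0$ and $V_n^{(1,2)}\to-4$;
\item $\hat V(\ell;K;\theta_\star)=\{1-B\kappa_p(\phi)\ell^{-p}+o(\ell^{-p})\}\tilde V$, with $\tilde V^{(1,2)}=-\tilde V^{(1,1)}$.
\end{itemize}
Hence $(n/\ell)\Var\{\hat V^{(1,2)}\}\to64A>34A$, and $\ell^p|\E\{\hat V^{(1,2)}\}-V_n^{(1,2)}|\to4|B|\kappa_p(\phi)>|B|\Psi^{(1,2)}_{p,\mathdag}=|B|\kappa_p(\phi)$. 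Adding small independent noise to $X^{(2)}$ removes the singularity of $V_n$ and keeps both violations.

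What your route does prove is the version with absolute values throughout. Cauchy--Schwarz gives $\sum_{m}|K(m/\ell)K((c-m)/\ell)|\le\sum_m K^2(m/\ell)\sim 2A\ell$. With this, the Gaussian part yields the variance bound with $U_{0,\mathdag}$ in place of $V_{0,\mathdag}$. Likewise, $|V_n^{(j,k)}|\le U_{0,\mathdag}^{(j,k)}$ yields the bias bound with $\tfrac12\{|V_p(\theta_\star)V^{-1}(\theta_\star)|\,U_{0,\mathdag}+U_{0,\mathdag}\,|V^{-1}(\theta_\star)V_p(\theta_\star)|\}$ in place of $\Psi_{p,\mathdag}$, with entrywise absolute values inside the products.

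Two smaller points. First, the summability $\sum_k|k|^p|\Gamma_k(\theta_\star)|<\infty$ behind the expansion of $M_{\ell,K}(\theta_\star)$ does not follow from Assumption~\ref{assp:AspSeries}. That assumption only yields $\sum_k|\Gamma_k(\theta)|<\infty$, so the stronger condition has to be imposed, as $u_p(\theta_\star)<\infty$ is in Theorem~\ref{thm:BiasVar}. Second, your claim that the fourth-cumulant contribution is $O(1/n)$ needs a time-uniform summability of fourth cumulants. You should state and prove this rather than attribute it loosely to Wu's projection bounds; the paper's stationary argument avoids cumulants altogether through the block/martingale Lemmas~\ref{lemma:multblock} and~\ref{lemma:multblock2}.
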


The rate-optimality of $\tilde{V}(\ell;K)$ 
under heteroskedasiticity is shown in, e.g., \cite{LiuChan2022}.  
By Proposition~\ref{prop:hacRobust}, 
together with slight modifications of the kernel function as in the proof of Theorem~\ref{thm:BiasVar}, 
we know that 
$\hat{V}(\ell;K; \bar{\theta})$ is also rate-optimal when $\ell \asymp n^{1/(2p+1)}$.

\begin{remark}\label{rem:revisionByCasini2022}
Under non-stationarity, the theories for 
    the unadjusted estimator $\tilde{v}_{\textsc{un}}$ and the prewhitened estimator $\hat{v}_{\textsc{pw}}$
    were also developed in \cite{andrews1991} and \cite{andrews1992}, 
    where the results were recently discussed and revised in \cite{casini2022}.
\end{remark}

\section{Discussion and comparison with similar approaches}\label{sec:discussion}
Combining parametric and nonparametric estimators is a common technique in statistics; 
see \citet{tsiatis2006} for examples of semiparametric methods. 
This approach takes advantage of both types of methods. 
In the literature, there is a vast variety of such approach; 
however, they combine parametric and nonparametric components in different ways. 
Some examples that are closer to our setting are discussed as follows:
\cite{andrews1992}’s prewhitening approach tries to nonparametrically fit the parametric residuals. 
The key idea is to use a parametric model to handle most serial dependence and 
then use a nonparametric tool to cope with the remaining unmodeled structure. 
On the other hand, our approach parametrically fits the unhandled tail autocovariance leftover by the nonparametric tool. 
Hence, the order of using parametric and nonparametric methods is the reverse of the standard prewhitening approach; 
thus, it is named postcoloring.
Our principle favors trusting the nonparametric tool in principle and 
asking a parametric model to provide auxiliary assistance on the tail structure. 
Such assistance becomes relevant when the asymptotic effect of the nonparametric tool has not yet kicked in. 
It appears to us that this idea is new in the literature.
In the context of density estimation and nonparametric regression, there are many classes of hybrid estimators. 
Some examples are listed below:
\begin{itemize}[topsep=3pt,itemsep=3pt,partopsep=3pt, parsep=3pt]
	\item The class of weighted estimators uses a weighted average 
			or geometric average of parametric and nonparametric estimators; 
			see \cite{OlkinSpiegelman1987}. 
			This principle is certainly applicable for estimating the long-run variance; 
			however, the optimal weights are not known. 
			Compared to our approach, we use the bandwidth parameter, 
			which is already needed in the nonparametric component, 
			to bridge parametric and nonparametric estimators. Our approach appears to be more automatic. 
	\item The class of ratio estimators is first studied in \citet{HjortGlad1995}. 
			It tries to use a nonparametric correction factor to adjust a parametric initial estimator; 
			see also \citet{glad1998} and \citet{MishraSuUllah2010} for a similar approach, 
			and \citet{Martins-FilhoMishra2008Ullah} for a generalized version. 
			In principle, it is similar to \cite{andrews1992}’s prewhitening approach, 
			as both nonparametrically correct a parametric start. 
			Hence, they differ from our principle of parametrically correcting a nonparametric start. 
			On the other hand, this approach shares a similar spirit with ours, as both use a multiplicative correction factor. 
			Moreover, our approach is designed specifically 
			for handling the tail autocovariance in strongly correlated time series.
	\item The class of bias-corrected estimators is considered by \citet{DelaigleHall2014}. 
				It modifies a nonparametric estimator by subtracting a parametric estimator of its bias. 
				This approach shares a similar favor with ours as both attempt some form of correction parametrically. 
				However, our approach explicitly aims at correcting the bias due to tail autocovariances 
				in $\tilde{v}_{\textsc{un}}$ via multiplication instead of subtraction. 
\end{itemize}
Compared to these existing approaches, we provide a new perspective 
for integrating parametric and nonparametric estimators
into one improved estimator. 

Employing multiple models to assist a main estimator  
is also a common technique. 
It shares a similar spirit with model averaging; 
see \citet{ClaeskensHjort2008} for a review. 
For example, \citet{hansen2007} discussed averaging least squares estimators obtained from multiple models 
and showed that it achieves the lowest possible squared error among the individual estimators considered. 
This is similar to Example \ref{rem:weightChoice}, 
where the multi-model tail postcolored estimator achieves the smallest mean squared error 
among all individual estimators considered. 
Besides, it shares a similar principle to doubly robust estimation \citep{RobinsRotnitzkyZhao1994,ScharfsteinRotnitzkyRobins1999} 
and multiply robust estimation \citep{han2014} in the context of missing data. 
These methods are consistent if any one of the models is correctly specified, 
whereas our estimator achieves the parametric rate of convergence 
if any one of the parametric models is correctly specified. 
Hence, they provide multiple protections for our nonparametric estimator.
As far as we concern, our proposal is the only available method 
that achieves this property in the context of long-run variance estimation.  

\section{Simulation experiments and real-data application}\label{sec:simulation}

\subsection{HAC estimation}\label{subsec:HAC}

We consider the linear regression model: $Y_i = {X}^{\T}_i {\beta}_0 + \varepsilon_i$,
where ${X}_i \in\mathbb{R}^d \ (i = 1,\ldots,n)$ are the regressors
and  ${\beta}_0\in\mathbb{R}^d$ is a vector of coefficients.
Similar models are also considered in \cite{andrews1991}, \cite{andrews1992} and \cite{vats2022lugsail}. 
The least-square estimator of ${\beta}_0$ is
  $
      \hat{\beta} = S_n^{-1}  \sum_{i=1}^{n} {X}_iY_i, 
  $
  where $S_n = \sum_{i=1}^{n} {X}_i {X}_i^{\T}/n$. 
It satisfies that 
  \begin{equation*}\label{eqt:RegressVar}
      \Sigma_n \equiv  \Var\left\{ n^{1/2}(\hat{\beta} - {\beta}_0) \right\} 
       = S_n^{-1}
      V_n 
      S_n^{-1}, 
      \quad \text{where} \quad
      V_n = \frac{1}{n} \sum_{i=1}^{n} \sum_{j=1}^{n} \E\left(\varepsilon_i {X}_i {X}_j^{\T} \varepsilon_j^{\T}\right).
  \end{equation*}
If $\hat{V}$ is a long-run covariance matrix estimator applied on  
$\{(Y_i - {X}^{\T}_i \hat{\beta}) {X}_i\}_{i=1}^{n}$, 
then $\Sigma_n$ can be estimated by 
$\hat{\Sigma} = S_n^{-1} \hat{V} S_n^{-1}$.
We are interested in testing $H_0: {\beta}_0 = 0$ against $H_1: {\beta}_0 \neq 0$.
We reject $H_0$ if 
$n\hat{\beta}^\T\hat{\Sigma}^{-1}\hat{\beta} > d(n-1)F_{1-\alpha,d,n-d}/(n-d)$,
where $F_{1-\alpha,r_1,r_2}$ is the $(1-\alpha)$th quantile of $F$-distribution
with degrees of freedom $r_1$ and $r_2$.
We study four candidates for $\hat{V}$:
\begin{enumerate}[topsep=3pt,itemsep=3pt,partopsep=3pt, parsep=3pt]
	\item[(a)] the unadjusted estimator $\tilde{V}_{\textsc{un}}$ with \cite{andrews1991}'s 
    $\textsc{ar}(1)$ plug-in bandwidth,
	\item[(b)] \cite{andrews1992}'s $\textsc{var}(1)$-prewhitened version of (a) denoted as $\hat{V}_{\textsc{pw}}$,  
	\item[(c)] \cite{Flegal:2021uk}'s lugsail kernel estimator $\tilde{V}_{\textsc{lugs}}$, and  
	\item[(d)] the proposed $\textsc{var}(1)$-tail postcolored estimator $\hat{V}_{\textsc{tail}}$,
where $\textsc{var}(1)$ refer to order-1 vector \textsc{ar} model; 
see {\ifnum\isXr=1{Section~\ref{sec:simulation_multi}}\else{Section B.5}\fi}  in the supplement for the detailed formulas. 
\end{enumerate}

The estimators $\tilde{V}_{\textsc{un}}$, $\hat{V}_{\textsc{pw}}$, and $\hat{V}_{\textsc{tail}}$ use $K=K_{\Bart}$, 
whereas $\tilde{V}_{\textsc{lugs}}$ uses the default kernel in \cite{Flegal:2021uk}. 
The resulting tests are denoted as tests (a)--(d), respectively. 
We remark that our proposed estimator is robust to heteroskedasticity;
see Proposition~\ref{prop:hacRobust}. 

In the simulation experiments, 
we set $\varepsilon_i = 5 (i/n)^2 \varepsilon_i'  \ (i = 1,\ldots,n)$, 
where $\varepsilon'_i$ is a unit-variance process generated from the stationary \textsc{arma}$(1,1)$ model,
i.e., $\varepsilon_i' = \varepsilon_i''/c$ and 
$\varepsilon_i'' = a \varepsilon_{i-1}'' + e_i + b e_{i-1}$, where 
$|a|, |b|<1$,  
$c^2 = 1+(a+b)^2/(1-a^2)$,
and $e_i \sim \Normal(0,1)$ independently.
Hence, the noise $\varepsilon_i$ is  heteroskedastic.
The covariates are $X_i =  ( 1, {X'_i}^{\T} )^{\T}$, 
where $X_i'\in\mathbb{R}^2$ are correlated and are 
generated from the \textsc{varma}$(1,1)$ model: 
$X_i' = \Phi X_{i-1}' + \Upsilon \epsilon_{i-1} + \epsilon_{i}$,
where $\epsilon_{i} \sim \Normal_2(0, \Sigma_{\epsilon})$ independently and 
$\Phi, \Upsilon,\Sigma_{\epsilon} \in \mathbb{R}_{2\times 2}$ such that  
\[
	\Phi = \begin{bmatrix} 0.7 &0 \\0 & 0.7 \end{bmatrix},\quad
	\Upsilon = \begin{bmatrix} 0.3 & 0.1 \\ 0.1 & 0.3 \end{bmatrix},\quad
	\Sigma_{\epsilon} = \begin{bmatrix} 1 & 0.5 \\ 0.5 & 1\end{bmatrix}.
\]
Let ${\beta}_0 = (0,\delta,0)^{\T}$ and $n=400$.
Three cases are considered: $a=b\in\{0.2, 0.4, 0.8\}$, denoted as Cases 1--3, respectively, corresponding to weak, medium, and strong serial dependence.
This data structure is challenging for prewhitened estimators using \textsc{var}$(1)$ model
as the autocorrelation structure is substantially different from that of the \textsc{var}$(1)$ model. 

The power curves are shown in the upper row of plots in Figure~\ref{fig:hartest}. 
In Case 1, all tests perform similarly with accurate size and promising power. 
However, when the serial dependence increases to Cases 2--3, 
the size distortion of tests (a)--(c) becomes increasingly severe. 
In particular, the unadjusted test (a) and the lugsial test (c) are oversized, 
while the classical prewhitened test (b) is undersized.
Our proposed tail postcolored test (d) remains size accurate. 
This makes our proposed test more powerful than test (c) and 
more reliable than tests (a) and (b).
We also plot the empirical distribution functions of $p$-values of the tests under $H_0$; 
see the lower row of plots in Figure~\ref{fig:hartest}. 
The empirical distribution of the $p$-value of our proposed test is closest to 
the a standard uniform distribution $\Unif(0,1)$. 
This explains the high size accuracy.
See {\ifnum\isXr=1{Section~\ref{sec:supp_Simulation}}\else{Section B}\fi}  of the supplement for more experiments.

\begin{figure}[t]
  \begin{center}
    \includegraphics[width=\textwidth]{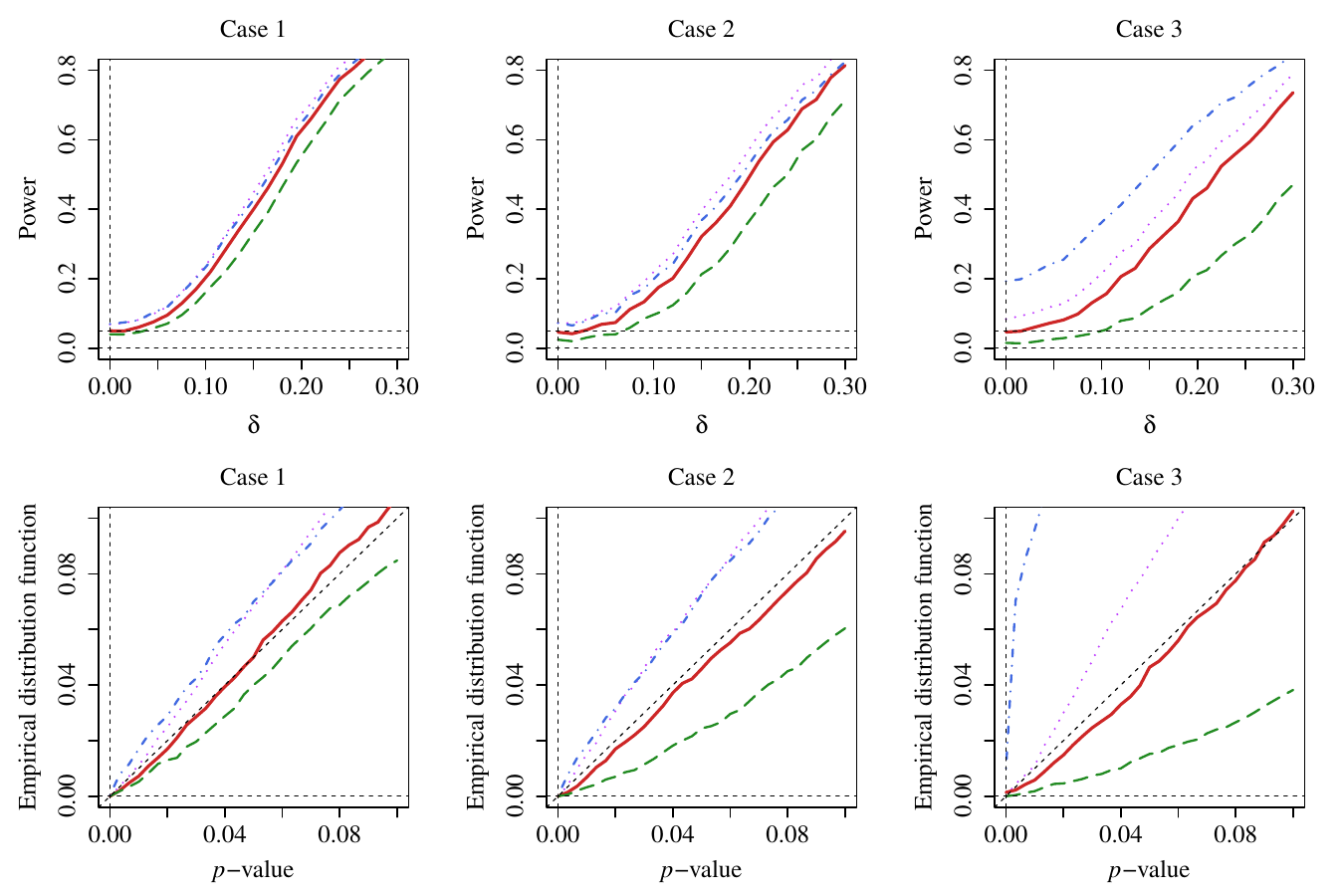}
    \vspace{-0.3cm}
    \caption{Upper row: Power curves for testing $\beta_0 = 0$, 
    where the black dashed horizontal lines designate the nominal size 5\% and 0. 
    Lower row: Empirical distribution functions of the $p$-values under $H_0$, 
    where the black dashed line is the distribution function of $\Unif(0,1)$. 
    Tests (a)--(d) use 
    $\tilde{V}_{\textsc{un}}$ (purple dotted curves),  
    $\hat{V}_{\textsc{pw}}$ (green dashed curves),
    $\tilde{V}_{\textsc{lugs}}$ (blue dash-dotted curves), and 
    $\hat{V}_{\textsc{tail}}$ (red solid curves), respectively.     
    Cases 1--3 correspond to $a=b \in\{0.2, 0.4, 0.8\}$, respectively. 
    }
    \label{fig:hartest}
  \end{center}
\end{figure}

\subsection{Convergence test for Bayesian tobit quantile regression}

We implement the fixed-width output analysis \citep{Galin2006} to determine the chain length $n$
in a Markov chain Monte Carlo simulation problem.
The convergence test relies on the central limit theorem
$n^{1/2}(\bar{g}_n - \E_{\pi} g) \Rightarrow \Normal(0,v_g)$,
where $v_g$ is the long-run variance of $\bar{g}_n$ and $\E_{\pi} g$ is the target estimand.
We terminate the simulation at the first time when
  \begin{equation}\label{eqt:FWterm}
    z_{1-\alpha/2} \left( {\hat{v}_g}/{n}\right)^{1/2} + \epsilon \mathbb{1}(n\leq n^{\dagger}) < \epsilon,
  \end{equation}
where $\hat{v}_g$ is an estimator of $v_g$,
$\alpha\in(0,1)$ is the nominal size and 
$z_{1-\alpha/2}$ is the $(1-\alpha/2)100$th quantile of standard normal distribution.
The target half-width is denoted by $\epsilon$ and 
$n^{\dagger}$ is the minimum sample size set in order to prevent early termination.

\begin{table}[t]
  \centering
  \small
  \def~{\hphantom{0}}
  \begin{tabular}{ p{1.7cm} c cccccc}
    & $\epsilon$ & Proposal & (a) Original  & (b) Andrews &   (c) \textsc{obm}($n^{1/2}$)  & (d) \textsc{obm}($n^{1/3}$) & (e) coda \\
    \multirow{4}{2cm}{Coverage Probability  ($\%$)} 
    & $3.0$ & $99.2$ & $98.8$ & $98.6$  & $98.9$ & $96.2$ & $98.8$ \\ 
    & $4.5$ & $98.9$ & $97.8$ & $98.3$  & $97.8$ & $96.7$ & $97.2$ \\ 
    & $6.0$ & $99.0$ & $98.3$ & $98.7$  & $98.7$ & $95.2$ & $91.8$ \\ 
    & $7.5$ & $99.0$ & $97.7$ & $98.8$  & $97.9$ & $93.6$ & $81.2$ \\ [1ex]
    \multirow{4}{2cm}{Average number of iterations} 
    & $3.0$ & $11356$ & $10693$ & $10450$  & $11000$ & $~8628$  & $10526$ \\   
    & $4.5$ & $~5116$ & $~4699$ & $~4782$  & $~4848$ & $~3583$  & $~3894$ \\   
    & $6.0$ & $~2937$ & $~2662$ & $~2786$  & $~2716$ & $~1927$  & $~1458$ \\   
    & $7.5$ & $~1952$ & $~1728$ & $~1872$  & $~1753$ & $~1177$  & $~~638$ \\ [1ex]
    \multirow{4}{2cm}{MSE of $\bar{\beta}$ ($\times 10^{-4}$)}
    & $3.0$ & $14.38$ & $15.58$ & $16.01$ & $14.92$ & $~19.97$  & $~15.63$ \\  
    & $4.5$ & $32.35$ & $35.89$ & $35.06$ & $34.59$ & $~46.65$  & $~43.29$ \\  
    & $6.0$ & $57.04$ & $64.26$ & $61.63$ & $61.60$ & $~88.54$  & $119.02$ \\  
    & $7.5$ & $86.39$ & $98.06$ & $90.74$ & $99.03$ & $138.80$  & $270.34$ 
  \end{tabular}
  \caption{Results of convergence tests including (1) coverage probabilities of the 99\% confidence interval $\left[\bar{\beta} - z_{1-\alpha/2}\surd{\hat{v}_g/n}, \bar{\beta} + z_{1-\alpha/2}\surd{\hat{v}_g/n}\right]$, 
  (2) the average number of Gibbs iterations for meeting the termination criterion, and 
  (3) the mean-squared error of  $\bar{\beta}_3$ at the point of termination.  
}\label{tab:mcmc}
\end{table}

We consider a Gibbs sampler for Bayesian tobit quantile regression in \cite{kozumi2011gibbs}. 
Suppose the response $y_i$ is left-censored at $0$ and let $y^{\star}_i$ be the latent response.
Then we model the response as 
$y_i = \max(y^{\star}_i,0)$ and $y^{\star}_i = x^{\T}_i \beta + \varepsilon_i$ $(i = 1,\ldots,n)$.
Assume that the noise follows an asymmetric Laplace distribution with median zero. 
Representing $\varepsilon_i$ as a location-scale mixture of normal random variables and assuming a normal prior for the coefficients,
the hierarchical model with parameters $(\beta, \sigma, z)$ considered is as follows:
$\varepsilon_i=\theta z_i + \tau{(\sigma z_i)}^{1/2}u_i$, where $\theta=(1-2p)/\{p(1-p)\}$, $\tau^2=2/\{p(1-p)\}$, 
$z_i\mid \sigma \sim \sigma \Exp(1)$, 
$u_i \sim \Normal (0,1)$, 
$\sigma\sim\text{IG}(n_0/2, s_0/2)$ and $\beta \sim \Normal (\mu_0, \Sigma_{0})$.
The Gibbs sampler iterates as follows: 
\begin{align*}
  y^{\star}_i \mid y_i, \beta, \sigma, z_i &\sim y_i \mathbb{1}(y_i>0) + 
                      \Normal (x_i^{\T}\beta + \theta z_i , \tau^2\sigma\theta z_i)\mathbb{1}(y_i=0);\\
  \beta \mid y_i^{\star}, z_i, \sigma &\sim \Normal(\hat{\mu}_{\beta}, \hat{\Sigma}_{\beta}), \quad\text{where }
  \hat{\Sigma}_{\beta}^{-1}=\sum_{i=1}^{n} \frac{x_i x_i^{\T}}{\tau^2\sigma z_i} + \Sigma_0^{-1}\quad\text{and} \quad
  \hat{\mu}_{\beta} = \hat{\Sigma} \left\{ \sum_{i=1}^{n} \frac{x_i(y_i-\theta z_i)}{\tau^2\sigma z_i} \right\}; \\
  \sigma \mid y_i^{\star}, \beta, z_i &\sim \text{IG}\left[
  \frac{n_0+3n}{2},
          \frac{1}{2}\left\{ s_0 + 2\sum_{i=1}^{n}z_i + \sum_{i=1}^{n}\frac{(y_i - x_i^{\T}\beta-\theta z_i)^2}{\tau^2z_i}\right\} \right]. 
\end{align*}
Here $\text{IG}(a,b)$ denotes the inverse gamma distribution with shape and scale parameters $a$ and $b$.
The algorithm is implemented using the function \texttt{Blqtr} in the R package \texttt{Brq} \citep{Alhamzawi2018}.
The regression data we used is the women labour data in \cite{mroz1984sensitivity}.
The dataset \texttt{Mroz} can be found in the R package \texttt{Ecdat} \citep{CroissantGraves}.
The following covariates for $n=753$ observations are selected for the quantile regression:
the number of children less than 6 years old in household,
the number of children between ages 6 and 18 in household,
wife's age,
actual years of wife's previous labor market experience and 
wife's educational attainment (in years).

In particular, the convergence test is applied on the 
coefficient $\hat{\beta}^{(3)}$ with $\alpha=0.01$, i.e., the third element of the estimated coefficient vector. 
  The evaluation in (\ref{eqt:FWterm}) is done every 500 iterations with $n^{\dagger}=100$. 
  The burn-in size of the chain is $3\times 10^4$. 
  The true value of $\beta$ is approximated by simulating $100$ chains of length $10^7$ and use the mean of their sample averages.
  We compare our $\textsc{ar}(1)$ tail postcolored estimator with the following estimators: 
  \begin{enumerate}[topsep=3pt,itemsep=3pt,partopsep=3pt, parsep=3pt]
	\item[(a)] \texttt{lrvar} in the R package \texttt{"sandwich"} 
			with Bartlett kernel and prewhitening.
  	\item[(b)] The same estimator in (a) is used except that tail postcoloring is not used.

	\item[(c)] \texttt{olbm} in package \texttt{mcmc} \citep{mcmcPack} which is the overlapping batch means (\textsc{obm}) estimator with a batch size of $\lfloor{n^{1/2}}\rfloor$.
	\item[(d)] The same estimator in (c) is used except that 
				the batch size is $\lfloor{n^{1/3}}\rfloor$.
	\item[(e)] \texttt{spectrum0} in  package \texttt{coda} \citep{coda2006} which considers a generalized linear regression on the spectrum \citep{heidelberger1981spectral} and \texttt{order=0}.

  \end{enumerate}

The results for the convergence test are displayed in 
Table~\ref{tab:mcmc}.
For all choices of the half-width $\epsilon$,
our estimator converges the latest and has the most sufficient coverage
probabilities as well as the smallest mean-squared error of coefficient estimates.

\vspace{-0.4cm}
\section*{Acknowledgements}
\vspace{-0.4cm}
{
This research was partially supported by grants
GRF-14306421 and 14307922
provided by the Research Grants Council of HKSAR.
The authors would like to thank the referees, an associate editor, and the editor for their constructive comments that 
improved the scope of the paper.
The authors report there are no competing interests to declare.
}

\newpage 

\renewcommand{\thefigure}{\thesection.\arabic{figure}}
\renewcommand{\thetable}{\thesection.\arabic{table}}
\appendix

\section{Further discussions}\label{sec:supp_furtherDiscussion}

{
\subsection{Comparison with the classical prewhitening approach}\label{sec:comparison_pw}
Throughout this subsection, 
we write 
$v^X = v$, $v_p^X = v_p$, $\gamma_k^X = \gamma_k$ and $\kappa_p^X = \kappa_p$. 

\begin{remark}[Improvement condition]\label{rem:uni}
Consider the classical prewhitening estimator \citep{andrews1992}
using the model $X_{i}(\theta) = h_{\theta}(\mathcal{F}_i) = g_{\theta}(Z_{1:n})$, where $Z_{1:n} = \{Z_i\}_{i=1}^{n}$ denotes the prewhitened time series.
Let $v^X = \lim_{n\rightarrow\infty}n\Var(\bar{X}_n)$ and $v^Z = \lim_{n\rightarrow\infty}n\Var(\bar{Z}_n)$ 
be the LRVs for the series $\{X_i\}$ and $\{Z_i\}$, respectively. 
Suppose that $v^X$ and $v^Z$ are related as $v^{X} = G(\theta) v^{Z}$, where 
$G: \mathcal{H} \to \mathbb{R}$ denotes the recoloring coefficient.
Then 
the prewhitening estimator can then be written as 
\[
  \hat{v}_{\textsc{pw}}(\ell; K; \bar{\theta})
  = G(\bar{\theta})\tilde{v}(\ell; K; Z_{1:n}).
\]
Recall that $\bar{\theta}$ is a $n^{1/2}$-consistent estimator for $\theta$
and $n^{1/2}(\bar{\theta} - \theta_{\star}) = O_{p}(1)$
for some $\theta_{\star} \in {\mathcal{H}}$.
In \cite{andrews1992},
it has been proved (can also be similarly demonstrated under our framework) that 
the classical prewhitening estimator in Definition~\ref{def:SPW}
has the asymptotic property 
\begin{align*}
	\lim_{h\to\infty} \lim_{n\to\infty} \MSE_h\{\hat{v}_{\textsc{pw}}(\ell_{\textsc{pw}}; K; \bar{\theta})\} 
	&=  \lim_{n\to\infty} n^{2p/(2p+1)} \MSE\{\hat{v}_{\textsc{pw}}(\ell_{\textsc{pw}}; K; \theta_{\star})\}  \\ 
	&= (2p+1)\{{(2A/p)}^p B \kappa^{Z}_{p}\}^{2/(2p+1)} (v^X)^2,
\end{align*}
where $\kappa^{Z}_{p} = v^{Z}_{p}/v^{Z}$
denotes the dependence ratio on the prewhitened time series
and 
\[
  \ell_{\textsc{pw}} \sim \left\{ p B^2 (\kappa^{Z}_{p})^2 n /(2A) \right\}^{1/(2p+1)}
\]
denotes the MSE-optimal bandwidth for $\hat{v}_{\textsc{pw}}(\ell; K; \bar{\theta})$,
assuming the conditions in Theorem~\ref{thm:BiasVar} hold. 
Note that $\ell_{\textsc{pw}}$ is indeed also the optimal bandwidth for the kernel estimator $\tilde{v}(\ell; K; Z_{1:n})$ of $v^{Z}$.
Therefore,
the asymptotic MSE is reduced after classical prewhitening if and only if 
\begin{equation}\label{eqt:pw_range}
  |\kappa^{Z}_p| < |\kappa^{X}_p|
\end{equation}
whereas the asymptotic MSE is reduced after tail postcoloring if and only if 
\begin{equation}\label{eqt:ptw_range}
  |\xi_p| < |\kappa^{X}_p|.
\end{equation}

Note that when the MSE-optimal bandwidths are used, 
we have 
\[
	\frac{\MSE(\hat{v}_{\textsc{tail}})}{\MSE(\tilde{v})} 
	\sim \frac{{\Bias}^2(\hat{v}_{\textsc{tail}})}{{\Bias}^2(\tilde{v})} 
	\sim \frac{\Var(\hat{v}_{\textsc{tail}})}{\Var(\tilde{v})}. 
\] 
Hence, the condition for tail postcoloring to reduce mean squared error
is equivalent to the condition for it to reduce squared bias and variance.
Indeed, the optimal MSE, squared bias and variance satisfy the following ratios asymptotically: 
\begin{align}\label{eqt:ratioMSEBiasVar}
	\Var(\hat{v}_{\textsc{tail}}) \sim \frac{2q}{1+2q}\MSE(\hat{v}_{\textsc{tail}}) 
	\qquad \text{and} \qquad
	{\Bias}^2(\hat{v}_{\textsc{tail}}) \sim \frac{1}{1+2q}\MSE(\hat{v}_{\textsc{tail}}) .
\end{align}
Therefore, it suffices for us to determine the cases where tail postcoloring reduces asymptotic MSE. 
This phenomenon is similar to the classical prewhitening. 

\end{remark}

With the results in Remark~\ref{rem:uni},
we can compare the MSEs, biases, and variances of the estimators 
$\tilde{v}_{\textsc{un}}$, $\hat{v}_{\textsc{pw}}$ and $\hat{v}_{\textsc{tail}}$. 
In particular, we consider three specific models to illustrate the results:
\begin{itemize}
	\item \textsc{arma}(1,1): Example \ref{rem:CompTPW} in the main text as well as Examples \ref{exp:ar11} and \ref{eg:biasComparison} below. 
	\item \textsc{ar}(2): Example \ref{exp:ar2}.
	\item \textsc{ma}(2): Example \ref{exp:ma2}.
\end{itemize}

\begin{example}[\ARMA$(1,1)$ data]\label{exp:ar11}\leavevmode
Suppose the data are generated from the \ARMA$(1,1)$ model: $X_i = a X_{i-1} + \varepsilon_i + b \varepsilon_{i-1}$ 
where $a,b\in(-1,1)$ and $\varepsilon_i \sim \Normal(0,1)$ independently.
Note that the autocovariance $\gamma^{X}_k = \Cov(X_0, X_k)$ can be found as follows:
\begin{align}\label{eqt:acvf_AR11}
	\gamma_0^X = \frac{1+b^2 + 2ab}{1-a^2} , \qquad 
    \gamma_1^X = \gamma_{-1}^X = \frac{(a+b)(1+ab)}{1-a^2}, \qquad 
    \gamma_k^X = a^{|k|-1} \gamma^{X}_1 \quad (|k|\geq 2). 
\end{align}
If we use the \AR$(1)$ model, i.e.,  
$X_i = \phi X_{i-1} + Z_i$ for $Z_i \simIID \Normal(0,\sigma^2)$, 
to perform the prewhitening as well as tail postcoloring procedures,
we have 
\begin{align}
  \gamma^{Z}_k  
  &= \Cov(Z_0, Z_k)
  = (1+\phi^2) \gamma^{X}_k
	  - \phi\left(\gamma^{X}_{k-1} + \gamma^{X}_{k+1}\right) \nonumber\\
  &= \begin{cases}
    \{(1+\phi^2)(1+b^2+2ab) - 2\phi(a+b)(1+ab)\}/(1-a^2) \qquad & \text{if $k=0$};\\
    \{(1+\phi^2-a\phi)(a+b)(1+ab) - \phi(1+b^2+2ab)\}/(1-a^2) \qquad & \text{if $|k|=1$};\\
    \left\{(1-\phi^2)a - \phi(1+a^2)\right\}a^{|k|-2}(a+b)(1+ab)/(1-a^2) \qquad & \text{if $|k|\geq 2$}.\\
  \end{cases} \label{eqt:acvf_z_AR11}
\end{align}
it can be verified by straightforward algebras that 
\begin{align}\label{eqt:vX_vZ}
  v^{Z}_0 = (1-\phi)^2 v^{X}_0  \qquad \text{and} \qquad 
  v^Z_1 = (1-\phi)^2 v_1^X - 2\phi\gamma_0^X, 
\end{align}
where 
$v_p^X = \sum_{k\in\mathbb{Z}}|k|^p\gamma_k^X$ and 
$v_p^Z = \sum_{k\in\mathbb{Z}}|k|^p\gamma_k^Z$ for $p\in\mathbb{N}_0$. 
We remark that both relationships in (\ref{eqt:vX_vZ}) are general for all data generating mechanism. 
Putting (\ref{eqt:acvf_AR11}) and (\ref{eqt:acvf_z_AR11}) into (\ref{eqt:vX_vZ}), we can evaluate $v_0^X$, $v_0^Z$, $v_1^X$, and $v_1^Z$ in close forms:
\begin{align*}
  v^X_0 &= \frac{(1+b)^2}{(1-a)^2}, &
  v^Z_0 &= \frac{(1-\phi)^2(1+b)^2}{(1-a)^2}, \\
  v^X_1 &=
  \frac{2(a+b)(1+ab)}{(1-a^2)(1-a)^2}, &
  v^Z_1 
  &=
  \frac{2\left\{ (a+b)(1+ab)\,(\phi-1)^2 -\phi\,(1+b^2+2ab)\,(1-a)^2\right\}}{(1-a^2)(1-a)^2}.
\end{align*}
For tail postcoloring, the quantity $\xi_1 = \kappa_1^X - \kappa_1(\phi)$ 
is important for determining the improvement region stated in (\ref{eqt:ptw_range}), where
\[
	\kappa_1^X \equiv \frac{v_1^X}{v_0^X} = \frac{2\{ a + b + a(a+b)^2 /(1-a^2)\}}{(1+b)^2}
	\qquad \text{and} \qquad 
	\kappa_1(\phi) = \frac{v_1(\phi)}{v_0(\phi)} = \frac{2\phi}{1-\phi^2}. 
\] 

By checking (\ref{eqt:pw_range}),
the prewhitening technique leads to a reduction in MSE 
if $\phi\in(-1,1)$ satisfies that 
$v_1^X \phi^3 - (\gamma_0^X + 2 v_1^X) \phi^2 + v_1^X \phi >0$.
Solving the inequality, we obtain  
\begin{align*}
	\left\{ U^-\mathbb{1}(0<U< 1/4)-\mathbb{1}(U\geq 1/4)\right\}\mathbb{1}(a+b<0) 
		<\phi<  U^+ \mathbb{1}(a+b>0), 
\end{align*}
where $U^{\pm} = \{ (2U \pm 1) \mp \sqrt{1\pm 4U} \}/(2U)$ and $U=2|(a+b)(1+ab)|/\{(1+b^2+2ab)(1-a)^{2}\}$.

On the other hand,
by checking (\ref{eqt:ptw_range}),
the tail postcoloring technique leads to a reduction in MSE 
if $\phi\in(-1,1)$ satisfies that $\kappa_1^X \phi^3 + \phi^2 - \kappa_1^X \phi <0$. 
Solving the inequality, we obtain  
\[
  \frac{-1+\{4(\kappa^X_1)^2 + 1\}^{1/2}}{2\kappa^X_1}\mathbb{1}(\kappa^X_1<0)
     < \phi <
     \frac{-1+\{4(\kappa^X_1)^2 + 1\}^{1/2}}{2\kappa^X_1}\mathbb{1}(\kappa^X_1>0).
\]

In particular, if we use $\phi=\gamma^X_1/\gamma^X_0 = (a+b)(1+ab)/(1+b^2+2ab)$, 
the the difference in mean-squared error after prewhitening and tail postcoloring are visualized in 
Figure~\ref{fig:ar1pw1}.
Similar to Figure~\ref{fig:ar1pw1}, we also show the difference in squared bias and 
the difference in variance in Figure \ref{fig:ar1pw1_biasVar}. 
Owing to the properties stated in (\ref{eqt:ratioMSEBiasVar}), 
the behaviors of squared bias and variance are the similar to that of MSE when the optimal bandwidths are used.

\begin{figure}[t]
    \begin{center}
          \includegraphics[width=\linewidth]{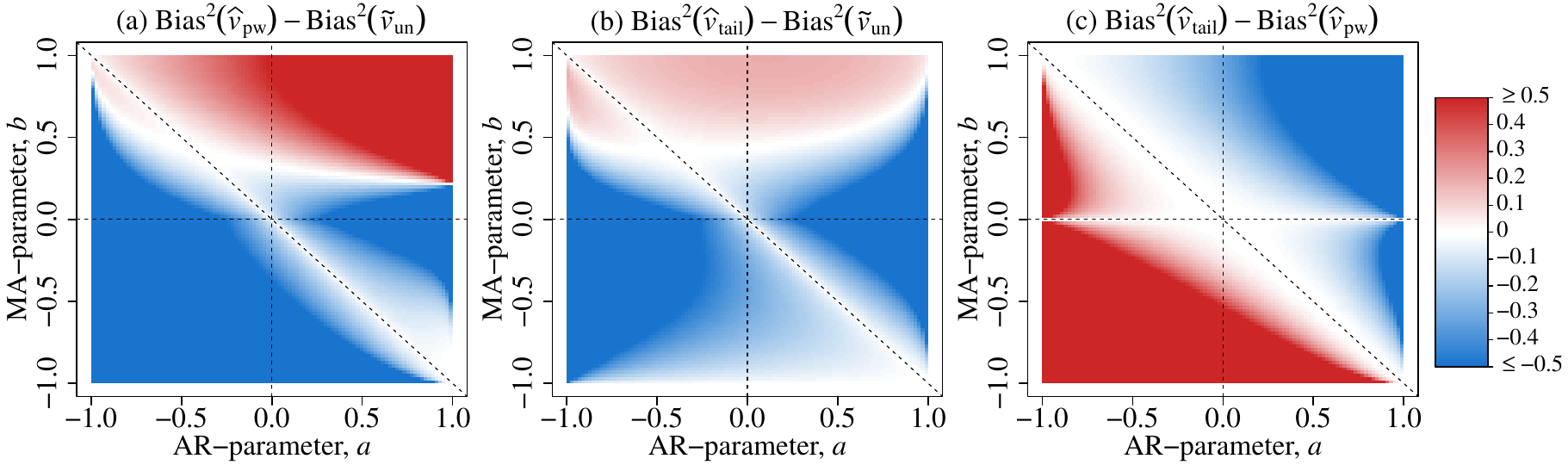}
    \end{center}
    \begin{center}
          \includegraphics[width=\linewidth]{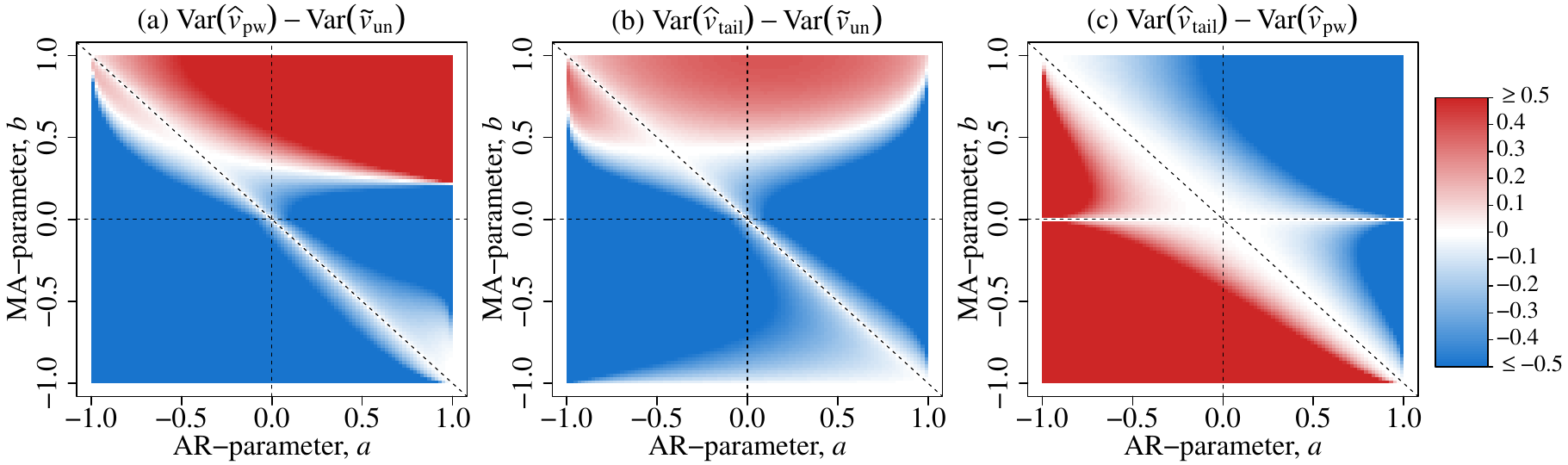}
    \end{center}
    \vspace{-0.3cm}
        \caption{\label{fig:ar1pw1_biasVar}Heat maps display the asymptotic differences of the squared bias and variances 
        after standardizing by the true value of $v^2$, 
        i.e., 
        $\{ \Bias^2(\cdot)-\Bias^2(\cdot) \}n^{2/3}/v^2$ and 
        $\{ \Var(\cdot)-\Var(\cdot) \}n^{2/3}/v^2$, 
        in the first row and second row of plots, respectively.  
        The first column, second column, and third column compare 
        $(\hat{v}_{\textsc{pw}},\tilde{v}_{\textsc{un}})$, 
        $(\hat{v}_{\textsc{tail}},\tilde{v}_{\textsc{un}})$, and  
        $(\hat{v}_{\textsc{tail}},\hat{v}_{\textsc{pw}})$, respectively. 
        See Example \ref{exp:ar11} for the details. 
        A difference less than zero (i.e., blueish regions) indicates an improvement
        of the first estimator over the second estimator in the difference. 
        Note that the comparison of MSE is shown in Figure~\ref{fig:ar1pw1} of the main text. 
        Note also that the titles of plots do not show the standardization $n^{2/3}/v^2$ due to space constraints. 
        }
\end{figure}

\end{example}

\begin{example}[Bias comparison for \ARMA(1,1) data]\label{eg:biasComparison}
In this example, we compare the asymptotic biases of 
the standard estimator $\tilde{v}$, prewhitten estimator $\hat{v}_{\textsc{pw}}$, 
and tail post colored estimator $\hat{v}_{\textsc{tail}}$. 
Both $\hat{v}_{\textsc{pw}}$ and $\hat{v}_{\textsc{tail}}$ employ $\AR(1)$ whitening/coloring model. 
All estimators are equipped with their respective optimal bandwidthes. 

Similar to Example \ref{exp:ar11}, 
let the data be generated from an $\ARMA(1,1)$ model:
$X_i = a X_{i-1} + \varepsilon_i + b \varepsilon_{i-1}$ 
where $\varepsilon_i \sim \Normal(0,1)$ independently. 
The limiting values of $n^{1/3}\Bias(\cdot)/v$ of the three estimators are shown in Figure \ref{fig:ar1pw1_biasSign}. 
The estimators $\tilde{v}$, $\hat{v}_{\textsc{pw}}$ and $\hat{v}_{\textsc{tail}}$ have 
positive asymptotic biases if and only if 
$(a,b)\in\mathcal{B}^+_{\textsc{un}}$, $(a,b)\in\mathcal{B}^+_{\textsc{pw}}$, 
and $(a,b)\in\mathcal{B}^+_{\textsc{tail}}$, respectively, where
\begin{align*}
	\mathcal{B}^+_{\textsc{un}} 
		&= \left\{ (a,b)\in(-1,1)^2:  -v_1 > 0 \right\} 
		= \left\{ (a,b)\in(-1,1)^2:  a+b<0  \right\} ;\\
	\mathcal{B}^+_{\textsc{pw}} 
		&= \left\{ (a,b)\in(-1,1)^2:  -v_1^Z > 0 \right\} \nonumber\\
		&= \left\{ (a,b)\in(-1,1)\times(0,1):  a+b>0  \right\} \cup \left\{ (a,b)\in(-1,1)\times(-1,0):  a+b<0 \right\}; \\
	\mathcal{B}^+_{\textsc{tail}} 
		&= \left\{ (a,b)\in(-1,1)^2:  -\xi_1 > 0 \right\} \nonumber\\
		&= \left\{ (a,b)\in(-1,1)\times(0,1):  a+b>0  \right\} \cup \left\{ (a,b)\in(-1,1)\times(-1,0):  a+b<0 \right\}.
\end{align*}
Note that $\mathcal{B}^+_{\textsc{pw}}=\mathcal{B}^+_{\textsc{tail}}$, 
hence, the signs of the asymptotic biases of $\hat{v}_{\textsc{pw}}$ and $\hat{v}_{\textsc{tail}}$ are equal.
It can be observed that $\hat{v}_{\textsc{pw}}$ and $\hat{v}_{\textsc{tail}}$ have positive asymptotic biases 
when $a,b>0$, which are arguably the most commonly encountered situation in practice. 
For variance estimation, over-estimation is less risky than having an under-estimation 
as it leads to more conservative rather than more aggregative inference on the mean. 

\begin{figure}[t]
    \begin{center}
          \includegraphics[width=\linewidth]{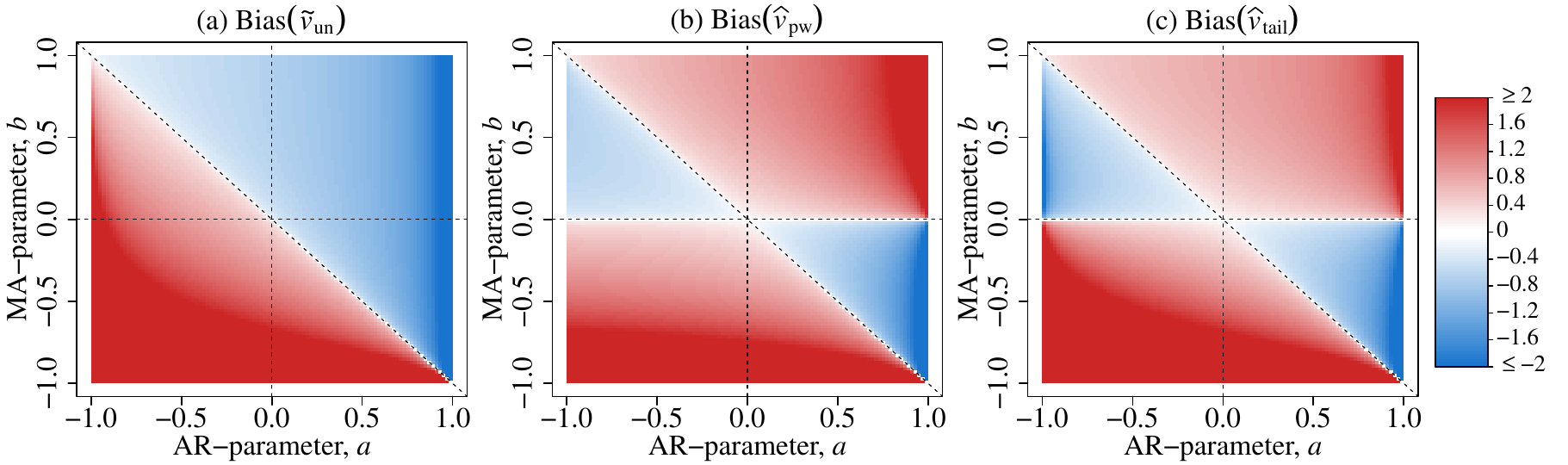}
    \end{center}
    \vspace{-0.3cm}
        \caption{Heat maps display the limiting values of $\Bias(\cdot)n^{1/3}/v$ for
        $\tilde{v}$, $\hat{v}_{\textsc{pw}}$, and $\hat{v}_{\textsc{tail}}$
        in plots (a), (b), and (c), respectively; 
        see Example \ref{exp:ar11} for the details.  
        More negative asymptotic bias appears more bluish, while more positive asymptotic bias appears more reddish.
		Note that the titles of plots do not show the standardization $n^{1/3}/v$ due to space constraints. 
        }\label{fig:ar1pw1_biasSign}
\end{figure}
\end{example}

\clearpage 

\begin{example}[\AR$(2)$ data]\label{exp:ar2}\leavevmode
Suppose the data are generated from the \AR$(2)$ model: $X_i = a_1 X_{i-1} + a_2 X_{i-2} +\varepsilon_i$ 
where $\varepsilon_i \sim \Normal(0,1)$ independently and $a_1,a_2\in\mathbb{R}^2$ such that $|a_2|<1$, $a_2+a_1<1$ and $a_2-a_1<1$.
Note that 
\begin{align}\label{eqt:acvf_ar2}
	\gamma_0^X = \frac{1-a_2}{A}, \qquad 
	\gamma_1^X = \frac{a_1}{A}, \qquad 
	\gamma_k^X = a_1 \gamma^X_{k-1}+ a_2 \gamma_{k-2}\quad (k\geq 2),
\end{align}
where $A=(1+a_2)\{(1-a_2)^2-a_1^2\}$. 
The analytical form of $\gamma_k^X$ for $k\geq 2$ can be solved easily by solving the difference equation. 

Similarly, we use the \AR$(1)$ model $X_i = \phi X_{i-1} + Z_i$ 
to perform the prewhitening as well as tail postcoloring procedures. 
To find the best \AR$(1)$ projection, we set $\phi = \phi_{\star}$, where $\phi_{\star} \equiv \gamma^X_1/\gamma^X_0 = a_1/(1-a_2)$. 
As in (\ref{eqt:vX_vZ}), we also have $v^Z = {v^X}{(1-\phi_{\star})^2}$, 
which can be found by using (\ref{eqt:acvf_ar2}) and the property that 
$\gamma_k^Z = (1+\phi_{\star}^2)\gamma_k^X - \phi_{\star}^2(\gamma_{k+1}^X+\gamma_{k-1}^X)$.
In addition, we have $\xi_1 = \kappa_1^X - \kappa_1(\phi_{\star})$, where  
$\kappa_1^X = {v_1^X}/{v_0^X} $ can be found by using (\ref{eqt:acvf_ar2}), and 
\[
	\kappa_1(\phi_{\star}) = \frac{v_1(\phi_{\star})}{v_0(\phi_{\star})} = \frac{2\phi_{\star}}{1-\phi_{\star}^2} = \frac{2a_1(1-a_2)}{(1-a_2)^2-a_1^2}.  
\] 

We consider three estimators $\tilde{v}_{\textsc{un}}$, $\hat{v}_{\textsc{pw}}$, and $\hat{v}_{\textsc{tail}}$
with their respective optimal bandwidths.
We then compare their MSEs, squared biases, and the variances in 
Figure \ref{fig:ar1pw1_ar2data}.
The results are similar to those in Example \ref{exp:ar11}. 
In particular, 
the proposed $\hat{v}_{\textsc{tail}}$ improves upon $\hat{v}_{\textsc{pw}}$
when $a_1>0$, which covers cases where time series are positively related to the lag-one observation.  

In addition, as in Example \ref{eg:biasComparison}, 
we compute the limiting values of $n^{1/3}\Bias(\cdot)/v$ for the three estimators.  
The results are displayed in Figure \ref{fig:ar1pw1_ar2data_biasSign}. 
We observed that 
$\tilde{v}$, $\hat{v}_{\textsc{pw}}$ and $\hat{v}_{\textsc{tail}}$ always have the same sign of asymptotic bias
under this data generating model.
This phenomenon is the same as that in Example \ref{eg:biasComparison}. 
The asymptotic biases of $\tilde{v}$, $\hat{v}_{\textsc{pw}}$ and $\hat{v}_{\textsc{tail}}$ are positive when $a_2<0$. 
Both prewhitening and tail postcoloring successfully turn the negative asymptotic bias of $\tilde{v}_{\textsc{un}}$
in the region $(a_1, a_2)\in(0,2)\times (-1,0)$ to positive values. 

\begin{figure}[t]
    \begin{center}
          \includegraphics[width=\linewidth]{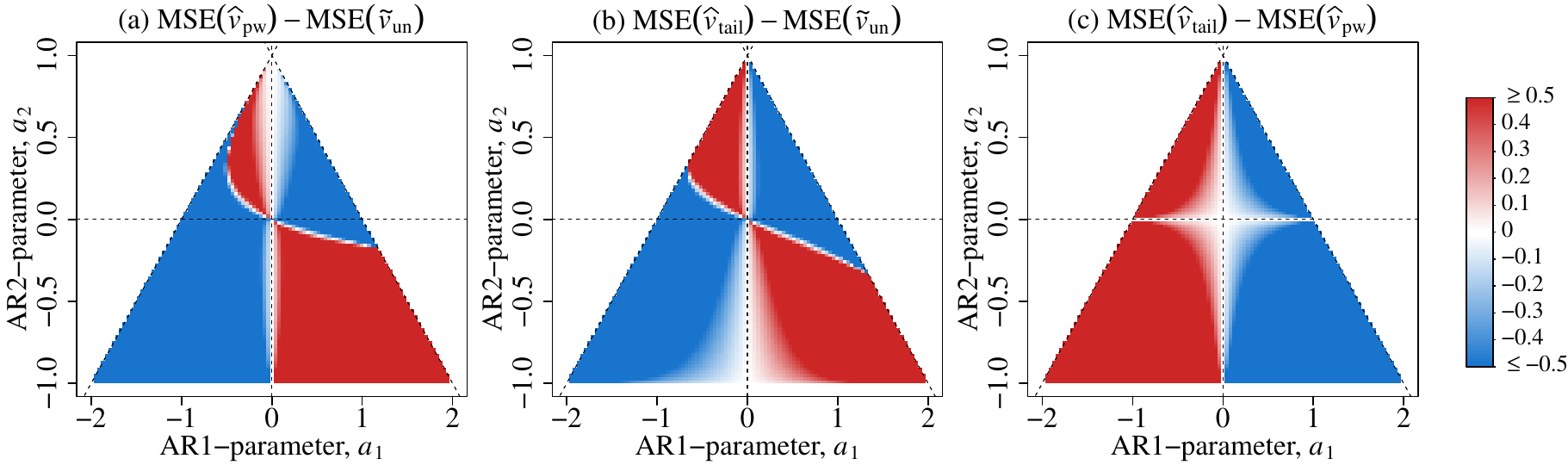}
    \end{center}
    \begin{center}
          \includegraphics[width=\linewidth]{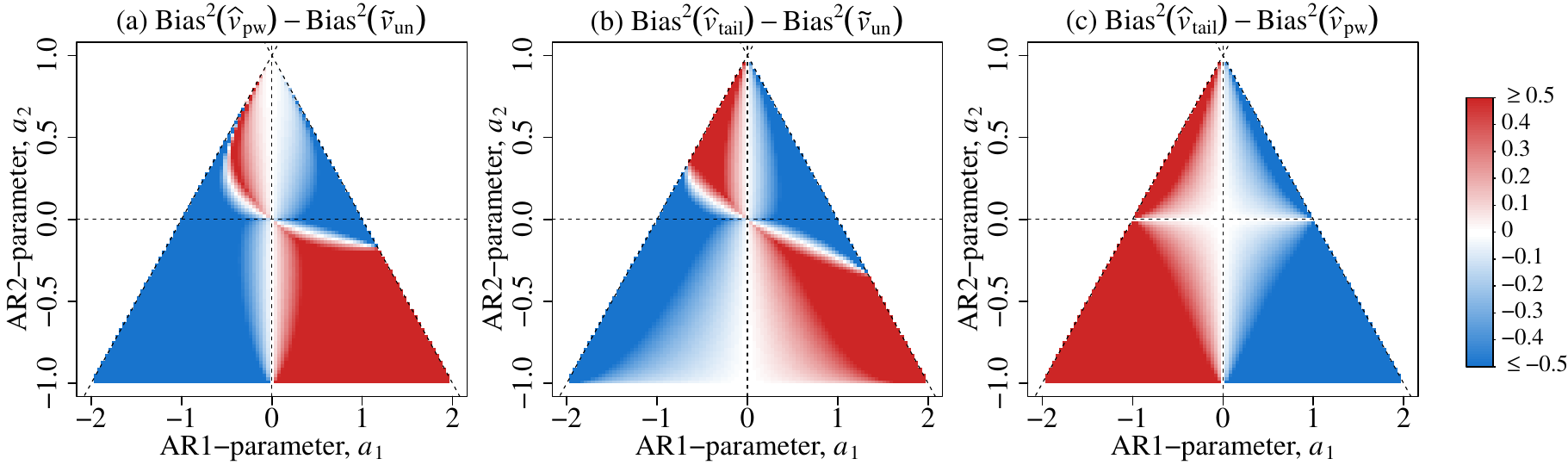}
    \end{center}
    \begin{center}
          \includegraphics[width=\linewidth]{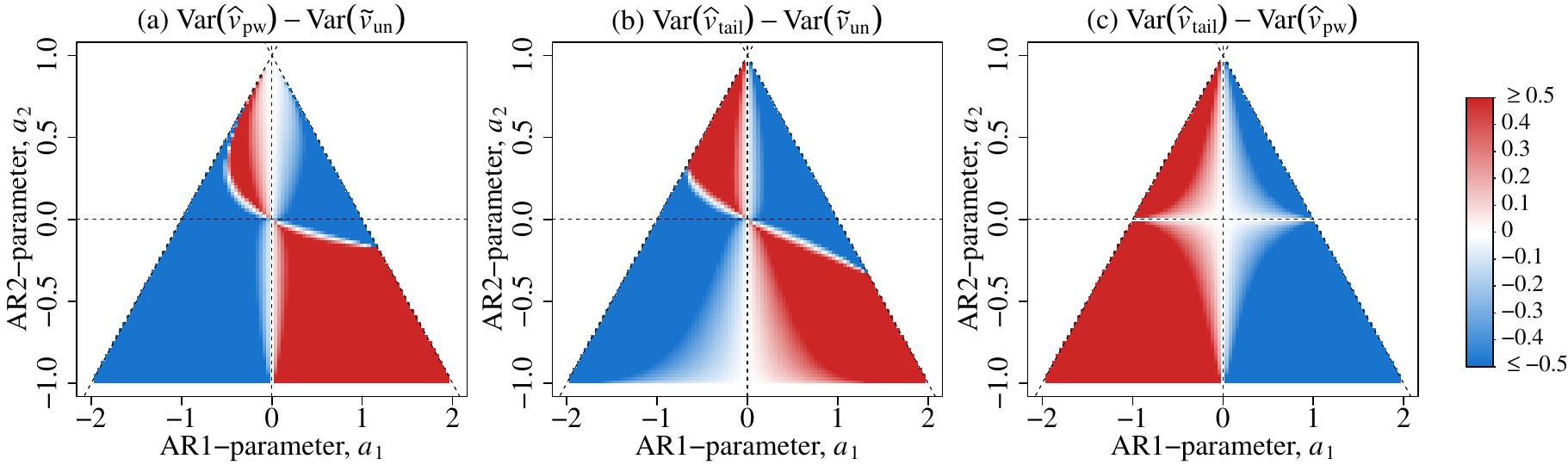}
    \end{center}
    \vspace{-0.3cm}
        \caption{\label{fig:ar1pw1_ar2data}
        Heat maps display the asymptotic differences of the MSEs, squared bias, and variances 
        after standardizing by the true value of $v^2$, 
        i.e., 
        $\{ \MSE(\cdot)-\MSE(\cdot) \}n^{2/3}/v^2$,
        $\{ \Bias^2(\cdot)-\Bias^2(\cdot) \}n^{2/3}/v^2$, and 
        $\{ \Var(\cdot)-\Var(\cdot) \}n^{2/3}/v^2$, 
        in the first row, second row, and third rows of plots, respectively. 
        The first column, second column, and third column compare 
        $(\hat{v}_{\textsc{pw}},\tilde{v}_{\textsc{un}})$, 
        $(\hat{v}_{\textsc{tail}},\tilde{v}_{\textsc{un}})$, and  
        $(\hat{v}_{\textsc{tail}},\hat{v}_{\textsc{pw}})$, respectively. 
        See Example \ref{exp:ar2} for a detailed description. 
        A difference less than zero (i.e., blueish regions) indicates an improvement
        of the first estimator over the second estimator in the difference. 
        Note also that the titles of plots do not show the standardization $n^{2/3}/v^2$ due to space constraints. 
        }
\end{figure}

\begin{figure}[t]
    \begin{center}
          \includegraphics[width=\linewidth]{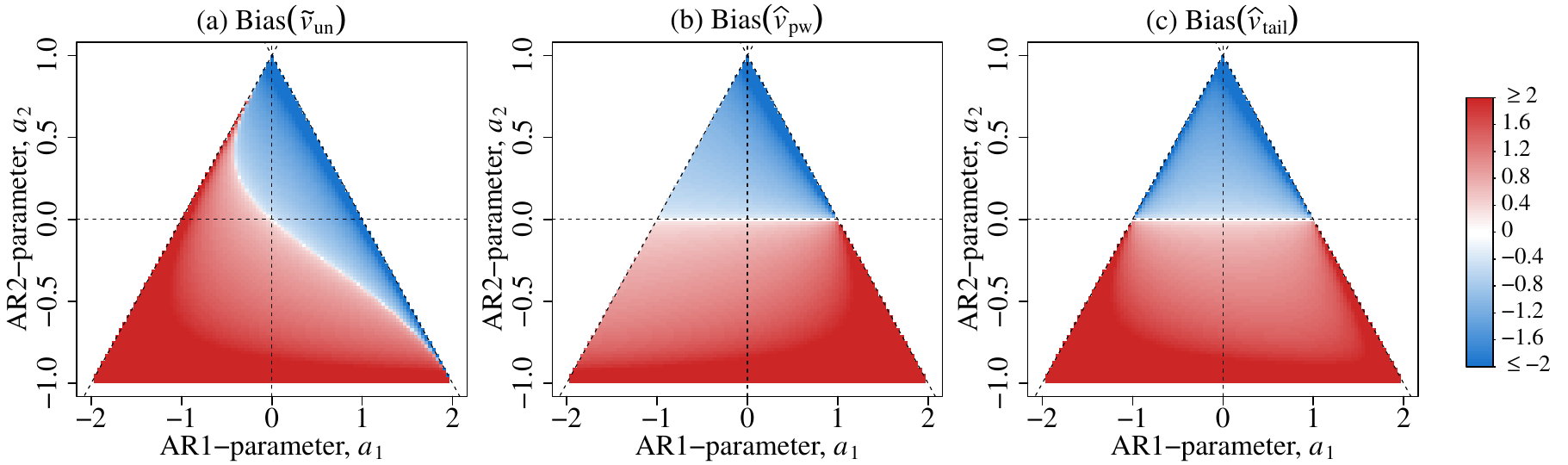}
    \end{center}
    \vspace{-0.3cm}
        \caption{Heat maps display the limiting values of $\Bias(\cdot)n^{1/3}/v$ for
        $\tilde{v}$, $\hat{v}_{\textsc{pw}}$, and $\hat{v}_{\textsc{tail}}$
        in plots (a), (b), and (c), respectively; 
        see Example \ref{exp:ar2} for the details.  
        More negative asymptotic bias appears more bluish, while more positive asymptotic bias appears more reddish.
		Note that the titles of plots do not show the standardization $n^{1/3}/v$ due to space constraints. 
        }\label{fig:ar1pw1_ar2data_biasSign}
\end{figure}

\end{example}

\clearpage 

\begin{example}[\MA$(2)$ data]\label{exp:ma2}\leavevmode
Suppose the data are generated from the \MA$(2)$ model: $X_i = b_1 \varepsilon_{i-1} + b_2 \varepsilon_{i-2} +\varepsilon_i$ 
where $\varepsilon_i \sim \Normal(0,1)$ independently and $b_1,b_2\in[-1,1]^2$.
Note that 
\begin{align}\label{eqt:acvf_ma2}
	\gamma_0^X = b_1^2 + b_2^2 + 1, \qquad 
	\gamma_1^X = b_1b_2 + b_1, \qquad 
	\gamma_2^X = b_2, \qquad
	\gamma_k^X = 0 \quad (k\geq 3).
\end{align}

Similarly, we use the \AR$(1)$ model $X_i = \phi X_{i-1} + Z_i$ 
to perform the prewhitening as well as tail postcoloring procedures. 
To find the best \AR$(1)$ projection, we set 
$\phi = \phi_{\star}$, where $\phi_{\star} \equiv \gamma^X_1/\gamma^X_0 = (b_1b_2+b_1)/(b_1^2+b_2^2+1)$. 
As in (\ref{eqt:vX_vZ}), we also have $v^Z = {v^X}{(1-\phi_{\star})^2}$, 
where $v^X = (1+b_1+b_2)^2$. 
The analytic form of $v_1^Z$ can be easily found in view of (\ref{eqt:acvf_ma2}) and 
the property that $\gamma_k^Z = (1+\phi_{\star}^2)\gamma_k^X - \phi_{\star}^2(\gamma_{k+1}^X+\gamma_{k-1}^X)$.  
Also, we have $\xi_1 = \kappa_1^X - \kappa_1(\phi_{\star})$, where  
\[
	\kappa_1^X \equiv \frac{v_1^X}{v_0^X} = \frac{2(b_1+2b_2+b_1b_2)}{(b_1+b_2+1)^2} 
	\quad \text{and} \quad 
	\kappa_1(\phi_{\star}) = \frac{v_1(\phi_{\star})}{v_0(\phi_{\star})} = \frac{2\phi_{\star}}{1-\phi_{\star}^2} 
	= \frac{2(b_1b_2+b_1)(b_1^2+b_2^2+1)}{(b_1^2+b_2^2+1)^2-(b_1b_2+b_1)^2}.  
\] 

We consider three estimators $\tilde{v}_{\textsc{un}}$, $\hat{v}_{\textsc{pw}}$, and $\hat{v}_{\textsc{tail}}$
with their respective optimal bandwidths.
We then compare their MSEs, squared biases, and the variances in 
Figure \ref{fig:ar1pw1_ma2data}.
We observe that $\hat{v}_{\textsc{tail}}$ improves upon $\hat{v}_{\textsc{pw}}$ for nearly all cases where $b_1>0$, 
corresponding to time series that are positively related to the lag-one innovation. 
However, the improvement regions for $\hat{v}_{\textsc{pw}}$ and $\hat{v}_{\textsc{tail}}$ over $\tilde{v}_{\textsc{un}}$
are non-trivial. 
In particular, when $b_2=0$ and $b_1\gg 0$, the data will exhibit a significant lag-one autocorrelation, 
which incorrect guides the $\AR(1)$ whitening/coloring models. 
Thus it leads to an inflation of MSE. 
However, such inflation is negligible when $n\rightarrow\infty$. 

In Figure \ref{fig:ar1pw1_ma2data_biasSign}, we show the limiting values of their biases. 
Both $\hat{v}_{\textsc{pw}}$ and $\hat{v}_{\textsc{tail}}$ turn the negative bias of $\tilde{v}_{\textsc{un}}$
to positive values for the majority of the regions where $a_1,a_2>0$. 
Although $\hat{v}_{\textsc{pw}}$ and $\hat{v}_{\textsc{tail}}$ behaves similarly in terms of bias correcting effect, 
their effects are not equivalent. 
Figure \ref{fig:heatMA2_BiasDiff} shows that the sign of the asymptotic biases of 
$\hat{v}_{\textsc{pw}}$ and $\hat{v}_{\textsc{tail}}$ can be different. 
This phenomenon is different from the $\ARMA(1,1)$ data in Example \ref{exp:ar11} and 
the $\AR(2)$ data in Example \ref{exp:ar2}.

\begin{figure}[t]
    \begin{center}
          \includegraphics[width=\linewidth]{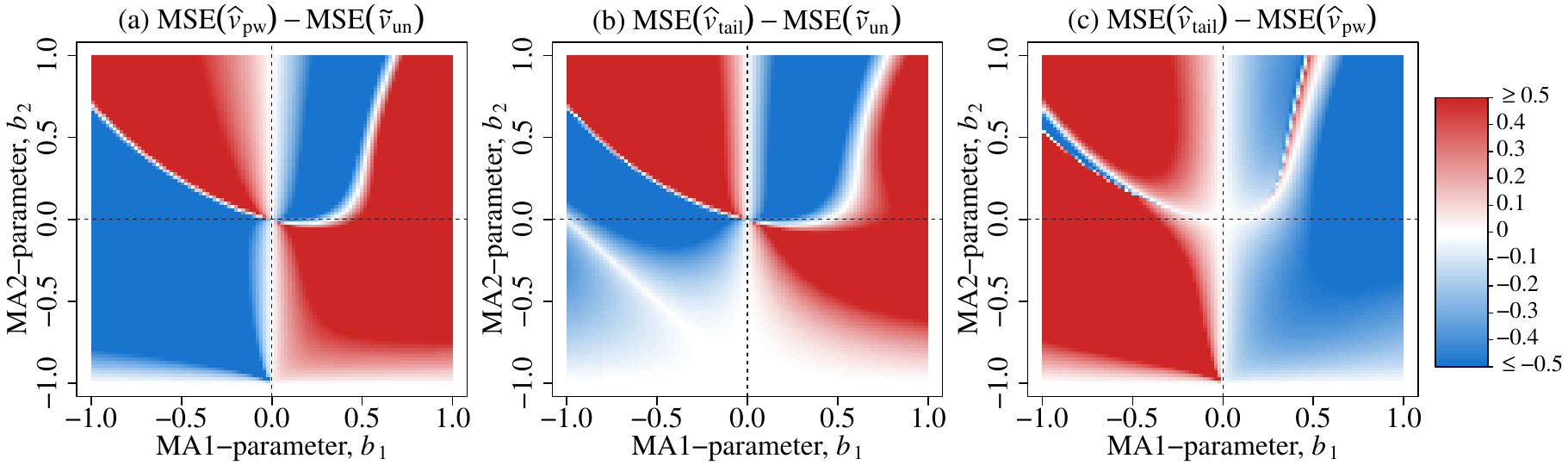}
    \end{center}
    \begin{center}
          \includegraphics[width=\linewidth]{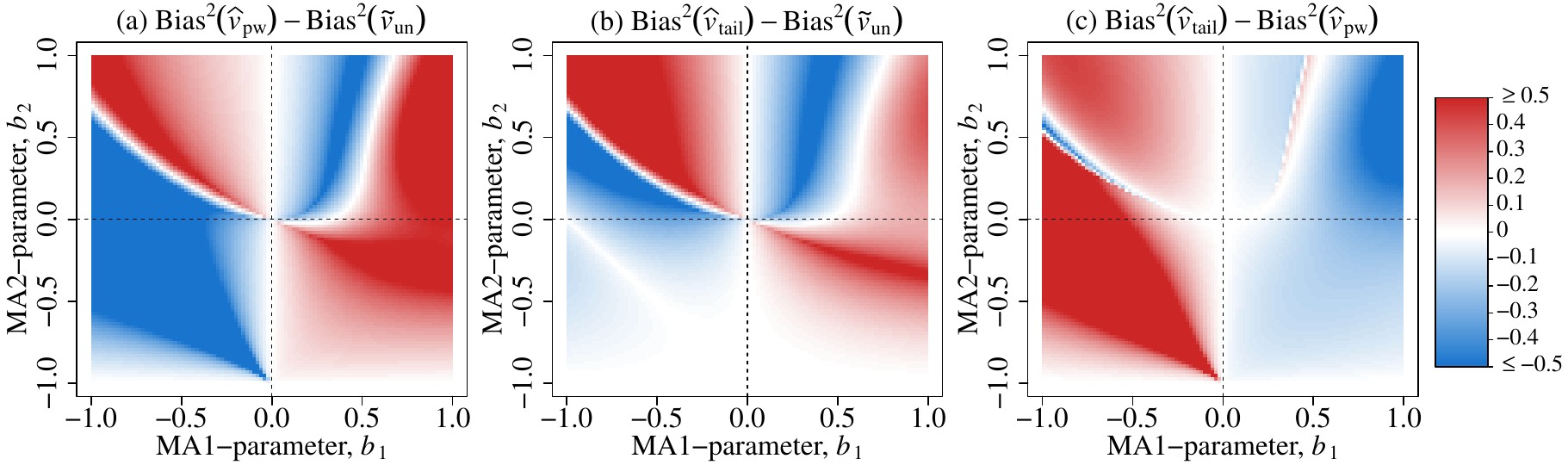}
    \end{center}
    \begin{center}
          \includegraphics[width=\linewidth]{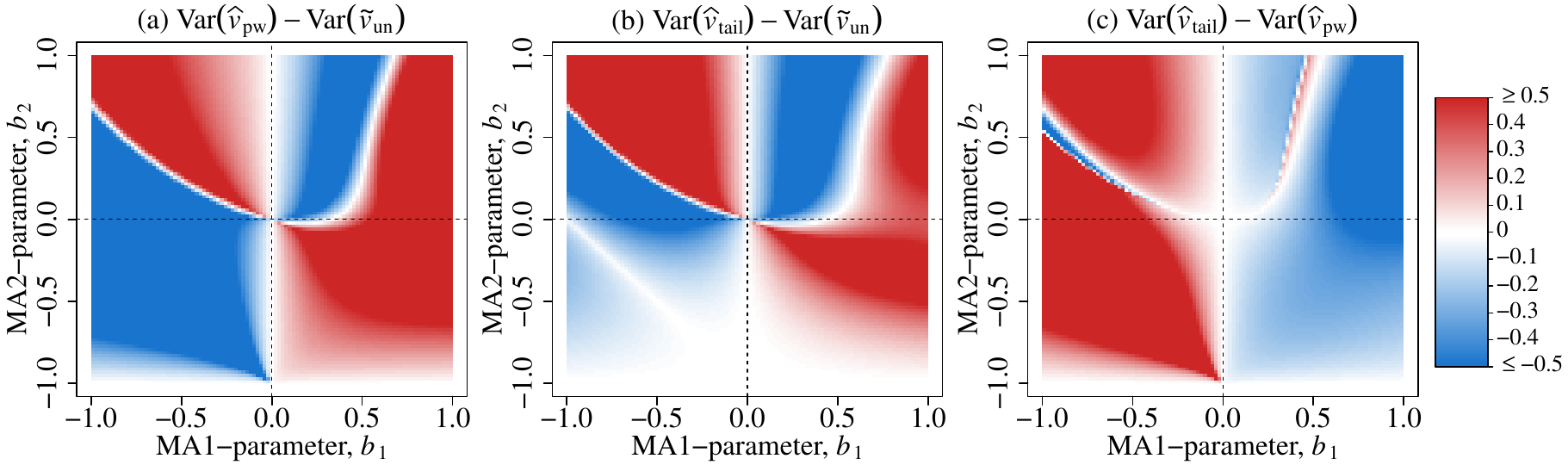}
    \end{center}
    \vspace{-0.3cm}
        \caption{\label{fig:ar1pw1_ma2data}
        Heat maps display the asymptotic differences of the MSEs, squared bias, and variances 
        after standardizing by the true value of $v^2$, 
        i.e., 
        $\{ \MSE(\cdot)-\MSE(\cdot) \}n^{2/3}/v^2$,
        $\{ \Bias^2(\cdot)-\Bias^2(\cdot) \}n^{2/3}/v^2$, and 
        $\{ \Var(\cdot)-\Var(\cdot) \}n^{2/3}/v^2$, 
        in the first row, second row, and third rows of plots, respectively.  
        The first column, second column, and third column compare 
        $(\hat{v}_{\textsc{pw}},\tilde{v}_{\textsc{un}})$, 
        $(\hat{v}_{\textsc{tail}},\tilde{v}_{\textsc{un}})$, and  
        $(\hat{v}_{\textsc{tail}},\hat{v}_{\textsc{pw}})$, respectively. 
        See Example \ref{exp:ma2} for a detailed description. 
        A difference less than zero (i.e., blueish regions) indicates an improvement
        of the first estimator over the second estimator in the difference. 
        Note that the titles of plots do not show the standardization $n^{2/3}/v^2$ due to space constraints. 
        }
\end{figure}

\begin{figure}[t]
    \begin{center}
          \includegraphics[width=\linewidth]{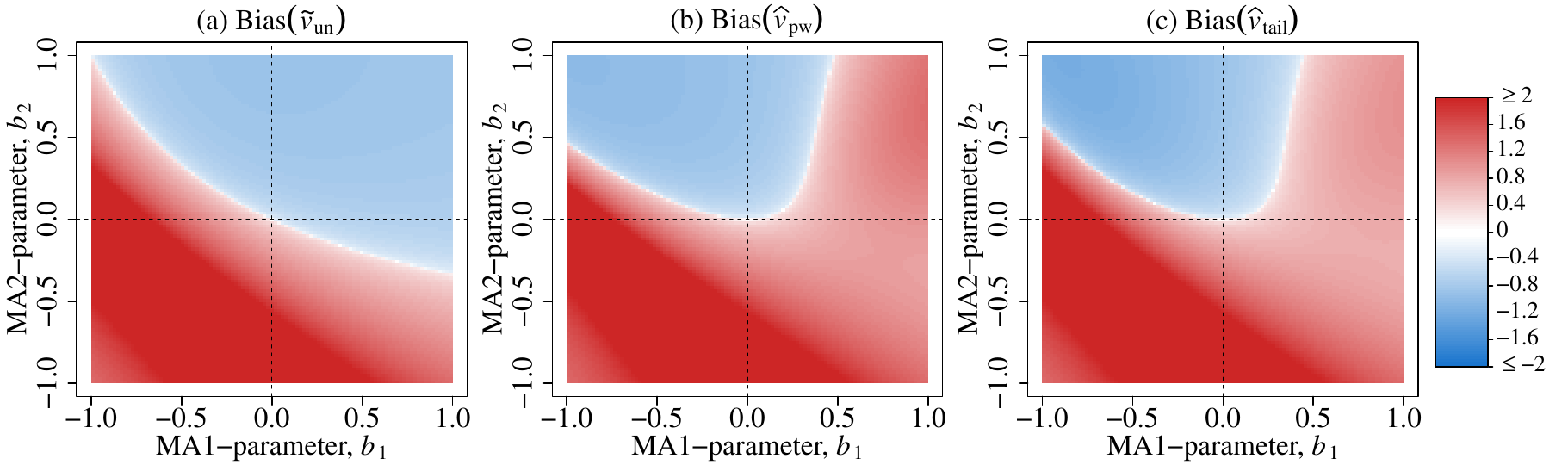}
    \end{center}
    \vspace{-0.3cm}
        \caption{Heat maps display the limiting values of $\Bias(\cdot)n^{1/3}/v$ for
        $\tilde{v}$, $\hat{v}_{\textsc{pw}}$, and $\hat{v}_{\textsc{tail}}$
        in plots (a), (b), and (c), respectively; 
        see Example \ref{exp:ma2} for the details.  
        More negative asymptotic bias appears more bluish, while more positive asymptotic bias appears more reddish.
        Note that the titles of plots do not show the standardization $n^{1/3}/v$ due to space constraints. 
        }\label{fig:ar1pw1_ma2data_biasSign}
\end{figure}

\begin{figure}[t]
    \begin{center}
          \includegraphics[width=.7\linewidth]{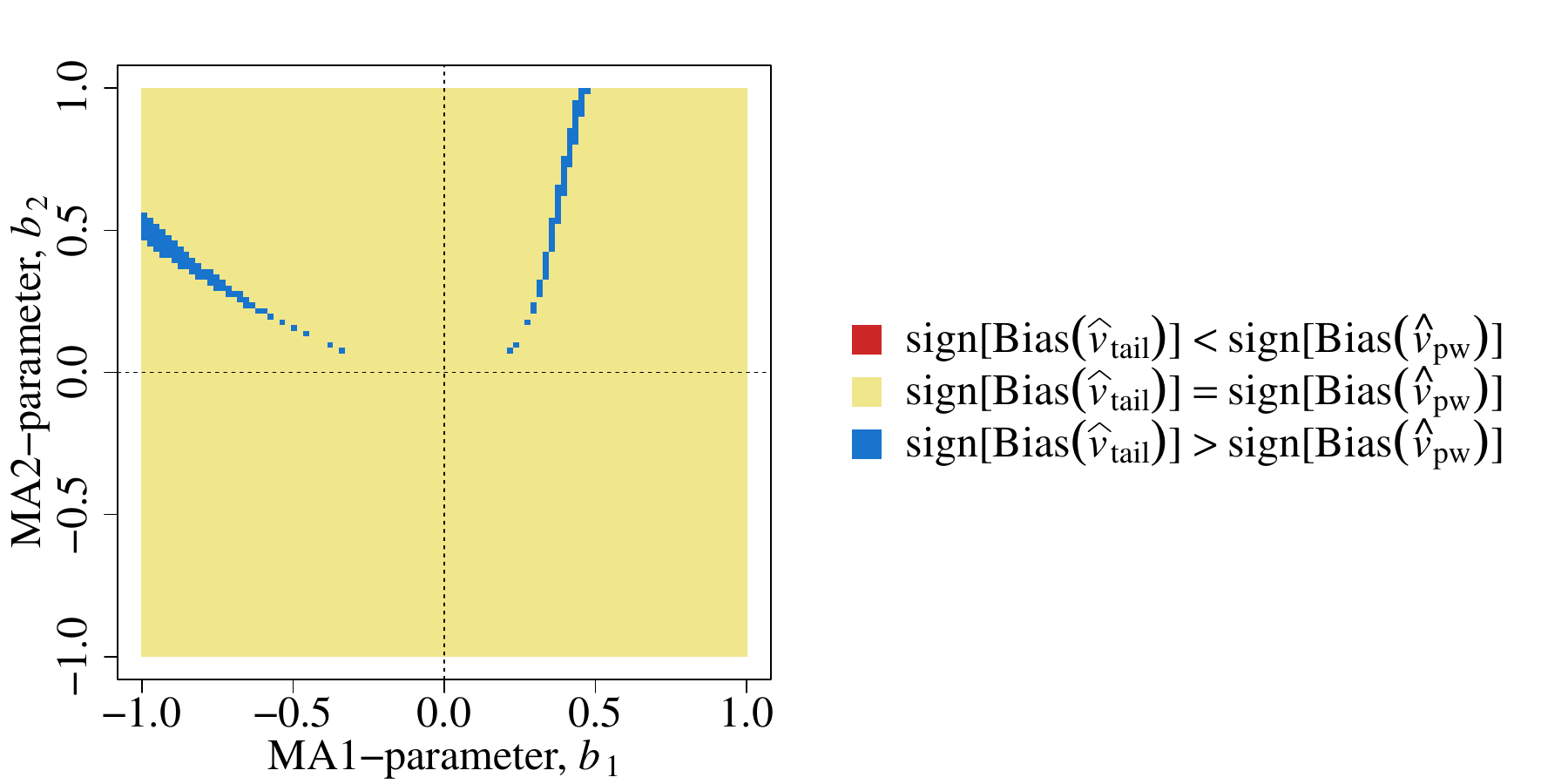}
    \end{center}
    \vspace{-0.3cm}
        \caption{The difference of the signs of the limiting values of $\Bias(\cdot)n^{1/3}/v$ for
        $\hat{v}_{\textsc{pw}}$ and $\hat{v}_{\textsc{tail}}$.
        The yellow region corresponds to the cases where 
        the signs of
        $\lim_{n\rightarrow\infty}n^{1/3}\Bias(\hat{v}_{\textsc{pw}})$ and 
        $\lim_{n\rightarrow\infty}n^{1/3}\Bias(\hat{v}_{\textsc{tail}})$ are the same.  
        The blue regions corresponds to the cases where 
        $\lim_{n\rightarrow\infty}n^{1/3}\Bias(\hat{v}_{\textsc{pw}}) < 0 <\lim_{n\rightarrow\infty}n^{1/3}\Bias(\hat{v}_{\textsc{tail}})$. 
        Note that there exists no $(b_1,b_2)$ such that $\lim_{n\rightarrow\infty}n^{1/3}\Bias(\hat{v}_{\textsc{pw}}) > 0 >\lim_{n\rightarrow\infty}n^{1/3}\Bias(\hat{v}_{\textsc{tail}})$. 
        }\label{fig:heatMA2_BiasDiff}
\end{figure}

\end{example}

\clearpage

\subsection{Additive form of tail postcoloring}\label{sec:additiveTailprecolor}
Our proposed tail postcolored estimator $\hat{v}_{\textsc{tail}}$ in Definition \ref{def:SimpleTPW} 
admits a multiplicative form: $\hat{v}_{\textsc{tail}} = {\eta}_{\ell, K}(\bar{\theta}) \tilde{v}(\ell; K)$. 
It is possible to perform tail postcoloring additively as:
\[
    \hat{v}_{\textsc{tail}}' = \tilde{v}(\ell; K) + \left[\sum_{ |k| < n} \left\{1-K(k/\ell)\right\} {\gamma}_k(\bar{\theta})
    + \sum_{|k| \geq n} {\gamma}_k(\bar{\theta})\right].  
\]
However, we recommend the multiplicative form $\hat{v}_{\textsc{tail}}$ over this 
additive form $\hat{v}_{\textsc{tail}}'$ because of three reasons: 

First, $\hat{v}_{\textsc{tail}}'$ is not guaranteed to be positive semi-definite in finite samples 
even if the kernel $K$ is positive semi-definite.
Hence, an extra layer of positive-definiteness correction may be needed. 
This is the major drawback of $\hat{v}_{\textsc{tail}}'$. 
Second,  
the multiplicative form $\hat{v}_{\textsc{tail}}$ resembles the classical prewhitened estimator $\hat{v}_{\textsc{pw}}$, 
as both estimators can be expressed as $C \tilde{v}$ for some $C$. 
This makes it more memorable.
Third, the expression of $\hat{v}_{\textsc{tail}}'$ is more complicated than our proposed $\hat{v}_{\textsc{tail}}$.
Arguably, this formula is more cumbersome to state and to compute, so a neater form is preferable.

}

{
\subsection{Spectral density estimation with tail postcoloring}\label{sec:specDen}
We generalize the proposed tail postcoloring method to spectral density estimation in this section.
Assume that $u_0 = \sum_{k=-\infty}^{\infty} |\gamma_{k}| < \infty$.
The spectral density is
\begin{equation}\label{eqt:spectral}
  f(\omega) = \frac{1}{2\pi} \sum_{k=-\infty}^{\infty}\gamma_k e^{\iota k\omega} = \frac{1}{2\pi} \sum_{k=-\infty}^{\infty}\gamma_k \cos(k\omega),
\end{equation}
where $\omega \in \mathbb{R}$ and $\iota = \sqrt{-1}$ denotes the imaginary unit.
Our proposed tail postcoloring estimator is  
\begin{align}\label{eqt:SpecDen}
  \hat{f}_{\textsc{tail}}(\omega) &= \hat{f}(\omega; \ell; K; \bar{\theta})  
  = \frac{\sum_{k=-\infty}^{\infty}\gamma_k(\bar{\theta})e^{\iota k\omega}}{\sum_{k=-\infty}^{\infty}K\left({k}/{\ell}\right)\gamma_k(\bar{\theta})e^{\iota k\omega}} \cdot \tilde{f}(\omega),
\end{align}
where 
\[
  \tilde{f}(\omega) = \frac{1}{2\pi} \sum_{k=-\infty}^{\infty}K\left(\frac{k}{\ell}\right)\tilde{\gamma}_k e^{\iota k\omega}
\]
is the standard kernel spectral density estimator. 
We remark that both $f$ and $\hat{f}_{\textsc{tail}}$ are even and periodic functions with a period of $2\pi$. 
It suffices for use to study the properties of $\hat{f}_{\textsc{tail}}(\omega)$ when $\omega\in[0, \pi]$. 

For $p\in\mathbb{N}_0$, we also define 
\[
	f_p(\omega; \theta_{\star}) = \frac{1}{2\pi} \sum_{k=-\infty}^{\infty}\gamma_k(\theta_{\star}) e^{\iota k\omega} 
    \qquad\text{and}\qquad
    f_p(\omega) = \frac{1}{2\pi} \sum_{k=-\infty}^{\infty}\gamma_ke^{\iota k\omega} ,
\]
which are based on the tail ppstwhitening model and the true data generating process, respectively. 
Define 
the truncated mean-squared error of any estimator $\hat{f}(\omega)$ of $f(\omega)$ as 
\[
	\MSE_{h}\{\hat{f}(\omega)\} = \E\left[\min\left\{n^{2p/(2p+1)} (\hat f(\omega)-f(\omega))^2,h\right\}\right].
\]
We then have the following theorem describing the asymptotic properties of the estimator in (\ref{eqt:SpecDen}).

\begin{theorem}\label{thm:spectralBV}
Assume the conditions in Theorem~\ref{thm:BiasVar} hold.
Let $\ell \asymp n^{1/(2p+1)}$.  
For any $\omega\in[0, \pi]$, we have 
(i)
\[
  \lim_{h\to\infty} \lim_{n\to\infty}  \MSE_h \{\hat{f}(\omega; \ell; K; \bar{\theta})\}
  = \lim_{n\to\infty} n^{{2p}/({2p+1})} \MSE\{\hat{f}(\omega; \ell; K; \theta_{\star} )\};
\]
{
(ii)  
$
	\Var\left\{ \hat{f}(\omega; \ell; K; \theta_{\star} ) \right\} 
	\sim \Var\left\{ \tilde{f}(\omega) \right\} 
	\sim 4 A \zeta(\omega)  f^2(\omega) \ell/n,
$
where $\zeta(\omega) = \{1+\mathbb{1}(\omega/\pi\in\mathbb{Z})\}/2$;
and (iii)
    \begin{equation}\label{eqt:NonnegligibleBias}
      \bias\left\{ \hat{f}(\omega; \ell; K; \theta_{\star} ) \right\}
      =  \frac{B \chi_p(\omega)  }{\ell^p} f(\omega) + o\left(\frac{1}{\ell^p}\right) + O\left(\frac{\ell}{n}\right), 
    \end{equation}
    where 
    \[
    \chi_p(\omega) = 
      \frac{f_p(\omega)}{f_0(\omega)} 
      - \frac{f_p(\omega; \theta_{\star})}{f_0(\omega; \theta_{\star})}. 
    \]}
  \end{theorem}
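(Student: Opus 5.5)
The plan is to follow the proof of Theorem~\ref{thm:BiasVar}, treating the spectral estimator at a fixed frequency $\omega$ as a kernel estimator with lag weights $K(k/\ell)\cos(k\omega)$. Write the postcoloring factor as $\eta_{\ell,K}(\omega;\theta)=f_0(\omega;\theta)/M_{\ell,K}(\omega;\theta)$, where $M_{\ell,K}(\omega;\theta)=(2\pi)^{-1}\sum_k K(k/\ell)\gamma_k(\theta)\cos(k\omega)$. Then $\hat f(\omega;\ell;K;\theta)=\eta_{\ell,K}(\omega;\theta)\tilde f(\omega)$. (I read $f_p$ as $(2\pi)^{-1}\sum_k|k|^p\gamma_k\cos(k\omega)$; the displayed definition omits the $|k|^p$, which is evidently a typo.)

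\textbf{Step 1: part (iii), the bias with $\theta_\star$ fixed.} Since $u_p(\theta_\star)<\infty$ and $K\in\mathcal K_p$, dominated convergence applied to $\{1-K(k/\ell)\}\ell^p/|k|^p\to -B$ gives
\[
M_{\ell,K}(\omega;\theta_\star)=f_0(\omega;\theta_\star)+B\ell^{-p}f_p(\omega;\theta_\star)+o(\ell^{-p}),
\]
so that $\eta_{\ell,K}=1-B\ell^{-p}f_p(\omega;\theta_\star)/f_0(\omega;\theta_\star)+o(\ell^{-p})$. This requires $f_0(\omega;\theta_\star)\neq0$, which plays the role of $v(\theta)>0$.

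For the unadjusted estimator, the standard argument (using $u_p<\infty$ and $\E\tilde\gamma_k=(1-|k|/n)\gamma_k+O(1/n)$ from mean centering) gives
\[
\E\tilde f(\omega)=f(\omega)+B\ell^{-p}f_p(\omega)+o(\ell^{-p})+O(\ell/n).
\]
Multiplying the two expansions and writing $f=f_0$ yields $Bf(\omega)\chi_p(\omega)\ell^{-p}$ as the leading bias term, with the stated remainders.

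\textbf{Step 2: part (ii), the variance.} Since $\eta_{\ell,K}(\omega;\theta_\star)\to1$ deterministically, $\Var\{\hat f\}\sim\Var\{\tilde f\}$. The asymptotic variance of the lag-window estimator at frequency $\omega$ is the classical result $\Var\{\tilde f(\omega)\}\sim(\ell/n)f^2(\omega)\{1+\mathbb 1(\omega/\pi\in\mathbb Z)\}\int_{-\infty}^\infty K^2$. I would prove it under $\Delta_4<\infty$ and $\nu>4$ moments following Wu's framework, exactly as for $\omega=0$ in Theorem~\ref{thm:BiasVar}. The fourth-cumulant contributions are $O(1/n)$ by short-range dependence. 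The doubling at $\omega\in\{0,\pi\}$ arises because $\cos^2(k\omega)$ averages to $1/2$ unless $\omega\in\pi\mathbb Z$. This gives the factor $4A\zeta(\omega)$.

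\textbf{Step 3: part (i), replacing $\theta_\star$ by $\bar\theta$.} By the mean value theorem and the uniform derivative bounds (these carry over to $f_0(\omega;\theta)$ and $M_{\ell,K}(\omega;\theta)$ since $|\cos|\le1$), $\eta(\omega;\bar\theta)-\eta(\omega;\theta_\star)=O_p(n^{-1/2})$. Also $\tilde f=O_p(1)$. Hence
\[
n^{p/(2p+1)}\{\hat f(\bar\theta)-\hat f(\theta_\star)\}=O_p(n^{p/(2p+1)-1/2})=o_p(1).
\]
The truncated-MSE limit then follows by the same argument as in Theorem~\ref{thm:BiasVar}(i). Specifically, $n^{2p/(2p+1)}(\hat f(\theta_\star)-f)^2$ is uniformly integrable, because its mean and variance are bounded by Steps 1--2 together with a fourth-moment bound on $\tilde f$. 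Truncation at $h$ then lets us pass to the limit by Slutsky, and letting $h\to\infty$ completes the argument.

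The main obstacle is Step 2 at general $\omega$, namely verifying the cumulant/physical-dependence bounds for cosine-weighted sums and getting the $\zeta(\omega)$ constant correct. I would first try to cite the known result for kernel spectral estimators under physical dependence (Liu and Wu, 2010) rather than redo it.
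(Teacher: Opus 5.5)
Your Steps 1 and 3 follow the paper's proof. You expand the postcoloring factor and $\E\tilde f(\omega)$ to order $\ell^{-p}$ and multiply, then control $\hat f(\bar\theta)-\hat f(\theta_\star)$ by a mean-value argument before passing to the truncated MSE. Your dominated-convergence derivation of $M_{\ell,K}(\omega;\theta_\star)=f_0(\omega;\theta_\star)+B\ell^{-p}f_p(\omega;\theta_\star)+o(\ell^{-p})$ is a tidier version of the paper's split at $|k|=\sqrt{\ell}$. You are also right that $f_p$ must carry the weight $|k|^p$ and that $f_0(\omega;\theta_\star)\neq0$ (and $f(\omega)\neq0$) is implicitly needed.

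The gap is Step 2, which is where essentially all of the paper's work lies. Your self-contained route rests on the claim that the fourth-cumulant contribution to $\Var\{\tilde f(\omega)\}$ is $O(1/n)$. That requires $\sum_{a,b,c}|\mathrm{cum}(X_0,X_a,X_b,X_c)|<\infty$, which is not among the hypotheses and does not follow directly from $\E|X_1|^{\nu}<\infty$ ($\nu>4$) and $\Delta_4<\infty$. In the martingale approach used for $\omega=0$ in Theorem~\ref{thm:BiasVar}, which you say you would imitate, no cumulants enter at all. Likewise, ``$\cos^2(k\omega)$ averages to $1/2$'' is the right heuristic for $\zeta(\omega)$, but it is not an argument under these assumptions.

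The paper proceeds as follows. After replacing $\tilde f$ by its known-mean version $\ddot f$ (Lemma A.6 of \cite{LiuChan2022}), the case $\omega=0$ is Theorem~\ref{thm:BiasVar} divided by $2\pi$. The case $\omega=\pi$ is reduced to it via $X_i'=(-1)^iX_i$, since $\cos\{(i-j)\pi\}=(-1)^{i+j}$; this gives $\zeta=1$. For $\omega\in(0,\pi)$ the same block decomposition is used, with blocks $G^{\textnormal{Spec}}_{k,t}$. The genuinely new ingredient is an invariance principle for the Fourier partial sums $\ell^{-1/2}\sum_{i\le\ell t}X_ie^{\iota\omega i}$: their real and imaginary parts converge jointly to $\{\pi f(\omega)\}^{1/2}$ times two \emph{independent} Brownian motions. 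Finite-dimensional convergence comes from \cite{wu2005Fourier} and tightness from \cite{peligrad2010central}; see Lemma~\ref{lemma:spec1}. That independence is what produces $\zeta(\omega)=1/2$. Each block has limiting variance $8\pi^2f^2(\omega)\int_{t-1}^{t}K^2(u)\,\dd u$, so $\Var\{\ddot f_s(\omega)\}\sim 2f^2(\omega)\{\int_0^sK^2(u)\,\dd u\}\,\ell/n$. Letting $s\to\infty$ gives $2Af^2(\omega)\ell/n=4A\zeta(\omega)f^2(\omega)\ell/n$.

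Citing Liu and Wu (2010) instead is a reasonable shortcut, since it is set in the same physical-dependence framework. You must still check its moment, dependence and kernel hypotheses against the present ones. In particular, $\mathcal{K}_p$ contains kernels without compact support. If the cited result assumes compact support, you need an additional truncation argument, which is the role of $\ddot f_s$ in the paper.

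There are two smaller points in Step 3.

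First, a fourth-moment bound on $\tilde f$ needs eighth moments of $X_i$, which $\nu>4$ does not provide. The truncated-MSE identity needs asymptotic uniform integrability of $n^{2p/(2p+1)}\{\hat f(\omega;\ell;K;\theta_\star)-f(\omega)\}^2$. You can obtain this from a CLT for $\tilde f(\omega)$ (e.g.\ via the martingale approximation) together with the convergence of the normalized second moment from Steps 1--2. Nonnegative variables that converge weakly, with means converging to the mean of the limit, are uniformly integrable. The paper simply asserts this identity, as in Theorem~\ref{thm:BiasVar}.

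Second, the derivative bounds assumed in Theorem~\ref{thm:BiasVar} are for $\sum_k\gamma_k(\theta)$ and $\sum_kK(k/\ell)\gamma_k(\theta)$, i.e.\ at frequency zero. The fact that $|\cos(k\omega)|\le1$ does not turn a bound on the derivative of a sum into a bound on the derivative of the cosine-weighted sum, because of possible cancellations. You therefore have to assume the frequency-$\omega$ analogues, as the paper tacitly does, together with $M_{\ell,K}(\omega;\theta)$ staying away from zero near $\theta_\star$.
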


}

\section{Additional simulation results}\label{sec:supp_Simulation}

\subsection{Simulation results of Example \ref{exp:simu_ar1} when  $b=0.6$}\label{sec:sim_MSE_estBW}

The simulation experiment in Example \ref{exp:simu_ar1} is repeated
when $b=0.6$. 
{
Recall that the estimators considered are as follows:
  \begin{itemize}
  \item[(a)] an $\AR(1)$-based parametric estimator;
  \item[(b)] the unadjusted estimator with the nonparametric plug-in bandwidth selector;
  \item[(c)] \cite{andrews1992}'s $\AR(1)$-prewhitened estimator with their parametric $\AR(1)$ bandwidth selector; and
  \item[(d)] the proposed $\AR(1)$-tail postcolored estimator.
  \end{itemize}
  They are computed as  
  \begin{align}\label{eqt:estimator_list_sim}
    \bar{v}_{\textsc{para}} = \frac{\bar{\sigma}^2_{z}}{(1-\bar{\phi})^2}, \quad
    \tilde{v}_{\textsc{un}} = \tilde{v}(\hat{\ell}_{\textsc{un}}; K; X_{1:n}), \quad 
    \hat{v}_{\textsc{pw}} = \frac{\tilde{v}(\hat{\ell}_{\textsc{pw}};K; Z_{2:n})}{(1-\bar{\phi})^2}, \quad
    \hat{v}_{\textsc{tail}} = \hat{v}(\hat{\ell}_{\textsc{tail}}; K; \bar{\phi}),
  \end{align}
  respectively, 
  where the function $\tilde{v}(\cdot)$ is defined in (\ref{eqt:kernelEst}), 
  the function $\hat{v}(\cdot)$ is defined in (\ref{eqt:propDebias}), and 
  \begin{align*} 
   \hat{\ell}_{\textsc{un}} = \left\lceil \left( {3\tilde{\kappa}_1^2 n }/{ 2} \right)^{1/3} \right\rceil, \quad
    \hat{\ell}_{\textsc{pw}} = \left\lceil{\left\{ {6{\bar{\phi}_{Z}}^2n}/{(1-{\bar{\phi}_{Z}}^2)^2}\right\}^{1/3}}\right\rceil, \quad
    \hat{\ell}_{\textsc{tail}} = \left\lceil \left( {3\hat{\xi}_1^2 n }/{ 2} \right)^{1/3} \right\rceil.
  \end{align*}
  where 
  \[
  \tilde{\kappa}_1
  = \tilde{v}_{1,\#} /\tilde{v}_{0,\#}
  \qquad\text{and}\qquad
  \hat{\xi}_1 
      = \tilde{v}_{1,\#} /\tilde{v}_{0,\#}  - 2\bar{\phi}(1-\bar{\phi}^2)^{-1},
  \]
  with $\tilde{v}_{r,\#} \equiv \sum_{|k|< n} K_r(k/\ell_r^{\#}) {|k|}^r \tilde{\gamma}_k$
  and $\ell_r^{\#} = \left\lceil 2n^{1/(2r+3)} \right\rceil
  $ for $r\in\{0,1\}$.
}
In other words,
each estimator is computed with its estimated optimal bandwidth.
The results are shown in Table~\ref{tab:exp1_estimatedBW_2}.

\begin{table}[t]
\centering
\small
\def~{\hphantom{0}}
\begin{tabular}{cccccccccc}
  &   & \multicolumn{4}{c}{$b=0.6$ (Misspecified case)} \\
  Criteria & $a\backslash$Estimator &  $\bar{v}_{\textsc{para}}$ & $\tilde{v}_{\textsc{un}}$ & $\hat{v}_{\textsc{pw}}$ & $\hat{v}_{\textsc{tail}}$ \\[0.5ex]
   $100\MSE(\cdot)/v^2$ 
    & $  -0.8$ & $~3.86$ & $~2.19$ & $~~2.69$ & $~2.02$ \\  
    & $  -0.4$ & $~3.83$ & $~1.73$ & $~~4.34$ & $~2.29$ \\  
    & $  -0.2$ & $~9.26$ & $~2.65$ & $~~8.89$ & $~3.23$ \\  
    & $  ~0.2$ & $27.18$ & $~4.75$ & $~17.90$ & $~5.58$ \\  
    & $  ~0.4$ & $38.80$ & $~6.08$ & $~29.97$ & $~7.40$ \\  
    & $  ~0.8$ & $74.33$ & $14.96$ & $163.42$ & $21.34$ \\  [0.5ex]
    $10\{ \E(\cdot)/v - 1 \}$ 
    & $  -0.8$  & $-1.71$ & ~$0.18$ & $-1.32$ & $-0.78$ \\  
    & $  -0.4$  & $~1.32$ & -$0.41$ & $~1.41$ & $~0.69$ \\  
    & $  -0.2$  & $~2.52$ & -$0.58$ & $~2.46$ & $~0.81$ \\  
    & $  ~0.2$  & $~4.61$ & -$0.98$ & $~3.43$ & $~1.03$ \\  
    & $  ~0.4$  & $~5.51$ & -$1.18$ & $~4.52$ & $~1.18$ \\  
    & $  ~0.8$  & $~6.71$ & -$2.18$ & $10.29$ & $~1.64$ \\ [0.5ex]
\end{tabular}
\caption{The standardized mean-squared error $100\MSE(\cdot)/v^2$
and standardized bias $10\{ \E(\cdot)/v - 1 \}$
of 
the parametric estimator $\bar{v}_{\textsc{para}}$,
the unprewhitened estimator $\tilde{v}_{\textsc{un}}$,
the standard prewhitened estimator $\hat{v}_{\textsc{pw}}$, and 
the proposed tail prewhitened estimator $\hat{v}_{\textsc{tail}}$
but with optimal bandwidth estimated by the parametric plug-in method; see Example \ref{exp:simu_ar1}.
The sample size is $n=400$.
Each experiment is replicated $5000$ times.
}
\label{tab:exp1_estimatedBW_2}
\end{table}

In this case, the proposed $\hat{v}_{\textsc{tail}}$ is the second best 
next to the unprewhitened estimator $\hat{v}_{\textsc{un}}$ in terms of mean-squared error. 
We note that the inflation of mean-squared error of $\hat{v}_{\textsc{tail}}$ 
is not disastrous compared to $\hat{v}_{\textsc{un}}$ in the case of $b=0.6$. 
However, the mean-squared error of $\hat{v}_{\textsc{un}}$ may explode to a very large value in other cases, e.g., $b=-0.6$.
In addition, $\hat{v}_{\textsc{tail}}$ is much more promising compared to other estimators and provides satisfactory bias correction.

{
\subsection{Simulation results of Example \ref{exp:simu_ar1} when $n=200$}\label{sec:sim_MSE_n200}
The simulation experiment in Example \ref{exp:simu_ar1} is repeated for sample size $n=200$.
The results are shown in Tables \ref{tab:exp1_n200} and \ref{tab:exp1_estimatedBW_n200}
when the true optimal bandwidth and the parametric plug-in bandwidth are used, respectively. 

\begin{table}[t]
\centering
\small
\def~{\hphantom{0}}
\begin{tabular}{cccccccccc}
  &  & \multicolumn{4}{c}{$b=0$ (Well-specified case)} & \multicolumn{4}{c}{$b=-0.6$ (Misspecified case)} \\
  Criteria & $a\backslash$Estimator &  $\bar{v}_{\textsc{para}}$ & $\tilde{v}_{\textsc{un}}$ & $\hat{v}_{\textsc{pw}}$ & $\hat{v}_{\textsc{tail}}$ &  $\bar{v}_{\textsc{para}}$ & $\tilde{v}_{\textsc{un}}$ & $\hat{v}_{\textsc{pw}}$ & $\hat{v}_{\textsc{tail}}$ \\[0.5ex]
   $100\MSE(\cdot)/v^2$ 
   & $  -0.8$ & $38.94$ & $14.46$ & $~1.26$ & $~1.30$ & $~60.59$ & $814.37$ & $149.63$ & $71.39$ \\  
   & $  -0.4$ & $~4.57$ & $~5.79$ & $~1.86$ & $~1.86$ & $416.08$ & $235.76$ & $131.67$ & $31.04$ \\  
   & $  -0.2$ & $~3.01$ & $~3.60$ & $~2.40$ & $~2.40$ & $426.39$ & $156.75$ & $154.08$ & $23.31$ \\  
   & $  ~0.2$ & $~2.97$ & $~3.94$ & $~3.94$ & $~3.98$ & $167.63$ & $~68.00$ & $183.24$ & $14.76$ \\  
   & $  ~0.4$ & $~5.27$ & $~6.98$ & $~5.65$ & $~5.76$ & $~49.35$ & $~36.62$ & $~53.24$ & $10.66$ \\  
   & $  ~0.8$ & $44.53$ & $19.10$ & $16.78$ & $18.22$ & $~32.28$ & $~26.61$ & $~31.43$ & $14.82$ \\[0.5ex] 
    $10\{ \E(\cdot)/v - 1 \}$ 
     & $-0.8$ & $-6.16$ & $~1.69$ & $~0.05$ & $-0.00$ & $~6.47$ & $27.08$ & $11.86$ & $~4.62$ \\  
     & $-0.4$ & $-1.47$ & $~1.06$ & $~0.02$ & $-0.01$ & $19.57$ & $14.82$ & $11.09$ & $~2.92$ \\  
     & $-0.2$ & $-0.33$ & $~0.86$ & $-0.01$ & $-0.03$ & $19.92$ & $12.03$ & $12.03$ & $~2.39$ \\  
     & $~0.2$ & $-0.47$ & $-1.26$ & $-0.11$ & $-0.08$ & $12.37$ & $~7.77$ & $13.11$ & $~1.89$ \\  
     & $~0.4$ & $-1.72$ & $-1.72$ & $-0.18$ & $-0.11$ & $~6.45$ & $~5.59$ & $~6.83$ & $~1.54$ \\  
     & $~0.8$ & $-6.62$ & $-3.31$ & $-0.80$ & $-0.41$ & $-5.60$ & $-5.02$ & $-5.50$ & $-2.65$ \\[0.5ex] 
\end{tabular}
\caption{
{The standardized mean-squared error $100\MSE(\cdot)/v^2$
and standardized bias $10\{ \E(\cdot)/v - 1 \}$
of 
the parametric estimator $\bar{v}_{\textsc{para}}$,
the unadjusted estimator $\tilde{v}_{\textsc{un}}$,
the standard prewhitened estimator $\hat{v}_{\textsc{pw}}$, and 
the proposed tail postcolored estimator $\hat{v}_{\textsc{tail}}$
with their own theoretical optimal bandwidths; see Example \ref{exp:simu_ar1}.
The sample size is $n=200$.
Each experiment is replicated $5000$ times.}
}\label{tab:exp1_n200}
\end{table}

\begin{table}[t]
\centering
\small
\def~{\hphantom{0}}
\begin{tabular}{cccccccccc}
  &  & \multicolumn{4}{c}{$b=0$ (Well-specified case)} & \multicolumn{4}{c}{$b=-0.6$ (Misspecified case)} \\
  Criteria & $a\backslash$Estimator &  $\bar{v}_{\textsc{para}}$ & $\tilde{v}_{\textsc{un}}$ & $\hat{v}_{\textsc{pw}}$ & $\hat{v}_{\textsc{tail}}$ &  $\bar{v}_{\textsc{para}}$ & $\tilde{v}_{\textsc{un}}$ & $\hat{v}_{\textsc{pw}}$ & $\hat{v}_{\textsc{tail}}$ \\[0.5ex]
   $100\MSE(\cdot)/v^2$ 
  & $-0.8$ & $38.94$ & $14.37$ & $~1.85$ & $~4.90$ & $~60.59$ & $798.34$ & $127.99$ & $1235.51$ \\  
  & $-0.4$ & $~4.57$ & $~6.11$ & $~1.97$ & $~3.56$ & $416.08$ & $248.31$ & $168.62$ & $~103.82$ \\  
  & $-0.2$ & $~3.01$ & $~4.11$ & $~2.43$ & $~3.59$ & $426.39$ & $169.48$ & $195.73$ & $~~54.80$ \\  
  & $~0.2$ & $~2.97$ & $~4.35$ & $~4.01$ & $~4.28$ & $167.63$ & $~81.81$ & $169.69$ & $~~27.51$ \\  
  & $~0.4$ & $~5.27$ & $~7.45$ & $~5.80$ & $~6.24$ & $~49.35$ & $~47.59$ & $~51.20$ & $~~18.10$ \\  
  & $~0.8$ & $44.53$ & $20.31$ & $17.58$ & $18.79$ & $~32.28$ & $~27.13$ & $~30.89$ & $~~28.37$ \\[0.5ex]
    $10\{ \E(\cdot)/v - 1 \}$
  & $-0.8$ & $-6.16$ & $~1.73$ & $-0.09$ & $~0.01$ & $~6.47$ & $27.16$ & $10.77$ & $32.49$ \\  
  & $-0.4$ & $-1.47$ & $~1.06$ & $-0.03$ & $-0.12$ & $19.57$ & $15.20$ & $12.36$ & $~9.13$ \\  
  & $-0.2$ & $-0.33$ & $~0.79$ & $-0.02$ & $-0.23$ & $19.92$ & $12.47$ & $13.23$ & $~6.17$ \\  
  & $~0.2$ & $-0.47$ & $-1.17$ & $-0.09$ & $-0.39$ & $12.37$ & $~8.36$ & $12.37$ & $~3.75$ \\  
  & $~0.4$ & $-1.72$ & $-1.70$ & $-0.14$ & $-0.38$ & $~6.45$ & $~6.11$ & $~6.65$ & $~2.85$ \\  
  & $~0.8$ & $-6.62$ & $-3.28$ & $-0.68$ & $-1.06$ & $-5.60$ & $-4.94$ & $-5.44$ & $-5.15$ \\[0.5ex]  
\end{tabular}
\caption{
{The standardized mean-squared error $100\MSE(\cdot)/v^2$
and standardized bias $10\{ \E(\cdot)/v - 1 \}$
of 
the parametric estimator $\bar{v}_{\textsc{para}}$,
the unadjusted estimator $\tilde{v}_{\textsc{un}}$,
the standard prewhitened estimator $\hat{v}_{\textsc{pw}}$, and 
the proposed tail postcolored estimator $\hat{v}_{\textsc{tail}}$
but with  optimal bandwidth estimated by the parametric plug-in method; see Example \ref{exp:simu_ar1}.
The sample size is $n=200$.
Each experiment is replicated $5000$ times.}}
\label{tab:exp1_estimatedBW_n200}
\end{table}
}

{

\subsection{Simulation results for testing $\mu=0$}\label{sec:unitroottesting}
Suppose that we are interested in testing the mean of a time series is zero, i.e.,  
testing $H_0: \mu = 0$ against $H_1: \mu\neq 0$. 
Let the data be generated from $X_i = \mu + Z_i$ ($i=1, \ldots, n$), 
where $Z_{i} = \phi Z_{i-1} + \varepsilon_{i}$ and $\varepsilon_{i} \sim \Normal(0,\sigma^2)$ independently.
and $\sigma$ is chosen such that the long-run variance of $\bar{Z}_n$ is one, i.e., $v=1$.  
We are interested in knowing the testing performance when
the autoregressive coefficient $\phi$ is close to $1$ and the process is close to having a unit root. 
In particular, we consider $\phi\in\{0, 0.7, 0.8, 0.9, 0.95, 0.97, 0.99\}$.

We consider 
the Wald's test statistic and self-normalized test statistic: 
\begin{align}\label{eqt:testStat_meanTest}
	T_n(\hat{v}_{\textsc{lrv}}) =n\bar{X}_n^2/\hat{v}_{\textsc{lrv}}
	\qquad \text{and} \qquad
	T^{\textsc{sn}}_n = n\bar{X}_n^2/ W_n
\end{align}
where 
$\hat{v}_{\textsc{lrv}}$ is an estimator of $v$ and  
$W_n=n^{-2}\sum_{i=1}^n \left\{\sum_{j=1}^i (X_j - \bar{X}_n) \right\}^2$ is a self-normalizer \citep{lobato2001,shaoxf2010}.
Note that $2W_n = \tilde{v}(n, K_{\Bart})$ is a special case of the normalizer used in the fixed-$b$ approach in \cite{KieferVogelsang2002Econometrica,KieferVogelsang2002}, i.e., the kernel estimator $\tilde{v}(\ell, K)$ 
with $\ell=n$ and $K=K_{\Bart}$. 
The estimator $\hat{v}_{\textsc{lrv}}$ used in (\ref{eqt:testStat_meanTest}) is chosen from the following candidates: 
\begin{enumerate}
	\item[(a)] an $\AR(1)$-based parametric estimator $\bar{v}_{\textsc{para}}$;
	\item[(b)] the unadjusted estimator $\tilde{v}_{\textsc{un}}$ 
				with the parametric plug-in bandwidth selector based on an $\AR(1)$ model;
	\item[(c)] the lugsail lag window estimator $\hat{v}_{\textsc{lug}}$ in \cite{vats2022lugsail}; 
	\item[(d)] \cite{andrews1992}’s \textsc{ar}$(1)$-prewhitened estimator $\hat{v}_{\textsc{pw}}$
				with the parametric plug-in bandwidth selector in (\ref{eqt:estimator_list_sim});
	\item[(e)] the proposed \textsc{ar}$(1)$-tail postcolored estimator $\hat{v}_{\textsc{tail}}$
				with the parametric plug-in bandwidth selector in Remark \ref{rem:est_opt_bw}.  
\end{enumerate}

\begin{table}[t]
\centering
\begin{tabular}{ccccccccc} 
True mean&	$\AR$-parameter, $\phi$ & $T_n^{\textsc{para}}$ & $T_n^{\textsc{un}}$ & $T_n^{\textsc{lug}}$ & $T_n^{\textsc{pw}}$ & $T_n^{\textsc{tail}}$ & $T_n^{\textsc{sn}}$  \\[0.5ex] 
$\mu=0$	&	$0$	&	$ 4.7$	&	$ 4.8$	&	$ 4.9$	&	$ 4.7$	&	$ 4.7$	&	$ 4.9$	\\ 
(Null)	&$0.7$	&	$ 6.5$	&	$10.6$	&	$ 6.1$	&	$ 6.4$	&	$ 4.4$	&	$ 5.6$	\\ 
	&$0.8$	&	$ 8.5$	&	$13.4$	&	$ 7.8$	&	$ 8.4$	&	$ 5.9$	&	$ 6.6$	\\ 
	&$0.9$	&	$11.5$	&	$18.2$	&	$11.4$	&	$11.2$	&	$ 8.3$	&	$ 9.3$	\\ 
	&$0.95$	&	$17.5$	&	$26.7$	&	$18.8$	&	$17.3$	&	$12.7$	&	$13.0$	\\ 
	&$0.97$	&	$24.6$	&	$35.2$	&	$28.7$	&	$24.4$	&	$17.8$	&	$18.0$	\\ 
	&$0.99$	&	$45.6$	&	$54.6$	&	$52.7$	&	$44.9$	&	$35.2$	&	$34.0$	\\[0.5ex]
$\mu=0.2$	&$0$	&	$80.4$	&	$81.2$	&	$81.1$	&	$80.4$	&	$76.0$	&	$61.1$	\\ 
 (Alternative)	&$0.7$	&	$80.8$	&	$87.3$	&	$77.9$	&	$80.7$	&	$75.4$	&	$64.0$	\\ 
	&$0.8$	&	$81.0$	&	$88.0$	&	$78.4$	&	$80.6$	&	$75.8$	&	$64.6$	\\ 
	&$0.9$	&	$83.6$	&	$90.7$	&	$83.8$	&	$83.4$	&	$77.7$	&	$69.7$	\\ 
	&$0.95$	&	$87.5$	&	$94.3$	&	$92.1$	&	$87.4$	&	$81.1$	&	$76.5$	\\ 
	&$0.97$	&	$91.8$	&	$96.7$	&	$96.1$	&	$91.7$	&	$85.8$	&	$83.3$	\\ 
	&$0.99$	&	$99.5$	&	$99.9$	&	$99.9$	&	$99.5$	&	$96.7$	&	$98.0$	\\ 
\end{tabular}
\caption{The size (\%) and power (\%) of the mean tests in Section \ref{sec:unitroottesting}
under $\mu=0$ and $\mu=0.2$, respectively, 
where 
$T_n^{\textsc{para}}$, $T_n^{\textsc{un}}$, $T_n^{\textsc{lug}}$, $T_n^{\textsc{pw}}$, and $T_n^{\textsc{tail}}$ are
Wald's tests, while $T_n^{\textsc{sn}}$ is a self-normalized test. 
The $\AR$ parameter is $\phi\in\{0, 0.7, 0.8, 0.9, 0.95, 0.97, 0.99\}$. 
The nominal size is $\alpha=5\%$.
  The sample size $n=200$.}\label{tab:unitroottest}
\end{table}
Consequently, 
six different tests are considered:
\begin{gather*}
	T_n^{\textsc{para}} = T_n(\bar{v}_{\textsc{para}}), \quad
	T_n^{\textsc{un}} = T_n(\tilde{v}_{\textsc{un}}),\quad
	T_n^{\textsc{lug}} = T_n(\hat{v}_{\textsc{lug}}), \\
	T_n^{\textsc{pw}} = T_n(\hat{v}_{\textsc{pw}}), \quad
	T_n^{\textsc{tail}} = T_n(\hat{v}_{\textsc{tail}}),\quad
	T^{\textsc{sn}}_n.
\end{gather*}
The nominal size is set as $\alpha = 5\%$. 
We reject the $H_0$ based on $T_n^{\bullet}$ if $T_n^{\bullet} > c_{\bullet}$ 
for $\bullet\in\{\textsc{para}, \textsc{un}, \textsc{lug}, \textsc{para}, \textsc{pw}, \textsc{tail}\}$, 
where the finite-sample critical value $c_{\bullet}$ is found by 
simulation with independent data $X_1, \ldots, X_n \sim \Normal(0,1)$ based on 50,000 replications.

The sample size is $n=200$. 
The size and the power are computed when $\mu=0$ and $\mu=0.2$, respectively; 
see Table~\ref{tab:unitroottest}.  
Under the null hypothesis $H_0$, 
all tests demonstrate satisfactory size control when the data are uncorrelated (i.e., $\phi=0$). 
When $\phi$ increases, the size of $T_n^{\textsc{un}}$ is inflated most quickly. 
It shows that the adjustments employed in  
$T_n^{\textsc{un}}$, $T_n^{\textsc{lug}}$, $T_n^{\textsc{pw}}$, $T_n^{\textsc{tail}}$, and $T_n^{\textsc{sn}}$
are effective. 
However, when $\phi$ further increases to the near unit root situation (i.e., $\phi\geq 0.95$), 
all tests show various degrees of inflation in size. 
Among them, the proposed $T_n^{\textsc{tail}}$ and the self-normalized test $T_n^{\textsc{sn}}$
have the best performance and they perform similarly. 
In this example, the near unit root issue has a less impact on the proposed tail postcoloring technique 
compared with the standard prewhitening technique. 

Under the alternative hypothesis $H_1$, 
all Wald's tests (i.e., $T_n^{\textsc{para}}$, $T_n^{\textsc{un}}$, $T_n^{\textsc{lug}}$, $T_n^{\textsc{pw}}$, and $T_n^{\textsc{tail}}$)
show similar power when the data are uncorrelated, 
while the self-normalized test  $T_n^{\textsc{sn}}$ has an obviously lower power. 
It shows that self-normalization, or more boardly fixed-$b$ approach, has a negative impact on power. 
When $\phi$ increases, the proposed $T_n^{\textsc{tail}}$ still shows higher power than 
the self-normalized test $T_n^{\textsc{sn}}$ except the case of $\phi=0.99$.
In a nutshell, 
the proposed tail postcolored estimator has an advantage in mean testing 
when the process is close to having a unit root.

The sample size is $n = 200$. 
The size and the power are computed when $\mu = 0$ and $\mu = 0.2$, respectively; 
see Table~\ref{tab:unitroottest}. 
Under the null hypothesis $H_0$, 
all tests demonstrate satisfactory size control when the data are uncorrelated (i.e., $\phi = 0$). 
When $\phi$ increases, the size of $T_n^{\textsc{un}}$ is inflated most quickly. 
This indicates that the adjustments employed in 
$T_n^{\textsc{un}}$, $T_n^{\textsc{lug}}$, $T_n^{\textsc{pw}}$, $T_n^{\textsc{tail}}$, and $T_n^{\textsc{sn}}$ are effective. 
However, when $\phi$ further increases to a near unit root situation (i.e., $\phi \geq 0.95$), 
all tests show various degrees of inflation in size. 
Among them, the proposed $T_n^{\textsc{tail}}$ and the self-normalized test $T_n^{\textsc{sn}}$ 
demonstrate the best performance and perform similarly. 
We observe that 
the near unit root issue has a lesser impact 
on the proposed tail postcoloring technique compared with the standard prewhitening technique.

Under the alternative hypothesis $H_1$, 
all Wald's tests 
(i.e., $T_n^{\textsc{para}}$, $T_n^{\textsc{un}}$, $T_n^{\textsc{lug}}$, $T_n^{\textsc{pw}}$, and $T_n^{\textsc{tail}}$) 
show similar power when the data are uncorrelated, 
while the self-normalized test $T_n^{\textsc{sn}}$ has a noticeably lower power. 
This indicates that self-normalization, or more broadly a fixed-$b$ approach, 
negatively impacts power. 
When $\phi$ increases, 
the proposed $T_n^{\textsc{tail}}$ still shows higher power than the self-normalized test $T_n^{\textsc{sn}}$, 
except in the case of $\phi = 0.99$, where all tests exhibit seriously inflated size. 

In summary, the proposed tail postcolored estimator has an advantage in mean testing when the process is close to having a unit root
in terms of size control and power performance.
For further discussion of this type of testing problem, 
we refer readers to the discussion in \cite{perron2011irrelevance}.  

\subsection{Simulation results for nonlinear time series}\label{sec:sim_nonL}

In this section, we consider two other models for generating the data and compare the estimators.

\begin{example}[Threshold autoregressive model]\label{exp:tar}\leavevmode
Consider the threshold autoregressive (TAR) model \citep{tong1978threshold}:
$X_{i}=\rho_{1}\max(X_{i-1},0)+\rho_{2}\min(0,X_{i-1})+\varepsilon_{i}$,
where $\varepsilon_i$'s follow $\Normal(0,1)$ independently
and $\rho_1, \rho_2 \in(-1,1)$.
For each observation, it switches between two AR$(1)$ models
with AR parameters $\rho_{1}$ and $\rho_{2}$. 
By Theorem 1 in \cite{wu2005},
if $\max{(\rho_{1},\rho_{2})}<1$ and $\E(\varepsilon_t^4)<\infty$, 
the series is stationary.
Throughout this subsection, we set $\rho_2=0.5$.
The results can be found in Tabel~\ref{tab:tar_simu_results}.

We can observe that in terms of 
the mean-squared error,
our estimator is the best in almost all models.
Regarding bias correction,
our estimator is more conservative compared with the standard prewhitening and the lugsail kernel estimators.
The downward bias is reduced in magnitude without changing the sign.

\begin{table}[t]
\centering
\small
\def~{\hphantom{0}}
\begin{tabular}{cccccccccc}
  Criteria & $\rho_1\backslash$ Estimator  & $\tilde{v}_{\textsc{un}}$ & $\hat{v}_{\textsc{pw}}$ & $\tilde{v}_{\textsc{lug}}$ & $\bar{v}_{\textsc{para}}$ & $\hat{v}_{\textsc{tail}}$ \\[0.5ex]
   $100\MSE(\cdot)/v^2$
  & $~~0$ & $~6.86$ & $~9.46$ & $~9.43$ & $16.02$ & $~5.81$ \\  
  & $0.1$ & $~7.16$ & $~8.24$ & $10.64$ & $11.98$ & $~5.78$ \\  
  & $0.2$ & $~7.76$ & $~7.78$ & $13.49$ & $~9.68$ & $~6.31$ \\  
  & $0.3$ & $~8.36$ & $~7.07$ & $16.53$ & $~7.78$ & $~6.57$ \\  
  & $0.4$ & $~8.81$ & $~7.09$ & $18.40$ & $~7.25$ & $~7.03$ \\  
  & $0.5$ & $10.03$ & $~7.16$ & $17.57$ & $~6.99$ & $~7.07$ \\  
  & $0.6$ & $12.14$ & $~9.74$ & $18.72$ & $~9.90$ & $~8.79$ \\  
  & $0.7$ & $15.59$ & $15.27$ & $19.06$ & $17.65$ & $10.41$ \\  
  & $0.8$ & $23.37$ & $43.08$ & $27.76$ & $55.74$ & $16.53$ \\[0.5ex]
    $100\{ \E(\cdot)/v - 1 \}$
  & $~~0$ & $-17.44$ & $~9.77$ & $-15.94$ & $20.58$ & $-4.52$ \\  
  & $0.1$ & $-18.84$ & $~7.14$ & $-15.12$ & $14.17$ & $-4.66$ \\  
  & $0.2$ & $-20.20$ & $~4.81$ & $-12.17$ & $~9.11$ & $-4.51$ \\  
  & $0.3$ & $-22.22$ & $~1.91$ & $~-7.67$ & $~4.14$ & $-4.97$ \\  
  & $0.4$ & $-23.33$ & $~1.58$ & $~~1.01$ & $~2.75$ & $-3.62$ \\  
  & $0.5$ & $-26.47$ & $~0.37$ & $~~6.31$ & $~0.93$ & $-4.62$ \\  
  & $0.6$ & $-29.99$ & $~2.52$ & $~~9.24$ & $~3.53$ & $-4.83$ \\  
  & $0.7$ & $-35.67$ & $~7.07$ & $~~8.19$ & $10.37$ & $-6.44$ \\  
  & $0.8$ & $-45.48$ & $21.27$ & $~~4.51$ & $29.43$ & $-9.67$ \\ [0.5ex]
\end{tabular}
\caption{
 The standardized mean-squared error $100\MSE(\cdot)/v^2$
and standardized bias $10\{ \E(\cdot)/v - 1 \}$
of 
the unprewhitened estimator $\tilde{v}_{\textsc{un}}$,
the classical prewhitening estimator $\hat{v}_{\textsc{pw}}$,
the lugsail kernel estimator $\tilde{v}_{\textsc{lug}}$ in \cite{vats2022lugsail},
the \AR$(1)$ parametric estimator $\bar{v}_{\textsc{para}}$,
and the proposed tail postcolored estimator $\hat{v}_{\textsc{tail}}$
with optimal bandwidth estimated by the nonparametric plug-in method.
The model parameters are $\rho_{1} \in\{ 0,0.1,0.2, \ldots,0.8\}$
and $\rho_{2}=0.5$.
The sample size is $n=200$.
}
\label{tab:tar_simu_results}
\end{table}

\end{example}

\begin{example}[Geometric \AR$(1)$ model]\label{exp:geo_ar1}
  Consider the geometric \AR$(1)$ model
  specified as $\log X_i = \varphi \log X_{i-1} + \varepsilon_{t}$
  where $\varepsilon_i$'s follow $\Normal(0,1)$ independently.
  The true long-run variance is computed empirically.
  Estimators compared in (\ref{exp:tar}) are computed again.
  The performance results are displayed in Table~\ref{tab:geoar}.
  It can be observed that especially for the positively autocorrelated models,
  our proposal exhibits satisfactory MSE and bias correction results.

  \begin{table}[t]
\centering
\small
\def~{\hphantom{0}}
\begin{tabular}{ccrrrrr}
  Criteria & $\varphi\backslash$ Estimator  & $\tilde{v}_{\textsc{un}}$ & $\hat{v}_{\textsc{pw}}$ & $\tilde{v}_{\textsc{lug}}$ & $\bar{v}_{\textsc{para}}$ & $\hat{v}_{\textsc{tail}}$ \\[0.5ex]
   $100\MSE(\cdot)/v^2$ 
  & $-0.7$ & $40.65$ & $~39.22$ & $42.91$ & $~~19.69$ & $42.56$ \\  
  & $-0.5$ & $47.85$ & $~44.58$ & $46.70$ & $~~~8.06$ & $51.95$ \\  
  & $-0.3$ & $49.32$ & $~42.51$ & $45.55$ & $~~12.37$ & $52.46$ \\  
  & $~0.3$ & $~3.73$ & $~39.54$ & $~5.90$ & $~598.62$ & $~2.20$ \\  
  & $~0.5$ & $~5.27$ & $102.00$ & $~7.11$ & $~348.12$ & $~2.85$ \\  
  & $~0.7$ & $~9.49$ & $796.90$ & $~9.43$ & $1427.85$ & $~4.19$ \\ [0.5ex]
    $100\{ \E(\cdot)/v - 1 \}$
  & $-0.7$ & $-63.45$ & $-62.39$ & $-65.33$ & $-42.29$ & $-64.81$ \\ 
  & $-0.5$ & $-69.00$ & $-66.59$ & $-68.20$ & $-16.44$ & $-71.94$ \\ 
  & $-0.3$ & $-70.02$ & $-64.81$ & $-67.34$ & $~~8.82$ & $-72.27$ \\ 
  & $~0.3$ & $-16.98$ & $~45.32$ & $-19.79$ & $220.60$ & $~-7.40$ \\ 
  & $~0.5$ & $-20.69$ & $~53.70$ & $-15.26$ & $152.14$ & $~-7.40$ \\ 
  & $~0.7$ & $-28.07$ & $~45.28$ & $~-2.25$ & $~83.25$ & $~-9.29$ \\ [0.5ex]
\end{tabular}
\caption{
 The standardized mean-squared error $100\MSE(\cdot)/v^2$
and standardized bias $10\{ \E(\cdot)/v - 1 \}$
of 
the unprewhitened estimator $\tilde{v}_{\textsc{un}}$,
the classical prewhitening estimator $\hat{v}_{\textsc{pw}}$,
the lugsail kernel estimator $\tilde{v}_{\textsc{lug}}$ in \cite{vats2022lugsail},
the \AR$(1)$ parametric estimator $\bar{v}_{\textsc{para}}$,
and the proposed tail postcolored estimator $\hat{v}_{\textsc{tail}}$
with optimal bandwidth estimated by the nonparametric plug-in method.
The geometric \AR$(1)$ has parameter $\varphi \in \{\pm 0.3, \pm 0.5, \pm 0.7\}$
The sample size is $n=200$.
}\label{tab:geoar}
\end{table}
\end{example}

}

\subsection{Simulation results on multivariate time series}\label{sec:simulation_multi}
In this subsection, we compare the following 
long-run covariance matrix estimators, 
which are also used in Section~\ref{subsec:HAC}.

\begin{itemize}
\item[(a)]\label{est:noPW} The unadjusted Bartlett kernel estimator  
      $$
        \tilde{V}_{\textsc{un}} 
          = \tilde{V} (\hat{\ell}_{\textsc{un}};  K_{\Bart}; X_{1:n})
          = \sum_{|k|\leq \hat{\ell}_{\textsc{un}}} K_{\Bart}(k/\hat{\ell}_{\textsc{un}}) \tilde{\Gamma}_k,
    \quad \text{where} \quad
    \hat{\ell}_{\textsc{un}}
        = \left\lceil \left[ \frac{3n}{2} \frac{\sum_{j=1}^{d} \{v_1(\bar{\phi}_{j}, \bar{\sigma}^2_{zj})\}^2 }{ \sum_{j=1}^{d} \{v_0(\bar{\phi}_{j}, \bar{\sigma}^2_{zj})\}^2} \right]^{1/3} \right\rceil
      $$ 
      is \cite{andrews1991}'s parametric plug-in bandwidth estimator using $d$ univariate \textsc{ar}$(1)$ models; and 
      \begin{gather*}
        \bar{\phi}_j = \frac{\tilde{\Gamma}^{(j,j)}_1}{\tilde{\Gamma}^{(j,j)}_0} ,\qquad
    v_1(\bar{\phi}_{j}, \bar{\sigma}^2_{zj}) = \frac{2\bar{\phi}_j\bar{\sigma}^2_{zj}}{ (1-\bar{\phi}_j)^3(1+\bar{\phi}_j)^2 },\qquad
    v_0(\bar{\phi}_{j}, \bar{\sigma}^2_{zj}) = \frac{\bar{\sigma}^2_{zj} }{ (1-\bar{\phi}_j)^2} ,\\
    \bar{\sigma}^2_{zj} = \frac{1}{n-2}\sum_{i=2}^n \left( Z^{*(j)}_i - \bar{Z}^{*(j)}\right)^2 , \qquad
    \bar{Z}^{*(j)} = \frac{1}{n-1}\sum_{i=1}^n Z_i^{*(j)}, \qquad 
    Z^{*(j)}_{i} = {X^{(j)}_i -  \bar{\phi}_j X^{(j)}_{i-1}} .
     \end{gather*}
     When $d=1$, the estimator $\tilde{V}_{\textsc{un}}$ reduces to $\tilde{v}_{\textsc{un}}$ defined in Example \ref{exp:simu_ar1}
     with the optimal bandwidth $\ell_{\textsc{un}}$ estimated by $\hat{\ell}_{\textsc{un}}$.

\item[(b)] \cite{andrews1992}'s \textsc{var}$(1)$ prewhitened version of (a): 
      $$
        \hat{V}_{\textsc{pw}} 
          =  \left(I_d - \bar{\Phi}^{\T} \right)^{-1} 
        \tilde{V} \left(\hat{\ell}_{\textsc{pw}};  K_{\Bart}; Z_{2:n} \right) 
        \left(I_d - \bar{\Phi}^{\T} \right)^{-1},
    \;\; \text{where} \;\;
    \hat{\ell}_{\textsc{pw}}
        = \left\lceil \left[ \frac{3n}{2} \frac{\sum_{j=1}^{d} \{v_1(\bar{\phi}_{zj} , \bar{\sigma}^2_{zzj})\}^2 }
                  { \sum_{j=1}^{d} \{v_0(\bar{\phi}_{zj} , \bar{\sigma}^2_{zzj} )\}^2} \right]^{1/3} \right\rceil
      $$
    and 
    \begin{gather} 
        \bar{\Phi} =  \tilde{\Gamma}_0^{-1}\tilde{\Gamma}_1  , \qquad
        Z_i = X_i - X_{i-1} \bar{\Phi}^{\T}; \label{eqt:defPhiBar}\\
         \bar{\phi}_{zj} = \frac{\sum_{i=3}^n (Z_i^{*(j)} - \bar{Z}^{*(j)})(Z_{i-1}^{*(j)} - \bar{Z}^{*(j)})}{\sum_{i=3}^n (Z_i^{*(j)} - \bar{Z}^{*(j)})^2} , \quad 
        \bar{\sigma}^2_{zzj} = \frac{1}{n-3}\sum_{i=3}^n \left( Z^{*(j)}_i - \bar{\phi}_{zj} Z^{*(j)}_{i-1} \right)^2. \nonumber
      \end{gather}
    The bandwidth $\hat{\ell}_{\textsc{pw}}$
      is computed similarly as $\hat{\ell}_{\textsc{un}}$
      except that the data $\{X_i\}_{i=1}^n$ is replaced with $\{Z_i\}_{i=2}^{n}$.
      We remark that $Z_i$ is not the same as $(Z_i^{*(1)}, \ldots, Z_i^{*(d)})^{\T}$
      as $Z_i$ is the residual under $\textsc{var}(1)$ model whereas 
      $(Z_i^{*(1)}, \ldots, Z_i^{*(d)})^{\T}$ is the residual under $d$ $\textsc{ar}(1)$ models.
      When $d=1$, the estimator $\hat{V}_{\textsc{pw}}$ reduces to $\hat{v}_{\textsc{pw}}$ defined in Example \ref{exp:simu_ar1}
     with the optimal bandwidth $\ell_{\textsc{pw}}$ estimated by $\hat{\ell}_{\textsc{pw}}$.

\item[(c)] \cite{Flegal:2021uk}'s lugsail kernel estimator
    \[
      \tilde{V}_{\textsc{lugs}} = \tilde{V}(\hat{\ell}_{\textsc{l}}; K_{\textsc{l}}; X_{1:n})
      \quad \text{where} \quad 
      K_{\textsc{l}}(t) = ({1-c})^{-1}K_{\Bart}(t) - c({1-c})^{-1}K_{\Bart}(rt)
    \]
    with their default parameters $c=1/2$ and $r=3$. 
        It is computed with the function \texttt{mcse.multi} in the R package \texttt{mcmcse}.

\item[(d)] The proposed \textsc{var}$(1)$ tail precolored estimator:
  \[
    \hat{V}_{\textsc{tail}} = \hat{V}(\hat{\ell}_{\textsc{tail}};K_{\Bart};\bar{\Phi}), 
    \quad \text{where} \quad
    \hat{\ell}_{\textsc{tail}} = \left\lceil {\left( \frac{3 \hat{\xi}_p^2 n}{2}\right)}^{1/3} \right\rceil ,
    \quad 
    \hat{\xi}_p  = \frac{\sum_{j=1}^{d} \tilde{V}^{(j,j)}_{1,\#}}{\sum_{j=1}^{d} \tilde{V}_{0,\#}^{(j,j)}} - \frac{\sum_{j=1}^{d} {V^{(j,j)}_1(\bar{\Phi})}}{ \sum_{j=1}^{d} V^{(j,j)}(\bar{\Phi})},
  \]
  the estimator $\bar{\Phi}$ is defined as in (\ref{eqt:defPhiBar}), and 
  \begin{gather}
    V_1(\bar{\Phi},\bar{\Sigma}_z) 
      = {\Gamma}_1(\bar{\Phi},\bar{\Sigma}_z) (\bar{\Psi}\bar{\Phi}^{\T}\bar{\Psi}+ \bar{\Psi}),  \qquad
      V_0(\bar{\Phi},\bar{\Sigma}_z) 
      = \Gamma_{0}(\bar{\Phi},\bar{\Sigma}_z) +  {\Gamma}_1(\bar{\Phi},\bar{\Sigma}_z) \bar{\Psi} +  \{ {\Gamma}_1(\bar{\Phi},\bar{\Sigma}_z)\bar{\Psi} \}^{\T}, \nonumber\\
    \vect\{\Gamma_0(\bar{\Phi},\bar{\Sigma}_z) \} 
      = (I_{d^2} - \bar{\Phi}\otimes \bar{\Phi})^{-1} \vect(\bar{\Sigma}_z),  \qquad
      {\Gamma}_1(\bar{\Phi},\bar{\Sigma}_z) 
      = \Gamma_0(\bar{\Phi},\bar{\Sigma}_z) \bar{\Phi}^{\T}, \label{eqt:Gamma01}\\
      \bar{\Psi} = ({I}_d -\bar{\Phi}^{\T} )^{-1},  \qquad
    \bar{\Sigma}_z = \frac{1}{n-1}\sum_{i=2}^n (Z_i-\bar{Z}) (Z_i-\bar{Z})^{\T} , \qquad 
    \bar{Z}= \frac{1}{n-1}\sum_{i=2}^nZ_i ,  \nonumber\\
    \tilde{V}^{(j,j)}_{r,\#} = \sum_{|k|\leq \ell^{\#}_{r}}  K_{\Bart} (k/\ell_{r}^{\#})|k|^{r} \tilde{\Gamma}^{(j,j)}_k , \qquad \ell^{\#}_{r} = \lceil{2n^{{1}/({2r+3})}}\rceil \qquad (r=0,1). \nonumber
  \end{gather}
  In (\ref{eqt:Gamma01}), $\vect(\cdot)$ is the vectorization operation of a matrix and $I_r$ is a $r\times r$ identity matrix.
  When $d=1$, the estimator $\hat{V}_{\textsc{tail}}$ reduces to $\hat{v}_{\textsc{tail}}$ defined in Example \ref{exp:simu_ar1}
     with the optimal bandwidth $\ell_{\textsc{tail}}$ estimated by $\hat{\ell}_{\textsc{tail}}$.
\end{itemize}
The eigenvalue adjustment in Section~\ref{sec:MVextension} is applied on the above estimators if they are not positive-definite. 
We remark that, in the simulation experiments presented in Section~\ref{subsec:HAC}, 
the estimators 
$\tilde{V}_{\textsc{un}}$, $\hat{V}_{\textsc{pw}}$, $\tilde{V}_{\textsc{lugs}}$, and $\hat{V}_{\textsc{tail}}$ 
are computed on the time series $\{W_i \in \mathbb{R}^d\}$ as the input data, i.e., setting $X_i = W_i$ ($i=1, \ldots,n$). 

We present simulation results for the following multivariate time series models:

\begin{example}[\textsc{varma}$(1,1)$]\label{exp:varma}
  Consider the model ${X}_i = {\Phi} {X}_{i-1} + {\Upsilon} {\varepsilon}_{i-1} + {\varepsilon}_{i}$,
  where ${\varepsilon}_{i} \sim \Normal_2({0}, {\Sigma}_{\varepsilon})$ independently and 
  ${\Phi, \Upsilon},{\Sigma}_{\varepsilon} \in \mathbb{R}_{2\times 2}$. 
  The true long-run variance matrix for this model can be computed as:
  \begin{eqnarray*}
    {V} 
      &=& {\Gamma}_{0} +  {\Gamma}_{1} {\Psi} +  \{ {\Gamma}_{1} {\Psi} \}^{\T},\\
    \vect({{\Gamma}_0}) &=& ({I}_4 - {\Psi} \otimes {\Psi})^{-1} 
    \vect\{{\Sigma}_{\varepsilon} + ({\Phi} + {\Upsilon}) {\Sigma}_{\varepsilon}({\Phi} + {\Upsilon})^\T 
    - {\Phi} {\Sigma}_{\varepsilon} {\Phi} ^\T \},\\
    {\Psi} &=& ({I}_4 -{\Phi}^{\T} )^{-1}.
  \end{eqnarray*}

  We consider a series of \textsc{varma}$(1,1)$ models with different correlation structure.
  Simulation results are displayed in Table~\ref{tab:VARMA1}.
  Generally, the proposed estimator $\hat{V}_{\textsc{tail}}$ has the lowest weighted mean-squared error in most cases. 
  Besides, $\hat{V}_{\textsc{tail}}$ does not reveal explosive weighted mean-squared error when the magnitude of the diagonal entries of $\Phi$ are close to one. 
  Hence, tail prewhitening is more reliable than the standard prewhitening especially when there exists strong autocorrelation. 
\end{example}

\begin{table}[t]
  \centering
  \small
  \def~{\hphantom{0}}
  \begin{tabular}{l cccc cccc}
                & \multicolumn{4}{c}{$n=200$} & \multicolumn{4}{c}{$n=1000$} \\
    $\vect({\Phi})$$\backslash$Estimator & $\tilde{V}_{\textsc{un}}$ & $\hat{V}_{\textsc{pw}}$ & $\tilde{V}_{\textsc{lugs}}$ & $\hat{V}_{\textsc{tail}}$
      & $\tilde{V}_{\textsc{un}}$ & $\hat{V}_{\textsc{pw}}$ & $\tilde{V}_{\textsc{lugs}}$ & $\hat{V}_{\textsc{tail}}$ \\
  $(0.7, 0, 0, 0.7)^{\T}$     & $~69.86$  & $~54.34$  & $~58.98$  & $ 51.16$  & $ 24.76$  & $~10.25$  & $ 24.16$  & $ 12.61$  \\
  $(0.5, 0, 0, 0.5)^{\T}$     & $~~7.13$  & $~~4.47$  & $~10.39$  & $ ~4.34$  & $ ~2.62$  & $~~1.27$  & $ ~2.51$  & $ ~1.10$  \\
  $(0.3, 0, 0, 0.3)^{\T}$     & $~~3.34$  & $~~1.24$  & $~~6.38$  & $ ~1.08$  & $ ~2.89$  & $~0.41$ & $ ~6.04$  & $ ~0.32$  \\
  $(0.5,0.3,-0.3,-0.5)^{\T}$    & $~~2.37$  & $~~5.24$  & $~~3.56$  & $ ~3.15$  & $ ~0.69$  & $~~3.26$  & $ ~0.72$  & $ ~0.90$  \\
  $(0.7,0.3,-0.3,-0.7)^{\T}$    & $~10.26$  & $~52.37$  & $~20.43$  & $ 32.25$  & $ ~3.10$  & $~39.90$  & $ ~3.37$  & $ 13.83$  \\
  $(0.9,0.3,-0.3,-0.9)^{\T}$    & $110.38$  & $226.05$  & $176.33$  & $ 79.26$  & $ 40.92$  & $138.06$  & $ 61.68$  & $ 24.71$  \\
  \end{tabular}
  \caption{Simulation results for data generated from 
  Example \ref{exp:varma}. 
  The weighted mean-squared error is computed with $W=I_4$.
  Let $\vect({\Theta}) = (0.5,0.7,-0.7,0.5)^\T$ and $\vect({\Sigma}_{\epsilon}) = (2,0,0,1)^{\T}$ for all models.
  The models vary with the value of ${\Phi}$.}\label{tab:VARMA1}
\end{table} 

\subsection{Simulation results under another two-model example}

\begin{example}[Two-model example]\label{eg:twoModel_AR2}
    Consider two prewhitening models:
    \begin{itemize}
		\item[(i)] $X_i = \phi_1 X_{i-1} + 0.5 X_{i-2} + \varepsilon_i$ with independent $\varepsilon_i\sim \Normal(0,\sigma^2)$; and
    	\item[(ii)] $X_i = \phi_2 X_{i-1} - 0.5 X_{i-2} + \varepsilon_i$ with independent $\varepsilon_i\sim \Normal(0,\varsigma^2)$.
	\end{itemize}
    Through minimizing the sum of squared residuals,
    $\phi_1$ and $\phi_2$ can be consistently estimated by $\bar{\phi}_1 = 0.5\bar{\phi}$ 
    and $\bar{\phi}_2 = 1.5\bar{\phi}$, respectively, 
    where $\bar{\phi} = \tilde{\gamma}_1/\tilde{\gamma}_0$.
    Consider the data generated from the \textsc{ar}$(2)$ model,
    i.e., 
    $X_i = a_1 X_{i-1} + a_2 X_{i-2}+ \varepsilon_i$,
    where $\varepsilon_i\sim\Normal(0,1)$ independently.
    The estimator defined in (\ref{eqt:submodelEst}) computed for models (i) and (ii)
    are denoted as $\hat{v}^{[1]}_{\textsc{tail}}$ and $\hat{v}^{[2]}_{\textsc{tail}}$, respectively.
    Also compute the combined estimator $\hat{v}^{[1:2]}_{\textsc{tail}}$ with weight function defined in Example~\ref{rem:weightChoice} (ii).
    Fix $a_1 = 0.1$ and $n=200$.
    The simulation result is plotted in Figure~\ref{fig:2model}.
    It can be observed that the combined estimator dominates both of the single-model estimators
    when $a_2\geq 0$. 
    When $a_2<0$, it performs worse than the well-specified estimator $\hat{v}^{[2]}_{\textsc{tail}}$
    but is much better than the misspecified estimator $\hat{v}^{[1]}_{\textsc{tail}}$.
    The new estimator has a bias in the middle of the two single-model biases, which agrees with intuition.
  \end{example}

  \begin{figure}[t]
    \centering
    \includegraphics[width=0.75\textwidth]{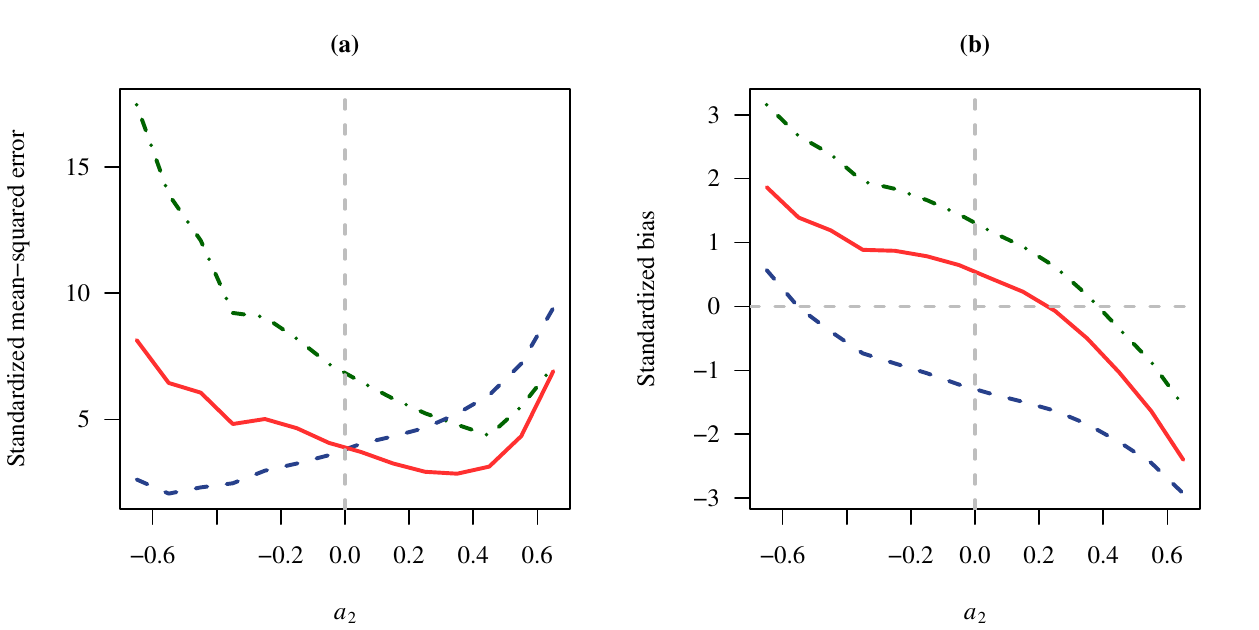}
    \caption{
    (a) and (b) plot the standardized mean-squared error and standardized bias
     $\E(\cdot)/v - 1$ of $\hat{v}^{[1]}_{\textsc{tail}}$ (green dot-dashed),
      $\hat{v}^{[2]}_{\textsc{tail}}$ (blue dashed) and $\hat{v}^{[1:2]}_{\textsc{tail}}$
     (red solid) against different $a$, respectively.
     }\label{fig:2model}
  \end{figure}

\section{Techinical proofs}

\subsection{Proof of main results}

Throughout this section, 
we may omit the specification of the kernel $K$ and bandwidth $\ell$ in the notation of the following estimators
when there is no confusion:
\begin{align*}
	\tilde{v} &= \tilde{v}(\ell; K), &
	\hat{v}(\theta) &= \hat{v}(\ell; K; \theta), &
	\hat{v}^{[1:J]}({\theta}_{1:J}) &= \hat{v}^{[1:J]}(\ell_{ 1:J}; K; {\theta}_{1:J}), \\
	\tilde{V} &= \tilde{V}(\ell; K), &
	\hat{V}( {\theta}) &= \hat{V}(\ell;K; {\theta}).&&
\end{align*}

\begin{proof}[Proof of Theorem~\ref{thm:BiasVar}]
  We first derive the bias and variance of $\hat{v}(\ell; K; \theta_{\star})$.
  First
  let $r(t) = K(t) - \left( 1 + B{|t|}^{p} \right) \mathbb{1}(|t|\leq1)$.
  Since $K(t) = K(-t)$ for all $t\geq 0$ and
  $K(0)=1$,
  $\lim_{t \to 0 } |r(t)| / {|t|}^{p} = 0$.
  The bias is computed following the definition as 
  \[
    \E\left\{ \hat{v}(\theta_{\star})\right\} - v 
      = \left\{ \frac{\sum_{k\in\mathbb{Z}}\gamma_k(\theta_{\star})}{\sum_{k\in\mathbb{Z}} K(k/\ell) \gamma_k(\theta_{\star})} - 1\right\} \E(\tilde{v})
      +\E(\tilde{v}) - v.
  \]
  In particular, consider the coefficient
  \begin{equation}\label{eqt:bvcoef1}
    \frac{\sum_{k\in\mathbb{Z}}\gamma_k(\theta_{\star})}{\sum_{k\in\mathbb{Z}} K(k/\ell) \gamma_k(\theta_{\star})} - 1
      = \frac{\sum_{k\in\mathbb{Z}}\gamma_k(\theta_{\star})}{\sum_{|k| \leq \ell} \left(1 + B{\left\lvert k/\ell \right\vert}^{p}\right)\gamma_k(\theta_{\star}) + \sum_{k\in\mathbb{Z}} r\left( k/\ell \right)\gamma_k(\theta_{\star})} - 1.
  \end{equation}
  Compute the terms separately,
  \begin{align*}
    \sum_{|k| \leq \ell} \left(1 + B{\left\lvert \frac{k}{\ell}\right\vert}^{p}\right)\gamma_k({\theta_{\star}})
      &= \sum_{k\in\mathbb{Z}} \gamma_k({\theta_{\star}}) -  \sum_{|k| > \ell} \gamma_k({\theta_{\star}}) + B\sum_{|k| \leq \ell}{\left\lvert \frac{k}{\ell}\right\vert}^{p}\gamma_k({\theta_{\star}});\\
   B\sum_{|k| \leq \ell} {\left\lvert \frac{k}{\ell}\right\rvert}^{p}\gamma_k({\theta_{\star}})
   & = B \sum_{k \in \mathbb{Z}} {\left\lvert \frac{k}{\ell}\right\rvert}^{p}\gamma_k({\theta_{\star}}) 
      - B \sum_{|k| > \ell} {\left\lvert \frac{k}{\ell}\right\rvert}^{p}\gamma_k({\theta_{\star}})
     = B\frac{v_p}{\ell^p} + o\left(\frac{1}{\ell^p}\right),
  \end{align*}
  where in the last step, we use the assumption that $u_p(\theta_{\star})<\infty$ and 
  by Kronecker's lemma, $|\sum_{|k| > \ell}{|k/\ell|}^p \gamma_k({\theta_{\star}})| \leq \sum_{|k| > \ell} {|k/\ell|}^p |\gamma_k({\theta_{\star}})| = o(1/\ell^p)$.
  Similarly, 
  \[
    \left| \sum_{|k| > \ell} \gamma_k({\theta_{\star}}) \right|  \leq \sum_{|k| > \ell} {\left\lvert \frac{k}{\ell} \right\rvert}^p\left|\gamma_k({\theta_{\star}})\right|  = o\left(\frac{1}{\ell^p}\right).
  \]
  Therefore,
  \[
     \sum_{|k| \leq \ell} \left(1 + B{\left\lvert \frac{k}{\ell}\right\vert}^{p}\right)\gamma_k({\theta_{\star}}) =  \sum_{k\in\mathbb{Z}} \gamma_k({\theta_{\star}}) + B\frac{v_p}{\ell^p} + o\left(\frac{1}{\ell^p}\right).
  \]
  Now consider $\sum_{k\in\mathbb{Z}} r({k}/{\ell})\gamma_k({\theta_{\star}})$,
  \begin{equation}\label{eqt:bvterms1}
     \left\lvert \sum_{k\in\mathbb{Z}} r\left(\frac{k}{\ell}\right)  \gamma_k({\theta_{\star}})\right\rvert  
      \leq \left\lvert \sum_{|k| \leq \sqrt{\ell}} r\left(\frac{k}{\ell}\right)  \gamma_k({\theta_{\star}})\right\rvert  + 
          \left\lvert \sum_{|k|>\sqrt{\ell}} r\left(\frac{k}{\ell}\right)  \gamma_k({\theta_{\star}})\right\rvert.
  \end{equation}
  Since $\lim_{t \to 0 } |r(t)|  / {|t|}^p = 0$ by the definition of constant $B$, as $\ell\to \infty$,
  $$
    \sup_{|k| \leq \sqrt{\ell}} \left\mid \frac{r\left(\frac{k}{\ell} \right)}{\left(\frac{k}{\ell}\right)^{p}}\right\mid  
     = o(1)
     \quad\text{and}\quad
     \left\mid  \sum_{|k| \leq \sqrt{\ell}} r\left(\frac{k}{\ell}\right)  \gamma_k({\theta_{\star}})\right\mid  
     \leq \left\{ \sup_{|k| \leq \sqrt{\ell}} \left\mid \frac{r\left(k/\ell \right)}{\left(k/\ell\right)^{p}}\right\mid  \right\} 
     \sum_{|k| \leq \sqrt{\ell}}\left\mid  \frac{k}{\ell}\right\mid ^{p}\left\lvert\gamma_k({\theta_{\star}})\right\rvert= o\left(\frac{1}{\ell^{p}}\right),
  $$
  where $\sum_{|k| \leq \sqrt{\ell}} {|k|}^{p} \left\mid  \gamma_k({\theta_{\star}}) \right\mid  = O(1)$ 
  as $u_p({\theta_{\star}})<\infty$.
  For the second term in (\ref{eqt:bvterms1}),
  \[
    \left\mid \sum_{|k| > \sqrt{\ell}} r\left(\frac{k}{\ell} \right) \gamma_k({\theta_{\star}})\right\mid  
      = \left\mid \sum_{|k| > \sqrt{\ell}} \frac{r\left(k/\ell\right)}{\left\mid  k/\ell \right\mid^{p}}\left\mid  \frac{k}{\ell}\right\mid ^{p} \gamma_k({\theta_{\star}})\right\mid 
      \leq \left\{ \sup_{|k| > \sqrt{\ell}} \left\mid \frac{r\left( k/\ell \right)}{\left( k/\ell \right)^{p}}\right\mid  \right\} \cdot
      \sum_{|k| > \sqrt{\ell}} \left\mid  \frac{k}{\ell}\right\mid ^{p}\left\mid \gamma_k({\theta_{\star}})\right\mid .
  \]
  Note that 
  \begin{align*}
    \sup_{|k| > \sqrt{\ell}} \left\mid \frac{r\left( k/\ell \right)}{\left( k/\ell \right)^{p}}\right\mid 
    &\leq 
    \sup_{t>0} \frac{|r(t)|}{{|t|}^{p}} 
    = \sup_{t>0} \frac{|K(t) - (1 + B{|t|}^p)\mathbb{1}(|t|\leq 1)|}{{|t|}^{p}} \\
    & \leq \sup_{0<t<1} \frac{|K(t)-1|}{{|t|}^{p}} 
    + \sup_{t\geq 1} \frac{|K(t)|}{{|t|}^{p}}
    + |B| + 1 = O(1)
  \end{align*}
  as $|K(t)| \leq 1$ and $\lim_{t\downarrow 0} |K(t)-K(0)| / {|t|}^p < \infty$ by definition of $p$.
  In addition,
  $\sum_{|k| > \sqrt{\ell}} {|k/\ell|}^{p}\left\lvert \gamma_k({\theta_{\star}})\right\rvert  = o(1/\ell^{p})$
  following previous arguments.
  Hence, 
  $$
    \left\mid \sum_{|k| > \sqrt{\ell}} r\left(\frac{k}{\ell}\right) \gamma_k({\theta_{\star}})\right\mid  = o\left(\frac{1}{\ell^{p}}\right)
      \qquad\text{and}\qquad
    \left\mid  \sum_{|k| \leq \sqrt{\ell}} r\left(\frac{k}{\ell}\right)  \gamma_k({\theta_{\star}})\right\mid  = o\left(\frac{1}{\ell^{p}}\right).
  $$

  Combining the results above,
  \[
    \frac{\sum_{k\in\mathbb{Z}}\gamma_k({\theta_{\star}})}{\sum_{k\in\mathbb{Z}} K(k/\ell) \gamma_k({\theta_{\star}})} - 1 
      = \frac{\sum_{k\in\mathbb{Z}}\gamma_k({\theta_{\star}})}{\sum_{k\in\mathbb{Z}}\gamma_k({\theta_{\star}}) + Bv_p(\theta_{\star})/\ell^p + o(1/\ell^p)} - 1 
      = \frac{-Bv_p(\theta_{\star})/\ell^p}{v(\theta_{\star})} + o\left(\frac{1}{\ell^p}\right).
  \]
  Next we derive the bias of $\tilde{v}$. 
  Use the true mean $\mu$ to replace the sample mean in $\tilde{\gamma}_k$ and 
  define $\ddot{\gamma}_k = \sum_{i=|k|+1}^n (X_i - \mu)(X_{i-k} - \mu)/n$.
  Note that
  \begin{align*}
    \tilde{\gamma}_k - \ddot{\gamma}_k &= \frac{1}{n} \sum_{i=|k|+1}^n\left\{(X_i - \bar{X}_n)(X_{i-|k|} - \bar{X}_n) - (X_i - \mu)(X_{i-|k|} - \mu)\right\}\\
    &= \frac{1}{n} \sum_{i=|k|+1}^n\left\{-(X_i+X_{i-|k|} - \bar{X}_n - \mu)(\bar{X}_n - \mu)\right\}\\
    &= -\frac{1}{n} \left\{(n+|k|)\bar{X}_n - \sum_{i=1}^{|k|} X_i - \sum_{i=n-|k|+1}^n X_i - (n-|k|)\mu\right\}(\bar{X}_n - \mu)\\
    &= -\frac{1}{n} \left\{(n+|k|)\left(\bar{X}_n-\mu\right) - \sum_{i=1}^{|k|} (X_i-\mu) - \sum_{i=n-|k|+1}^n (X_i-\mu)\right\}(\bar{X}_n - \mu).
  \end{align*}
  If $|k|\leq (n-1)/2$,
  \begin{align*}
    \E\left|\tilde{\gamma}_k - \ddot{\gamma}_k\right| 
    &\leq \frac{1}{n} \left\||k|\left(\bar{X}_n-\mu\right) + \sum_{i=|k|+1}^{n-|k|} (X_i-\mu)\right\|\|\bar{X}_n - \mu\|\\
    &\leq \left\{ \frac{|k|}{n} \cdot \left\|\bar{X}_n-\mu\right\| + \frac{1}{n}\left\|\sum_{i=|k|+1}^{n-|k|} (X_i-\mu)\right\| \right\} \|\bar{X}_n - \mu\|;
  \end{align*}
  if $(n-1)/2<|k|<n$.
  \begin{align*}
    \E\left|\tilde{\gamma}_k - \ddot{\gamma}_k\right| 
    &\leq \left\{ \frac{|k|}{n} \cdot \left\|\bar{X}_n-\mu\right\| + \frac{1}{n}\left\|\sum_{i=n-|k|}^{|k|+1} (X_i-\mu)\right\| \right\} \|\bar{X}_n - \mu\|.
  \end{align*} 

  Note that $\|\bar{X}_n - \mu\| = O(1/\sqrt{n})$ and $\|\sum_{i=|k|+1}^{n-|k|} (X_i-\mu)/n\| = O\left\{(n-2|k|)^{1/2}/n\right\}$.
  As $|k|\leq n$, $\E(|\tilde{\gamma}_k - \ddot{\gamma}_k|)=O(1/n)$.
  Then we study the bias of $\ddot{\gamma}_k$. Without loss of generality,
  assume that $\mu=0$, for $\mu\neq 0$,
  we can subtract the mean $\mu$ from the data $\{X_i\}_{i=1}^n$. Then we can compute
  \[
    \E(\ddot{\gamma}_k) - \gamma_k = \frac{1}{n} \sum_{i=|k|+1}^n \E(X_iX_{i-|k|}) - \E(X_iX_{i-|k|}) = \frac{|k|\gamma_k}{n} = O\left(\frac{1}{n}\right)
  \]
  since $u_1 = \sum_{k\in\mathbb{Z}} |k| |\gamma_k| < \infty$ by assumption.
  Then we can conclude that $\E(\tilde{\gamma}_k)=\gamma_k+O(1/n)$ for all $k$.
  So the bias of $\tilde{v}$ is 
  \begin{align*}
    \E(\tilde{v}) - v &= \sum_{k=-n+1}^{n-1} K\left(\frac{k}{\ell}\right) \left\{\gamma_k + O\left(\frac{1}{n}\right)\right\} - v \\
                      &= \sum_{|k| \leq \ell} B {\left|\frac{k}{\ell}\right|}^p \gamma_k
                      + \sum_{\ell<|k|<n} r\left(\frac{k}{\ell}\right)\gamma_k
                      - \sum_{|k|\geq \ell} \gamma_k 
                      + \sum_{k=-n+1}^{n-1} K\left(\frac{k}{\ell}\right)O\left(\frac{1}{n}\right)  \\
                      &= \frac{B\{v_p+o(1)\}}{\ell^p} + o\left(\frac{1}{\ell^p}\right) + O\left(\frac{\ell}{n}\right)
  \end{align*}
  by similar arguments as previously so
  $\E(\tilde{v}) - v = B v_p/\ell^p + o(1/\ell^p) + O(\ell/n)$.
  The bias is 
  \begin{align*}
    \E\left\{ \hat{v}(\theta_{\star}) \right\} - v 
    &= \left\{ \frac{-Bv_p(\theta_{\star})}{v(\theta_{\star})\ell^p} + o\left(\frac{1}{\ell^p}\right) \right\}  \E(\tilde{v}) +  \E(\tilde{v}) - v \\
    &= \left\{ \frac{-Bv_p(\theta_{\star})}{v(\theta_{\star})\ell^p} + o\left(\frac{1}{\ell^p}\right) \right\}\left\{v+o(1)\right\} + B \frac{v_p}{\ell^{p}} + O\left(\frac{\ell}{n}\right) + o\left(\frac{1}{\ell^p}\right) \\
      &= \frac{B}{\ell^p}\left\{ v_p - \frac{v_{p}({\theta_{\star}})}{v({\theta_{\star}})} v \right\} + o\left(\frac{1}{\ell^p}\right) + O\left(\frac{\ell}{n}\right).
  \end{align*}

  As $\hat{v}(\theta_{\star}) = \{1 + O(1/\ell^p)\} \tilde{v}$,
  $\Var\{ \hat{v}(\theta_{\star})\} \sim \Var(\tilde{v})$.
  So next we derive the variance of $\tilde{v}$.
  First we replace $\bar{X}_n$ by $\mu$ and
  approximate $\tilde{v}$ by $\ddot{v}$ defined as
  \begin{align}\label{eqt:definition_vddot}
    \ddot{v}=\sum_{i=1}^{n}\sum_{j=1}^{n}K\left( \frac{|i-j|}{\ell} \right)\frac{(X_i-\mu)(X_j-\mu)}{n}.
  \end{align}
  We use Lemma A.6 in \cite{LiuChan2022} to show that $\tilde{v}$ can be approximated by $\ddot{v}$ with
  $W_n(i,j)=K(|i-j|/\ell)/n$.
  Checking the necessary conditions,
  \begin{align*}
     E_{4,n} &= \frac{1}{n}; \qquad E_{5,n} \leq E^2_{6,n} = O\left(\frac{\ell}{n}\right);\\
    E_{6,n} &= \left\vert\sum_{i=2}^{n}\sum_{j=1}^{i-1}\frac{K(|i-j|/\ell)}{n}\right\vert \leq  \sqrt{n} \left\{ \sum_{i=2}^{n}\sum_{j=1}^{i-1} \frac{K^2(  |i-j|/\ell)}{n^2} \right\}^{1/2} = O\left(\frac{\ell^{1/2}}{n^{1/2}}\right).
  \end{align*}
  Therefore,
  using Minkowski inequality and noting that 
  \[
    \E\left(\left|\tilde{v}-\ddot{v}\right|\right) \leq \|\tilde{v}-\ddot{v}\| = O\left(\frac{\ell^{1/2}}{n}\right),
  \]
  we have 
  \begin{align}\label{eqt:vddot_vtilde_equiv}
    \left\|\tilde{v}-\E(\tilde{v})\right\| = \left\|\ddot{v}-\E(\ddot{v})\right\| 
    + O\left(\frac{\ell^{1/2}}{n}\right)
   \end{align} 
  Without loss of generality,
  we assume $\mu=0$ from now on.
  If $\mu\neq0$, we can always subtract the mean from the data and all results apply.
  Now $\ddot{v} = \sum_{1\leq i,j\leq n}K(|i-j|/\ell)X_iX_j / n$
  and we want to derive the variance of $\ddot{v}$.

  To further decompose $\ddot{v}$ into multiple blocks,
  suppose that $n-\ell=m\ell+r_{n}$ for some $m,r_{n}\in\mathbb{N}_{0}$ such that $r_{n}<\ell$.
  Hence $m\sim n/\ell$.
    For $t\in\mathbb{N} \bigcap [1,s]$,  
    define $\mathcal{A}_t=\{(i,j)\in([1,n]\cap \mathbb N)^2: (t-1)\ell+1 \leq i-j\leq t \ell \}$
    and partition $\mathcal{A}_t$
    into two parts:
    $\bigcup_{k=1}^{m-t+1}\mathcal{B}_{k,t}$ and
    $\mathcal R_t = \mathcal{A}_t - \left(\bigcup_{k=1}^{m-t+1}\mathcal{B}_{k,t}\right)$, 
  where
  \[
    \mathcal B_{k,t}=\left\{(i,j) \in \mathcal{A}_t : i\in\{(k+t-1)\ell+1,\ldots,(k+t)\ell\} \right\}.
  \]
  Following this partition, define the quantities
  \begin{align}\label{DefOfGk}
    \ddot{v}_s 
      &= 2\underset{(i,j)\in\bigcup_{t=1}^s\mathcal{A}_t}{\sum\sum} K\left( \frac{|i-j|}{\ell} \right)\frac{X_iX_j}{n} \nonumber 
      + \sum_{i=1}^n K(0) \frac{X_i^2}{n}\\
      &= 2 \underbrace{\sum_{t=1}^{s} \frac{\ell}{n} \sum_{k=1}^{m-t+1} \left\{\frac{1}{\ell}\underset{(i,j)\in\mathcal B_{k,t}}{\sum\sum}  K\left(\frac{|i  -j|}{\ell}\right) X_i X_{j}\right\}}_{M_{n,s}}
      +  \underbrace{\frac{2\ell}{n}\sum_{t=1}^{s}\left\{\frac{1}{\ell} \underset{(i,j)\in\mathcal R_t}{\sum\sum}  K\left(\frac{|i-j|}{\ell}\right) X_i X_{j}\right\}}_{R_{1,n,s}} \\
      &\quad +  \underbrace{\frac{1}{n}\sum_{i=1}^n K(0)X_i^2}_{R_{2,n}} , 
  \end{align}
   so we can decompose the main part $M_{n,s}$ into multiple blocks as
   \begin{align}\label{def_Mns}
    M_{n,s} 
      &= \frac{\ell}{n} \sum_{t=1}^{s} \sum_{k=1}^{m-t+1} \underbrace{\left\{\frac{1}{\ell}\sum_{i=(k+t-1)\ell+1}^{(k+t)\ell} \sum_{j=i-t\ell }^{i-(t-1  )\ell-1} 
      K\left(\frac{|i-j|}{\ell}\right) X_i X_{j}\right\}}_{G_{k,t}}.
  \end{align}

  By construction, for fixed $s$, and $r_n<\ell$ thus
  $\|R_{1,n,s}-\E(R_{1,n,s})\| 
  =o(\|2 M_{n,s}  - \E(2 M_{n,s} )\|)$.
  We also have
  $\|R_{2,n}-\E(R_{2,n})\|=O(1/\sqrt{n})$.
  Therefore,
  $$
    \Var(\ddot{v}_s) \sim
    \Var(2M_{n,s}) =
    \Var\left(\frac{2\ell}{n}\sum_{t=1}^{s} \sum_{k=1}^{m-t+1} G_{k,t}\right).
  $$
  Further decompose the variance of $\sum_{t=1}^{s} \sum_{k=1}^{m-t+1}  G_{k,t}$ into
  \begin{equation}
    \left\|\sum_{t=1}^{s} \sum_{k=1}^{m-t+1} \left\{ G_{k,t} - \E (G_{k,t}) \right\} \right\| \lesseqgtr I_1 \pm I_0,
  \end{equation}
  where $a \lesseqgtr b \pm c$ symbolizes jointly that $a \leq b+c$ and $a \geq b-c$;
  $$
    I_1 = \left\| \sum_{t=1}^{s} \sum_{k=1}^{m-t+1} \left\{ G_{k,t}  - \E (G_{k,t}  \mid \mathcal{F}_{\varsigma_{k,t}})\right\}\right\| 
      \qquad\text{and}\qquad
    I_0 = \left\| \sum_{t=1}^{s} \sum_{k=1}^{m-t+1} \left\{ \E (G_{k,t}  \mid \mathcal{F}_{\varsigma_{k,t}}) - \E (G_{k,t}) \right\}\right\|,
  $$
  and let $\varsigma_{k,t}$ be the smallest value of the index $i$ of $X_i$ in defining $G_{k,t}$.
  Using Lemma~\ref{lemma:hoc2} and Lemma~\ref{lemma:hoc1} (iii), as $n,\ell\to\infty$ and $s$ is fixed, we have 
  \[
    \Var(\ddot{v}_s) = {4\int_{0}^{s}K^2(t)\dd t v^2} \frac{\ell}{n}+o\left\{ \frac{\ell}{n} \int_{0}^{s}K^2(t)\dd t \right\}.
  \]  
  For all $s$, the statistic $\ddot{v}_s$ can be bounded as follows:
  \[
        |\ddot{v}_s| 
        = \left\lvert \underset{|i-j|\leq s\ell}{\sum\sum} K\left( \frac{|i-j|}{\ell} \right)\frac{X_iX_j}{n} \right\rvert 
        \leq \underset{|i-j|\leq n}{\sum\sum} \left\lvert K\left( \frac{|i-j|}{\ell} \right)
        \right\rvert  \left\lvert \frac{X_iX_j}{n} \right\rvert 
        \equiv U ,
  \]
  where the upper bound $U$ satisfied that 
  \begin{align*}
     \E(U)
     &=  
     \underset{|i-j|\leq n}{\sum\sum} \left\lvert K\left( \frac{|i-j|}{\ell} \right)
       \right\rvert  \E \left\{ \left\lvert \frac{X_iX_j}{n} \right\rvert \right\}
     < \infty , \\
     \E(U^2)
     &\leq  
     \underset{|i-j|\leq n}{\sum\sum} \left\lvert K\left( \frac{|i-j|}{\ell} \right)
       \right\rvert^2  \E \left\{ \left\lvert \frac{X_iX_j}{n} \right\rvert^2 \right\} 
      \leq    
     \underset{|i-j|\leq n}{\sum\sum} \left\lvert K\left( \frac{|i-j|}{\ell} \right)
       \right\rvert^2 \frac{\| X_i \|_4^2 \| X_j\|_4^2}{n^2} 
     <\infty 
  \end{align*}
  for all $n$ since $\E(X_i^4) <\infty$.
  Therefore, by dominated convergence theorem,
  \[
    \E\left(\lim_{s\to\infty} \ddot{v}_s\right) = \lim_{s\to\infty} \E(\ddot{v}_s)
  \quad \text{and} \quad
    \E\left(\lim_{s\to\infty} \ddot{v}_s\right)^2 = \lim_{s\to\infty} \E(\ddot{v}^2_s)
  \]
  for all $n,\ell$. 
  Since $\lim_{s\rightarrow \infty} \ddot{v}_s = \ddot{v}$, we obtain
  \begin{align*}
      \Var(\ddot{v}) &= \Var\left(\lim_{s\to\infty} \ddot{v}_s\right) 
                     = \E \left(\lim_{s\to\infty} \ddot{v}_s\right)^2 -  \left\{ \E \left(\lim_{s\to\infty} \ddot{v}_s\right) \right\}^2 \\
                     &= \lim_{s\to\infty} \E (\ddot{v}^2_s )-  \lim_{s\to\infty}  \{ \E (\ddot{v}_s) \}^2 \\
                     &= \lim_{s\to\infty} \Var(\ddot{v}_s).
  \end{align*}
  Hence, as $n,\ell\to\infty$,
  \begin{align}\label{eqt:var_of_vddot}
    \Var(\ddot{v}) = {4\int_{0}^{\infty}K^2(t)\dd t v^2} \frac{\ell}{n}+o\left\{ \frac{\ell}{n} \int_{0}^{\infty}K^2(t)\dd t \right\}.
  \end{align}
  Since $A = \int_{0}^{\infty}K^2(t)\dd t < \infty$, 
  we have $\Var(\tilde{v}) \sim 4Av^2\ell/n$ 
  in view of (\ref{eqt:vddot_vtilde_equiv}) 
  and (\ref{eqt:var_of_vddot}).
  We can now conclude that when $\ell\asymp n^{1/(2p+1)}$, 
  $n^{p/(2p+1)} \left\{ \hat{v}(\theta_{\star}) - v\right\} = O_p(1)$
  by (\ref{eqt:tpwBias}) and (\ref{eqt:var_of_vddot}). 
  We also need to show that $n^{p/(2p+1)} \left\{ \hat{v}(\bar{\theta}) - \hat{v}(\theta_{\star}) \right\} = o_p(1)$.
  By Taylor's expansion, for some $\ddot{\theta}\in\mathcal{H}$ on the line segment connecting $\bar{\theta}$ and $\theta_{\star}$, 
  \begin{align*}
    n^{p/(2p+1)} \left\{ \hat{v}(\bar{\theta}) - \hat{v}(\theta_{\star}) \right\}
     &= n^{p/(2p+1)} \left.\frac{\partial \hat{v}({\theta})}{\partial \theta} \right\mid_{\theta = \ddot{\theta}} (\bar{\theta} - \theta_{\star}), 
  \end{align*}
where 
  \begin{align*}
     \left. \frac{\partial \hat{v}({\theta})}{\partial \theta} \right\mid_{\theta = \ddot{\theta}}
     &= \left\{ M(\ddot{\theta}) \frac{\partial v(\ddot{\theta})}{\partial \theta} 
     - v(\ddot{\theta}) \frac{\partial M(\ddot{\theta})}{\partial \theta}  \right\} \tilde{v}/ 
     M^2(\ddot{\theta}). 
  \end{align*}
  By assumption, $\sup_{\theta\in\mathcal{H}}\left|{\partial v(\theta)}/{\partial \theta}\right|<\infty$ and $\sup_{\theta\in\mathcal{H}} \left| {\partial M({\theta})}/{\partial \theta} \right|<\infty$,
  so ${\partial \hat{v}({\theta})}/{\partial \theta} \mid_{\theta = \ddot{\theta}} = O_p(1)$
  and $n^{p/(2p+1)} \left\{ \hat{v}(\bar{\theta}) - \hat{v}(\theta_{\star}) \right\} = o_p(1)$
  as $n^{1/2} (\bar{\theta} - \theta_{\star}) = O_p(1)$.
  Then we define and compute for $h\geq 0$,
  \[
    \epsilon^{\star}_h = \min\left[ n^{\frac{2p}{2p+1}} \left\{ \hat{v}(\bar{\theta}) - v \right\}^2, h \right] 
    - \min\left[ n^{\frac{2p}{2p+1}} \left\{ \hat{v}(\theta_{\star}) - v \right\}^2, h \right]
      = o_p(1).
  \]
  Since $\epsilon^{\star}_h$ is upper-bounded by $h$,
  by Lemma A2 of \cite{andrews1991},
  we have $\E(\epsilon^{\star}_h) = o(1)$ and thus
  \[
    \lim_{h\to\infty} \lim_{n\to\infty} \MSE_h\{\hat{v}(\bar{\theta})\}
      = \lim_{h\to\infty} \lim_{n\to\infty} \MSE_h\{\hat{v}(\theta_{\star})\}.
  \]
  Since 
  \[
    \lim_{h\to\infty} \lim_{n\to\infty} \MSE_h\{\hat{v}(\theta_{\star})\} = \lim_{n\to\infty} n^{2p/(2p+1)} \MSE\{\hat{v}(\theta_{\star})\},
  \]
  we have 
  \[
    \lim_{h\to\infty} \lim_{n\to\infty} \MSE_h\{\hat{v}(\bar{\theta})\}
      = \lim_{n\to\infty} n^{2p/(2p+1)} \MSE\{\hat{v}(\theta_{\star})\}.
  \]
	Thus, the proof is completed.  
\end{proof}

\begin{proof}[Proof of Corollary~\ref{cor:optimalBW}]
  When $\ell \asymp n^{1/(2p+1)}$, following the proof of Theorem~\ref{thm:BiasVar}, the mean-squared error of $\hat{v}(\theta_{\star})$ is 
  \begin{align}\label{eqt:MSE_vhat_in_proof}
    \MSE\{\hat{v}(\theta_{\star})\} \sim \frac{B^2v^2}{\ell^{2p}}\left\{ \frac{v_p}{v} - \frac{v_{p'}({\theta_{\star}})}{v({\theta_{\star}})} \right\}^2 + \frac{4A\ell v^2}{n}.
  \end{align}
  Minimizing the right-hand side of (\ref{eqt:MSE_vhat_in_proof}) with respect to $\ell$, the formula is obtained.
\end{proof}

\begin{proof}[Proof of Proposition~\ref{prop:dRobust}]
  In this proof, 
  all estimators use the same kernel $K$.
  Also, the optimal bandwidths $\ell_j = \ell_{\textsc{tail},j}$ ($j=1, \ldots, J$) are used.
Thus, we write $\hat{v}^{[1:J]}({\theta}_{1:J}) = \hat{v}^{[1:J]}({\theta}_{1:J}; K; \ell_{1:J})$, 
where $\theta_{1:J}$ will be specified clearly as $\bar{\theta}_{1:J}$ or ${\theta}_{1:J}^{\star}$ in the context. 

  As in the proof of Theorem~\ref{thm:BiasVar},
  since $\bar{\theta}_j = \theta_{\star j} + O_p(n^{-1/2})$ for each $j=1,\ldots,J$,
  \[
    \lim_{h\to\infty} \lim_{n\to\infty} \MSE_h\left\{ \hat{v}^{[1:J]}(\bar{\theta}_{1:J})\right\}
      = \lim_{n\to\infty} n^{\frac{2p}{2p+1}} \MSE\left\{\hat{v}^{[1:J]}(\theta_{\star 1:J}) \right\}.
  \]

  To compute the bias of $\hat{v}^{[1:J]}(\theta_{\star 1:J})$, 
  we write
  \[
    \E\left\{ \hat{v}^{[1:J]}(\theta_{\star 1:J}) \right\} - v
      = \sum_{j=1}^{J} w_j \left[ \E \left\{ \hat{v}^{[j]}(\theta_{\star j})\right\} - v \right]
      + \sum_{j=1}^{J} \E\left\{ (\hat{w}_j - w_j) \hat{v}^{[j]}(\theta_{\star j}) \right\}.
  \]
  By Cauchy--Schwarz inequality, $\E\{ \vert (\hat{w}_j - w_j) \hat{v}^{[j]}(\theta_{\star j}) \vert\} 
  \leq \| \hat{w}_j - w_j \| \| \hat{v}^{[j]}(\theta_{\star j})\| $.
  It is easy to see that $\| \hat{v}^{[j]}(\theta_{\star j}) \|  = O(1)$. 
  Since $\| \hat{w}_j - w_j\| = o\left\{ n^{-p/(2p+1)} \right\}$, we have 
  $\E\{| (\hat{w}_j - w_j) \hat{v}^{[j]}(\theta_{\star j}) |\}  = o\{ n^{-p/(2p+1)} \}$.
  Combining with previous results,
  \begin{align*}
      \E\left\{ \hat{v}^{[1:J]}(\theta_{\star 1:J})  \right\}  - v 
        &= \sum_{j=1}^{J} w_j \left[ \E \left\{ \hat{v}^{[j]}(\theta_{\star j}) \right\}  - v \right]
          + o\left\{ n^{-p/(2p+1)} \right\} \\
        &=   	 
        \sum_{j=1}^{J} w_j B \left\{ \frac{2A}{pB^2 (\xi^{[j]}_{p})^2 n } \right\}^{p/(2p+1)} \xi^{[j]}_{p} v 
          + o\left\{n^{-p/(2p+1)}\right\} \\
        &= \left(\frac{2A}{pn}\right)^{p/(2p+1)} \sum_{j=1}^{J} w_j \left\vert \frac{B}{\xi^{[j]}_{p}}\right\vert^{1/(2p+1)} 
        \sgn(B\xi^{[j]}_{p}) v 
          + o\left\{n^{-p/(2p+1)}\right\}.
  \end{align*}
  Rearranging the terms, we obtain the desired expression. 
  To calculate the variance, by Minkowski inequality, we have 
  \begin{align}\label{apdx:var1}
    \left\| \hat{v}^{[1:J]}(\theta_{\star 1:J}) - \E\left\{ \hat{v}^{[1:J]}(\theta_{\star 1:J}) \right\} \right\| 
      &\lesseqqgtr 
      \left\| \sum_{j=1}^J w_j\hat{v}^{[j]}(\theta_{\star j}) - \E\left\{  \sum_{j=1}^J w_j\hat{v}^{[j]}(\theta_{\star j}) \right\} \right\| \\
      & \quad \pm \left\| \hat{v}^{[1:J]}(\theta_{\star 1:J}) -  \sum_{j=1}^Jw_j\hat{v}^{[j]}(\theta_{\star j}) \right\|  \pm
           \E \left|  \sum_{j=1}^J w_j \hat{v}^{[j]}(\theta_{\star j}) - \hat{v}^{[1:J]}(\theta_{\star 1:J}) \right|. \nonumber
  \end{align}
  Then write
  \[
    \hat{v}^{[1:J]}(\theta_{\star 1:J}) -  \sum_{j=1}^Jw_j\hat{v}^{[j]}(\theta_{\star j})
      = \sum_{j=1}^{J}(\hat{w}_j - w_j) \hat{v}^{[j]}(\theta_{\star j}).
  \]
  The assumed condition implies that $\hat{w}_j - w_j = o_p\left\{ n^{-p/(2p+1)} \right\}$, so we have 
  $
    \sum_{j=1}^{J}(\hat{w}_j - w_j) \hat{v}^{[j]}(\theta_{\star j})  = o_p\left\{ n^{-p/(2p+1)} \right\}.
  $
  Also for $\nu>0$, $\left\| \sum_{j=1}^{J}(\hat{w}_j - w_j) \hat{v}^{[j]}(\theta_{\star j}) \right\|_{2+\nu}
  \leq \sum_{i=1}^{J}2\left\| \hat{v}^{[j]}(\theta_{\star j}) \right\|_{2+\nu} < \infty$. 
  Hence, 
  $$
    \left\| \sum_{j=1}^{J}(\hat{w}_j - w_j) \hat{v}^{[j]}(\theta_{\star j}) \right\| = o\left\{ n^{-p/(2p+1)} \right\}.
  $$
  Also note that
  \[
    \E \left\{ \left| \sum_{j=1}^J w_j \hat{v}^{[j]}(\theta_{\star j}) - \hat{v}^{[1:J]}(\theta_{\star 1:J})  \right| \right\}\leq 
    \left\| \sum_{j=1}^J w_j \hat{v}^{[j]}(\theta_{\star j}) - \hat{v}^{[1:J]}(\theta_{\star 1:J}) \right\| = o\left\{ n^{-p/(2p+1)} \right\}.
  \]
  Then by (\ref{apdx:var1}), $\left\| \hat{v}^{[1:J]}(\theta_{\star 1:J}) - \E\left\{ \hat{v}^{[1:J]}(\theta_{\star 1:J}) \right\} \right\| 
  = \left\| \sum_{j=1}^J w_j\hat{v}^{[j]}(\theta_{\star j}) - \E\left\{  \sum_{j=1}^J w_j\hat{v}^{[j]}(\theta_{\star j}) \right\} \right\| +  o\left\{ n^{-p/(2p+1)} \right\}$.
  Using results from Theorem~\ref{thm:BiasVar},
  \begin{align*}
     \left\| \sum_{j=1}^J w_j\hat{v}^{[j]}(\theta_{\star j}) - \E\left\{  \sum_{j=1}^J w_j\hat{v}^{[j]}(\theta_{\star j}) \right\} \right\| 
     &\leq \sum_{j=1}^J w_j \left\|\hat{v}^{[j]}(\theta_{\star j}) -  \E\left\{\hat{v}^{[j]}(\theta_{\star j}) \right\}  \right\| \\
     &\sim \sqrt{2p} \left(\frac{2A}{pn}\right)^{p/(2p+1)} \sum_{j=1}^{J} w_j \left\vert\frac{B}{\xi^{[j]}_{p}}\right\vert^{1/(2p+1)} v.\\
  \end{align*}
  Therefore,
  \begin{align*}
   \limsup_{n\to\infty} n^{2p/(2p+1)}\Var\left\{\hat{v}^{[1:J]}\left(\theta_{\star 1:J}\right) \right\} 
             &\leq 
             2p 
             \left\{ \left(\frac{2A}{p}\right)^p  |B| \right\}^{2/(2p+1)} \left\{ \sum_{j=1}^{J} w_j {\left\vert {\xi^{[j]}_{p}}\right\vert}^{-1/(2p+1)}\right\}^2 v^2.
  \end{align*}
  Thus, the desired result follows. 
\end{proof}

\begin{proof}[Proof of Theorem~\ref{thm:BiasVarGeneral}]
  We first derive the bias and variance of $\hat{v}(\ell; K; H; \theta_{\star})$.
  First consider the case when $p'\in\mathbb{N}$,
  let $r'(t) = H(t) - \left( 1 + B'{|t|}^{p'} \right) \mathbb{1}(|t|\leq1)$.
  Since $H(t) = H(-t)$ for all $t\geq 0$ and
  $H(0)=1$,
  $\lim_{t \to 0 } |r'(t)| / {|t|}^{p'} = 0$.
  The bias is then computed as 
  \[
    \E\left\{ \hat{v}(\ell; K; H; \theta_{\star})\right\} - v 
      = \left\{ \frac{\sum_{k\in\mathbb{Z}}\gamma_k(\theta_{\star})}{\sum_{k=-n+1}^{n-1} H(k/\ell) \gamma_k(\theta_{\star})} - 1\right\} \E(\tilde{v})
      +\E(\tilde{v}) - v.
  \]
  Then we follow the arguments in the proof of Theorem~\ref{thm:BiasVar} but with kernel $H$ in the postcoloring coefficient.
  Together with the results above, we have 
  \[
    \frac{\sum_{k\in\mathbb{Z}}\gamma_k({\theta_{\star}})}{\sum_{k\in\mathbb{Z}} H(k/\ell) \gamma_k({\theta_{\star}})} - 1 
      = \frac{\sum_{k\in\mathbb{Z}}\gamma_k({\theta_{\star}})}{\sum_{k\in\mathbb{Z}}\gamma_k({\theta_{\star}}) + B'v_{p'}(\theta_{\star})/\ell^{p'} + o(1/\ell^{p'})} - 1 
      = \frac{-B'v_{p'}(\theta_{\star})/\ell^{p'}}{v(\theta_{\star})} + o\left(\frac{1}{\ell^{p'}}\right).
  \]
  Hence, the bias is 
  \begin{align}\label{eqt:tpwBias}
    \E\left\{ \hat{v}(\ell; K; H; \theta_{\star}) \right\} - v 
    &= \left\{ \frac{-B'v_{p'}(\theta_{\star})}{v(\theta_{\star})\ell^{p'}} + o\left(\frac{1}{\ell^{p'}}\right) \right\}  \E(\tilde{v}) +  \E(\tilde{v}) - v \nonumber\\
    &= \left\{ \frac{-B'v_{p'}(\theta_{\star})}{v(\theta_{\star})\ell^{p'}} + o\left(\frac{1}{\ell^{p'}}\right) \right\}\left\{v+o(1)\right\} + B \frac{v_p}{\ell^{p}} + O\left(\frac{\ell}{n}\right) + o\left(\frac{1}{\ell^p}\right) \nonumber\\
      &= \frac{B}{\ell^p}\left\{ v_p - \frac{B' v_{p'}({\theta_{\star}})\ell^{p-p'}}{ Bv({\theta_{\star}})} v \right\} + o\left(\frac{1}{\ell^p}\right) + O\left(\frac{\ell}{n}\right).
  \end{align}
  If $p' = \infty$,
  the coefficient in (\ref{eqt:bvcoef1}) can be written as 
  \[
    \frac{\sum_{k\in\mathbb{Z}}\gamma_k({\theta_{\star}})}{\sum_{k\in\mathbb{Z}} H(k/\ell) \gamma_k({\theta_{\star}})} - 1 
      =  \frac{\sum_{k\in\mathbb{Z}}\gamma_k({\theta_{\star}})}{\sum_{|k| \leq \ell}\gamma_k({\theta_{\star}}) + \sum_{k\in\mathbb{Z}} r'\left(k/\ell\right)\gamma_k({\theta_{\star}}) }  - 1,
  \]
  where $r'(t) = H(t) - \mathbb{1}(|t| \leq 1)$.
  So for $p < \infty$, $\lim_{t\to 0} |r'(t)|/{|t|}^{p} = 0$ and
  \begin{equation}\label{eqt:bvterms2}
     \left\lvert \sum_{k\in\mathbb{Z}} r'\left(\frac{k}{\ell}\right)  \gamma_k({\theta_{\star}})\right\rvert  
      \leq \left\lvert \sum_{|k| \leq \sqrt{\ell}} r'\left(\frac{k}{\ell}\right)  \gamma_k({\theta_{\star}})\right\rvert  + 
          \left\lvert \sum_{|k| > \sqrt{\ell}} r'\left(\frac{k}{\ell}\right)  \gamma_k({\theta_{\star}})\right\rvert.
  \end{equation}
  As $\ell\to \infty$,
  $$
    \sup_{|k| \leq \sqrt{\ell}} \left\mid \frac{r'\left(\frac{k}{\ell} \right)}{\left(\frac{k}{\ell}\right)^{p}}\right\mid  
     = o(1)
     \quad\text{and}\quad
     \left\mid  \sum_{|k| \leq \sqrt{\ell}} r'\left(\frac{k}{\ell}\right)  \gamma_k({\theta_{\star}})\right\mid  
     \leq \left\{ \sup_{|k| \leq \sqrt{\ell}} \left\mid \frac{r'\left(k/\ell \right)}{\left(k/\ell\right)^{p}}\right\mid  \right\} 
     \sum_{|k| \leq \sqrt{\ell}}\left\mid  \frac{k}{\ell}\right\mid ^{p}\left\lvert\gamma_k({\theta_{\star}})\right\rvert= o\left(\frac{1}{\ell^{p}}\right),
  $$
  where $\sum_{|k| \leq \sqrt{\ell}} {|k|}^{p} \left\mid  \gamma_k({\theta_{\star}}) \right\mid  = O(1)$ as $u_p({\theta_{\star}})<\infty$.
  For the second term in (\ref{eqt:bvterms2}),
  \[
    \left\mid \sum_{|k| > \sqrt{\ell}} r'\left(\frac{k}{\ell} \right) \gamma_k({\theta_{\star}})\right\mid  
      = \left\mid \sum_{|k| > \sqrt{\ell}} \frac{r'\left(k/\ell\right)}{\left\mid  k/\ell \right\mid^{p}}\left\mid  \frac{k}{\ell}\right\mid ^{p} \gamma_k({\theta_{\star}})\right\mid 
      \leq \left\{ \sup_{|k| > \sqrt{\ell}} \left\mid \frac{r'\left( k/\ell \right)}{\left( k/\ell \right)^{p}}\right\mid  \right\} \cdot
      \sum_{|k| > \sqrt{\ell}} \left\mid  \frac{k}{\ell}\right\mid ^{p}\left\mid \gamma_k({\theta_{\star}})\right\mid .
  \]
  Note that 
  \begin{align*}
    \sup_{|k| > \sqrt{\ell}} \left\mid \frac{r''\left( k/\ell \right)}{\left( k/\ell \right)^{p}}\right\mid 
    &\leq 
    \sup_{t>0} \frac{|r(t)|}{{|t|}^{p}} 
    = \sup_{t>0} \frac{|K(t) - \mathbb{1}(|t|\leq 1)|}{{|t|}^{p}} \\
    & \leq \sup_{0<t<1} \frac{|K(t)-1|}{{|t|}^{p}} 
    + \sup_{t\geq 1} \frac{|K(t)|}{{|t|}^{p}}
    + 1 = O(1)
  \end{align*}
  as $|K(t)| \leq 1$ and $\lim_{t\downarrow 0} |K(t)-K(0)| / {|t|}^p < \infty$.
  In addition,
  $\sum_{|k| > \sqrt{\ell}} {|k/\ell|}^{p}\left\lvert \gamma_k({\theta_{\star}})\right\rvert  = o(1/\ell^{p})$
  following previous arguments.
  Hence, 
  $$
    \left\mid \sum_{|k| > \sqrt{\ell}} r'\left(\frac{k}{\ell}\right) \gamma_k({\theta_{\star}})\right\mid  = o\left(\frac{1}{\ell^{p}}\right)
      \qquad\text{and}\qquad
    \left\mid  \sum_{|k| \leq \sqrt{\ell}} r'\left(\frac{k}{\ell}\right)  \gamma_k({\theta_{\star}})\right\mid  = o\left(\frac{1}{\ell^{p}}\right).
  $$
  Therefore, 
  \[
    \frac{\sum_{k\in\mathbb{Z}}\gamma_k({\theta_{\star}})}{\sum_{k\in\mathbb{Z}} H(k/\ell) \gamma_k({\theta_{\star}})} - 1 = o\left(\frac{1}{\ell^{p'}}\right)
    \qquad\text{and}\qquad
    \E\{\hat{v}(\ell; K; H; \theta_{\star})\} - v \sim \E(\tilde{v}) - v.
  \]
  In both cases, as $n,\ell\to\infty$, $\Var\{\hat{v}(\ell; K; H; \theta_{\star})\} \sim \Var\{ \tilde{v}(\ell; K) \}$.

We can now conclude that when $\ell\asymp n^{1/(2p+1)}$, 
  $n^{p/(2p+1)} \left\{ \hat{v}(\ell; K; H; \theta_{\star}) - v\right\} = O_p(1)$.
  We also need to show that $n^{p/(2p+1)} \left\{ \hat{v}(\ell; K; H; \bar{\theta}) - \hat{v}(\ell; K; H; \theta_{\star}) \right\} = o_p(1)$.
  By Taylor's expansion, for some $\ddot{\theta}$ between $\bar{\theta}$ and $\theta_{\star}$, 
  \begin{align*}
    n^{p/(2p+1)} \left\{ \hat{v}(\ell; K; H; \bar{\theta}) - \hat{v}(\ell; K; H; \theta_{\star}) \right\}
     &= n^{p/(2p+1)} \left.\frac{\partial \hat{v}(\ell; K; H;\theta) }{\partial \theta} \right\mid_{\theta = \ddot{\theta}} (\bar{\theta} - \theta_{\star}) , 
\end{align*}
where
\begin{align*}	
     \left. \frac{\partial \hat{v}(\ell; K; H;\theta)}{\partial \theta} \right\mid_{\theta = \ddot{\theta}}
     &= \left\{ M_{\ell, H}(\ddot{\theta}) \frac{\partial v(\ddot{\theta})}{\partial \theta} 
     - v(\ddot{\theta}) \frac{\partial M_{\ell, H}(\ddot{\theta})}{\partial \theta}  \right\} \tilde{v}/ 
     M_{\ell, H}^2(\ddot{\theta}).
  \end{align*}
  By assumption, $\sup_{\theta\in\mathcal{H}} \left|{\partial v({\theta})}/{\partial \theta}\right|<\infty$ and $\sup_{\theta\in\mathcal{H}} \left|{\partial M_{\ell, H}({\theta})}/{\partial \theta}\right|<\infty$. 
  So ${\partial \hat{v}(\ell; K; H;\theta)}/{\partial \theta} \mid_{\theta = \ddot{\theta}} = O_p(1)$
  and $n^{p/(2p+1)} \left\{ \hat{v}(\ell; K; H; \bar{\theta}) - \hat{v}(\ell; K; H; \theta_{\star}) \right\} = o_p(1)$
  as $n^{1/2} (\bar{\theta} - \theta_{\star}) = O_p(1)$.
  Then we define a general version of $\epsilon^{\star}_h$ for $h\geq 0$ such that
  \[
    \epsilon^{\star}_h
    = \min\left[ n^{\frac{2p}{2p+1}} \left\{ \hat{v}(\ell; K; H;\bar{\theta}) - v \right\}^2, h \right] 
    - \min\left[ n^{\frac{2p}{2p+1}} \left\{ \hat{v}(\ell; K; H;\theta_{\star}) - v \right\}^2, h \right] = o_p(1).
  \]
  Since $\epsilon^{\star}_h$ is upper-bounded by $h$,
  by Lemma A2 of \cite{andrews1991},
  we have $\E(\epsilon^{\star}_h) = o(1)$ and thus
  \[
    \lim_{h\to\infty} \lim_{n\to\infty} \MSE_h\{ \hat{v}(\ell; K; H;\bar{\theta})\}
      = \lim_{h\to\infty} \lim_{n\to\infty} \MSE_h\{\hat{v}(\ell; K; H;\theta_{\star})\}.
  \]
  Since $\lim_{h\to\infty} \lim_{n\to\infty} \MSE_h\{\hat{v}(\ell; K; H;\theta_{\star})\} = \lim_{n\to\infty} n^{2p/(2p+1)} \MSE\{\hat{v}(\ell; K; H;\theta_{\star})\}$,
  $$\lim_{h\to\infty} \lim_{n\to\infty} \MSE_h\{ \hat{v}(\ell; K; H;\bar{\theta})\}
        = \lim_{n\to\infty} n^{2p/(2p+1)} \MSE\{\hat{v}(\ell; K; H;\theta_{\star})\}.$$
   Thus, we obtained the desired result.  
\end{proof}

\begin{proof}[Proof of Corollary~\ref{cor:TailBVgeneral}]
  The conclusions can be drawn trivially following Theorem~\ref{thm:BiasVarGeneral}.
\end{proof}

\begin{proof}[Proof of Proposition~\ref{prop:multivariateBiasVar}]
  {
  (i) We first show that}
  \begin{align}\label{eqt:WMSE_var_star}
    \lim_{h\to\infty} \lim_{n\to\infty}  \WMSE_h\left\{ \hat{V}(\bar{\theta}); W\right\}
    = \lim_{h\to\infty} \lim_{n\to\infty} \WMSE_h \left\{ \hat{V}(\theta_{\star}); W\right\}.
  \end{align}
  By Taylor's expansion,
  for some $\ddot{\theta} \in \mathcal{H}$ on the line segment joining $\bar{\theta}$ and $\theta_{\star}$,
  \begin{align}\label{eqt:deriv}
    n^{{p}/(2p+1)} \left\{ \hat{V}^{(u,v)}(\bar{\theta}) -  \hat{V}^{(u,v)}(\theta_{\star})\right\} 
      = n^{{p}/(2p+1)} 
        \left\{ \nabla \hat{V}^{(u,v)}(\ddot{\theta}) 
    \right\}^{\T}
    (\bar{\theta}-\theta_{\star}),
  \end{align}
  where $\nabla \hat{V}^{(u,v)}(\ddot{\theta})$ is the gradient of 
  $\hat{V}^{(u,v)}(\theta)$ at $\theta = \ddot{\theta}$.
  After doing matrix multiplication, we have 
  \[
    \hat{V}^{(u,v)}(\theta)
    = \frac{1}{2} \sum_{j=1}^d \sum_{k=1}^d \left[
      V^{(u,k)}(\theta) \{ M^{-1}(\theta) \}^{(k,j)} \tilde{V}^{(j,v)}
      + \tilde{V}^{(u,j)} \{ M^{-1}(\theta) \}^{(j,k)} V^{(k,v)}(\theta)
    \right],
  \]
  where $\{ M^{-1}(\theta) \}^{(k,j)}$ is the $(k,j)$th element of $M^{-1}(\theta)$
  and other variables are defined similarly.  
  Hence, the gradient of $\hat{V}^{(u,v)}(\theta)$ can be found as follows: 
  \begin{align*}  
      \nabla\hat{V}^{(u,v)}(\theta)
    &= \frac{1}{2} \sum_{j=1}^d \sum_{k=1}^d \bigg(
       \left[\{\nabla V^{(u,k)}(\theta)\} \{ M^{-1}(\theta) \}^{(k,j)} 
      + V^{(u,k)}(\theta) \nabla\{ M^{-1}(\theta) \}^{(k,j)} \right]\tilde{V}^{(j,v)} \\
    &\qquad
      + \tilde{V}^{(u,j)} \left[\nabla\{ M^{-1}(\theta) \}^{(j,k)} V^{(k,v)}(\theta)
      +\{ M^{-1}(\theta) \}^{(j,k)} \{\nabla V^{(k,v)}(\theta)\} \right] \bigg).
  \end{align*}
   Note that 
  \[
    \frac{\partial M^{-1}(\theta)}{\partial  \theta^{(h)}} = - M^{-1}(\theta)  \left\{ \frac{\partial M(\theta)}{\partial  \theta^{(h)}}\right\} M^{-1}(\theta).
  \]
  Using the assumptions that $M(\theta)$ is invertible and the conditions that 
  \[
    \sup_{\theta\in\mathcal{H}} \left| \frac{\partial V^{(u,v)}(\theta)}{\partial  \theta^{(h)}} \right| <\infty
      \qquad\text{and}\qquad
    \sup_{\theta\in\mathcal{H}} \left| \frac{\partial M^{(u,v)}(\theta)}{\partial  \theta^{(h)}} \right| < \infty
    \qquad (u,v,h\in\{1, \ldots, d\}),
  \]
  we have, for all $h=1,\ldots,d$, that  
   \begin{align}
    &\sup_{\theta\in\mathcal{H}} \left| \frac{\hat{V}^{(u,v)}(\theta)}{\partial \theta^{(h)}} \right| \nonumber\\
      &\quad\leq \frac{1}{2} \sum_{j=1}^d \sum_{k=1}^d \sup_{\theta\in\mathcal{H}}\left| 
      \left[\frac{\partial V^{(u,k)}(\theta)}{\partial \theta^{(h)}} \{ M^{-1}(\theta) \}^{(k,j)} 
      + V^{(u,k)}(\theta) \frac{\partial \{ M^{-1}(\theta) \}^{(k,j)}}{\partial \theta^{(h)}} \right]\tilde{V}^{(j,v)} \right.\nonumber\\
    &\quad\qquad
      + \left.\tilde{V}^{(u,j)} \left[\frac{\partial \{ M^{-1}(\theta) \}^{(j,k)}}{\partial \theta^{(h)}} V^{(k,v)}(\theta)
      + \{ M^{-1}(\theta) \}^{(j,k)} \frac{\partial V^{(k,v)}(\theta)}{\partial \theta^{(h)}} \right]\right|\nonumber\\
    &\quad\leq \frac{1}{2} \sum_{j=1}^d \sum_{k=1}^d \bigg( 
      \left[\sup_{\theta\in\mathcal{H}}\left|\frac{\partial V^{(u,k)}(\theta)}{\partial \theta^{(h)}}\right| \left|\{ M^{-1}(\theta) \}^{(k,j)}\right| \right.
      + \left.\left| V^{(u,k)}(\theta) \right| \sup_{\theta\in\mathcal{H}}\left|\frac{\partial \{ M^{-1}(\theta) \}^{(k,j)}}{\partial \theta^{(h)}}\right| \right]\left|\tilde{V}^{(j,v)}\right| \nonumber\\
    &\quad\qquad
      + \left|\tilde{V}^{(u,j)}\right| \left[\sup_{\theta\in\mathcal{H}}\left|\frac{\partial \{ M^{-1}(\theta) \}^{(j,k)}}{\partial \theta^{(h)}}\right| \left|V^{(k,v)}(\theta)\right|\right.
      + \left.\left| \{ M^{-1}(\theta) \}^{(j,k)}\right| \sup_{\theta\in\mathcal{H}}\left|\frac{\partial V^{(k,v)}(\theta)}{\partial \theta^{(h)}}\right| \right] \bigg)\nonumber\\
      &\quad = O_p(1), \label{eqt:bounded_derivativeVhat}
   \end{align}
   where the last line follows from the property that $\sum_{j=1}^d \sum_{k=1}^d |\tilde{V}^{(j,k)}| = O_p(1)$.
  Therefore, in view of (\ref{eqt:bounded_derivativeVhat})
  and the assumption that $\bar{\theta}-\theta_{\star} = O_p(n^{-1/2})$, we can bound (\ref{eqt:deriv}) as  
  $
  	n^{{p}/(2p+1)} \left\{ \hat{V}^{(u,v)}(\bar{\theta}) -  \hat{V}^{(u,v)}(\theta_{\star})\right\} = o_p(1).
  $  
  Then for $h\geq 0$, we re-define  
  \begin{align*}
    \epsilon_h^* 
      &= \min\left[ n^{\frac{2p}{2p+1}}\vect\left\{ \hat{V}(\bar{\theta}) - V\right\}^\T W \vect\left\{ \hat{V}(\bar{\theta}) - V\right\}, h\right] \\
                 & \quad - \min\left[ n^{\frac{2p}{2p+1}}\vect\left\{ \hat{V}(\theta_{\star}) - V\right\}^\T W \vect\left\{ \hat{V}(\theta_{\star}) - V\right\}, h\right] 
                 = o_p(1).
  \end{align*}
  Since $\epsilon_h^*$ is upper-bounded by $h$, we have by Lemma A2 of \cite{andrews1991} that $\E(\epsilon_h^*) = o(1)$, 
  which implies that (\ref{eqt:WMSE_var_star}) is true.
  Note that 
  \begin{align*}
    \lim_{h\to\infty} \lim_{n\to\infty} \WMSE_h \left\{ \hat{V}(\theta_{\star}); W\right\} 
    = \lim_{n\to\infty} n^{2p/(2p+1)}\WMSE \left\{ \hat{V}(\theta_{\star}); W\right\}.
  \end{align*}
  Hence, the result in part (i) follows. 

	(iii) Next, we derive the bias of $\hat{V}(\theta_{\star})$. 
  By similar arguments in Theorem~\ref{thm:BiasVar} but with $\gamma_k$ replaced by the autocovariance matrix $\Gamma_k$, 
  we have
  \begin{align}\label{eqt:biasVtil}
    \E(\tilde{V}) - V = \frac{BV_p}{\ell^p} + o\left(\frac{1}{\ell^p}\right) + O\left(\frac{\ell}{n}\right)
  \end{align}
  where we have used the property that  $\E(\tilde\Gamma_k) = \Gamma_k + O(1/n)$ 
  for $k=1,\ldots, n$.
  Similarly, we also have 
  \begin{align}\label{eqt:M_formula_matrix}
    M(\theta_{\star}) - V(\theta_{\star}) = \frac{BV_p(\theta_{\star})}{\ell^p} + o\left(\frac{1}{\ell^p}\right).
  \end{align}
  Recall that by the definition of the proposed estimator, we have
  \begin{align}\label{eqt:estDecomp}
    \hat{V}(\theta_{\star}) 
    	&= \frac{V(\theta_{\star}) M^{-1}(\theta_{\star})\tilde{V}+\tilde{V} M^{-1}(\theta_{\star}) V(\theta_{\star})}{2} \nonumber \\
    &= \tilde{V} + \frac{\left\{V(\theta_{\star}) M^{-1}(\theta_{\star})-I_d\right\}\tilde{V}+
    	\tilde{V} \left\{ M^{-1}(\theta_{\star}) V(\theta_{\star})-I_d\right\}}{2} ,
  \end{align}
  where $I_d$ is a $d\times d$ identity matrix.
  By Woodbury matrix identity and (\ref{eqt:M_formula_matrix}), we have 
  \begin{align*}
    M^{-1}(\theta_{\star}) 
    &= \left\{ V(\theta_{\star}) + \frac{BV_p(\theta_{\star})}{\ell^p} + o\left(\frac{1}{\ell^p}\right)\right\}^{-1} \\
    &= V^{-1}(\theta_{\star}) - V^{-1}(\theta_{\star}) \left\{ \frac{\ell^p V(\theta_{\star})V_p^{-1}(\theta_{\star})}{B} + I_d \right\}^{-1} + o\left(\frac{1}{\ell^p}\right).
  \end{align*}
  Hence, when $\ell\to\infty$,
  \begin{align}
    V(\theta_{\star})M^{-1}(\theta_{\star}) - I_d
    &=  - \left\{ \frac{\ell^p V(\theta_{\star})V_p^{-1}(\theta_{\star})}{B} + I_d \right\}^{-1} + o\left(\frac{1}{\ell^p}\right) \nonumber \\
    &=
    -\frac{B}{\ell^p}V_p(\theta_{\star})V^{-1}(\theta_{\star}) + o\left(\frac{1}{\ell^p}\right). \label{eqt:VMinverse}
  \end{align}
  Taking expectation on both sides of (\ref{eqt:estDecomp}), and applying (\ref{eqt:biasVtil}) and (\ref{eqt:VMinverse}), 
  we obtain the expression of the bias as follows:
	{
  \begin{align*}
    \E\left\{ \hat{V}(\theta_{\star}) \right\} - V 
    &= \E(\tilde{V}) - V  + \frac{\left\{V(\theta_{\star}) M^{-1}(\theta_{\star})-I_d\right\}\E(\tilde{V})+
    	\E(\tilde{V}) \left\{ M^{-1}(\theta_{\star}) V(\theta_{\star})-I_d\right\}}{2}\\
    &=  \frac{B }{\ell^p} V_p
    	- \frac{B}{2\ell^p}V_p(\theta_{\star})V^{-1}(\theta_{\star})V 
		- \frac{B}{2\ell^p}V V^{-1}(\theta_{\star}) V_p(\theta_{\star})
			+ o\left( \frac{1}{\ell^p}\right) + O\left( \frac{\ell}{n}\right) \\
    &=  \frac{B}{\ell^p}\frac{1}{2} 
    	\left[
		\left\{ V_p - V_p(\theta_{\star})V^{-1}(\theta_{\star})V\right\}
		+\left\{ V_p - VV^{-1}(\theta_{\star})V_p(\theta_{\star})\right\}
		\right]
		+ o\left( \frac{1}{\ell^p}\right) + O\left( \frac{\ell}{n}\right) \\
    &=  \frac{B}{\ell^p}\frac{1}{2} 
    	\left[
		\left\{ V_pV^{-1} - V_p(\theta_{\star})V^{-1}(\theta_{\star})\right\}V
		+V\left\{ V^{-1}V_p - V^{-1}(\theta_{\star})V_p(\theta_{\star})\right\}
		\right]
		+ o\left( \frac{1}{\ell^p}\right) + O\left( \frac{\ell}{n}\right) , 
  \end{align*}
  }
  which is our desired result for part (iii). 

  {
  (ii) 
  It remains to derive the variance of $\hat{V}(\theta_{\star})$. 
  In view of (\ref{eqt:estDecomp}) and (\ref{eqt:VMinverse}),
  we know that 
  $\Var\left\{\vect\hat{V}(\theta_{\star})\right\} \sim \Var \left( \vect\tilde{V}\right)$ for all $u,v\in \{1,\ldots, d\}$.
  To derive $\Var \left(\vect\tilde{V} \right)$, 
  it suffices for us to derive $\Cov\left\{  \tilde{V}^{(u,v)} , \tilde{V}^{(u',v')} \right\}$
  for each $u,v,u',v'\in\{1, \ldots, d\}$.
  The derivation of $\Cov\left\{  \tilde{V}^{(u,v)} , \tilde{V}^{(u',v')} \right\}$
  is similar to the derivation of $\Var(\tilde{v})$ in Theorem~\ref{thm:BiasVar}, 
  except that we need to use 
  Lemmas~\ref{lemma:multblock} and \ref{lemma:multblock2} 
  instead of Lemmas~\ref{lemma:hoc1} and \ref{lemma:hoc2}. 
  Following the steps, we obtain 
  \[
  	\Cov\left\{  \tilde{V}^{(u,v)} , \tilde{V}^{(u',v')} \right\}
		= 2A\left(V^{(u,u')}V^{(v,v')} + V^{(u,v')}V^{(u',v)} \right) \frac{\ell}{n} + o\left(\frac{\ell}{n}\right) . 
  \]
  Upon careful checking the matrix form, we obtain 
    \begin{equation*}
      \Var\left( \vect  \tilde{V} \right)
      = 2A (I_{dd} + C_{dd}) (V\otimes V) \frac{\ell}{n} + o\left(\frac{\ell}{n}\right) . 
    \end{equation*}  
  Hence, the proof for part (ii) is completed. }
\end{proof}

\begin{proof}[Proof of Proposition~\ref{prop:hacRobust}]

We will reuse some  parts of the analysis in the proof of Proposition~\ref{prop:multivariateBiasVar}.
(i) For the bias, 
we recall (\ref{eqt:estDecomp}): 
\begin{equation*}
  \hat{V}(\theta_{\star}) 
      = \tilde{V} + \frac{\left\{V(\theta_{\star}) M^{-1}(\theta_{\star})-I_d\right\}\tilde{V}+\tilde{V} \left\{ M^{-1}(\theta_{\star}) V(\theta_{\star})-I_d\right\}}{2}.
\end{equation*}
We can still use (\ref{eqt:VMinverse}) and the bias is 
\begin{align}\label{eqt:bias_Vhat_multi}
&\E\left\{ \hat{V}(\theta_{\star}) \right\} - V_n \nonumber\\
    &\qquad=  \E(\tilde{V}) - V_n
    	- \frac{B}{2\ell^p}V_p(\theta_{\star})V^{-1}(\theta_{\star})\E(\tilde{V}) 
		- \frac{B}{2\ell^p}\E(\tilde{V}) V_p(\theta_{\star})V^{-1}(\theta_{\star}) 
			+ o\left( \frac{1}{\ell^p}\right) + O\left( \frac{\ell}{n}\right).
\end{align}
So it remains to study $\E(\tilde{V})$.
Define  $\Gamma_{k, \mathdag}$ such that 
\[
  \Gamma^{(u,v)}_{k, \mathdag} = \Gamma^{(v,u)}_{-k, \mathdag} 
  = \sup_{i\geq1}\E\left[\left\{X^{(u)}_{i+k}-\mu^{(u)}\right\}\left\{X^{(v)}_i-\mu^{(v)}\right\}\right]
  \qquad\text{for }k = 0,1,\ldots,n-1.
\]
The finite-$n$ counterpart of $V^{(u,v)}_p$ is defined as $V^{(u,v)}_{p,n} = \sum_{|k| <n} |k| ^{p} \Gamma^{(u,v)}_{k,n}$,
where
the finite-$n$ version of the autocovariance matrix is
\[
 {\Gamma}_{k, n} = {\Gamma}^{\T}_{-k, n} = \frac{1}{n} \sum_{i=k+1}^{n} \E\{(X_i-\mu)(X_{i-k}-\mu)^\T\},\qquad\text{for }k = 0,1,\ldots,n-1.
\]
It is easy to see that $\E(\tilde{\Gamma}_{k}) = {\Gamma}_{k, n} + O(1/n)$
still holds from a natural extension of the arguments in Theorem~\ref{thm:BiasVar}.
Therefore, let $r(t) = K(t) - (1+B{|t|}^p)\mathbb{1}(|t|\leq 1)$,
when $n,\ell\to\infty$,
\begin{align*}
  \E(\tilde{V}) - V_n 
  &= \sum_{|k|\leq \ell} \left( 1 + B{\left|\frac{k}{\ell} \right|}^p\right)\E(\tilde{\Gamma}_{k})
  + \sum_{k=-n+1}^{n-1} r\left(\frac{k}{\ell}\right)\E(\tilde{\Gamma}_{k}) - \sum_{k=-n+1}^{n-1}{\Gamma}_{k, n}, \\
  \sum_{|k|\leq \ell} \left\{ \E(\tilde{\Gamma}_{k}) -  \Gamma_{k, n}\right\}
  &= O\left(\frac{\ell}{n}\right), \\
  \sum_{k=-n+1}^{n-1} r\left(\frac{k}{\ell}\right)\E(\tilde{\Gamma}_{k}) &= o\left(\frac{1}{\ell^p}\right),\\
  B\sum_{|k|\leq \ell}{\left|\frac{k}{\ell} \right|}^p \E(\tilde{\Gamma}_{k}) 
  &= \frac{BV_{p,n}}{\ell^p} + o\left(\frac{1}{\ell^p}\right),
\end{align*}
where
we have used similar arguments as in the proof of Theorem~\ref{thm:BiasVar} and 
the assumption that $\sum_{u=1}^{d}\sum_{v=1}^{d}U^{(u,v)}_{p, \mathdag}<\infty$ in the last equality.
Hence,
we have $\E(\tilde{V}) - V_n = BV_{p,n}/\ell^p + o(1/\ell^p) + O(\ell/n)$,
where $V_{p,n}$ satisfies that $|V_{p,n}^{(u,v)}|\leq |V^{(u,v)}_{p, \mathdag}| \leq U^{(u,v)}_{p, \mathdag} <\infty$ for all $u,v\in\{1, \ldots, d\}$.
Using this result, (\ref{eqt:bias_Vhat_multi}) can be simplified as follows:
  \begin{align*}
\E\{\hat{V}(\theta_{\star})\} - V_n
	= \frac{B}{\ell^p} V_{p,n}
    	- \frac{B}{2\ell^p}V_p(\theta_{\star})V^{-1}(\theta_{\star})V_n
		- \frac{B}{2\ell^p}V_n V_p(\theta_{\star})V^{-1}(\theta_{\star}) 
			+ o\left( \frac{1}{\ell^p}\right) + O\left( \frac{\ell}{n}\right).
  \end{align*}
  Together with the fact that $0 \leq V_n^{(u,v)} \leq V_{0, \mathdag}^{(u,v)}$ for all $u,v\in\{1, \ldots, d\}$, 
  we have 
  \[
    \limsup_{n\to\infty} \ell^p\left\vert \E\{\hat{V}^{(u,v)}(\theta_{\star})\}- V^{(u,v)}_n\right\vert
    \leq |B| \left\{U^{(u,v)}_{p,\mathdag} + \Psi^{(u,v)}_{p,\mathdag}\right\} ,
  \]
where $\Psi_{p,\mathdag} = \left\vert V_p(\theta_{\star})V^{-1}(\theta_{\star}) V_{0,\mathdag} 
  			+ V_{0,\mathdag} V^{-1}(\theta_{\star})V_p(\theta_{\star}) \right\vert /2$. 
For the variance, we can easily modify the arguments in Proposition~\ref{prop:multivariateBiasVar} to obtain the desired result
after noting that $0\leq V_{n}^{(u,v)} \leq V_{0,\mathdag}^{(u,v)}$ for all $u,v \in\{1,\ldots,d\}$.

(ii) Following the same arguments as in Proposition~\ref{prop:multivariateBiasVar}, 
we obtain $n^{1/(2p+1)} \left\{ \hat{V}(\bar{\theta}) - \hat{V}({\theta}^{\star}) \right\} = o_p(1)$. 
Thus, it completes the proof of the proposition. 
\end{proof}

{

\begin{proof}[Proof of Theorem~\ref{thm:spectralBV}]
The proof follows similar arguments as in the proof of Theorem~\ref{thm:BiasVar}. 
To facilitate understanding, we also present the details of the proof below, with the key differences highlighted.

  First, we derive the bias 
  $\hat{f}(\omega; \ell; K; \theta_{\star} )$
  and write it in short as $\hat{f}(\omega; \theta_{\star}) $.
  Let $r(t) = K(t) - \left( 1 + B{|t|}^{p} \right) \mathbb{1}(|t|\leq1)$.
  Since $K(t) = K(-t)$ for all $t\geq 0$ and
  $K(0)=1$,
  $\lim_{t \to 0 } |r(t)| / {|t|}^{p} = 0$.
  The bias is computed following the definition as 
  \[
    \E\left\{ \hat{f}(\omega;\theta_{\star}) \right\} - f(\omega)
      = \left\{ \frac{\sum_{k\in\mathbb{Z}}\gamma_k(\theta_{\star})\exp(\iota k\omega)}{\sum_{k\in\mathbb{Z}} K(k/\ell) \gamma_k(\theta_{\star})\exp(\iota k\omega)} - 1\right\} \E\left\{
      \tilde{f}(\omega)\right\}
      +\E\left\{ \tilde{f}(\omega)\right\} - f(\omega).
  \]
  In particular, consider the coefficient
  \begin{align*}\label{eqt:bvcoef1_spec}
     &\frac{\sum_{k\in\mathbb{Z}}\gamma_k(\theta_{\star})\exp(\iota k\omega)}{\sum_{k\in\mathbb{Z}} K(k/\ell) \gamma_k(\theta_{\star})\exp(\iota k\omega)} - 1 \\
      &\qquad= \frac{\sum_{k\in\mathbb{Z}}\gamma_k(\theta_{\star})\exp(\iota k\omega)}{\left\{\sum_{|k| \leq \ell} \left(1 + B{\left\lvert k/\ell \right\vert}^{p}\right)\gamma_k(\theta_{\star}) + \sum_{k\in\mathbb{Z}} r\left( k/\ell \right)\gamma_k(\theta_{\star})\right\}\exp(\iota k\omega)} - 1.
  \end{align*}
  Then we compute the terms separately,
  \begin{align*}
    \sum_{|k| \leq \ell} \left(1 + B{\left\lvert \frac{k}{\ell}\right\vert}^{p}\right)\gamma_k({\theta_{\star}})\exp(\iota k\omega)
      &= \sum_{k\in\mathbb{Z}} \gamma_k({\theta_{\star}})\exp(\iota k\omega) -  \sum_{|k| > \ell} \gamma_k({\theta_{\star}})\exp(\iota k\omega) \\
      &\quad + B\sum_{|k| \leq \ell}{\left\lvert \frac{k}{\ell}\right\vert}^{p}\gamma_k({\theta_{\star}})\exp(\iota k\omega);\\
   B\sum_{|k| \leq \ell} {\left\lvert \frac{k}{\ell}\right\rvert}^{p}\gamma_k({\theta_{\star}})\exp(\iota k\omega)
   & = B \sum_{k \in \mathbb{Z}} {\left\lvert \frac{k}{\ell}\right\rvert}^{p}\gamma_k({\theta_{\star}})\exp(\iota k\omega) 
      - B \sum_{|k| > \ell} {\left\lvert \frac{k}{\ell}\right\rvert}^{p}\gamma_k({\theta_{\star}})\exp(\iota k\omega) \\
    & = B\frac{2\pi f_p(\omega;\theta_{\star})}{\ell^p} + o\left(\frac{1}{\ell^p}\right),
  \end{align*}
  where in the last step, we use the fact that 
  $u_{p}(\theta_{\star})<\infty$ and 
  by the triangular inequality and Kronecker's lemma, 
  \[
  \left|\sum_{|k|>\ell} {\left\lvert \frac{k}{\ell}\right\rvert}^{p}\gamma_k({\theta_{\star}})\exp(\iota k\omega) \right| 
  \leq 
  \sum_{|k|>\ell}  {\left\lvert \frac{k}{\ell}\right\rvert}^{p}
  |\gamma_k({\theta_{\star}})||\exp(\iota k\omega)| 
  \leq \sum_{|k|>\ell}  {\left\lvert \frac{k}{\ell}\right\rvert}^{p}
  |\gamma_k({\theta_{\star}})|
  = o\left(\frac{1}{\ell^p}\right).
  \]
  Similarly, 
  \[
    \left| \sum_{|k| > \ell} \gamma_k({\theta_{\star}}) \exp(\iota k\omega)\right|  
    \leq \sum_{|k| > \ell} \left| \gamma_k({\theta_{\star}}) \right|
    \leq \sum_{|k| > \ell} {\left\lvert \frac{k}{\ell} \right\rvert}^p\left|\gamma_k({\theta_{\star}})\right|  = o\left(\frac{1}{\ell^p}\right).
  \]
  Therefore,
  \[
     \sum_{|k| \leq \ell} \left(1 + B{\left\lvert \frac{k}{\ell}\right\vert}^{p}\right)\gamma_k({\theta_{\star}})
     \exp(\iota k\omega) 
     =  \sum_{k\in\mathbb{Z}} \gamma_k({\theta_{\star}})\exp(\iota k\omega)  + B \frac{2\pi f_p(\omega;\theta_{\star})}{\ell^p} + o\left(\frac{1}{\ell^p}\right).
  \]
  Now consider $\sum_{k\in\mathbb{Z}} r({k}/{\ell})\gamma_k({\theta_{\star}})\exp(\iota k\omega)$,
  \begin{equation}\label{eqt:bvterms1_spec}
     \left\lvert \sum_{k\in\mathbb{Z}} r\left(\frac{k}{\ell}\right)  \gamma_k({\theta_{\star}})\exp(\iota k\omega) \right\rvert  
      \leq \left\lvert \sum_{|k| \leq \sqrt{\ell}} r\left(\frac{k}{\ell}\right)  \gamma_k({\theta_{\star}})\exp(\iota k\omega) \right\rvert  + 
          \left\lvert \sum_{|k| > \sqrt{\ell}} r\left(\frac{k}{\ell}\right)  \gamma_k({\theta_{\star}})\exp(\iota k\omega) \right\rvert.
  \end{equation}
  Similar to the proof of Theorem~\ref{thm:BiasVar} with slight modifications,
  we have 
  \[
  \left|\sum_{k\in\mathbb{Z}}r\left(\frac{k}{\ell}\right)\gamma_k(\theta_{\star})\exp(\iota k\omega)\right|
  = o\left(\frac{1}{\ell^{p}}\right).
  \]
  Combining the results above,
  \begin{align*}
    \frac{\sum_{k\in\mathbb{Z}}\gamma_k({\theta_{\star}})\exp(\iota k\omega)}{\sum_{k\in\mathbb{Z}} K(k/\ell) \gamma_k({\theta_{\star}})\exp(\iota k\omega)} - 1 
      &= \frac{\sum_{k\in\mathbb{Z}}\gamma_k({\theta_{\star}})\exp(\iota k\omega)}{\sum_{k\in\mathbb{Z}}\gamma_k({\theta_{\star}})\exp(\iota k\omega) + 2\pi Bf_p(\omega;\theta_{\star}) /\ell^p + o(1/\ell^p)} - 1 \\
      &= \frac{-Bf_p(\omega;\theta_{\star}) }{f(\omega;\theta_{\star}) \ell^p} + o\left(\frac{1}{\ell^p}\right).
  \end{align*}
  Following the arguments in Theorem~\ref{thm:BiasVar},  
  the bias of $\tilde{f}(\omega)$ can be found as 
  \[
    \E\{\tilde{f}(\omega)\} - f(\omega)
    = \frac{B f_p(\omega)}{\ell^p} + o\left(\frac{1}{\ell^p}\right)
    + O\left(\frac{\ell}{n}\right).
  \]
  Therefore, the bias of $\hat{f}(\omega;\theta_{\star})$ is 
  \begin{align*}
    \E\left\{ \hat{f}(\omega;\theta_{\star})  \right\} - f(\omega)
    &= \left\{ \frac{-Bf_p(\omega;\theta_{\star}) }{f(\omega;\theta_{\star}) \ell^p} + o\left(\frac{1}{\ell^p}\right) \right\}  \E\{\tilde{f}(\omega)\} +  \E\{\tilde{f}(\omega)\} - f(\omega) \\
    &= \left\{ \frac{-Bf_p(\omega;\theta_{\star}) }{f(\omega;\theta_{\star}) \ell^p} + o\left(\frac{1}{\ell^p}\right) \right\}\left\{f(\omega)+o(1)\right\} + B \cdot \frac{f_p(\omega)}{\ell^{p}} + O\left(\frac{\ell}{n}\right) + o\left(\frac{1}{\ell^p}\right) \\
    &= \frac{B}{\ell^p}\left\{ \frac{f_p(\omega)}{f(\omega)} - \frac{f_p(\omega;\theta_{\star}) }{f(\omega;\theta_{\star}) }  \right\}f(\omega) + o\left(\frac{1}{\ell^p}\right) + O\left(\frac{\ell}{n}\right).
  \end{align*}

Second, we derive the variance of $\hat{f}(\omega; \theta_{\star} )$. 
  As $\|\hat{f}(\omega;\theta_{\star})  - \tilde{f}(\omega)\| = O(1/\ell^p)$,
  we know that $\Var\{ \hat{f}(\omega; \theta_{\star}) \} \sim \Var\{\tilde{f}(\omega)\}$.
  So, it suffices for us to derive the variance of $\tilde{f}(\omega)$.
  Note that by Euler's formula, we have
  $ \exp(\iota k\omega) = \cos(k\omega) + \iota \sin(\iota k\omega) $.
  So $\tilde{f}(\omega)$ can be written as 
  \begin{align*}
    \tilde{f}(\omega)
    &= \frac{1}{2\pi}\sum_{k=1-n}^{n-1} K\left(\frac{k}{\ell}\right)\exp(\iota k\omega)\tilde{\gamma}_k\\
    &= \frac{1}{2\pi}\sum_{k=1-n}^{n-1} K\left(\frac{k}{\ell}\right)\left\{\cos(k\omega) + \iota \sin(\iota k\omega)\right\}\tilde{\gamma}_k \\
    &= \frac{1}{2\pi}\sum_{k=1-n}^{n-1} K\left(\frac{k}{\ell}\right)\cos(k\omega)\tilde{\gamma}_k.
  \end{align*}
  We 
  approximate $\tilde{f}(\omega)$ by 
  $\ddot{f}(\omega)$, which is defined as $\tilde{f}(\omega)$, but with $\bar{X}_n$ replaced by $\mu$, i.e.,
  \begin{align*}
    \ddot{f}(\omega)
    & = \frac{1}{2\pi}\sum_{i=1}^{n}\sum_{j=1}^{n}K\left( \frac{|i-j|}{\ell} \right)\cos\left\{ (i-j)\omega\right\} \frac{(X_i-\mu)(X_{j}-\mu)}{n} \\
    &=  \frac{1}{2\pi}\sum_{i=1}^{n}\sum_{j=1}^{n}K\left( \frac{|i-j|}{\ell} \right)e^{\iota(i-j)\omega} \frac{(X_i-\mu)(X_{j}-\mu)}{n} \\
    &= \frac{1}{2\pi n}\sum_{i=1}^{n}\sum_{j=1}^{n}K\left( \frac{|i-j|}{\ell} \right)(X_i-\mu)e^{\iota i\omega} (X_{j}-\mu)e^{-\iota j\omega}.
  \end{align*}
  To show that $\tilde{f}(\omega)$ can be approximated by $\ddot{f}(\omega)$, 
  we use Lemma A.6 in \cite{LiuChan2022} with
  $W_n(i,j)=K(|i-j|/\ell)\cos\left\{(i-j)\omega\right\}/(2\pi n)$.
  Checking the necessary conditions as before, we have 
  \begin{align*}
     E_{4,n} &= \frac{1}{n}; \qquad E_{5,n} \leq E^2_{6,n} = O\left(\frac{\ell}{n}\right);
     \qquad
    E_{7,n} = O\left(\frac{\ell^{1/2}}{n^{1/2}}\right).
  \end{align*}
  Therefore,
  using Minkowski inequality and noting that 
  $
    \E\left(\left|\tilde{f}(\omega)-\ddot{f}(\omega)\right|\right) \leq \|\tilde{f}(\omega)-\ddot{f}(\omega)\| = O\left(\ell^{1/2}/{n}\right),
  $
  we have 
  \begin{align}\label{eqt:equiv_ddot_tilde_f}
    \|\tilde{f}(\omega)-\E\{\tilde{f}(\omega)\}\| 
    = \|\ddot{f}(\omega)-\E\{\ddot{f}(\omega)\}\| 
    + O({\ell^{1/2}}/{n}).
  \end{align}
  Without loss of generality,
  we assume $\mu=0$ from now on.
  If $\mu\neq0$, we can always subtract the mean from the data and all results apply.
  Now 
  \[
  	\ddot{f}(\omega) 
	= \frac{1}{2\pi n}\sum_{1\leq i,j\leq n}K(|i-j|/\ell)X_i e^{\iota i\omega} X_{j} e^{-\iota j\omega} 
  	= \frac{1}{2\pi n}\sum_{1\leq i,j\leq n}K(|i-j|/\ell)X_i X_{j} \cos\{(i-j)\omega\}.
	\]
  {
  When $\omega=0$, we have $\ddot{f}(\omega) \equiv \ddot{v}/(2\pi)$, where $\ddot{v}$ is defined in (\ref{eqt:definition_vddot}). 
  Thus, the results follows by rescaling the estimator by a factor of $2\pi$. 
  When $\omega=\pi$, we let $X'_i = (-1)^i X_i$ for each $i\in\mathbb{Z}$, and  
  denote $\gamma_k' = (-1)^k \gamma_k = \E(X'_0X'_{k})$ for each $k\in\mathbb{Z}$. 
  In this case, 
  $\ddot{f}(\omega) = \sum_{1\leq i,j\leq n}K(|i-j|/\ell) X'_i X'_{j}  / (2\pi n)$
  and 
  $f(\omega) = \sum_{k\in\mathbb{Z}} (-1)^k\gamma_k = \sum_{k\in\mathbb{Z}} \gamma_k'$, 
  where we have used the fact that $\cos\{(i-j)\pi\} = (-1)^{i-j} = (-1)^{i+j}$.
  Then following the proof for the case $\omega=0$, we obtain the desired result. 
  So, it remains to prove the result for $\omega\in(0, \pi)$. }

  We decompose $\ddot{f}(\omega)$ into multiple blocks using the same $m,r_{n}\in\mathbb{N}_{0}$
  as defined in the proof of Theorem~\ref{thm:BiasVar}.
  Also recall the partitions
  $\mathcal{A}_t=\{(i,j)\in([1,n]\cap \mathbb N)^2: (t-1)\ell+1 \leq i-j\leq t \ell \}$
  for $t\in\mathbb{N} \bigcap [1,s]$
  and partition $\mathcal{A}_t$ into two parts:
  $\bigcup_{k=1}^{m-t+1}\mathcal{B}_{k,t}$ and
  $\mathcal R_t = \mathcal{A}_t - \left(\bigcup_{k=1}^{m-t+1}\mathcal{B}_{k,t}\right)$, 
  where
  $
    \mathcal B_{k,t}=\left\{(i,j) \in \mathcal{A}_t : i\in\{(k+t-1)\ell+1,\ldots,(k+t)\ell\} \right\}.
  $
  Following this partition, define the quantity
  \begin{align}\label{DefOfGkStar}
    \ddot{f}_s(\omega) 
      &= \frac{1}{2\pi n}\underset{(i,j)\in\bigcup_{t=1}^s\mathcal{A}_t}{\sum\sum} K\left( \frac{|i-j|}{\ell}\right) \left\{ X_i e^{\iota i\omega} X_{j} e^{-\iota j\omega} + X_i e^{-\iota i\omega} X_{j} e^{\iota j\omega} \right\} \nonumber 
      + \frac{1}{n}\sum_{i=1}^n K(0) X_i^2\\
      &= \underbrace{\sum_{t=1}^{s} \frac{\ell}{2\pi n} \sum_{k=1}^{m-t+1} \left[\frac{1}{\ell}\underset{(i,j)\in\mathcal B_{k,t}}{\sum\sum}  K\left(\frac{|i -j|}{\ell}\right)\left\{ X_i e^{\iota i\omega} X_{j} e^{-\iota j\omega} + X_i e^{-\iota i\omega} X_{j} e^{\iota j\omega} \right\}\right]}_{M^{\textnormal{Spec}}_{n,s}} \\
      &\qquad +  \underbrace{\frac{\ell}{2\pi n}\sum_{t=1}^{s}\left[\frac{1}{\ell} \underset{(i,j)\in\mathcal R_t}{\sum\sum}  K\left(\frac{|i-j|}{\ell}\right)\left\{ X_i e^{\iota i\omega} X_{j} e^{-\iota j\omega} + X_i e^{-\iota i\omega} X_{j} e^{\iota j\omega} \right\}\right]}_{R^{\textnormal{Spec}}_{1,n,s}} \\
      &\quad +  \underbrace{\frac{1}{2\pi n}\sum_{i=1}^n K(0)X_i^2}_{R^{\textnormal{Spec}}_{2,n}}.
  \end{align}

  Next, decompose the main part $M^{\textnormal{Spec}}_{n,s}$ into multiple blocks as
   \begin{align}\label{def_MnsStar}
    M^{\textnormal{Spec}}_{n,s} 
      &= \frac{\ell}{2\pi n} \sum_{t=1}^{s} \sum_{k=1}^{m-t+1} \underbrace{\left[\frac{1}{\ell}\sum_{i=(k+t-1)\ell+1}^{(k+t)\ell} \sum_{j=i-t\ell }^{i-(t-1  )\ell-1} 
      K\left(\frac{|i-j|}{\ell}\right) \left\{ X_i e^{\iota i\omega} X_{j} e^{-\iota j\omega} + X_i e^{-\iota i\omega} X_{j} e^{\iota j\omega} \right\} \right]}_{G^{\textnormal{Spec}}_{k,t}}. 
  \end{align}
  Also note that
  $
    \Var(\ddot{f}(\omega)) = \Var(\lim_{s\to\infty} \ddot{f}_{s}(\omega)).
  $
  By construction, for fixed $s$, and $r_n<\ell$ thus
  $\|R^{\textnormal{Spec}}_{1,n,s}-\E(R^{\textnormal{Spec}}_{1,n,s})\| 
  =o(\|M^{\textnormal{Spec}}_{n,s}  - \E(M^{\textnormal{Spec}}_{n,s} )\|)$.
  We also have
  $\|R^{\textnormal{Spec}}_{2,n}-\E(R^{\textnormal{Spec}}_{2,n})\|=O(1/\sqrt{n})$.
  So we focus on $
  \|\ddot{f}_{s}(\omega)- 
  \E\{\ddot{f}_{s}(\omega)\}\| 
  \sim \|M^{\textnormal{Spec}}_{n,s}- \E(M^{\textnormal{Spec}}_{n,s})\|$
   from now on.
  In other words,
  $$
    \Var(\ddot{f}_s(\omega)) \sim
    \Var(M^{\textnormal{Spec}}_{n,s}) =
    \Var\left(\frac{\ell}{2\pi n}\sum_{t=1}^{s} \sum_{k=1}^{m-t+1} G^{\textnormal{Spec}}_{k,t}\right).
  $$
  Further decompose the variance of $\sum_{t=1}^{s} \sum_{k=1}^{m-t+1}  G^{\textnormal{Spec}}_{k,t}$ into
  \begin{equation}
    \left\|\sum_{t=1}^{s} \sum_{k=1}^{m-t+1} \left\{ G^{\textnormal{Spec}}_{k,t} - \E (G^{\textnormal{Spec}}_{k,t}) \right\} \right\| \lesseqgtr I_1^{\textnormal{Spec}} \pm I_0^{\textnormal{Spec}},
  \end{equation}
  where
  $$
    I_1^{\textnormal{Spec}} = \left\| \sum_{t=1}^{s} \sum_{k=1}^{m-t+1} \left\{ G^{\textnormal{Spec}}_{k,t}  - \E (G^{\textnormal{Spec}}_{k,t}  \mid \mathcal{F}_{\varsigma_{k,t}})\right\}\right\|
      \quad\text{and}\quad
    I_0^{\textnormal{Spec}} = \left\| \sum_{t=1}^{s} \sum_{k=1}^{m-t+1} \left\{ \E (G^{\textnormal{Spec}}_{k,t}  \mid \mathcal{F}_{\varsigma_{k,t}}) - \E (G^{\textnormal{Spec}}_{k,t}) \right\}\right\|,
  $$
  and $ \varsigma_{k,t}$ is the smallest value of the index $i$ of $X_i$ in defining $G^{\textnormal{Spec}}_{k,t}$.
  Using Lemma~\ref{lemma:spec2} and Lemma~\ref{lemma:spec1} (iii), 
  as $n,\ell\to\infty$ and $s$ is fixed,
  \[
    \Var\left(\ddot{f}_s(\omega) \right) = {2 f(\omega)^2 \int_{0}^{s}K^2(t)\dd t } \frac{\ell}{n}+o\left\{ \frac{\ell}{n} \int_{0}^{s}K^2(t)\dd t \right\}.
  \]  
  Therefore, for fixed $s$,
  \[
    \lim_{n,\ell\to\infty} \frac{n}{\ell} \left\{ \Var\left(\ddot{f}_s(\omega) \right) - {2f(\omega)\int_{0}^{s}K^2(t)\dd t} \frac{\ell}{n}\right\}
     = 0.
  \]
  For all $s$, the statistic $\ddot{f}_s(\omega)$ can be bounded as follows:
  \begin{align*}
        \left|\ddot{f}_s(\omega)\right| 
        &= \left\lvert \underset{|i-j|\leq s\ell}{\sum\sum} K\left( \frac{|i-j|}{\ell} \right)
        e^{\iota (i-j) \omega}\frac{X_iX_j}{n} \right\rvert \\
        & \leq \underset{|i-j|\leq n}{\sum\sum} \left\lvert K\left( \frac{|i-j|}{\ell} \right)
        \right\rvert  \left\lvert \frac{X_iX_j}{n} \right\rvert |e^{\iota (i-j) \omega}| \\
        &\leq \underset{|i-j|\leq n}{\sum\sum} \left\lvert K\left( \frac{|i-j|}{\ell} \right)
        \right\rvert  \left\lvert \frac{X_iX_j}{n} \right\rvert 
        \equiv U.
  \end{align*}
  As in the proof of Theorem~\ref{thm:BiasVar},
  \[
    \Var\left(\ddot{f}(\omega)\right) = \lim_{s\to\infty} \Var\left(\ddot{f}_s(\omega)\right).
  \]
  Hence, as $n,\ell\to\infty$,
  \begin{align}\label{eqt:var_of_fddot}
    \Var\left(\ddot{f}(\omega)\right) = {2f(\omega)^2\int_{0}^{\infty}K^2(t)\dd t} \frac{\ell}{n}+o\left\{ \frac{\ell}{n} \int_{0}^{\infty}K^2(t)\dd t \right\}.
  \end{align}
  Recall that $A = \int_{0}^{\infty}K^2(t)\dd t < \infty$ and $\zeta(\omega) = \{1+\mathbb{1}(\omega/\pi\in\mathbb{Z})\}/2=1/2$ 
  for $\omega\in(0,\pi)$.
  In view of (\ref{eqt:equiv_ddot_tilde_f}) and (\ref{eqt:var_of_fddot}),
  we have 
  \[
  	\Var(\tilde{f}(\omega)) \sim 2f(\omega)^2A\ell/n  = 4 A \zeta(\omega) f(\omega)^2 \ell/n.
  \] 
  Finally, we note that 
  \[
    \lim_{h\to\infty} \lim_{n\to\infty} \MSE_h\{\hat{f}(\omega; \ell; K; \theta_{\star})\} = \lim_{n\to\infty} n^{2p/(2p+1)} \MSE\{\hat{f}(\omega; \ell; K; \theta_{\star})\}.
  \]
  So, we have 
  \[
    \lim_{h\to\infty} \lim_{n\to\infty} \MSE_h\{\hat{f}(\omega; \ell; K; \bar{\theta})\}
      = \lim_{n\to\infty} n^{2p/(2p+1)} \MSE\{\hat{f}(\omega; \ell; K; \theta_{\star})\}.
  \]
  It completes the proof.
\end{proof}
}

\subsection{Proof of lemmas}

The following lemmas are used
for kernels without the restriction that 
that $K(t)=0$ for $|t|>1$, which is the main difference from those results in \citet{chanyau2015_hoc}. 
Lemmas~\ref{lemma:hoc1} and \ref{lemma:hoc2} are useful in deriving the variance of the kernel estimator in Theorem~\ref{thm:BiasVar}.
Lemma~\ref{lemma:multblock} and \ref{lemma:multblock2} are useful in deriving the covariance of different entires of 
the covariance matrix estimator in Theorem~\ref{prop:multivariateBiasVar}.
{Lemma~\ref{lemma:spec1} and \ref{lemma:spec2} are useful in deriving the variance of 
the spectral density estimator in Theorem~\ref{thm:spectralBV}.}

\begin{lemma}[Slight modifications of Lemma 3 in \citet{chanyau2015_hoc}]\label{lemma:hoc1}
  Suppose that $X_i\in\mathcal{L}^\nu$ and $\Delta_4<\infty$ for some $\nu>4$.
  Also let $\E(X_1)=0$. Suppose $K(\cdot)\in\mathcal{K}_p$.
  Let $\mathbb{B}_u$ be a standard Brownian motion
  and $G_{k,t}$ as defined in (\ref{def_Mns}).
  Define additionally, for $k,t=1,2,\ldots$, that
  \begin{align}\label{eqt:definition_Gkt}
    \mathbb{G}_{k,t} = v\int_{k+t-1}^{k+t}\int_{s-t}^{s-t+1}K(s-h) \dd\mathbb{B}_h \dd\mathbb{B}_s.
  \end{align}
  Then as $\ell\to\infty$, the following results hold. 

  (i) $\left(G_{1,1},G_{1,2},G_{2,1},G_{2,2}\right)\Rightarrow\left(\mathbb{G}_{1,1},\mathbb{G}_{1,2},\mathbb{G}_{2,1},\mathbb{G}_{2,2}\right)$. 

  (ii) For $k,t=1,2$,
    \begin{eqnarray*}
      \E\left(G_{k,t}\right) &\to& v K(0),\qquad \Var\left(G_{k,t}\right) \to v^2\int_{t-1}^{t}K^2(u) \,\dd u , \\
     \Cov\left(G_{1,1}, G_{1,2}\right)&\to& 0,\qquad  \Cov\left(G_{1,1}, G_{2,2}\right)\to 0, \qquad \Cov\left(G_{1,1}, G_{2,1}\right)\to 0, \\
     \Cov\left(G_{1,2}, G_{2,2}\right)&\to& 0,\qquad \Cov\left(G_{2,1}, G_{2,2}\right)\to 0 .
    \end{eqnarray*} 

  (iii) For $m\to\infty$,
    \[
          \left\| \sum_{t=1}^{s} \sum_{k=1}^{m-t+1}\left[\E\left\{G_{k,t}|\mathcal{F}_{(k-1)\ell+1}\right\}-\E G_{k,t}\right]\right\| = o\left[\left\{  m \int_0^{s} K^2(t) \dd t\right\}^{1/2}\right].
    \]
\end{lemma}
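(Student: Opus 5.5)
The plan is to follow the architecture of Lemma 3 in \citet{chanyau2015_hoc}, which treats kernels supported on $[-1,1]$. The only new point is that each block $G_{k,t}$ now uses the kernel restricted to lags in $((t-1)\ell, t\ell]$. Since $s$ and $t$ are fixed in parts (i)--(ii), this restriction is a bounded piecewise-continuous weight on a fixed compact window, so the earlier arguments transfer.

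\textbf{Part (i).} First rescale time so that
\[
G_{k,t}=\ell^{-1}\sum_{i}\sum_{j}K(|i-j|/\ell)X_iX_j
\]
becomes a discrete double stochastic integral. The integration region is $\{(s,h): s\in[k+t-1,k+t],\ h\in[s-t,s-t+1]\}$, with integrand $K(s-h)$. Under $\Delta_4<\infty$ and $X_i\in\mathcal L^\nu$, use the martingale approximation $X_i\approx D_i$, where $D_i=\sum_{r\ge0}\mathcal P_iX_{i+r}$ are stationary martingale differences with $\E D_i^2=v$. Replacing $X$ by $D$ costs $o(1)$ in $\mathcal L^2$ uniformly in the block; this follows from the standard bound for quadratic forms with bounded weights $W(i,j)=K(|i-j|/\ell)/\ell$ over an $\ell\times\ell$ window. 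Then approximate $K$ uniformly on the window by step functions, using continuity except at finitely many points together with boundedness by $1$. For step kernels the quadratic form is a finite combination of products of partial sums of $D$. The invariance principle for martingale differences then gives joint convergence of partial-sum processes to $v^{1/2}\mathbb B$, and the continuous mapping theorem gives the joint limit of $(G_{1,1},G_{1,2},G_{2,1},G_{2,2})$. Letting the step approximation refine, with $L^2$ isometry for the double Wiener integrals, yields (i).

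\textbf{Part (ii).} Uniform integrability comes from bounding $\sup_\ell\|G_{k,t}\|_{2+\delta}$ via the $\nu>4$ moment and Burkholder's inequality. Moment convergence then follows from (i), and the limiting moments are computed by Itô isometry. For $t=1$, the diagonal contributes $\E G_{k,1}\to vK(0)$; for $t\ge2$ the mean is $o(1)$, which I will check against the stated $vK(0)$ and adjust if needed. The variance is $v^2\int\!\int K^2(s-h)\,\dd h\,\dd s=v^2\int_{t-1}^tK^2(u)\,\dd u$. The covariances vanish because the pairs integrate over disjoint regions of $(s,h)$-space: off-diagonal double Wiener integrals over disjoint regions are orthogonal.

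\textbf{Part (iii).} This is the main obstacle, because it needs a bound uniform over $m\to\infty$ blocks rather than a fixed-block limit. I would write
\[
\E\{G_{k,t}\mid\mathcal F_{(k-1)\ell+1}\}-\E G_{k,t}
\]
as the sum of projections $\mathcal P_r$ for $r\le(k-1)\ell+1$. The relevant indices $i,j\ge(k-1)\ell+1$ in $G_{k,t}$ lie at distance at least $0$ to $(t-1)\ell$ from the conditioning time. The projection norms $\|\mathcal P_rX_iX_j\|$ are controlled by $\delta_{4,i-r}+\delta_{4,j-r}$ via Hölder's inequality. Summing over the $\ell^2$ pairs with weight $\ell^{-1}|K|$ shows that the conditional expectation of a single block has $\mathcal L^2$ norm $o(1)$, because only a vanishing fraction of the window sits near the conditioning boundary, given $\Delta_4<\infty$. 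The sum over $k$ is then handled by orthogonality of the $\mathcal P_r$ across $r$. Terms with $|k-k'|$ large decorrelate via the tail $\sum_{i>\ell}\delta_{4,i}\to0$, which gives a total $\mathcal L^2$ norm of $o(m^{1/2})$ for each $t$. Summing over $t\le s$ and comparing with $\{m\int_0^sK^2\}^{1/2}$ finishes the proof. The delicate step is keeping the $t$-dependence uniform, and I would handle it by bounding each $t$ separately, since $s$ is fixed.
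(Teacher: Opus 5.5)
\textbf{Parts (i)--(ii): the missing centring.} Your Parts (i)--(ii) rely on a step that fails for $t=1$. This also shows that the mean assertion in (ii), and (i) without centring, are not correct for the blocks in (\ref{def_Mns}).

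By (\ref{def_Mns}), $G_{k,t}$ contains only the lags $(t-1)\ell+1\le i-j\le t\ell$. So $G_{k,1}$ has no diagonal term, and exactly $\E G_{k,1}=\sum_{d=1}^{\ell}K(d/\ell)\gamma_d\to\sum_{d\ge1}\gamma_d=(v-\gamma_0)/2$, while $\E G_{k,2}=\sum_{d=\ell+1}^{2\ell}K(d/\ell)\gamma_d\to0$. By contrast, the martingale-difference form $\ell^{-1}\sum K(|i-j|/\ell)D_iD_j$ over the same index set has mean zero, and so does the iterated It\^o integral $\mathbb G_{k,t}$.

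Hence replacing $X_i$ by $D_i$ does not cost $o(1)$ in $\mathcal L^2$. For $t=1$ the error tends to the constant $\sum_{d\ge1}\gamma_d$, the familiar one-sided long-run covariance correction in convergence to stochastic integrals. The quadratic-form bounds you have in mind control only centred quantities. Your appeal to ``the diagonal'' for $\E G_{k,1}\to vK(0)$ has nothing to act on; your hedge for $t\ge2$ was right, and the problem already occurs at $t=1$. Also, your (i) plus the uniform integrability you invoke would force $\E G_{k,t}\to 0$, so (i) and (ii) as you wrote them contradict each other.

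What is true is joint convergence of the centred blocks $G_{k,t}-\E G_{k,t}$ to $\mathbb G_{k,t}$, together with the variance and covariance limits. That is all Lemma~\ref{lemma:hoc2} and the variance step of Theorem~\ref{thm:BiasVar} use, and a deterministic shift does not affect it. The paper's proof does not confront this; it simply transfers Lemma 3 of \citet{chanyau2015_hoc} via $a_\ell(t)=\ell^{p+1}K(t/\ell)$.

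A smaller issue concerns the step-function approximation. If you use step functions of the lag, a block becomes $\ell^{-1}\sum_iD_i(S_{i-a\ell}-S_{i-b\ell-1})$, where $S_r$ are partial sums of $D$. This is a discrete stochastic integral, not a product of partial sums, so the continuous mapping theorem alone does not give its limit. Approximate instead the two-variable integrand $K(s-h)$ on the region by step functions on off-diagonal rectangles, and control the thin strip near the diagonal in $\mathcal L^2$ for the centred martingale version.

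\textbf{Part (iii): the bound is too crude.} Your architecture (projections $\mathcal P_r$, orthogonality over $r$) is the paper's, but your estimate is too weak under $\Delta_4<\infty$ alone. Bounding each $\|\mathcal P_r(X_iX_j)\|$ by $C(\delta_{4,i-r}+\delta_{4,j-r})$ and summing over pairs with the triangle inequality throws away the cancellation in the inner sum. For example, take $\delta_{4,m}\asymp m^{-3/2}$ and $r=(k-1)\ell+1-b$ with $0\le b\le\ell$. The $j$-part of your bound for $\|\mathcal P_rG_{k,1}\|$ is then of order $\ell^{-1/2}$, so your bound for $\sum_r\|\mathcal P_rG_{k,1}\|^2$ stays of order one rather than $o(1)$.

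The missing ingredient, which is the core of the paper's argument, is to keep the inner sum intact. Write $X_iX_j-X_i'X_j'=(X_i-X_i')X_j+X_i'(X_j-X_j')$ and apply Lemma 1 of \cite{wu2010} to the linear form: $\|\sum_jK((i-j)/\ell)X_{j-h}\|_4\le C\Delta_4\{\sum_uK^2(u/\ell)\}^{1/2}=O(\ell^{1/2})$. This gains a factor $\ell^{-1/2}$, giving $\|\mathcal P_hG_{k,t}\|\lesssim\ell^{-1/2}(\sum_i\delta_{4,i-h}+\sum_j\delta_{4,j-h})$. Then $(\sum\delta)^2\le\Delta_4\sum\delta$ and the Ces\`aro fact $\sum_{a=0}^{2\ell}\sum_{m\ge a}\delta_{4,m}=o(\ell)$ give $o(1)$ per block.

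For the sum over $k$ you need no separate decorrelation argument. Apply the orthogonal decomposition to the whole sum, $\sum_h\mathcal P_h\{\sum_{t,k}G_{k,t}\mathbb{1}(h\le(k-1)\ell+1)\}$, and bound each $\mathcal P_h$ term by summing the estimate above over $k$ and $t$. Since the index ranges overlap only boundedly, the same two inequalities give $o(m)$ for the squared norm directly, which is how the paper concludes.
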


\begin{proof}[Proof of Lemma~\ref{lemma:hoc1}]
  The proof is identical with the original lemma with the following modifications:
  (i) $a_\ell(t) = \ell^{p+1}K(t/\ell)$;
  (ii) $\alpha(u) = a_\ell(\ell u)/\ell^{p+1} = K(t)$ is independent of $\ell$,
  where $a_\ell(\cdot), \alpha(\cdot)$ are the functions in their paper.

  For (iii), let $\mathcal{P}_h \cdot = \E(\cdot \mid \mathcal{F}_h) - \E(\cdot \mid \mathcal{F}_{h-1})$ be the projection operator. 
  By the orthogonality of martingale differences,
    \begin{eqnarray*}
      \left\| \sum_{t=1}^{s} \sum_{k=1}^{m-t+1}\left[\E\left\{G_{k,t}|\mathcal{F}_{(k-1)\ell+1}\right\}-\E G_{k,t}\right]\right\|^2
        &=& \left\| \sum_{t=1}^{s} \sum_{k=1}^{m-t+1} \sum_{h=-\infty}^{(k-1)\ell+1} \mathcal{P}_h G_{k,t}\right\|^2\\
        &=& \left\| \sum_{h=-\infty}^{(m-1)\ell+1} \mathcal{P}_h \left(\sum_{t=1}^{s} \sum_{k=1}^{m-t+1} G_{k,t} \mathbb{1} \{h\leq (k-1)\ell + 1\}\right) \right\|^2\\
        &=& \sum_{h=-\infty}^{(m-1)\ell+1} \left\|\mathcal{P}_h \left(\sum_{t=1}^{s} \sum_{k=1}^{m-t+1} G_{k,t} \mathcal{J}_{h,k}\right)\right\|^2,
    \end{eqnarray*}
  where $\mathcal{J}_{h,k}= \mathbb{1} \{h\leq (k-1)\ell + 1\}$.
  By the definition of $\mathcal{P}_h$,
  $$
     \mathcal{P}_{0}(X_{i-h}X_{j-h}) = \E\left( X_{i-h}X_{j-h} - {X}_{i-h,\{0\}}{X}_{j-h,\{0\}} \mid \mathcal{F}_0\right).
  $$
  Using similar arguments as (45) and (51) in \cite{chanyau2015_hoc}, we have 
    \begin{eqnarray*}
    \left\|\mathcal{P}_h \left(\sum_{t=1}^{s} \sum_{k=1}^{m-t+1} G_{k,t} \mathcal{J}_{h,k}\right) \right\|
      &=& \left\| \sum_{t=1}^{s} \sum_{k=1}^{m-t+1} \frac{1}{\ell}\sum_{i=(k+t-1)\ell+1}^{(k+t)\ell} \sum_{j=i-t\ell+1}^{i-(t-1)\ell-1} K\left(\frac{i-j}{\ell}\right) \mathcal{P}_{h}(X_iX_j) \mathcal{J}_{h,k}\right\|\\
      &\leq & \sum_{t=1}^{s} \sum_{k=1}^{m-t+1} \frac{1}{\ell} \sum_{i=(k+t-1)\ell+1}^{(k+t)\ell} \delta_{4,i-h} \mathcal{J}_{h,k}\left\|\sum_{j=i-t\ell+1}^{i-(t-1)\ell-1} K\left(\frac{i-j}{\ell}\right) X_{j-h}\right\|_4  
      \\
      && \qquad + \sum_{t=1}^{s} \sum_{k=1}^{m-t+1} \frac{1}{\ell} \sum_{i=(k-1)\ell+1}^{(k+1)\ell-1} \delta_{4,j-h}\mathcal{J}_{h,k}\left\|\sum_{i=j+(t-1)\ell+1}^{j+t\ell-1} K\left(\frac{i-j}{\ell}\right) X_{i-h}\right\|_4,
  \end{eqnarray*}
  by Minkowski and H\"{o}lder's inequalities.
  Using Lemma 1 of \cite{wu2010}, we have
  \begin{eqnarray}\label{eqt:lemmaderive}
   &&\sum_{t=1}^{s} \sum_{k=1}^{m-t+1} \frac{1}{\ell} \sum_{i=(k+t-1)\ell+1}^{(k+t)\ell} \delta_{4,i-h}\mathcal{J}_{h,k}\left\|\sum_{j=i-t\ell+1}^{i-(t-1)\ell-1} K\left(\frac{i-j}{\ell}\right) X_{j-h}\right\|_4 \nonumber\\
   &\leq& \frac{C\Delta_4}{\ell} \sum_{t=1}^{s} \sum_{k=1}^{m-t+1} \sum_{i=(k+t-1)\ell+1}^{(k+t)\ell} \delta_{4,i-h}\mathcal{J}_{h,k} \left\{ \sum_{(t-1)\ell+1}^{t\ell-1} K^2(u/\ell)\right\}^{1/2} 
  \end{eqnarray}
  for some constant $C$ and similarly for the other term.
  Using (\ref{eqt:lemmaderive}) and the fact that $\Delta_4<\infty$, for large enough $\ell$,   
  \begin{eqnarray*}
     && \left\| \sum_{t=1}^{s} \sum_{k=1}^{m-t+1}\left[\E\left\{G_{k,t}|\mathcal{F}_{(k-1)\ell+1}\right\}-\E(G_{k,t}) \right]\right\|^2\\
      &=& \sum_{h=-\infty}^{(m-1)\ell+1} \left\|\mathcal{P}_h \left(\sum_{t=1}^{s} \sum_{k=1}^{m-t+1} G_{k,t} \mathcal{J}_{h,k}\right)\right\|^2 \\
      &\leq& \sum_{h=-\infty}^{(m-1)\ell+1} \left(\frac{C\Delta_4}{\ell}\right)^2 \sum_{t=1}^{s} \left\{ \sum_{u=(t-1)\ell+1}^{t\ell-1} K^2(u/\ell)\right\}^{1/2}\sum_{k=1}^{m-t+1} \sum_{i=(k+t-1)\ell+1}^{(k+t)\ell} \Delta_4 \delta_{4,i-h}\mathcal{J}_{h,k} \\
      &&\qquad \times \sum_{t'=1}^{s} \left\{ \sum_{u'=(t'-1)\ell+1}^{t'\ell-1} K^2(u'/\ell)\right\}^{1/2}\\
      &\leq&  \frac{C^2\Delta^3_4}{\ell^2} \sum_{t=1}^{s} \sum_{k=1}^{m-t+1}\left\{\sum_{u=(t-1)\ell+1}^{t\ell-1} K^2(u/\ell)\right\}^{1/2}  \sum_{h=-\infty}^{(k-1)\ell+1} \sum_{i=(k+t-1)\ell+1}^{(k+t)\ell} \delta_{4,i-h} \left\{ \sum_{u'=1}^{s\ell-1} K^2(u'/\ell)\right\}^{1/2}\\
      &\leq&  \frac{C^2\Delta^3_4}{\ell^2} \sum_{t=1}^{s} \sum_{k=1}^{m-t+1}\left\{ \sum_{u=(t-1)\ell+1}^{t\ell-1} K^2(u/\ell)\right\}^{1/2}  \sum_{b=t\ell}^{(t+1)\ell+1} \sum_{h=b}^{\infty} \delta_{4,h} \left\{ \sum_{u'=1}^{s\ell-1} K^2(u'/\ell)\right\}^{1/2}\\
      &\leq&  \frac{C^2\Delta^3_4}{\ell^2} \sum_{t=1}^{s} \sum_{k=1}^{m-t+1}\left\{ \sum_{u=(t-1)\ell+1}^{t\ell-1} K^2(u/\ell)\right\}^{1/2}  \left\{ \sum_{u'=1}^{s\ell-1} K^2(u'/\ell) \right\}^{1/2}  o(\ell) \\
      &\leq&  \frac{C^2\Delta^3_4}{\ell^2} \sum_{k=1}^{m} \sum_{u=1}^{s\ell-1} K^2(u/\ell) \times o(\ell) \\
      &\leq& o\left\{  m \int_0^{s} K^2(t) \dd t\right\}.
  \end{eqnarray*}
  And the proof is completed.
\end{proof}

\begin{lemma}[Slight modifications of Lemma 4 in \citet{chanyau2015_hoc}]\label{lemma:hoc2}
Under conditions of Theorem~\ref{thm:BiasVar}. As $\ell\to\infty$,
  \[
    I_1 = \left\| \sum_{t=1}^{s} \sum_{k=1}^{m-t+1} \left\{ G_{k,t} - \E (G_{k,t} \mid \mathcal{F}_{\varsigma_{k,t}})\right\}\right\| \sim \left( \frac{\int_{0}^s K^2(t) \dd t v^2n}{\ell} \right)^{1/2}.
  \]
\end{lemma}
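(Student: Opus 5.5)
We will prove Lemma~\ref{lemma:hoc2} by adapting the martingale argument behind Lemma 4 of \citet{chanyau2015_hoc}, using Lemma~\ref{lemma:hoc1} (i)--(ii) as the input for the block-level limits. The steps are: reorganize the sum of centered blocks as a martingale in the block index, compute its conditional variance from the covariance limits, and control the resulting sum by a law of large numbers.

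First, write $D_{k,t} = G_{k,t} - \E(G_{k,t}\mid\mathcal{F}_{\varsigma_{k,t}})$. For fixed $s$, group the blocks by the row index $i$. The blocks $\mathcal{B}_{k,t}$ with $k+t=b$ share the same range of $i$, namely $\{(b-1)\ell+1,\ldots,b\ell\}$. Define $S_b = \sum_{t=1}^{s} D_{b-t,t}$, restricted to $1\le b-t\le m-t+1$.

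Each $D_{k,t}$ is $\mathcal{F}_{b\ell}$-measurable. It is also orthogonal to any variable measurable with respect to $\mathcal{F}_{\varsigma_{k,t}}$. Since $\varsigma_{k,t}\ge (b-s-1)\ell$, the sequence $\{S_b\}$ is a sum of $s+1$ interleaved martingale difference sequences. Equivalently, $S_b$ and $S_{b'}$ are uncorrelated whenever $|b-b'|>s+1$.

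Hence
\[
I_1^2=\sum_b \E S_b^2 + 2\sum_{0<|b-b'|\le s+1}\E(S_bS_{b'}).
\]
The first sum has about $m\sim n/\ell$ terms. Each term is a finite combination of $\Var(G_{k,t})$ and $\Cov(G_{k,t},G_{k',t'})$ for boundedly separated blocks. By stationarity, each such term reduces to one of the pairs treated in Lemma~\ref{lemma:hoc1}(ii).

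Second, evaluate the limits. By Lemma~\ref{lemma:hoc1}(ii), $\Var(G_{k,t})\to v^2\int_{t-1}^{t}K^2$ and all cross-covariances between distinct blocks tend to $0$. We also need $\E\{\E(G_{k,t}\mid\mathcal{F}_{\varsigma_{k,t}})-\E G_{k,t}\}^2\to0$ per block, which comes from the same physical-dependence bound used in the proof of Lemma~\ref{lemma:hoc1}(iii). Granting this, $\E D_{k,t}^2\to v^2\int_{t-1}^t K^2$ and $\E(D_{k,t}D_{k',t'})\to0$ for distinct blocks.

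Summing over $t\le s$ gives $\E S_b^2\to v^2\int_0^sK^2$, with the cross terms negligible. Since $s$ is fixed and the number of neighbouring pairs is $O(m)$, these $o(1)$ contributions add up to $o(m)$. Therefore $I_1^2 = m v^2\int_0^s K^2\{1+o(1)\}\sim (n/\ell)v^2\int_0^sK^2$, which is the claim.

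The main obstacle is uniformity. Lemma~\ref{lemma:hoc1} gives limits only for the blocks with $k,t\in\{1,2\}$. Stationarity extends them to any pair with the same relative offset, but offsets up to $s+1$ appear, not just offsets in $\{1,2\}$. We will handle this by applying the weak convergence in Lemma~\ref{lemma:hoc1}(i) to the finite vector $(G_{k,t})_{k\le s+2,\,t\le s}$; its proof extends verbatim to a fixed finite vector.

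We then upgrade convergence in distribution to convergence of second moments via uniform integrability. This follows from a uniform $\mathcal{L}^{2+\delta}$ bound on $G_{k,t}$, obtained from $\nu>4$, $\Delta_4<\infty$, and the Burkholder-type bound of Lemma 1 in \cite{wu2010}, exactly as in \citet{chanyau2015_hoc}. Finally, covariances of the limiting Wiener integrals $\mathbb{G}_{k,t}$ over disjoint regions vanish, so the limits of the cross terms are $0$.
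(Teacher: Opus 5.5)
Your proposal is correct and follows essentially the paper's route: the paper's proof reruns the block-martingale argument of Lemma 4 in \citet{chanyau2015_hoc}, with Lemma~\ref{lemma:hoc1} supplying the block-level variance and covariance limits. Your explicit extension of Lemma~\ref{lemma:hoc1} to all $t\le s$ and all block offsets up to $s+1$ makes precise a step that the paper's one-line reference leaves implicit. The one weak point is your justification of uniform integrability: a Burkholder-type quadratic-form bound gives a uniform $\mathcal{L}^{2+\delta}$ bound on $G_{k,t}$ only under a stronger condition such as $\Delta_{4+2\delta}<\infty$, so with just $\Delta_4<\infty$ and $\nu>4$ you should obtain it from an $M$-dependent approximation instead (the paper itself only asserts this uniform integrability).
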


\begin{proof}[Proof of Lemma~\ref{lemma:hoc2}]
The proof can be done by replacing Lemma 3 in the original proof by Lemma~\ref{lemma:hoc1}.
\end{proof}

\begin{lemma}[Multivariate version of Lemma~\ref{lemma:hoc1}]\label{lemma:multblock}
Suppose that $X_{i}^{(u)}\in\mathcal{L}^{\nu}$ for some $\nu>4$, $\Delta_{4}^{(u)}<\infty$ 
and $\E\{X_{1}^{(u)} \}=0$ for all $u\in\{1,\ldots,d\}$.
The kernel function $K(\cdot)\in\mathcal{K}_p$.
Let $\mathbb{B}_{r}= (\mathbb{B}^{(1)}_{r},\ldots,\mathbb{B}^{(d)}_{r} )^{\T}$ be a $d$-dimensional Brownian motion
for $r\geq 0$ with $\Var(\mathbb{B}_{r})= {\rho} r$, where 
$\rho\in\mathbb{R}^{d\times d}$ satisfies that $\rho^{(u,v)}=V^{(u,v)}/\sqrt{V^{(u,u)}V^{(v,v)}}$
for $u,v\in\{1,\ldots,d\}$.
For $k,t=1,2,\dots$ and $u,v\in\{1,\ldots,d\}$, define that
\begin{align}
  G_{k,t}^{(u,v)}&=\frac{1}{\ell}\sum_{i=(k+t-1)\ell+1}^{(k+t)\ell} \sum_{j=i-t\ell}^{i-(t-1)\ell-1} K\left(\frac{|i-j|}{\ell}\right) \left\{X_i^{(u)} X_{j}^{(v)} + X_i^{(v)} X_{j}^{(u)}\right\},\\
  \mathbb{G}_{k,t}^{(u,v)}&=\sqrt{V^{(u,u)}V^{(v,v)}}\int_{k+t-1}^{k+t}\int_{s-t}^{s-t+1}K(s-h)
  \left\{ \dd\mathbb{B}^{(u)}_h \dd\mathbb{B}^{(v)}_{s} +\dd\mathbb{B}^{(v)}_h \dd\mathbb{B}^{(u)}_{s}\right\}.
\end{align}
Then, as $\ell\to\infty$, the following results hold: 
\begin{enumerate}
  \item $G_{k,t}^{(u,v)}\Rightarrow \mathbb{G}_{k,t}^{(u,v)}$ jointly for all $u,v\in\{1, \ldots, d\}$ and $k,t\in\{1,2\}$. 

  \item For $u,v, u', v'\in\{1,\ldots,d\}$ and $k,t\in\{1,2\}$,
  {
    \begin{align*}
    \E\left\{ G_{k,t}^{(u,v)} \right\} &\to0,\\
    \Cov\left\{G_{k,t}^{(u,v)}, G_{k,t}^{(u',v')}\right\}&\to 2\left\{ V^{(u,u')}V^{(v,v')} + V^{(u,v')}V^{(u,v')} \right\}\int_{t-1}^{t} K^2(u) \dd u,\\
    \Cov\left\{G_{1,1}^{(u,v)},G_{1,2}^{(u',v')}\right\} &\to 0,\qquad 
    \Cov\left\{G_{1,1}^{(u,v)},G_{2,2}^{(u',v')}\right\} \to 0,\qquad 
    \Cov\left\{G_{1,1}^{(u,v)},G_{2,1}^{(u',v')}\right\} \to 0,\\
    \Cov\left\{G_{1,2}^{(u,v)},G_{2,2}^{(u',v')}\right\} &\to 0,\qquad 
    \Cov\left\{G_{2,1}^{(u,v)},G_{2,2}^{(u',v')}\right\} \to 0.
    \end{align*}
    }
  \item For $u,v\in\{1, \ldots, d\}$, we have, as $m\to\infty$, that 
    \[
      \left\| \sum_{t=1}^{s} \sum_{k=1}^{m-t+1}\left[\E\left\{G^{(u,v)}_{k,t}|\mathcal{F}_{(k-1)\ell+1}\right\}-\E G^{(u,v)}_{k,t}\right]\right\| 
      = o\left[\left\{  m \int_0^{s} K^2(t) \dd t\right\}^{1/2}\right].
    \]
\end{enumerate}
\end{lemma}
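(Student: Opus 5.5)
The statement to prove is Lemma~\ref{lemma:multblock}, the multivariate analogue of Lemma~\ref{lemma:hoc1}. The approach is to reduce every claim to the univariate machinery of Lemma~\ref{lemma:hoc1} and \citet{chanyau2015_hoc}. The engine for the reduction is bilinearity: $G_{k,t}^{(u,v)}$ is a quadratic form in the vector series, and every product $X_i^{(u)}X_j^{(v)}$ can be handled through the scalar series $Y_i=a^{\T}X_i$ for fixed $a\in\mathbb{R}^d$.

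\textbf{Part 1 (joint weak convergence).} I would use the Cramér--Wold device. For any fixed coefficients $c_{u,v,k,t}$, the combination $\sum c_{u,v,k,t}G^{(u,v)}_{k,t}$ is a finite sum of block quadratic forms. These have the same structure as in (\ref{def_Mns}), with $X_iX_j$ replaced by $X_i^{(u)}X_j^{(v)}$.

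The key input is a multivariate invariance principle: $\ell^{-1/2}\sum_{i\le \ell r}X_i\Rightarrow V^{1/2}$-scaled Brownian motion. I would take this in the form $\mathrm{diag}(V^{(u,u)})^{1/2}\mathbb{B}_r$ with correlation $\rho$. It follows from Wu's physical dependence framework, since $\Delta_4^{(u)}<\infty$ for each $u$ and each projection $a^{\T}X_i$ inherits $\Delta_4<\infty$.

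Next I would rerun the argument of Lemma~3 of \citet{chanyau2015_hoc}, as modified in Lemma~\ref{lemma:hoc1}, with $\alpha(u)=K(u)$:
\begin{itemize}
\item approximate $X_i$ by $m$-dependent (conditional-expectation) versions;
\item approximate the double sums by Riemann--Stieltjes sums of the partial-sum process over the triangular regions $[k+t-1,k+t]\times[s-t,s-t+1]$;
\item pass to the limit by the continuous mapping theorem, using continuity of $K$ off a finite set.
\end{itemize}
This yields the double stochastic integrals defining $\mathbb{G}^{(u,v)}_{k,t}$ jointly.

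\textbf{Part 2 (moments).} I would first upgrade Part~1 to convergence of first and second moments. For this I need uniform integrability of $(G^{(u,v)}_{k,t})^2$, which I would get from a bound on $\|G^{(u,v)}_{k,t}\|_{2+\epsilon}$. That bound follows from $\nu>4$ moments and Lemma~1 of \citet{wu2010}, applied to each coordinate exactly as in (\ref{eqt:lemmaderive}).

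The moments of the limiting integrals are then computed by the Itô isometry for double Wiener integrals. The time regions for different $(k,t)$ pairs are disjoint or non-overlapping in the relevant variable, which gives vanishing cross covariances. For the same $(k,t)$, using $\E(\dd\mathbb{B}^{(u)}\dd\mathbb{B}^{(u')})=\rho^{(u,u')}\dd r$, I get
\[
2\{V^{(u,u')}V^{(v,v')}+V^{(u,v')}V^{(u',v)}\}\int_{t-1}^tK^2.
\]
The mean tends to zero for off-diagonal blocks, since the integrand region excludes the diagonal.

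\textbf{Part 3 (conditional-expectation remainder).} I would copy the projection argument in the proof of Lemma~\ref{lemma:hoc1}(iii) with $\mathcal{P}_h$. By Hölder's inequality,
\[
\|\mathcal{P}_h(X^{(u)}_iX^{(v)}_j)\|\le \delta^{(u)}_{4,i-h}\|X^{(v)}_{j-h}\|_4+\delta^{(v)}_{4,j-h}\|X^{(u)}_{i-h}\|_4 .
\]
The same chain of inequalities then goes through, with $\Delta_4$ replaced by $\max_u\Delta^{(u)}_4$, giving the $o\{(m\int_0^sK^2)^{1/2}\}$ bound.

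The main obstacle is Part~1, specifically the multivariate functional approximation with cross terms $X^{(u)}X^{(v)}$ under non-compactly supported $K$. My plan there is to polarize: write $X^{(u)}_iX^{(v)}_j+X^{(v)}_iX^{(u)}_j$ as a combination of $Y_iY_j$ with $Y=(e_u\pm e_v)^{\T}X$. This reduces the problem directly to the scalar Lemma~\ref{lemma:hoc1} for each such projection, and joint convergence then follows by Cramér--Wold across all projections simultaneously.
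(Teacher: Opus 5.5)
Your overall architecture is the paper's: a vector invariance principle with Cram\'er--Wold and a mapping step for part~1, uniform integrability plus the It\^o isometry for part~2, and the $\mathcal{P}_h$ projection chain with Lemma~1 of \citet{wu2010} for part~3. Your cross term $V^{(u,v')}V^{(u',v)}$ is the correct one. It agrees with the $V^{(u,v')}V^{(v,u')}$ the paper's proof derives, so the $V^{(u,v')}V^{(u,v')}$ in the displayed statement is a typo.

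\textbf{The polarization shortcut fails.} You propose it as the plan for the ``main obstacle,'' but it does not give joint convergence. The identity $G^{(u,v)}_{k,t}=\frac12\{G^{Y^+}_{k,t}-G^{Y^-}_{k,t}\}$ with $Y^{\pm}=X^{(u)}\pm X^{(v)}$ is correct. However, a Cram\'er--Wold combination $\sum c_{u,v,k,t}G^{(u,v)}_{k,t}$ then mixes scalar quadratic forms in several \emph{different} projections. Equivalently, diagonalizing the symmetric coefficient matrix turns it into $\sum_r\lambda_r$ times forms in $a_r^{\T}X$. Lemma~\ref{lemma:hoc1} treats one scalar series at a time and says nothing about the joint law of $(G^{Y_1}_{k,t},G^{Y_2}_{k,t},\ldots)$; that law depends on the correlation of the limiting Brownian motions. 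So you cannot avoid pushing the joint functional limit of the vector partial-sum process through the double-sum map. That is the multivariate rerun of the Chan--Yau argument you describe first, and the one the paper does via the FCLT on $D[0,3]$.

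\textbf{The mean claim fails for $t=1$.} Your justification, ``the integrand region excludes the diagonal,'' is wrong there: that block contains every lag $1,\ldots,\ell$. By stationarity,
\[
\E G^{(u,v)}_{k,1}=\sum_{h=1}^{\ell}K(h/\ell)\bigl\{\Gamma^{(u,v)}_h+\Gamma^{(v,u)}_h\bigr\}\;\to\;\sum_{h\geq1}\bigl(\Gamma_h+\Gamma_h^{\T}\bigr)^{(u,v)}=V^{(u,v)}-\Gamma^{(u,v)}_0 .
\]
This is nonzero in general, e.g.\ for $d=1$ and any \textsc{ar}$(1)$ with positive coefficient. So $\E G^{(u,v)}_{k,t}\to0$ holds only for $t=2$ (or when $V=\Gamma_0$); the statement itself is wrong here.

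The paper's route (uniform integrability plus $\E\mathbb{G}^{(u,v)}_{k,t}=0$) breaks for the same reason your part~1 heuristic would. The double stochastic integral is not a continuous functional of the integrator path. If you treat the near-diagonal cells of $\sum_{j<i}$ honestly in your Riemann--Stieltjes approximation, they leave behind the deterministic one-sided long-run covariance $V^{(u,v)}-\Gamma^{(u,v)}_0$. Hence for $t=1$ the weak limit is $\mathbb{G}^{(u,v)}_{k,1}$ plus that constant, not $\mathbb{G}^{(u,v)}_{k,1}$.

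What you can actually prove is:
\begin{itemize}
\item $G^{(u,v)}_{k,t}-\E G^{(u,v)}_{k,t}\Rightarrow\mathbb{G}^{(u,v)}_{k,t}$ jointly;
\item the mean limits computed above;
\item all the stated covariance limits.
\end{itemize}
Since the shift is deterministic, the covariances are unaffected, and they are all that Lemma~\ref{lemma:multblock2} and the downstream variance calculation use.
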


\begin{proof}[Proof of Lemma~\ref{lemma:multblock}]
Similar proofs can be found in \cite{chanyau2014_rTACM} Lemma D.4 and \cite{LiuChan2022} Lemma B.8.
\begin{enumerate}
  \item {
  Using the functional central limit theorem on the class $D[0,3]$ (Theorem 3 of \citet{wu2005})
  and Cramer--Wold Theorem, we have 
  \[
      \left\{\frac{1}{\sqrt{\ell}} \sum_{i=1}^{\lfloor \ell t\rfloor} X_i: t\in[0,3]\right\} 
      \Rightarrow \{ \rho^{1/2}\mathbb{B}_t :t\in[0,3]\}.
  \]
  By the continuous mapping theorem,
  $G_{k,t}^{(u,v)}\Rightarrow \mathbb{G}_{k,t}^{(u,v)}$ jointly for all $u,v\in\{1, \ldots, d\}$ and $k,t\in\{1,2\}$. 

  \item  Because $X_i\in\mathcal{L}^\nu$ for $\nu>4$, 
  $\{ G_{k,t}^{(u,v)}\}^2 $ is uniformly integrable.
  Therefore, weak convergence implies convergence of moments, i.e.,
  \[
    \E\left\{G_{k,t}^{(u,v)}\right\}           \to \E\left\{\mathbb{G}_{k,t}^{(u,v)}\right\},\quad
    \E\left\{G_{k,t}^{(u,v)}\right\}^2         \to \E\left\{\mathbb{G}_{k,t}^{(u,v)}\right\}^2, \quad
    \E\left\{G_{1,1}^{(u,v)}G_{1,2}^{(u,v)}\right\}\to \E\left\{\mathbb{G}_{1,1}^{(u,v)}\mathbb{G}_{1,2}^{(u,v)}\right\}.
  \]
  Then we can write $\mathbb{G}_{k,t}^{(u,v)}$ as 
  \[
      \mathbb{G}_{k,t}^{(u,v)} 
      = \sqrt{V^{(u,u)}V^{(v,v)}}\left\{\overbrace{\int_{k+t-1}^{k+t}\int_{s-t}^{s-t+1}K(s-h)d\mathbb{B}^{(u)}_hd\mathbb{B}^{(v)}_{s}}^{\mathbb{H}_{k,t}^{(u,v)}}
      + \overbrace{\int_{k+t-1}^{k+t}\int_{s-t}^{s-t+1}K(s-h)d\mathbb{B}^{(v)}_hd\mathbb{B}^{(u)}_{s}}^{\mathbb{H}_{k,t}^{(v,u)}}\right\}.
  \]
  Since the integrand $\psi^{(u)}(s)=\int_{s-t}^{s-t+1}K(s-h)d\mathbb{B}^{(u)}_h$ is adapted to $\mathcal{F}_t$, 
  similarly for $\psi^{(v)}(s)$, thus by the isometry,
  \[
    \E\left\{\mathbb{H}_{k,t}^{(u,v)}\right\} = \E\left\{\mathbb{H}_{k,t}^{(v,u)}\right\} = 0
    \qquad\text{and}\qquad
      \E\left\{\mathbb{G}_{k,t}^{(u,v)}\right\} = \E\left\{\mathbb{G}_{k,t}^{(v,u)}\right\} = 0.
  \]
  For the second moment, again using the isometry,
  \begin{align*}
  	\E\left\{\mathbb{H}_{k,t}^{(u,v)}\mathbb{H}_{k,t}^{(u',v')}\right\}
	&= \E \left\{ \int_{k+t-1}^{k+t}\int_{s-t}^{s-t+1} K(s-h)  \mathbb{B}^{(u)}_{h} \mathbb{B}^{(v)}_{s} \right\}
		\left\{ \int_{k+t-1}^{k+t}\int_{s'-t}^{s'-t+1} K(s'-h')  \mathbb{B}^{(u')}_{h'} \mathbb{B}^{(v')}_{s'} \right\}\\
	&= \int_{k+t-1}^{k+t}\int_{s-t}^{s-t+1} K^2(s-h)  \rho^{(u,u')} \rho^{(v,v')} \dd h  \dd s  \\
	&= \rho^{(u,u')} \rho^{(v,v')} \int_{t-1}^{t}K^2(w)\dd w.
  \end{align*}
  Combining the above results, we have 
  \begin{align*}
      &\Cov\left\{\mathbb{G}_{k,t}^{(u,v)}, \mathbb{G}_{k,t}^{(u',v')}\right\} \\
      &\qquad= \sqrt{V^{(u,u)}V^{(v,v)} V^{(u',u')}V^{(v',v')}}
      		\Cov\left\{ \mathbb{H}_{k,t}^{(u,v)} + \mathbb{H}_{k,t}^{(v,u)}, \; 
					\mathbb{H}_{k,t}^{(u',v')} + \mathbb{H}_{k,t}^{(v',u')}
					\right\}\\
      &\qquad= \sqrt{V^{(u,u)}V^{(v,v)} V^{(u',u')}V^{(v',v')}} \int_{t-1}^{t}K^2(w)\dd w\\
      &\qquad\qquad
      	\times \left\{ \rho^{(u,u')} \rho^{(v,v')} 
			+ \rho^{(u,v')} \rho^{(v,u')}
			+ \rho^{(v,u')} \rho^{(u,v')}
			+\rho^{(v,v')} \rho^{(u,u')}
		\right\}\\
	&\qquad= \int_{t-1}^{t}K^2(w)\dd w \times
	 	\left\{ V^{(u,u')} V^{(v,v')} 
			+ V^{(u,v')} V^{(v,u')}
			+ V^{(v,u')} V^{(u,v')}
			+V^{(v,v')} V^{(u,u')}
		\right\} \\
	&\qquad= 2\int_{t-1}^{t}K^2(w)\dd w \times 
		\left\{ V^{(u,u')} V^{(v,v')} 
			+ V^{(u,v')} V^{(v,u')}
		\right\}
  \end{align*}
  For the covariance,
  we know that $\E\left\{\mathbb{H}_{1,1}^{(u,v)}\mathbb{H}_{1,2}^{(u',v')}\right\} 
  = \E\left\{ \mathbb{H}_{1,1}^{(u,v)}\mathbb{H}_{1,2}^{(v',u')} \right\}
  =  \E\left\{ \mathbb{H}_{1,1}^{(v,u)}\mathbb{H}_{1,2}^{(u',v')} \right\}
  = \E\left\{ \mathbb{H}_{1,1}^{(v,u)}\mathbb{H}_{1,2}^{(v',u')} \right\}=0$,
  therefore $\Cov\left\{ G_{1,1}^{(u,v)},G_{1,2}^{(u',v')} \right\}\to 0$.
  Similarly for other combinations of $k,t$.}

  \item
  By the same arguments in the proof of Lemma~\ref{lemma:hoc1} and consider the entry-wise $G^{(u,v)}_{k,t}$, we have
  \[
    \left\| \sum_{t=1}^{s} \sum_{k=1}^{m-t+1}\left[\E\left\{G^{(u,v)}_{k,t}|\mathcal{F}_{(k-1)\ell+1}\right\}-\E G^{(u,v)}_{k,t}\right]\right\|^2
        = \sum_{h=-\infty}^{(m-1)\ell+1} \left\|\mathcal{P}_h \left(\sum_{t=1}^{s} \sum_{k=1}^{m-t+1} G^{(u,v)}_{k,t} \mathcal{J}_{h,k}\right)\right\|^2
  \]
  where recall $\mathcal{J}_{h,k} = \mathbb{1} \{h\leq (k-1)\ell + 1\}$.
  By the definition of $\mathcal{P}_h$,
  $$
     \mathcal{P}_{0}\left(X^{(u)}_{i-h}X^{(v)}_{j-h} + X^{(v)}_{i-h}X^{(u)}_{j-h}\right) = \E\left( X^{(u)}_{i-h}X^{(v)}_{j-h} - {X}^{(u)}_{i-h,\{0\}}{X}^{(v)}_{j-h,\{0\}}
     +  X^{(v)}_{i-h}{X}^{(u)}_{j-h} - {X}^{(v)}_{i-h,\{0\}}{X}^{(u)}_{j-h,\{0\}} \mid \mathcal{F}_0\right).
  $$
  Using similar arguments as Equations (45) and (51) in \cite{chanyau2015_hoc}, we have 
    \begin{eqnarray*}
      & & \left\|\mathcal{P}_h \left(\sum_{t=1}^{s} \sum_{k=1}^{m-t+1} G^{(u,v)}_{k,t} \mathcal{J}_{h,k}\right) \right\| \\
      &=& \left\| \sum_{t=1}^{s} \sum_{k=1}^{m-t+1} \frac{1}{\ell}\sum_{i=(k+t-1)\ell+1}^{(k+t)\ell} \sum_{j=i-t\ell+1}^{i-(t-1)\ell-1} K\left(\frac{i-j}{\ell}\right) \mathcal{P}_{h}(X^{(u)}_iX^{(v)}_j+X^{(v)}_iX^{(u)}_j) \mathcal{J}_{h,k}\right\|\\
      &\leq & \sum_{t=1}^{s} \sum_{k=1}^{m-t+1} \frac{1}{\ell} \sum_{i=(k+t-1)\ell+1}^{(k+t)\ell} \delta^{(u)}_{4,i-h} \mathcal{J}_{h,k}\left\|\sum_{j=i-t\ell+1}^{i-(t-1)\ell-1} K\left(\frac{i-j}{\ell}\right) X^{(v)}_{j-h}\right\|_4  \\
      && \qquad + \sum_{t=1}^{s} \sum_{k=1}^{m-t+1} \frac{1}{\ell} \sum_{i=(k+t-1)\ell+1}^{(k+t)\ell} {\delta^{(v)}_{4,i-h} \mathcal{J}_{h,k}\left\|\sum_{j=i-t\ell+1}^{i-(t-1)\ell-1} K\left(\frac{i-j}{\ell}\right) X^{(u)}_{j-h}\right\|_4} \\
      && \qquad + \sum_{t=1}^{s} \sum_{k=1}^{m-t+1} \frac{1}{\ell} \sum_{i=(k-1)\ell+1}^{(k+1)\ell-1} \delta^{(u)}_{4,j-h}\mathcal{J}_{h,k}\left\|\sum_{i=j+(t-1)\ell+1}^{j+t\ell-1} K\left(\frac{i-j}{\ell}\right) X^{(v)}_{i-h}\right\|_4\\
      && \qquad + \sum_{t=1}^{s} \sum_{k=1}^{m-t+1} \frac{1}{\ell} \sum_{i=(k-1)\ell+1}^{(k+1)\ell-1} \delta^{(v)}_{4,j-h}\mathcal{J}_{h,k}\left\|\sum_{i=j+(t-1)\ell+1}^{j+t\ell-1} K\left(\frac{i-j}{\ell}\right) X^{(u)}_{i-h}\right\|_4,
  \end{eqnarray*}
  by Minkowski and H\"{o}lder's inequalities.
  Using Lemma 1 of \cite{wu2010}, we have
  \begin{eqnarray}\label{eqt:lemmaderiveMV}
   &&\sum_{t=1}^{s} \sum_{k=1}^{m-t+1} \frac{1}{\ell} \sum_{i=(k+t-1)\ell+1}^{(k+t)\ell} \delta^{(u)}_{4,i-h}\mathcal{J}_{h,k}\left\|\sum_{j=i-t\ell+1}^{i-(t-1)\ell-1} K\left(\frac{i-j}{\ell}\right) X^{(v)}_{j-h}\right\|_4 \nonumber\\
   &\leq& \frac{C\Delta^{(v)}_4}{\ell} \sum_{t=1}^{s} \sum_{k=1}^{m-t+1} \sum_{i=(k+t-1)\ell+1}^{(k+t)\ell} \delta^{(u)}_{4,i-h}\mathcal{J}_{h,k} \left\{ \sum_{(t-1)\ell+1}^{t\ell-1} K^2(u/\ell)\right\}^{1/2} 
  \end{eqnarray}
  for some constant $C$ and similarly for the other three terms.
  Using (\ref{eqt:lemmaderiveMV}) and the fact that $\Delta^{(u)}_4<\infty$ for all $u\in\{1,\ldots,d\}$, for large enough $\ell$,
  \begin{eqnarray*}
    \left\| \sum_{t=1}^{s} \sum_{k=1}^{m-t+1}\left[\E\left\{G^{(u,v)}_{k,t}|\mathcal{F}_{(k-1)\ell+1}\right\}-\E\left\{ G^{(u,v)}_{k,t}\right\} \right]\right\|^2
      &=& \sum_{h=-\infty}^{(m-1)\ell+1} \left\|\mathcal{P}_h \left(\sum_{t=1}^{s} \sum_{k=1}^{m-t+1} G^{(u,v)}_{k,t} \mathcal{J}_{h,k}\right)\right\|^2\\
      &\leq& o\left\{  m \int_0^{s} K^2(t) \dd t\right\}.
  \end{eqnarray*}
  And the proof is completed.
\end{enumerate}
\end{proof}

\begin{lemma}[Multivariate version of Lemma~\ref{lemma:hoc2}]\label{lemma:multblock2}
Under conditions of Theorem~\ref{thm:BiasVar}. As $\ell\to\infty$,
{
  \begin{align*}
    I^{(u,v, u',v')}_1 
    	&= \E \left\{\sum_{t=1}^{s} \sum_{k=1}^{m-t+1} \left[ G^{(u,v)}_{k,t} - \E\left\{G^{(u,v)}_{k,t} \mid \mathcal{F}_{\varsigma_{k,t}}\right\}\right]\right\} 
			\left\{\sum_{t=1}^{s} \sum_{k=1}^{m-t+1} \left[ G^{(u',v')}_{k,t} - \E\left\{G^{(u',v')}_{k,t} \mid \mathcal{F}_{\varsigma_{k,t}}\right\}\right]\right\}	\\
	&\sim \frac{ 2\left\{ V^{(u,u')} V^{(v,v')} + V^{(u,v')} V^{(v,u')} \right\} 
			\int_{0}^s K^2(t) \dd t n}{\ell}.
  \end{align*}
  }
\end{lemma}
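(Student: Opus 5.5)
The plan is to reproduce the martingale-approximation argument of Lemma~\ref{lemma:hoc2} in bilinear form, with the multivariate block results of Lemma~\ref{lemma:multblock} replacing the univariate ones of Lemma~\ref{lemma:hoc1}.

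First, for each pair $(u,v)$, set
\[
D^{(u,v)}_{k,t} = G^{(u,v)}_{k,t} - \E\{G^{(u,v)}_{k,t}\mid \mathcal{F}_{\varsigma_{k,t}}\},
\qquad
S^{(u,v)} = \sum_{t=1}^{s}\sum_{k=1}^{m-t+1} D^{(u,v)}_{k,t},
\]
so that $I_1^{(u,v,u',v')} = \E\{S^{(u,v)}S^{(u',v')}\}$.

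Second, expand the product into a double sum over pairs of blocks $(k,t)$ and $(k',t')$.
\begin{itemize}
\item[(a)] Suppose the blocks are separated, i.e.\ $k+t$ and $k'+t'$ differ by at least two, so the index ranges of the $i$'s are disjoint and one block lies entirely in the past of the other's starting sigma-field. Then $D_{k,t}$ is orthogonal to anything measurable before $\varsigma_{k,t}$. The correlation between the two blocks is carried only through the dependence tail, which is controlled by $\Delta_4<\infty$ exactly as in Lemma~\ref{lemma:multblock}(iii). Summed over all such pairs, these terms contribute $o(m)$.
\item[(b)] The remaining terms are the adjacent or overlapping pairs, of which there are finitely many configurations since $s$ is fixed. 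By Lemma~\ref{lemma:multblock}(i), the blocks converge jointly to the stochastic integrals $\mathbb{G}^{(u,v)}_{k,t}$. By uniform integrability (moments of order $\nu/2>2$), their cross moments converge. The conditional-expectation parts are negligible by (iii).
\item[(c)] Lemma~\ref{lemma:multblock}(ii) then shows that only the diagonal terms $(k,t)=(k',t')$ survive. Each contributes $2\{V^{(u,u')}V^{(v,v')}+V^{(u,v')}V^{(v,u')}\}\int_{t-1}^{t}K^2$, and all off-diagonal limits vanish.
\end{itemize}

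Third, summing over $k\le m-t+1\sim m$ and $t\le s$ gives
\[
I_1^{(u,v,u',v')} \sim m\cdot 2\{V^{(u,u')}V^{(v,v')}+V^{(u,v')}V^{(v,u')}\}\int_0^s K^2(t)\,\dd t,
\]
and using $m\sim n/\ell$ yields the stated asymptotic. A cleaner route to the same result is polarization: apply the univariate Lemma~\ref{lemma:hoc2} argument to the scalar series $X^{(u)}X^{(v)}$-type combinations $G^{(u,v)}\pm G^{(u',v')}$, whose limiting variances are known from Lemma~\ref{lemma:multblock}(ii). Then
\[
\E(S S') = \tfrac14\bigl\{\|S+S'\|^2-\|S-S'\|^2\bigr\}.
\]
I would use this polarization route, since it reduces the proof entirely to the already-established univariate machinery.

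The main obstacle is step (a): establishing the uniform $o(m)$ bound on the cross terms between non-adjacent blocks. This requires a projection-operator bound $\sum_h\|\mathcal{P}_h(\cdot)\|$ using $\delta^{(u)}_{4,\cdot}$ for both coordinates simultaneously, since the kernel $K$ is not compactly supported. It is handled as in the proof of Lemma~\ref{lemma:hoc1}(iii) via Lemma~1 of \cite{wu2010}.
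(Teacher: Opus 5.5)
Your strategy is the paper's own: its proof of this lemma is just the instruction to rerun the proof of Lemma~4 of \citet{chanyau2015_hoc} with Lemma~\ref{lemma:multblock} in place of their Lemma~3. Two points in your sketch need correcting, though.

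First, step~(a) uses the wrong separation condition, and this misplaces the key step. Through its $j$-indices, every block $G_{k,t}$ reaches back to $X_{(k-1)\ell+1}$ whatever $t$ is. So $\varsigma_{k,t}=(k-1)\ell+1$, while $D_{k,t}$ is $\mathcal{F}_{(k+t)\ell}$-measurable. One block therefore lies in the past of the other's conditioning $\sigma$-field when $k'\ge k+t+1$, not when $|(k+t)-(k'+t')|\ge 2$. For example, $(k,t)=(1,1)$ and $(k',t')=(1,3)$ have $i$-ranges more than $\ell$ apart, but they share $\varsigma=1$ and the whole $j$-range $\{1,\ldots,2\ell-1\}$.

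Under the correct condition the cross term vanishes \emph{exactly} by the tower property: $\E(D_{k,t}D_{k',t'})=\E[D_{k,t}\,\E\{D_{k',t'}\mid\mathcal{F}_{\varsigma_{k',t'}}\}]=0$. This is the reason for centering at $\E(\cdot\mid\mathcal{F}_{\varsigma_{k,t}})$. No projection or dependence-tail bound is needed there, so your ``main obstacle'' is not one.

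The real work is in the remaining pairs, those with $|k-k'|\le s$ and $t,t'\le s$ (a bounded number per block, including pairs like the one above). For these you need two things:
\begin{itemize}
\item Lemma~\ref{lemma:multblock}(i)--(ii) for all such $t,t'$ and lags. As stated, that lemma covers only $k,t\in\{1,2\}$; the same functional CLT on a longer time interval gives the rest.
\item $\|\E(G_{k,t}\mid\mathcal{F}_{\varsigma_{k,t}})-\E G_{k,t}\|\to 0$ for a \emph{single} block. This comes from the projection bound in the proof of part~(iii), not from its statement about the sum over $k$.
\end{itemize}

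Second, polarization does not reduce anything to the univariate lemma. $G^{(u,v)}_{k,t}\pm G^{(u',v')}_{k,t}$ equals $\ell^{-1}\sum\sum K(|i-j|/\ell)X_i^{\T}AX_j$ for a symmetric $A$ that is in general not of rank one. It is therefore not the block statistic of any scalar series, and Lemma~\ref{lemma:hoc2} cannot be quoted for it. Evaluating $\|S^{(u,v)}\pm S^{(u',v')}\|^2$ requires the same cross-covariances $\Cov\{G^{(u,v)}_{k,t},G^{(u',v')}_{k',t'}\}$ as the direct route.

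Polarizing through scalar projections $Y=a^{\T}X$ does not help either. Let $S_Y$ be the univariate sum built from $Y$. The quartic form $a\mapsto\|S_Y\|^2$ determines only the fully symmetrized four-index tensor, and that differs from $V^{(u,u')}V^{(v,v')}+V^{(u,v')}V^{(v,u')}$. For instance, with $u=v=1$, $u'=v'=2$ and $V=I_d$, the target is $0$ while the symmetrized tensor is not. So do the bilinear computation directly.
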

\begin{proof}[Proof of Lemma~\ref{lemma:multblock2}]
The proof can be done by replacing Lemma 3 in the original proof by Lemma~\ref{lemma:multblock}.
\end{proof}

{

\begin{lemma}[Spectral density version of Lemma 3 in \citet{chanyau2015_hoc}]\label{lemma:spec1}
  Suppose that $X_i\in\mathcal{L}^\nu$ and $\Delta_4<\infty$ for some $\nu>4$.
  Also let $\E(X_1)=0$. Suppose $K(\cdot)\in\mathcal{K}_p$.
  Let $\omega\in(0, \pi)$. 
  Let $\mathbb{B}_1(t)$ and $\mathbb{B}_2(t)$ be two independent standard Brownian motions.
  Define $G^{\textnormal{Spec}}_{k,t}$ as in (\ref{def_MnsStar}).
  Define additionally, for $k,t=1,2,\ldots$, that
  $$
    \mathbb{G}^{\textnormal{Spec}}_{k,t} = 2\pi f(\omega)\int_{k+t-1}^{k+t}\int_{s-t}^{s-t+1}K(s-h) \left\{\dd\mathbb{B}_1(h) \dd\mathbb{B}_1(s) - \dd\mathbb{B}_2(h) \dd\mathbb{B}_2(s)\right\}.
  $$
  Then as $\ell\to\infty$, the following results hold. 

  (i) $\left(G^{\textnormal{Spec}}_{1,1},G^{\textnormal{Spec}}_{1,2},G^{\textnormal{Spec}}_{2,1},G^{\textnormal{Spec}}_{2,2}\right)\Rightarrow\left(\mathbb{G}^{\textnormal{Spec}}_{1,1},\mathbb{G}^{\textnormal{Spec}}_{1,2},\mathbb{G}^{\textnormal{Spec}}_{2,1},\mathbb{G}^{\textnormal{Spec}}_{2,2}\right)$. 

  (ii) For $k,t=1,2$,
    \begin{align*}
      \E\left(G^{\textnormal{Spec}}_{k,t}\right) &\to 0, \qquad 
      \Var\left(G^{\textnormal{Spec}}_{k,t}\right) \to 8\pi^2 f(\omega)^2\int_{t-1}^{t}K^2(u) \,\dd u , \\
     \Cov\left(G^{\textnormal{Spec}}_{1,1}, G^{\textnormal{Spec}}_{1,2}\right)&\to 0,\qquad  \Cov\left(G^{\textnormal{Spec}}_{1,1}, G^{\textnormal{Spec}}_{2,2}\right)\to 0, \qquad \Cov\left(G^{\textnormal{Spec}}_{1,1}, G^{\textnormal{Spec}}_{2,1}\right)\to 0, \\
     \Cov\left(G^{\textnormal{Spec}}_{1,2}, G^{\textnormal{Spec}}_{2,2}\right)&\to 0,\qquad \Cov\left(G^{\textnormal{Spec}}_{2,1}, G^{\textnormal{Spec}}_{2,2}\right)\to 0 .
    \end{align*} 

  (iii) For $m\to\infty$,
    \[
          \left\| \sum_{t=1}^{s} \sum_{k=1}^{m-t+1}\left[\E\left\{G^{\textnormal{Spec}}_{k,t}|\mathcal{F}_{(k-1)\ell+1}\right\}-\E G^{\textnormal{Spec}}_{k,t}\right]\right\| = o\left[\left\{  m \int_0^{s} K^2(t) \dd t\right\}^{1/2}\right].
    \]
\end{lemma}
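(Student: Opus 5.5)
The plan is to follow the proof of Lemma~\ref{lemma:hoc1} and of its multivariate analogue Lemma~\ref{lemma:multblock}. The key step is to represent the spectral blocks through a bivariate real series.

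\textbf{Reduction to a bivariate series.} Define
\[
Y_i = (X_i\cos(i\omega), X_i\sin(i\omega))^{\T}.
\]
Using $X_ie^{\iota i\omega}X_je^{-\iota j\omega}+X_ie^{-\iota i\omega}X_je^{\iota j\omega} = 2X_iX_j\cos\{(i-j)\omega\}$ and expanding the cosine of the difference, each $G^{\textnormal{Spec}}_{k,t}$ becomes $2\ell^{-1}\sum\sum K(|i-j|/\ell)\{Y_i^{(1)}Y_j^{(1)}+Y_i^{(2)}Y_j^{(2)}\}$, summed over the same index blocks as $G_{k,t}$. This is exactly the structure of $G^{(u,v)}_{k,t}$ in Lemma~\ref{lemma:multblock}, taken with $(u,v)=(1,1)$ and $(2,2)$.

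\textbf{Part (i).} I would establish a functional CLT for the partial sums
\[
\ell^{-1/2}\sum_{i\le \ell u}(X_ie^{\iota i\omega})
\]
on $D[0,3]$. The real and imaginary parts should converge to $\{\pi f(\omega)\}^{1/2}$ times two independent standard Brownian motions, which requires $\omega\in(0,\pi)$. I would obtain this by a martingale approximation under $\Delta_4<\infty$, in the style of Wu's (2005) Theorem~3 extended to complex-modulated sums. The limiting covariance follows because $\sum_k\gamma_k e^{\iota k\omega}=2\pi f(\omega)$, while the pseudo-covariance term $\sum_i e^{2\iota i\omega}$ averages out. This is the one genuinely new ingredient, and I expect it to be the main obstacle.

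With the FCLT in hand, the continuous mapping theorem applied to the double stochastic-integral functional gives joint weak convergence of the four blocks. For the sign, note that the cross terms combine as $Y^{(1)}Y^{(1)}+Y^{(2)}Y^{(2)}$. The stated limit, with its minus sign and the factor $2\pi f(\omega)$, should then come out of the normalization and rotation conventions. I would check the constant against the target variance $8\pi^2f(\omega)^2\int K^2$, and adjust the representation if it does not match.

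\textbf{Part (ii).} Since $X_i\in\mathcal L^\nu$ with $\nu>4$, the squares of the blocks are uniformly integrable, so moments converge to those of the Gaussian double integrals. The It\^o isometry then applies as in Lemma~\ref{lemma:multblock}(ii). The limiting mean is zero because the inner integral is over a strictly earlier interval, so the integrand is adapted. The variance equals $(2\pi f)^2\cdot 2\int_{t-1}^tK^2$, which is the stated $8\pi^2f^2\int_{t-1}^tK^2$. The cross-block covariances vanish because the blocks are driven by disjoint Brownian increments, or else by orthogonal adapted integrands.

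\textbf{Part (iii).} I would repeat the projection argument of Lemma~\ref{lemma:hoc1}(iii) verbatim. Since $|e^{\pm\iota i\omega}|=1$, the bound $\|\mathcal P_h(X_iX_j e^{\iota(i-j)\omega})\|\le$ the same $\delta_{4}$-type terms still holds. Lemma~1 of \cite{wu2010} also applies to the modulated weights $K((i-j)/\ell)e^{\iota(i-j)\omega}$, because only the squared moduli $K^2$ enter. The chain of inequalities then gives $o\{m\int_0^sK^2\}$ unchanged.
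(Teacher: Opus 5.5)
Your route is the paper's. It consists of a functional CLT for the modulated partial sums, continuous mapping, uniform integrability plus the It\^o isometry for (ii), and the projection bound using $|e^{\iota u\omega}|=1$ for (iii); your (iii) is fine as stated. The paper does not rederive the FCLT: finite-dimensional convergence is Theorem~2 of \cite{wu2005Fourier} and tightness is Lemma~4.3 of \cite{peligrad2010central}, so this is not the real obstacle.

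The sign cannot be handled the way you suggest, and your own reduction shows why. You have $\ell^{-1/2}\sum_{i\le \ell u}Y_i\Rightarrow\{\pi f(\omega)\}^{1/2}(\mathbb{B}_1,\mathbb{B}_2)$ and $G^{\textnormal{Spec}}_{k,t}=2\ell^{-1}\sum\sum K(|i-j|/\ell)\{Y^{(1)}_iY^{(1)}_j+Y^{(2)}_iY^{(2)}_j\}$. So the limit, up to the deterministic shift discussed below, is $2\pi f(\omega)(H_1+H_2)$, where $H_u=\int_{k+t-1}^{k+t}\int_{s-t}^{s-t+1}K(s-h)\,\dd\mathbb{B}_u(h)\,\dd\mathbb{B}_u(s)$. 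No normalization or rotation turns this into $H_1-H_2$:
\begin{itemize}
\item $H_2$ is quadratic in $\mathbb{B}_2$, so it is unchanged when $\mathbb{B}_2$ is replaced by $-\mathbb{B}_2$.
\item $H_1+H_2$ is invariant under rotations of $(\mathbb{B}_1,\mathbb{B}_2)$.
\item The two limits differ in law. $\E(H_1+H_2)^3=2\E H_1^3$, which is strictly positive for the $t=1$ blocks with, e.g., the Bartlett kernel, whereas $\E(H_1-H_2)^3=0$.
\end{itemize}
So do not try to ``adjust the representation''; what you can prove is (i) with a plus sign. The minus sign corresponds to $\mathrm{Re}(Z_iZ_j)$ instead of $\mathrm{Re}(Z_i\bar Z_j)$, with $Z_i=X_ie^{\iota i\omega}$. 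This is harmless for (ii): $H_1$ and $H_2$ are independent and centred, so both versions have the same first and second moments.

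There is a second gap, which the paper's proof shares. Continuous mapping does not deliver the limit of the $t=1$ blocks. There $\ell^{-1}\sum_i\sum_{i-\ell\le j<i}$ is a discrete stochastic integral, not a continuous functional of the partial-sum path. With dependent data it picks up the one-sided covariance, just as $n^{-1}\sum_{i\le n}S_{i-1}X_i\Rightarrow v\int_0^1\mathbb{B}(s)\,\dd\mathbb{B}(s)+\sum_{d\ge1}\gamma_d$ with $S_i=\sum_{j\le i}X_j$. Concretely,
$\E G^{\textnormal{Spec}}_{k,1}=2\sum_{d=1}^{\ell}K(d/\ell)\gamma_d\cos(d\omega)\to 2\pi f(\omega)-\gamma_0$. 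For an \textsc{ar}$(1)$ with coefficient $1/2$ at $\omega=\pi/2$ this equals $-2\gamma_0/5$. Hence the claim $\E G^{\textnormal{Spec}}_{k,1}\to 0$ in (ii) is false.

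Your step ``moments converge to those of the centred It\^o integrals'' fails exactly here: the weak limit of the $t=1$ blocks is the double integral plus this constant. The adaptedness of the limiting integrand is beside the point. What is true, and all that Lemma~\ref{lemma:spec2} uses, is the statement for the centred blocks $G^{\textnormal{Spec}}_{k,t}-\E G^{\textnormal{Spec}}_{k,t}$, whose limit is the mean-zero double integral. Proving that requires a martingale approximation of the quadratic form (e.g.\ replacing $X_i$ by Wu's martingale differences), not continuous mapping. Since the shift is deterministic, the variance and covariance limits in (ii) survive unchanged.
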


\begin{proof}[Proof of Lemma~\ref{lemma:spec1}]\leavevmode
(i)  Let $S_{\ell}(t) = \sum_{i=1}^{\lfloor{\ell t}\rfloor} X_i e^{\iota \omega i}/\sqrt{\ell}$ for $t\in[0,1]$.
    Also let 
    \[
      g(\omega) = \lim_{\ell\to\infty} \frac{1}{n}\E\left|\sum_{i=1}^{n} X_i e^{\iota \omega i}\right|^2 = 2\pi f(\omega).
    \]
    {By Theorem 2 of \cite{wu2005Fourier} and Cram\'{e}r--Wold device, we have the convergence of finite-dimensional distributions.
    The tightness of $\{S_{\ell}(t) :t\in[0,1]\}$ follows from Lemma 4.3 of \cite{peligrad2010central}.
    Thus, it follows that, as $\ell\to\infty$,}
    \[
     \left\{ 
     \begin{bmatrix}
     \text{Re}(S_\ell(t)) \\
     \text{Im}(S_\ell(t))
     \end{bmatrix}:
     t \in [0,1]
     \right\}
     \Rightarrow 
     \left\{ 
     \sqrt{\frac{g(\omega)}{2}}
     \begin{bmatrix}
     \text{Re}(\mathbb{B}_1(t)) \\
     \text{Im}(\mathbb{B}_2(t))
     \end{bmatrix}:
     t \in [0,1]
     \right\}.
    \]
    By the Cram\'{e}r--Wold device and continuous mapping theorem,
    similar to the proof of Lemma~\ref{lemma:multblock} (i),
    we have the first result.

(ii) For the second result, 
    using Lemma~\ref{lemma:hoc1}, the linearity of expectations, and independence of $\mathbb{B}_1$ and $\mathbb{B}_2$,
    we have, for $k,t=1,2$, that 
    \begin{align*}
      \E\left(\mathbb{G}^{\textnormal{Spec}}_{k,t}\right) &= 
      2\pi f(\omega) \left\{ \E\left(\mathbb{G}_{k,t}/v \right) - \E\left(\mathbb{G}_{k,t}/v \right)\right\} 
	= 0 ,\\
      \Var \left(\mathbb{G}^{\textnormal{Spec}}_{k,t}\right) &= 2\{2\pi f(\omega)\}^2 \Var \left(\mathbb{G}_{k,t}/ v\right) = 8\pi^2 f(\omega)^2 \int_{t-1}^t K^2(u) \dd u,
    \end{align*}    
    where $\mathbb{G}_{k,t}$ is defined in (\ref{eqt:definition_Gkt}).
    Similarly, the covariances 
    $\Cov\left(\mathbb{G}^{\textnormal{Spec}}_{1,1}, \mathbb{G}^{\textnormal{Spec}}_{1,2}\right)$,
    $\Cov\left(\mathbb{G}^{\textnormal{Spec}}_{1,1}, \mathbb{G}^{\textnormal{Spec}}_{1,2}\right)$,
    $\Cov\left(\mathbb{G}^{\textnormal{Spec}}_{1,1}, \mathbb{G}^{\textnormal{Spec}}_{1,2}\right)$,
    $\Cov\left(\mathbb{G}^{\textnormal{Spec}}_{1,1}, \mathbb{G}^{\textnormal{Spec}}_{1,2}\right)$
    are zero.
    By the same arguments of the convergence of moments,
    we have the second result.

    For (iii), recall that $\mathcal{P}_h \cdot = \E(\cdot \mid \mathcal{F}_h) - \E(\cdot \mid \mathcal{F}_{h-1})$ is the projection operator. 
    Again, by the orthogonality of martingale differences,
    \begin{eqnarray*}
      \left\| \sum_{t=1}^{s} \sum_{k=1}^{m-t+1}\left[\E\left\{G^{\textnormal{Spec}}_{k,t}|\mathcal{F}_{(k-1)\ell+1}\right\}-\E G^{\textnormal{Spec}}_{k,t}\right]\right\|^2
        &=& \left\| \sum_{t=1}^{s} \sum_{k=1}^{m-t+1} \sum_{h=-\infty}^{(k-1)\ell+1} \mathcal{P}_h G^{\textnormal{Spec}}_{k,t}\right\|^2\\
        &=& \left\| \sum_{h=-\infty}^{(m-1)\ell+1} \mathcal{P}_h \left(\sum_{t=1}^{s} \sum_{k=1}^{m-t+1} G^{\textnormal{Spec}}_{k,t} \mathbb{1} \{h\leq (k-1)\ell + 1\}\right) \right\|^2\\
        &=& \sum_{h=-\infty}^{(m-1)\ell+1} \left\|\mathcal{P}_h \left(\sum_{t=1}^{s} \sum_{k=1}^{m-t+1} G^{\textnormal{Spec}}_{k,t} \mathcal{J}_{h,k}\right)\right\|^2,
    \end{eqnarray*}
  where $\mathcal{J}_{h,k}= \mathbb{1} \{h\leq (k-1)\ell + 1\}$.
  By the definition of $\mathcal{P}_h$,
  $$
     \mathcal{P}_{0}(X_{i-h}X_{j-h}) = \E\left( X_{i-h}X_{j-h} - {X}_{i-h,\{0\}}{X}_{j-h,\{0\}} \mid \mathcal{F}_0\right).
  $$
  Using similar arguments as (45) and (51) in \cite{chanyau2015_hoc}, we have 
    \begin{eqnarray*}
    \left\|\mathcal{P}_h \left(\sum_{t=1}^{s} \sum_{k=1}^{m-t+1} G^{\textnormal{Spec}}_{k,t} \mathcal{J}_{h,k}\right) \right\|
      &=& \left\| \sum_{t=1}^{s} \sum_{k=1}^{m-t+1} \frac{1}{\ell}\sum_{i=(k+t-1)\ell+1}^{(k+t)\ell} \sum_{j=i-t\ell+1}^{i-(t-1)\ell-1} K\left(\frac{i-j}{\ell}\right) \cdot \right.\\
      &&\ \cdot \left.\mathcal{P}_{h}(e^{\iota \omega i}X_ie^{-\iota \omega j}X_j +  e^{\iota \omega j}X_je^{-\iota \omega i}X_i) \mathcal{J}_{h,k}\right\|\\
      &\leq & \sum_{t=1}^{s} \sum_{k=1}^{m-t+1} \frac{1}{\ell} \sum_{i=(k+t-1)\ell+1}^{(k+t)\ell} \delta_{4,i-h} \mathcal{J}_{h,k}\left\|\sum_{j=i-t\ell+1}^{i-(t-1)\ell-1} K\left(\frac{i-j}{\ell}\right) e^{\iota(i-j)\omega} X_{j-h}\right\|_4  
      \\
      && \ + \sum_{t=1}^{s} \sum_{k=1}^{m-t+1} \frac{1}{\ell} \sum_{i=(k+t-1)\ell+1}^{(k+t)\ell} \delta_{4,i-h} \mathcal{J}_{h,k}\left\|\sum_{j=i-t\ell+1}^{i-(t-1)\ell-1} K\left(\frac{i-j}{\ell}\right) e^{\iota(j-i)\omega} X_{j-h}\right\|_4  \\
      && \ + \sum_{t=1}^{s} \sum_{k=1}^{m-t+1} \frac{1}{\ell} \sum_{i=(k-1)\ell+1}^{(k+1)\ell-1} \delta_{4,j-h}\mathcal{J}_{h,k}\left\|\sum_{i=j+(t-1)\ell+1}^{j+t\ell-1} K\left(\frac{i-j}{\ell}\right) e^{\iota(i-j)\omega}X_{i-h}\right\|_4 \\
      && \ + \sum_{t=1}^{s} \sum_{k=1}^{m-t+1} \frac{1}{\ell} \sum_{i=(k-1)\ell+1}^{(k+1)\ell-1} \delta_{4,j-h}\mathcal{J}_{h,k}\left\|\sum_{i=j+(t-1)\ell+1}^{j+t\ell-1} K\left(\frac{i-j}{\ell}\right) e^{\iota(j-i)\omega}X_{i-h}\right\|_4
  \end{eqnarray*}
  by Minkowski and H\"{o}lder's inequalities.
  Using Lemma 1 of \cite{wu2010}, we have
  \begin{eqnarray}\label{eqt:lemmaderive_spec}
   &&\sum_{t=1}^{s} \sum_{k=1}^{m-t+1} \frac{1}{\ell} \sum_{i=(k+t-1)\ell+1}^{(k+t)\ell} \delta_{4,i-h}\mathcal{J}_{h,k}\left\|\sum_{j=i-t\ell+1}^{i-(t-1)\ell-1} K\left(\frac{i-j}{\ell}\right) e^{\iota(j-i)\omega} X_{j-h}\right\|_4 \nonumber\\
   &\leq& \frac{C\Delta_4}{\ell} \sum_{t=1}^{s} \sum_{k=1}^{m-t+1} \sum_{i=(k+t-1)\ell+1}^{(k+t)\ell} \delta_{4,i-h}\mathcal{J}_{h,k} \left\{ \sum_{(t-1)\ell+1}^{t\ell-1} K^2(u/\ell)\right\}^{1/2} 
  \end{eqnarray}
  as $|e^{-\iota u \omega}| < 1$
  for some constant $C$ and similarly for the other term.
  The rest of the proofs is identical to the proof of Theorem~\ref{lemma:hoc1}.
  \end{proof}

  \begin{lemma}[Spectral density version of Lemma~\ref{lemma:hoc2}]\label{lemma:spec2}
  Under conditions of Theorem~\ref{thm:spectralBV}. 
  As $\ell\to\infty$,
  \[
    I_1^{\textnormal{Spec}} = \left\| \sum_{t=1}^{s} \sum_{k=1}^{m-t+1} \left[ G^{\textnormal{Spec}}_{k,t} - \E\left\{G^{\textnormal{Spec}}_{k,t} \mid \mathcal{F}_{\varsigma_{k,t}}\right\}\right]\right\| \sim \left\{ \frac{8\pi^2 f(\omega)^2 n \int_{0}^s K^2(t) \dd t  }{\ell} \right\}^{1/2}.
  \]
\end{lemma}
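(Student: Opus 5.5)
The statement is Lemma~\ref{lemma:spec2}, the spectral analogue of Lemma~\ref{lemma:hoc2}. I would prove it by mirroring the argument of Lemma 4 in \citet{chanyau2015_hoc}, substituting Lemma~\ref{lemma:spec1} wherever that argument invokes its Lemma 3, exactly as the proof of Lemma~\ref{lemma:hoc2} substitutes Lemma~\ref{lemma:hoc1}.

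\textbf{Martingale reduction.} Set $D_{k,t} = G^{\textnormal{Spec}}_{k,t} - \E\{G^{\textnormal{Spec}}_{k,t}\mid\mathcal{F}_{\varsigma_{k,t}}\}$.
\begin{itemize}
\item Each $G^{\textnormal{Spec}}_{k,t}$ depends only on $X_i$ with indices in a window of length at most $(t+1)\ell$ starting at $\varsigma_{k,t}$, which is approximately $(k-1)\ell+1$.
\item Hence $D_{k,t}$ and $D_{k',t'}$ are uncorrelated once their windows are separated, i.e.\ when $|k-k'|\geq s+1$. This follows from the tower property with respect to the filtration $\{\mathcal{F}_{\varsigma}\}$: the later term has conditional mean zero given the sigma-field generated by the earlier block.
\item Therefore $I_1^2$ equals a sum of diagonal terms $\E(D_{k,t}D_{k,t'})$ plus cross terms with $|k-k'|\leq s$. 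For fixed $s$ each $k$ contributes only finitely many terms.
\end{itemize}

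\textbf{Limits of the individual moments.} By stationarity, each such moment depends only on the lag $k-k'$ and on $(t,t')$. So it suffices to compute the $\ell\to\infty$ limits of $\E(D_{1,t}D_{1+j,t'})$ for $|j|\leq s$.
\begin{itemize}
\item By Lemma~\ref{lemma:spec1}(i), the block functionals converge jointly to the double Wiener integrals $\mathbb{G}^{\textnormal{Spec}}_{k,t}$.
\item Uniform integrability holds because $X_i\in\mathcal{L}^\nu$ with $\nu>4$. Therefore second moments converge, as in Lemma~\ref{lemma:spec1}(ii).
\item The conditional-mean part is negligible. By Lemma~\ref{lemma:spec1}(iii), its aggregate norm is $o\{(m\int_0^s K^2)^{1/2}\}$.
\item Within the limiting It\^o integrals, the conditional expectation given the past at the window start vanishes, because the inner integral runs over times after $k-1$.
\item By the It\^o isometry and the orthogonality of the two independent Brownian components, the limiting covariances vanish between distinct $(k,t)$ pairs. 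The variance of each diagonal block is $8\pi^2 f(\omega)^2\int_{t-1}^t K^2(u)\,\dd u$.
\end{itemize}

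\textbf{Summation.} Summing the diagonal contributions over $t=1,\dots,s$ and $k=1,\dots,m-t+1$ gives
\[
I_1^2 \sim m\,8\pi^2 f(\omega)^2\int_0^s K^2(u)\,\dd u .
\]
Since $m\sim n/\ell$, this is the claimed rate. The triangle-inequality decomposition $I_1 \lesseqgtr \|\sum(G-\E G)\| \pm I_0$, together with Lemma~\ref{lemma:spec1}(iii), shows that removing the conditional means does not alter the leading term.

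\textbf{Main obstacle.} The delicate step is justifying that the prelimit cross covariances $\E(D_{k,t}D_{k',t'})$ with $k\neq k'$ are uniformly $o(1)$, not just in the limit. The difficulty is that the windows of different $t$ overlap, and that the complex exponentials $e^{\pm\iota i\omega}$ do not cancel trivially as they do in the real case. My plan is to expand $G^{\textnormal{Spec}}$ into its real-part and imaginary-part quadratic forms, $\sum K\cos\{(i-j)\omega\}X_iX_j$, and treat the real quadratic forms in the pair $(\mathrm{Re},\mathrm{Im})$ of $S_\ell$. I would then bound the covariances via the projection and physical-dependence estimates used in Lemma~\ref{lemma:spec1}(iii), noting that $|e^{\iota u\omega}|=1$, so the $\delta_{4,\cdot}$ bounds carry over unchanged.
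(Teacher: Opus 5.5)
Your proposal takes the same route as the paper, whose proof is simply to rerun Lemma 4 of \citet{chanyau2015_hoc} with Lemma~\ref{lemma:spec1} in place of its Lemma 3. Your skeleton is sound: uncorrelatedness of $D_{k,t}$ and $D_{k',t'}$ for $|k-k'|\ge s+1$, reduction to finitely many lags, limits via Lemma~\ref{lemma:spec1}(i)--(ii) and uniform integrability, and diagonal terms summing to $m\,8\pi^2 f(\omega)^2\int_0^s K^2$.

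The step you flag as the main obstacle is not actually an obstacle, and the remedy you propose for it would not work.
\begin{itemize}
\item \emph{Uniformity in $k$ is automatic.} Since $X_ie^{\iota i\omega}X_je^{-\iota j\omega}+X_ie^{-\iota i\omega}X_je^{\iota j\omega}=2\cos\{(i-j)\omega\}X_iX_j$, the block $G^{\textnormal{Spec}}_{k+1,t}$ is an exact time shift by $\ell$ of $G^{\textnormal{Spec}}_{k,t}$. Also $\varsigma_{k+1,t}=\varsigma_{k,t}+\ell$. Hence $\E(D_{k,t}D_{k',t'})$ depends only on $(k-k',t,t')$, and there are finitely many such triples for fixed $s$. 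So uniformity in $k$ follows once each of these moments converges.
\item \emph{The vanishing of cross moments comes from Lemma~\ref{lemma:spec1}(ii).} That convergence is exactly what Lemma~\ref{lemma:spec1}(ii) supplies through the It\^o isometry. You do need it extended to all $k\le s+1$ and $t\le s$; the paper also leaves this implicit.
\item \emph{Projection bounds cannot do this job.} The projection and physical-dependence estimates of Lemma~\ref{lemma:spec1}(iii) only bound $\sum_h\|\mathcal{P}_hD_{1,t}\|\,\|\mathcal{P}_hD_{1,t'}\|$ by $O(1)$, the same order as the variances. They therefore cannot detect the orthogonality of overlapping blocks whose lag ranges are disjoint. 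That orthogonality is a cancellation, not a size effect.
\end{itemize}

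One smaller point concerns replacing $\E(D_{1,t}D_{1+j,t'})$ by $\Cov(G^{\textnormal{Spec}}_{1,t},G^{\textnormal{Spec}}_{1+j,t'})+o(1)$. For this you need the single-block statement $\|\E(G^{\textnormal{Spec}}_{1,t}\mid\mathcal{F}_{\varsigma_{1,t}})-\E G^{\textnormal{Spec}}_{1,t}\|\to 0$. The aggregate bound in (iii) does not imply it by itself. It does follow from the same projection computation applied to one block.
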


\begin{proof}[Proof of Lemma~\ref{lemma:spec2}]
The proof can be done by replacing Lemma 3 in the original proof by Lemma~\ref{lemma:spec1}.
\end{proof}
}

  \bibliographystyle{rss}
  {\small
  \setstretch{1.0}
  \bibliography{myRef}
  }

\vspace{-2cm}
\end{document}